\documentclass[11pt,a4paper]{amsart}

\usepackage[foot]{amsaddr}
\usepackage{ifxetex}
\ifxetex
  \usepackage[no-math]{fontspec}
\else
\fi

\newcommand{\ifarxiv}[2]{#2}
% uncomment below for arxiv
\renewcommand{\ifarxiv}[2]{#1}

\usepackage{amsmath}
\usepackage{amsfonts}
\usepackage{amssymb}
\usepackage{amsthm}
\usepackage{verbatim}
\usepackage{dsfont}
\usepackage{fullpage}
\usepackage{microtype}
\usepackage[libertine]{newtxmath}
\usepackage[tt=false]{libertine} %% tt is ugly 
\usepackage{caption}
\usepackage{bbm}
\usepackage{hyperref, color}
\hypersetup{colorlinks=true,citecolor=blue, linkcolor=blue, urlcolor=blue}
\usepackage[linesnumbered,boxed,ruled,vlined]{algorithm2e}
\usepackage{algorithmicx}
\usepackage{bm}
\usepackage{bbm}
\usepackage[numbers]{natbib}
\usepackage{xcolor}
\usepackage{enumerate} 
\usepackage{enumitem}
\usepackage{tabularx}
\usepackage{array}
\usepackage{cleveref}
\usepackage{pifont}
\usepackage{xifthen}
\usepackage{xstring}

\newcolumntype{L}[1]{>{\raggedright\arraybackslash}p{#1}}
\newcolumntype{C}[1]{>{\centering\arraybackslash}m{#1}}
\newcolumntype{R}[1]{>{\raggedleft\arraybackslash}p{#1}}

\usepackage{makecell}
\usepackage{footnote}
\makesavenoteenv{tabular}

\newcommand{\TODO}[1]{\typeout{TODO: \the\inputlineno: #1}\textbf{{\color{red}[[[ #1 ]]]}}}
\newcommand{\HKTODO}[1]{\typeout{TODO: \the\inputlineno: #1}\textbf{{\color{blue}[[[ #1 ]]]}}}

\newcommand{\MSC}[1]{\mathscr{{#1}}}

\newcommand{\pth}{\textnormal{\textsf{Path}}}

\newcommand{\recalc}{\textnormal{\textsf{RecursiveApproximator}}}

\newcommand{\calc}{\textnormal{\textsf{MarginalApproximator}}}

\newcommand{\texe}{T_\textnormal{\textsf{Enu}}}

\newcommand{\tma}{T_\textnormal{\textsf{MA}}}
\newcommand{\tra}{T_\textnormal{\textsf{RA}}}

\newcommand{\tcmain}{T_\textnormal{\textsf{Count}}}

\def\Pr{\mathop{\mathbf{Pr}}\nolimits}

\newcommand{\Mod}[3]{{#1}_{{#2}\gets{#3}}}
\newcommand{\vbl}{\mathsf{vbl}}
\newcommand{\True}{\mathtt{True}}
\newcommand{\False}{\mathtt{False}}

\newcommand{\abs}[1]{\left\vert#1\right\vert}

\newcommand{\vstar}[1]{V^{#1}_{\star}}

\SetKwRepeat{Do}{do}{while}

\newtheorem{theorem}{Theorem}[section]

\newtheorem*{claim*}{Claim}
\newtheorem{condition}[theorem]{Condition}

\newtheorem{lemma}[theorem]{Lemma}
\newtheorem{proposition}[theorem]{Proposition}
\newtheorem{corollary}[theorem]{Corollary}
\theoremstyle{definition}

\newtheorem{definition}[theorem]{Definition}
\newtheorem{remark}[theorem]{Remark}
\newtheorem*{remark*}{Remark}
\newtheorem{question}[theorem]{Question}

\renewcommand{\Pr}[2][]{ \ifthenelse{\isempty{#1}}
  {\mathop{\mathbf{Pr}}\left[#2\right]} {\mathop{\mathbf{Pr}}_{#1}\left[#2\right]} }
\newcommand{\E}[1]{\mathbb{E}\left[{#1}\right]}
\newcommand{\Var}[2][]{ \ifthenelse{\isempty{#1}}
  {\mathbf{\mathbf{Var}}\left[#2\right]}
  {\mathbf{\mathbf{Var}}_{#1}\left[#2\right]} }
\newcommand{\var}[1]{{{\vbl}}\left({#1}\right)}  
\newcommand{\poly}{{\rm poly}}  
\newcommand{\nextvar}[1]{{{\mathsf{NextVar}}}\left({#1}\right)} 
\newcommand{\Lin}[1]{{{\mathsf{Lin}}}\left({#1}\right)} 

\newcommand{\cfrozen}[1]{\+{C}^{#1}_{\mathsf{frozen}}}

\newcommand{\vfix}[1]{V^{#1}_{\mathsf{fix}}}

\newcommand{\hfix}[1]{H^{#1}_{\mathsf{fix}}}
\newcommand{\vinf}[1]{V^{#1}_{\star{\mathsf{\text{-}inf}}}}

\newcommand{\ccon}[1]{\+{C}^{#1}_{\star{\mathsf{\text{-}con}}}}
\newcommand{\vcon}[1]{V^{#1}_{\star{\mathsf{\text{-}con}}}}

\newcommand{\vst}[1]{V^{#1}_{\star}}

\newcommand{\csfrozen}[1]{\+{C}^{#1}_{\star{\mathsf{\text{-}frozen}}}}

\newcommand{\qs}{\+{Q}^*}

\newcommand{\tv}{d_{\rm TV}}
\newcommand{\gvc}{G_{\mathsf{VC}}}

\newcommand{\one}[1]{\mathbbm{1}\left[#1\right]}
\def\^#1{\mathbb{#1}} % Use \*A for \mathbf{A}
\def\*#1{\mathbf{#1}} % Use \*A for \mathbf{A}
\def\+#1{\mathcal{#1}} % Use \+A for \mathcal{A}
\def\-#1{\mathrm{#1}} % Use \-A for \mathrm{A}
\def\=#1{\boldsymbol{#1}} % Use \=A for \boldsymbol{A}
\newcommand{\set}[1]{\left\{#1\right\}}

\newcommand{\defeq}{\triangleq}

%command by hekun

\DeclareMathAlphabet{\mathcal}{OMS}{cmsy}{m}{n}

\newcommand{\induceddist}[2]{\gamma^{{#1}}_{#2}}

\newcommand{\qus}[1]{\+{Q}^{\star}_{#1}}
\newcommand{\pprime}{\alpha}

\usepackage{amssymb}
\usepackage{stackengine}
\usepackage{scalerel}
\usepackage{xcolor}
\usepackage{graphicx}
\newcommand\openbigstar[1][0.7]{%
  \scalerel*{%
    \stackinset{c}{-.125pt}{c}{}{\scalebox{#1}{\color{white}{$\bigstar$}}}{%
      $\bigstar$}%
  }{\bigstar}
}
\newcommand{\hollowstar}{\text{$\scriptstyle\openbigstar[.7]$}}
% %%%%%%%%%%%%%%%%%%%%%%%%%%%%%%%%%%%%%%%%%%%%%%%%%%%%%%%%%%
% SET THE TITLE
% %%%%%%%%%%%%%%%%%%%%%%%%%%%%%%%%%%%%%%%%%%%%%%%%%%%%%%%%%%

% TITLE:
%\title{Counting Lov\'{a}sz    local lemma: combinatorial, unified, and in a better regime}
\title{Deterministic counting Lov\'{a}sz local lemma \\ beyond linear programming}

% AUTHORS:
\ifarxiv{\author{
Kun He
}
\author{
Chunyang Wang
}
\author{
Yitong Yin
}
\address[Kun He]{Institute of Computing Technology, Chinese Academy of Sciences, No.6 Kexueyuan South Road Zhongguancun, Haidian District, Beijing, China. \textnormal{E-mail: \url{hekun@ict.ac.cn}}.
The research of K.\ He is supported by the Strategic Priority Research
Program of Chinese Academy of Sciences under Grant
No. XDA27000000, the National Natural Science Foundation
of China Grants No. 62002231, 61832003.}
\address[Chunyang Wang, Yitong Yin]{ State Key Laboratory for Novel Software Technology, Nanjing University, 163 Xianlin Avenue, Nanjing, Jiangsu Province, China. \textnormal{E-mails: \url{wcysai@smail.nju.edu.cn}, \url{yinyt@nju.edu.cn} }}}
{

\author{Anonymous Authors}

}
\date{}
\linepenalty=50
% %%%%%%%%%%%%%%%%%%%%%%%%%%%%%%%%%%%%%%%%%%%%%%%%%%%%%%%%%%
% %%%%%%%%%%%%%%%%%%%%%%%%%%%%%%%%%%%%%%%%%%%%%%%%%%%%%%%%%%
\begin{document}
% %%%%%%%%%%%%%%%%%%%%%%%%%%%%%%%%%%%%%%%%%%%%%%%%%%%%%%%%%%
% %%%%%%%%%%%%%%%%%%%%%%%%%%%%%%%%%%%%%%%%%%%%%%%%%%%%%%%%%%
% ABSTRACT
% %%%%%%%%%%%%%%%%%%%%%%%%%%%%%%%%%%%%%%%%%%%%%%%%%%%%%%%%%%
% %%%%%%%%%%%%%%%%%%%%%%%%%%%%%%%%%%%%%%%%%%%%%%%%%%%%%%%%%%
% %%%%%%%%%%%%%%%%%%

% %%%%%%%%%%%%%%%%%%%%%%%%%%%%%%%%%%%%%%%%%%%%%%%%%%%%%%%%%%
% %%%%%%%%%%%%%%%%%%%%%%%%%%%%%%%%%%%%%%%%%%%%%%%%%%%%%%%%%%
% BODY OF THE DOCUMENT
% %%%%%%%%%%%%%%%%%%%%%%%%%%%%%%%%%%%%%%%%%%%%%%%%%%%%%%%%%%
% %%%%%%%%%%%%%%%%%%%%%%%%%%%%%%%%%%%%%%%%%%%%%%%%%%%%%%%%%%

\allowdisplaybreaks
\maketitle

\begin{abstract}
We give a simple combinatorial algorithm to deterministically approximately count the number of satisfying assignments of general constraint satisfaction problems (CSPs). 
Suppose that the CSP has domain size $q=O(1)$, 
each constraint contains at most $k=O(1)$ variables, shares variables with at most $\Delta=O(1)$ constraints, and is violated with probability at most $p$ by a uniform random assignment.
The algorithm returns in polynomial time in an improved local lemma regime:
\[
q^2\cdot k\cdot p\cdot\Delta^5\le C_0\quad\text{for a suitably small absolute constant }C_0.
\]
%where $q$ is the maximum domain size, $k$ is the maximum constraint width, $p$ is the maximum probability of a constraint being violated by uniform assignment, and $\Delta$ is the maximum constraint degree. 
%where $p$ stands for the maximum probability of a constraint being violated by uniform assignment.
%
Here the key term $\Delta^5$ improves the previously best known $\Delta^7$ for general CSPs~\cite{Vishesh21towards} and $\Delta^{5.714}$ for the special case of $k$-CNF~\cite{Vishesh21sampling, HSW21}.

Our deterministic counting algorithm is a derandomization of the very recent fast sampling algorithm in~\cite{he2022sampling}.
It departs substantially from all previous deterministic counting Lov\'{a}sz local lemma algorithms which relied on linear programming,
and gives a deterministic approximate counting algorithm that straightforwardly derandomizes a fast sampling algorithm,
hence unifying the fast sampling and deterministic approximate counting in the same algorithmic framework.

%We present a simple algorithm for deterministic approximate counting satisfying solutions to general constraint satisfaction problems(CSPs) in the local lemma regime. The algorithm is unified with the recent fast sampler developed in \cite{he2022sampling} and departs from previous approaches of deterministic approximate counting\cite{Moi19,guo2019counting,Vishesh21sampling} which require to solve a large linear program.  

To obtain the improved regime, in our analysis we develop a refinement of the $\{2,3\}$-trees that were used in the previous analyses of counting/sampling LLL.
Similar techniques can be applied to the previous LP-based algorithms to obtain the same improved regime 
and may be of independent interests.
\end{abstract}

%\setcounter{tocdepth}{1}
%\tableofcontents

%\setcounter{page}{0} \thispagestyle{empty} \vfill
%\pagebreak 
  
\section{Introduction}\label{sec:intro}
Approximate counting and almost uniform sampling are two intimately related classes of computational problems that have been extensively studied in theoretical computer science. 
It was well-known that randomized approximate counting can be achieved by almost uniform sampling through the generic approaches of self-reduction~\cite{jerrum1986random} or annealing~\cite{Dyer1991random,stefankovic2009adpative}. 
%
%For a certain type of self-reducible problems, it has been shown that an efficient algorithm for generating an almost-uniform random solution implies an efficient algorithm for approximately counting the number of solutions and vice versa~\cite{jerrum1986random}. Another approach that can reduce approximate counting to almost uniform sampling is the simulated annealing method~\cite{Dyer1991random,stefankovic2009adpative}, whose adaptive version provides a near-optimal connection between sampling and counting.

On the other hand,
\emph{deterministic} approximate counting algorithms use different approaches such as decay of correlation~\cite{weitz06counting}, zero-freeness~\cite{barvinok2016combinatorics,patel2017deterministic}, and cluster-expansion~\cite{helmuth2020algorithmic,jenssen2020algorithms}, or in the case of counting constraint satisfaction solutions, the linear programming~\cite{Moi19,guo2019counting,Vishesh21towards}.
All these deterministic approximate counting methods have running times where the exponent over the input size depends on additional parameters such as degree of the underlying graph. 
And more fundamentally, all these deterministic counting algorithm work in quite different algorithmic frameworks that deviate far from those of the fast sampling algorithms where the exponents of the running times are universal constants.
There is one exception very recently~\cite{vishesh2022approximate}, where for matchings/independent sets with a given size,
a \emph{unified} algorithm based on a new technique called local central limit theorems was found to simultaneously resolve deterministic counting and fast randomized sampling within the same algorithmic framework.

We are focused on the problem of counting general constraint satisfaction solutions. 
Our goal is to give a unified approach for deterministic counting Lov\'{a}sz Local Lemma (LLL)~\cite{Moi19,guo2019counting,Vishesh21towards} and fast sampling LLL~\cite{GJL19,FGYZ20,feng2021sampling,Vishesh21sampling,HSW21,he2022sampling}.

\vspace{8pt}
\textbf{CSPs and Lov\'{a}sz Local Lemma.} %A CSP is described by a collection of constraints defined on a set of variables. 
%Formally, 
An instance of constraint satisfaction problem (CSP), 
called a \emph{CSP formula}, denoted by $\Phi=(V,\+{Q},\+{C})$, is defined as follows: $V$ is a set of $n=|V|$ variables;
$\+{Q}\triangleq\bigotimes_{v\in V}Q_v$ is a product space of all assignments of variables, 
where each $Q_v$ is a finite domain of size $q_v\triangleq\abs{Q_v}\ge 2$ over where the variable $v$ ranges; 
and $\+{C}$ is a collection of local constraints
where each $c\in \+{C}$ is a constraint function $c:\bigotimes_{v\in \vbl(c)}Q_v\to\{\True,\False\}$ defined on a subset of variables, denoted by $\vbl(c)\subseteq V$.
An assignment $\={x}\in \+Q$ is called \emph{satisfying} for $\Phi$ if %$\Phi(\={x})=\True$, where
\[
\Phi(\={x})\triangleq\bigwedge\limits_{c\in\+{C}} c\left(\={x}_{\vbl(c)}\right)=\True.
\]

Some key parameters of a CSP formula $\Phi=(V,\+{Q},\+{C})$ are listed in the following:
\begin{itemize}
    \item \emph{domain size} $q=q_\Phi\triangleq\max\limits_{v\in V}\abs{Q_{v}}$  
    and \emph{width} $k=k_\Phi\triangleq\max\limits_{e\in \+{C}}\abs{ {\vbl}(c)}$;
%    \item \emph{variable degree} $d=d_\Phi\triangleq\max\limits_{v\in V}\vert \{c\in \+{C}\vert v\in \text{vbl}(c) \}\vert$;
    \item \emph{constraint degree} $\Delta=\Delta_\Phi\triangleq\max\limits_{c\in \+{C}}\abs{\{c'\in \+{C}\mid \vbl(c)\cap \vbl(c')\neq\emptyset\}}$;
    %\footnote{Here $\Delta$ is one plus the maximum degree of the dependency graph of $\Phi$ since $c\in \{c'\in \+{C}\vert \text{vbl}(c)\cap\text{vbl}(c')\neq\emptyset\}$.}
    \item \emph{violation probability} $p=p_{\Phi}\triangleq\max\limits_{c\in \+{C}}\mathbb{P}[\neg c]$, where $\mathbb{P}$ denotes the law for the uniform assignment, 
    in which each $v\in V$ draws its evaluation from~$Q_v$ uniformly and independently at random.
\end{itemize}

A characterization for the existence of a satisfying solution to CSP is given by the celebrated \emph{Lov\'{a}sz Local Lemma (LLL)}~\cite{LocalLemma}. By interpreting the space of all possible assignments as a
probability space and the violation of each constraint as a bad event, the local lemma provides a
sufficient condition
\begin{align}\label{eq:classic-LLL-condition}
    \mathrm{e}p\Delta\le 1.
\end{align}
for the existence of an assignment to avoid all the bad events, i.e., the existence of a solution to the CSP.

\vspace{8pt}
\textbf{Counting/Sampling LLL.} A counting/sampling variant of the Lov\'{a}sz Local Lemma, which seeks algorithms to efficiently (approximate) count and sample (almost-uniform) solutions to CSPs in the local lemma regime, has drawn lots of recent attention~\cite{GJL19,Moi19,guo2019counting,galanis2019counting,FGYZ20,feng2021sampling,Vishesh21sampling,Vishesh21towards,HSW21,galanis2021inapproximability,feng2022improved,he2022sampling,qiu2022perfect}. 
There are two separate lines of work on {deterministic} counting LLL and {fast} sampling, using very different approaches.

To this date, all existing deterministic counting algorithms for LLL are based on linear programming.
The algorithm was first found in a major breakthrough~\cite{Moi19}. 
The algorithm properly marked the variables using algorithmic LLL and then constructed a polynomial-time deterministic oracle for approximately computing the marginal probabilities of marked variables via linear programs of sizes $n^{\poly(\Delta,k)}$,
which can be used to deterministically approximately count the number of satisfying solutions to $k$-CNF formulas in $n^{\poly(\Delta,k)}$  time when $p\Delta^{60}\lesssim 1$. 
This LP-based approach was later extended to work for hypergraph colorings~\cite{guo2019counting} and random CNF formulas~\cite{galanis2019counting} and finally, for general CSP instances with a substantially improved LLL regime of $p\Delta^7\lesssim 1$~\cite{Vishesh21towards}. 

Another line of work for the counting/sampling local lemma focuses on \emph{fast} sampling an almost-uniform satisfying solution. In~\cite{FGYZ20}, an algorithm was given for approximate sampling uniform solutions to $k$-CNF formulas with a near-linear running time $\widetilde{O}(n^{1.001})$ when $p\Delta^{20}\lesssim 1$.  Their approach was based on a Markov chain on a projected space constructed using the mark/unmark strategy invented in~\cite{Moi19}. This projected Markov chain approach was later refined in~\cite{feng2021sampling, Vishesh21sampling, HSW21} for fast sampling nearly-atomic CSP solutions, where by atomic we mean each constraint is violated by one forbidden configuration, which achieved the state-of-the-arts regime $p\Delta^{5.714}\lesssim 1$. Very recently in~\cite{he2022sampling},  a new approach based on the recursive marginal sampler in~\cite{anand2021perfect} was given for sampling general CSP solutions in the local lemma regime within near-linear time. This new sampling algorithm was very different from all Markov chain based sampling algorithms.

\subsection{Our results}
We give a new deterministic algorithm for approximately counting the number of satisfying solutions for general CSPs in an improved local lemma regime.
This new deterministic approximate counting algorithm is a combinatorial one, which does not rely on linear programming, and hence is considerably simpler and more intuitive than all previous deterministic algorithms for counting LLL~\cite{Moi19,guo2019counting,galanis2019counting,Vishesh21towards} that were LP-based.

This new algorithm is in fact a derandomization of the very recent fast sampling algorithm in~\cite{he2022sampling}.
Furthermore, 
we obtain an improved regime with a much refined analysis, as stated in the following theorem.

%In this paper, we show we can unify the algorithmic framework for determinstic counting LLL and fast sampling LLL.
%We present a simple combinatorial algorithm for efficiently (approximate) counting satisfying CSP solutions in the local lemma regime, and is a derandomization of the recent fast sampling algorithm in \cite{he2022sampling}. Our algorithm runs in polynomial time assuming $q,k,\Delta=O(1)$. Moreover, we give a refined analysis to show our algorithm works in a more restrictive regime $p\Delta^5\lesssim 1$, which is even beyond the current best bound $p\Delta^{5.714}\lesssim 1$\cite{HSW21} for atomic CSP instances, and directly applicable to certain instances of CSP including $k$-SAT.

\begin{theorem}[informal]\label{thm:main-counting}
There is an algorithm such that given as input any $\varepsilon\in(0,1)$ and any CSP formula $\Phi=(V,\+{Q},\+{C})$ with $n$ variables  satisfying 
\begin{align}\label{eq:main-thm-LLL-condition}
q^2\cdot k\cdot p\cdot\Delta^5\leq \frac{1}{256\mathrm{e}^3},
\end{align}
%where
%\begin{itemize}
%    \item $q=q_\Phi\triangleq\max\limits_{v\in V}\abs{Q_{v}}$  is the   \emph{domain size} of $\Phi$.
%    \item  $k=k_\Phi\triangleq\max\limits_{e\in \+{C}}\abs{ {\vbl}(c)}$ is the   \emph{width} of $\Phi$.
%    \item  $\Delta=\Delta_\Phi\triangleq\max\limits_{c\in \+{C}}\abs{\{c'\in \+{C}\mid \vbl(c)\cap \vbl(c')\neq\emptyset\}}$ is the  \emph{constraint degree} of $\Phi$.;
    %\footnote{Here $\Delta$ is one plus the maximum degree of the dependency graph of $\Phi$ since $c\in \{c'\in \+{C}\vert \text{vbl}(c)\cap\text{vbl}(c')\neq\emptyset\}$.}
%    \item  $p=p_{\Phi}\triangleq\max\limits_{c\in \+{C}}\mathbb{P}[\neg c]$ is the  \emph{maximum violation probability} of $\Phi$, where $\mathbb{P}$ denotes the law for the uniform assignment, 
%    in which each $v\in V$ draws its evaluation from~$Q_v$ uniformly and independently at random.
%\end{itemize}
The algorithm terminates within $\left(\frac{n}{\varepsilon}\right)^{\poly(\log{q},\Delta,k)}$ time and deterministically outputs an $\varepsilon$-approximation of $Z$, the number of satisfying solutions to $\Phi$.
\end{theorem}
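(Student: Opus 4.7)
The plan is to reduce counting to conditional marginal queries, derandomize the recursive marginal sampler of \cite{he2022sampling} by local enumeration, and use a refined $\{2,3\}$-tree bound to extract the improved $\Delta^5$ exponent.

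\textbf{Self-reduction.} I would fix a satisfying assignment $\={x}^{*}$ (for example, obtained from the Moser--Tardos algorithm, which succeeds in the target regime) and an ordering $v_1,\dots,v_n$ of the variables. By the chain rule,
\[
\frac{1}{Z} \;=\; \mu_{\Phi}(\={x}^{*}) \;=\; \prod_{i=1}^{n}\mu_i,
\]
where $\mu_i$ denotes the conditional marginal probability, under the uniform distribution $\mu_\Phi$ on satisfying assignments, that $v_i$ takes value $x^{*}_{v_i}$ given $v_j=x^{*}_{v_j}$ for all $j<i$. Each $\mu_i$ is itself a marginal on the residual CSP obtained by pinning $v_1,\dots,v_{i-1}$, so an $(\varepsilon/(2n))$-multiplicative approximation of every $\mu_i$ combines into an $\varepsilon$-approximation of $Z$. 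The task therefore reduces to constructing a deterministic marginal oracle.

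\textbf{Derandomization by enumeration.} The recursive sampler of \cite{he2022sampling} computes a marginal $\mu_v(\cdot)$ by exploring a bounded neighborhood of $v$ in the constraint--variable incidence structure, sampling intermediate values and truncating the recursion at \emph{frozen} variables or at subformulas that have factored into independent components. I would derandomize this by replacing each intermediate random sample with an exhaustive enumeration over the partial assignments visited on the recursion tree: if the truncation depth is $s$, this costs $q^{O(sk)}$ per query. Correctness follows once truncation at depth $s$ incurs approximation error at most $\loverd^{\,s}$ for some constant $\loverd<1$ in the target LLL regime; then $s=O(\poly(\log q,\Delta,k)\cdot\log(n/\varepsilon))$ suffices to drive the per-query error below $\varepsilon/(2n)$, and the overall running time is $(n/\varepsilon)^{\poly(\log q,\Delta,k)}$ as claimed.

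\textbf{Refined $\{2,3\}$-tree bound.} The error-decay argument charges each truncation failure to a $\{2,3\}$-tree of ``bad'' constraints in the square of the dependency graph, exactly as in \cite{Vishesh21towards,HSW21,he2022sampling}. The classical counting bound yields at most $(\mathrm{e}\Delta^2)^s$ such trees of size $s$ rooted at a fixed constraint, and each occurs with probability at most $p^{s}$ times a multiplicative overhead per node. In prior analyses this overhead accumulates to roughly $\Delta^{5}$ per node, producing $p\Delta^{7}\lesssim 1$ for general CSP and $p\Delta^{5.714}\lesssim 1$ for atomic instances. I would refine the tree construction so that adjacent bad constraints are forced to share variables rather than being linked through an auxiliary witness, allowing each shared variable to be paid for only once instead of twice and dropping the per-node overhead by two powers of $\Delta$. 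Combined with the $(\mathrm{e}\Delta^2)^s$ counting factor, geometric decay is then obtained precisely under the hypothesis $q^{2}kp\Delta^{5}\le 1/(256\mathrm{e}^{3})$, with $q^{2}$ and $k$ absorbing the enumeration of forbidden configurations on a shared variable pair and the size of a single constraint's scope respectively. The main obstacle will be exactly this step: one has to revisit every place in the analysis of \cite{he2022sampling} where a union bound over incident constraints is applied with slack, and replace it with an accounting that charges each shared variable once rather than once per incident bad constraint. Once this sharper witness structure is in place, the self-reduction and the $q^{O(sk)}$ enumeration are essentially mechanical.
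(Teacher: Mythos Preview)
Your self-reduction step contains a genuine gap. Fixing an arbitrary satisfying assignment $\={x}^*$ and pinning $v_1,\dots,v_{i-1}$ to it does not preserve the local-uniformity invariant that the recursive marginal sampler needs: once several variables of a constraint $c$ are pinned, $\mathbb{P}[\neg c\mid\sigma]$ can jump far above the threshold $\alpha$, and the marginal of the next variable in $c$ is then nowhere near uniform, so the recursion of \cite{he2022sampling} no longer converges. The paper (following \cite{Vishesh21towards}) does \emph{not} telescope along a fixed $\={x}^*$. Instead it builds a ``guiding assignment'' adaptively: at each step it (i) skips any variable currently in a $\sigma$-frozen constraint, and (ii) among the allowed values chooses the one minimizing an explicit potential $\digamma(\sigma)$ via the method of conditional expectations. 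This simultaneously guarantees that every marginal query sees a CSP with $\mathbb{P}[\neg c\mid\sigma]\le\alpha q$ for all $c$, and that the residual formula at the end has only logarithmic-size components, so the final factor $|\mathcal{S}_{P_s}|$ is computable by exhaustive enumeration. Your proposal has no mechanism for either guarantee.

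Your description of the $\{2,3\}$-tree refinement is also off. The $\Delta^5$ improvement does not come from ``forcing adjacent bad constraints to share variables.'' The key observation is that there are \emph{two different types} of bad events in the recursion: variables whose sampled value left the uniformity zone (each costing $1-q\theta\approx q\eta$) and constraints that remain frozen (each costing $\approx p/\alpha$). Previous analyses put both on the constraint side of a $\{2,3\}$-tree in $\Lin{H_\Phi}$ and paid $\Delta$-factors uniformly. The paper's \emph{generalized} $\{2,3\}$-tree lives on $V\cup\mathcal{C}$, lets the tree mix variable-nodes and constraint-nodes, and weights them asymmetrically ($|U|+\Delta|E|$). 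The $\Delta^5$ bound falls out of a two-type Galton--Watson calculation balancing $(1-q\theta)\cdot k\Delta$ against $(p/\alpha)(1+\eta)^k\cdot\Delta^3$, not from a variable-sharing accounting trick.
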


This improves the current state-of-the-arts $p\Delta^{7}\lesssim 1$ for general CSP~\cite{Vishesh21towards}, and $p\Delta^{5.714}\lesssim 1$ for nearly-atomic CSP~\cite{HSW21} including $k$-SAT.
The  $O\left(n^{\poly(\log{q},\Delta,k)}\right)$ running time aligns with previous LP-based algorithms.
The formal statement of the theorem is in Theorem \ref{thm:main-counting-refined}.

We also show that our analysis can be used to improve the bound of the algorithm in~\cite{Vishesh21towards} to the same regime stated as in Theorem \ref{thm:main-counting}. This is described in \Cref{sec:JPV}.

\subsection{Technique Overview}
Our method follows the adaptive mark/unmark framework of counting LLL~\cite{guo2019counting,Vishesh21towards}.
We briefly describe the previous approach before introducing our modifications. 

Given a CSP instance $\Phi$,
it has been observed by~\cite{haeupler2011new} that the marginal distribution of every variable is close to uniform within a local lemma regime. This is referred to as the ``local uniformity" property.

In previous works of counting LLL~\cite{Moi19,guo2019counting,galanis2019counting,Vishesh21towards}, a key ingredient is a marginal approximator, which approximates the marginal distribution of some variable conditioning on the current partial assignment. 
This approximator was built on a novel coupling procedure, first proposed by Moitra~\cite{Moi19}.

In the procedure, two copies of the Gibbs distribution (which in our context is the uniform distribution over all satisfying assignments) conditioning on partial assignments where only one variable is assigned different values are maximally coupled in a sequential and variable-wise fashion. 
In addition, the variables are picked in a manner such that all the variables assigned in the coupling procedure have the local uniformity property. Initially presented as mark/unmark framework by Moitra~\cite{Moi19}, the rule for picking variables was later refined to become adaptive~\cite{guo2019counting,Vishesh21towards}. An observation is that after sufficiently many steps of the idealized coupling procedure,
there is a good chance that the component containing $v$ in 
the resulting formula is of logarithmic size, 
from where one can efficiently calculate the ratio of the number of satisfying assignments extending two partial assignments using exhaustive enumeration. This observation allows one to truncate the procedure up to some certain threshold so that there remains a large probability that the two distributions are successfully coupled.
%It has been shown that within logarithmic steps, 
%all the unassigned and uncoupled variables of the CSP can be partitioned to two disjoint sets $V_D, V_R$ such that the following properties hold with a high probability :
%\begin{itemize}
%    \item all the variables in $V_R$ are unassigned;
%    \item the size of $V_D$ is logarithmic and $v\in V_D$;
%    \item $V_D$ and $V_R$ are separated, that is, for each constraint $c$ where $\var{c}\cap V_D \cap V_R \neq \emptyset$, $c$ is satisfied by the assignments of the coupled variables.
%\end{itemize}
%Thus, within polynomial-time, one can calculate the ratio of the number of satisfying assignments extending two partial assignments by enumerating all the possible assignments of the variables in $V_D$ exhaustively.

Then a linear program is set up to mimic the transition probabilities in the (truncated) coupling procedure, so one can use a binary search to approximate the marginal distribution of $v$. 
The coupling procedure and the linear program are employed for marginal approximating by all the algorithms of counting LLL~\cite{Moi19,guo2019counting,galanis2019counting,Vishesh21towards}. It is worth noting that this linear program is of size $n^{\poly(k,\Delta,\log{q})}$ and requires a polynomial-time algorithm for solving linear programs to achieve the desired running time.

In this paper, we propose a new combinatorial approximator for approximating the marginal distribution. 
Rather than dealing with the coupling of two Gibbs distributions, we decompose a \emph{single} Gibbs distribution directly. 
Given a variable $v$ with domain $Q_v$,
if $v$ satisfies the local uniformity property, 
there exists $\theta_v< \frac{1}{q_v}$ close enough to $\frac{1}{q_v}$, such that for each $i\in Q_v$ the probability that $v$ is set as $i$ is no less than $\theta_v$.
Thus, there are $q_v+1$ branches for the possibilities of $v$:
for each $i\in Q_v$, there is a branch of assignment $i$ with probability $\theta_v$,
and the last branch is with probability $1- q_v\theta_v$ and its assignment follows a ``overflow" distribution $\+{D}_v$.
For the last branch, we repeatedly find a variable $u$ whose successful pinning might help factorize the formula with respect to $v$, and calculate the marginal by recursively applying the marginal approximator using the chain rule. During the process, a similar rule in the adaptive mark/unmark framework by \cite{Vishesh21towards} is taken to guarantee that the local uniformity property persists throughout the algorithm for each chosen variable. A similar observation as in the LP approach that, with some appropriately chosen truncation condition, under a large fraction of the partial assignments generated from the recursive procedure, the component containing $v$ in 
the resulting formula is of logarithmic size, from where one can efficiently calculate $\+{D}_v$ using exhaustive enumeration. 
%Note that this procedure goes forward \emph{recursively} rather than \emph{sequentially}.
%After the marginal distribution of $u$ is obtained, we go back to the calculation of  $\+{D}_v$ with the aid of the marginal distribution of $u$. 

%A well-known property in the literature is that: after the marginal distributions of  logarithmic variables have been obtained, with high probability
%all the unassigned variables in the CSP can be partitioned to two disjoint sets $V_D, V_R$ such that the following properties hold:
%\begin{itemize}
%    \item the size of $V_D$ is logarithmic and $v\in V_D$;
%    \item $V_D$ and $V_R$ are separated, that is, for each constraint $c$ where $\var{c}\cap V_D \cap V_R \neq \emptyset$, $c$ is satisfied under the current partial assignments.
%\end{itemize}
%Thus, one can calculate $\+{D}_v$ by exhaustively enumerating all feasible assignments of the variables in $V_D$.
%Given $\+{D}_v$, it is easy to calculate the marginal distribution of $v$. The linear program is no longer necessary.

Our marginal approximator is a derandomization of the marginal sampler of the recent sampling algorithm in~\cite{he2022sampling}.
Given a variable $v$, the marginal sampler
samples an assignment of $v$ from its marginal distribution, while our marginal approximator calculates the marginal probability that $v$ is assigned as $i$ for each $i\in Q_v$. 
%To achieve a polynomial running time, 
%the marginal approximator is truncated in advance if the recursion depth is too large.
Moreover, equipped with the marginal approximator,
we use the same method as in~\cite{Vishesh21towards} to find a ``guiding assignment", which can be viewed as a method of conditional expectation for derandomization, to achieve a complete algorithm for estimating the number of satisfying solutions.

To bound the error and running time of our marginal approximator,
we design a new combinatorial structure named generalized $\{2,3\}$-tree, 
which leads to the improved bound $p\Delta^5\lesssim 1$.
In most works on counting/sampling LLL, 
two types of bad events are considered: 
one is that the assignment of a marked variable does not fall into the zone of local uniformity;
the other is that a constraint is still not satisfied after that a large proportion of its variables are assigned~\cite{Moi19,feng2021sampling,guo2019counting,Vishesh21towards,he2022sampling}.
In previous work,
these two bad events are treated similarly and bounded using a combinatorial structure named $\{2,3\}$-tree~\cite{alon1991parallel}.
% It appeared in many works regarding sampling/counting LLL~\cite{Moi19,feng2021sampling,guo2019counting,Vishesh21towards,he2022sampling}, and explicitly pointed out in \cite{Vishesh21towards} that the bound usually comes from the aggregation of two sources of
% slack. These two sources of slack are treated similarly and bounded using a combinatorial structure named $\{2,3\}$-trees. 
A crucial observation is that the densities of these two types of bad events are different, which inspires our design of this new combinatorial structure to take advantage of this property and push the bounds beyond state-of-the-arts. We remark that the generalized $\{2,3\}$-tree can also be applied to improve the bounds in~\cite{Vishesh21towards}, and may be of independent interests.

\section{Notation and preliminaries}\label{sec:prelim}

%\subsection{Lovász Local Lemma}

\subsection{Notations for CSP}
Recall the definition of CSP formula $\Phi=(V,\+{Q},\+{C})$ in \Cref{sec:intro}.
We further define the following notations.
Let $\Omega=\Omega_{\Phi}$ be the set of all satisfying assignments of $\Phi$, $Z= Z_{\Phi}$ be the size of $\Omega$, and $\mu=\mu_{\Phi}$ be the uniform distribution over $\Omega$. 
Recall that $\mathbb{P}$ denotes the law for the uniform product distribution over $\+{Q}$.
For $C\subseteq\+{C}$, denote $\vbl(C)\triangleq\bigcup_{c\in C}\vbl(c)$;
and for $\Lambda\subseteq V$, denote ${\+Q}_\Lambda\triangleq\bigotimes_{v\in \Lambda}Q_v$.

For each $v\in V$, we use an extra symbol $\hollowstar\not \in Q_v$ to denote that $v$ is \emph{unassigned} with any value.
Given a CSP formula $\Phi=(V,\+{Q},\+{C})$ and a partial assignment $\sigma \in \bigotimes_{v\in V}\left(Q_v\cup \{\hollowstar\}\right)$,
let $\Lambda(\sigma)$ denote $\{v\in V:\sigma(v) \in Q_v\}$.
The simplification of $\Phi$ under $\sigma$ is a new CSP formula $\Phi^\sigma=(V^\sigma,\+{Q}^\sigma,\+{C}^\sigma)$,
where $V^\sigma=V\setminus \Lambda(\sigma)$, $\+{Q}^\sigma=\+{Q}_{V\setminus \Lambda(\sigma)}$, 
and the $\+{C}^\sigma$ is obtained from $\+{C}$ by: 
\begin{enumerate}
\item
removing all the constraints that have already been satisfied
 by $\sigma$;
 %\footnote{Recall that a constraint $c$ is satisfied by a partial assignment $\sigma$ if $c$ is satisfied by all full assignments that extend $\sigma$.} 
\item
for  the remaining constraints, %not yet satisfied by $\sigma$, 
replacing the variables $v\in \Lambda(\sigma)$ with their values $\sigma(v)$.
\end{enumerate}
It is easy to see that the $\mu_{\Phi^\sigma}$ is the same as the marginal distribution induced by $\mu$ on $V\setminus \Lambda(\sigma)$, conditional on the assignment over $\Lambda(\sigma)$ is $\sigma$.

% A simplification of $\Phi=(V,\+{Q},\+{C})$ under some partial assignment $\sigma$, denoted by $\Phi^\sigma=(V^\sigma,\+{Q}^\sigma,\+{C}^\sigma)$, 
% is a new CSP formula such that $V^\sigma=V\setminus \Lambda(\sigma)$ and $\+{Q}^\sigma=\+{Q}_{V\setminus \Lambda(\sigma)}$, 
% and the $\+{C}^\sigma$ is obtained from $\+{C}$ by: 
% \begin{enumerate}
% \item
% removing all the constraints that have already been satisfied
%  by $\sigma$;
%  %\footnote{Recall that a constraint $c$ is satisfied by a partial assignment $\sigma$ if $c$ is satisfied by all full assignments that extend $\sigma$.} 
% \item
% for  the remaining constraints, %not yet satisfied by $\sigma$, 
% replacing the variables $v\in \Lambda(\sigma)$ with their values $\sigma(v)$.
% \end{enumerate}

A CSP formula $\Phi = (V, \+{Q}, \+{C})$ can be naturally represented as a (multi-)hypergraph $H_{\Phi}$,
where each variable $v\in V$ corresponds to a vertex in $H_{\Phi}$ and each constraint $c\in\+{C}$ corresponds to a hyperedge in $H_{\Phi}$ which joins the vertices corresponding to $\vbl(c)$.
We slightly abuse the notation and write $H_{\Phi}=(V, \+{C})$.

%Suppose that $H_{\Phi}=(V, \+{C})$ has $K\ge 1$ connected components.
%
Let $H_{i}=(V_i, \+{C}_i)$ for $1\le i\le K$ denote all $K\ge 1$ connected components in $H_{\Phi}$, and $\Phi_i=(V_i, \+{Q}_{V_i}, \+{C}_i)$ their formulas.
Obviously $\Phi=\Phi_1\land\Phi_2\land \cdots \land \Phi_K$ with disjoint $\Phi_i$, and 
$Z_{\Phi}$ is the product of $Z_{\Phi_i}$.

\subsection{Lov\'{a}sz Local Lemma}
In the context of CSP, the celebrated asymmetric Lov\'{a}sz Local Lemma is as follows.

\begin{theorem}[Erd\"{o}s and Lov\'{a}sz~\cite{LocalLemma}]\label{locallemma}
    Given a CSP formula $\Phi=(V,\+{Q},\+{C})$, if the following holds
    %there is a function $x : \+{C} \rightarrow (0, 1) $ such that for any $c \in \+{C}$,\\
    \begin{align}\label{llleq}
    \exists x\in (0, 1)^\+{C}\quad \text{ s.t.}\quad \forall c \in \+{C}:\quad
        {\mathbb{P}[\neg c]\leq x(c)\prod_{\substack{c'\in \+{C}\\ {\vbl}(c)\cap {\vbl}(c')\neq \emptyset}}(1-x(c'))},
    \end{align}
    then  
    $$
        {\mathbb{P}\left[ \bigwedge\limits_{c\in \+{C}} c\right]\geq \prod\limits_{c\in \+{C}}(1-x(c))>0},
    $$
\end{theorem}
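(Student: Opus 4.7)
The plan is to prove the standard inductive strengthening that for any subset $S \subseteq \+{C}$ of constraints and any constraint $c \in \+{C}$ (with $c \notin S$),
\[
\Pr\left[\neg c \,\bigg|\, \bigwedge_{c' \in S} c'\right] \leq x(c),
\]
and then derive the conclusion by a telescoping chain rule. Specifically, once the inductive claim is established, I would fix any ordering $c_1, c_2, \ldots, c_m$ of $\+{C}$ and write
\[
\Pr\left[\bigwedge_{i=1}^m c_i\right] = \prod_{i=1}^m \Pr\left[c_i \,\bigg|\, \bigwedge_{j<i} c_j\right] \geq \prod_{i=1}^m (1 - x(c_i)),
\]
which gives the theorem, with strict positivity from $x(c) \in (0,1)$.

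The main work is the inductive claim, carried out by induction on $|S|$. The base case $S = \emptyset$ is immediate from the LLL hypothesis \eqref{llleq}, since $\Pr[\neg c] \leq x(c) \prod_{c' : \vbl(c) \cap \vbl(c') \neq \emptyset}(1 - x(c')) \leq x(c)$. For the inductive step I would split $S = S_1 \sqcup S_2$, where $S_1 = \{c' \in S : \vbl(c') \cap \vbl(c) \neq \emptyset\}$ collects constraints that share variables with $c$, and $S_2 = S \setminus S_1$ the rest. Writing the conditional probability as a ratio,
\[
\Pr\left[\neg c \,\bigg|\, \bigwedge_{c' \in S} c'\right] = \frac{\Pr\left[\neg c \wedge \bigwedge_{c' \in S_1} c' \,\big|\, \bigwedge_{c' \in S_2} c'\right]}{\Pr\left[\bigwedge_{c' \in S_1} c' \,\big|\, \bigwedge_{c' \in S_2} c'\right]},
\]
I would upper-bound the numerator by $\Pr[\neg c \mid \bigwedge_{c' \in S_2} c'] = \Pr[\neg c]$, which holds because $c$ and every constraint in $S_2$ are supported on disjoint variable sets, so they are independent under the uniform product law $\mathbb{P}$.

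For the denominator, I would enumerate $S_1 = \{c'_1, \ldots, c'_r\}$ and apply the chain rule to obtain
\[
\Pr\left[\bigwedge_{i=1}^r c'_i \,\bigg|\, \bigwedge_{c' \in S_2} c'\right] = \prod_{i=1}^r \Pr\left[c'_i \,\bigg|\, \bigwedge_{j<i} c'_j \wedge \bigwedge_{c' \in S_2} c'\right] \geq \prod_{i=1}^r (1 - x(c'_i)),
\]
where each factor uses the inductive hypothesis applied to a conditioning set of size strictly less than $|S|$. Combining these two bounds with \eqref{llleq} yields
\[
\Pr\left[\neg c \,\bigg|\, \bigwedge_{c' \in S} c'\right] \leq \frac{\Pr[\neg c]}{\prod_{c' \in S_1}(1 - x(c'))} \leq \frac{x(c) \prod_{c' : \vbl(c) \cap \vbl(c') \neq \emptyset}(1 - x(c'))}{\prod_{c' \in S_1}(1 - x(c'))} \leq x(c),
\]
since $S_1$ is a subset of the index set in the numerator product and each factor $(1 - x(c'))$ lies in $(0,1)$. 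The main delicate point — and the only one I would flag as requiring care — is justifying that the numerator really is controlled by $\Pr[\neg c]$ alone: one must verify that the conditioning on $\bigwedge_{c' \in S_2} c'$ is indeed independent of both $\neg c$ and $\bigwedge_{c' \in S_1} c'$ in the relevant sense, and to handle the degenerate case where the denominator might be small by noting that the inductive hypothesis was already assumed to hold inside the chain-rule expansion, so the denominator is bounded away from zero by $\prod_{i=1}^r (1-x(c'_i)) > 0$.
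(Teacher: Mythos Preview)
Your proof is correct and is the standard inductive argument for the asymmetric Lov\'asz Local Lemma. Note, however, that the paper does not give its own proof of this statement: Theorem~\ref{locallemma} is stated as a classical preliminary result, attributed to Erd\H{o}s and Lov\'asz and cited without proof, so there is no paper-side argument to compare against.
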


The following result shows that 
when the condition \eqref{llleq} is satisfied, 
the probability of any event in the uniform distribution $\mu$ over all satisfying assignments can be well approximated by the probability of the event in the product distribution.
This was observed in~\cite{haeupler2011new}:

\begin{theorem}[Haeupler, Saha, and Srinivasan~\cite{haeupler2011new}]\label{HSS}
Given a CSP formula $\Phi=(V,\+{Q},\+{C})$, if $\eqref{llleq}$ holds, 
then for any event $A$ that is determined by the assignment on a subset of variables $\var{A}\subseteq V$, 
\[
    \Pr[\mu]{A}=\mathbb{P}\left[A\mid \bigwedge\limits_{c\in \+{C}}  c\right]\leq \mathbb{P}[A]\prod_{\substack{c\in \+{C}\\ \var{c}\cap \var{A}\neq \emptyset}}(1-x(c))^{-1},
\]
where $\mu$ denotes the uniform distribution over all satisfying assignments of $\Phi$ and $\mathbb{P}$ denotes the law of the uniform product distribution over $\+{Q}$.
\end{theorem}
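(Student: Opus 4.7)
The plan is to write the conditional probability as a ratio
\[
\mathbb{P}\left[A \;\middle|\; \bigwedge_{c \in \+{C}} c\right] = \mathbb{P}[A] \cdot \frac{\mathbb{P}\left[\bigwedge_{c \in \+{C}} c \,\middle|\, A\right]}{\mathbb{P}\left[\bigwedge_{c \in \+{C}} c\right]}
\]
and show that the last factor is at most $\prod_{c \in T}(1-x(c))^{-1}$, where I split $\+{C} = S \sqcup T$ with $S = \{c \in \+{C} : \var{c} \cap \var{A} = \emptyset\}$ and $T = \+{C} \setminus S$. The quantity $\mathbb{P}\left[\bigwedge_{c \in S} c\right]$ will appear in both my upper bound on the numerator and my lower bound on the denominator and cancel, leaving only the product over $T$. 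The identity $\Pr[\mu]{A} = \mathbb{P}\left[A \mid \bigwedge_c c\right]$ asserted in the statement is immediate from the definition of $\mu$ as the uniform measure over satisfying assignments, which coincides with the product measure $\mathbb{P}$ conditioned on all constraints being satisfied.

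For the numerator I would use independence: since $\bigwedge_{c \in S} c$ is measurable with respect to the variables in $V \setminus \var{A}$, and under the product measure $\mathbb{P}$ these variables are independent of those in $\var{A}$ (and hence of $A$), the trivial inclusion $\bigwedge_{c \in \+{C}} c \subseteq \bigwedge_{c \in S} c$ gives
\[
\mathbb{P}\left[\bigwedge_{c \in \+{C}} c \,\middle|\, A\right] \le \mathbb{P}\left[\bigwedge_{c \in S} c \,\middle|\, A\right] = \mathbb{P}\left[\bigwedge_{c \in S} c\right].
\]

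For the denominator I plan to invoke the strengthened inductive bound that underlies Theorem~\ref{locallemma}: under condition~\eqref{llleq}, for any $c \in \+{C}$ and any $\+{G} \subseteq \+{C} \setminus \{c\}$,
\[
\mathbb{P}\left[\neg c \,\middle|\, \bigwedge_{c' \in \+{G}} c'\right] \le x(c),
\]
so $\mathbb{P}\left[c \mid \bigwedge_{c' \in \+{G}} c'\right] \ge 1 - x(c)$. Fixing any enumeration of $T = \{c_1,\dots,c_m\}$ and telescoping this bound conditional on $\bigwedge_{c \in S} c$ together with the previously processed members of $T$ yields
\[
\mathbb{P}\left[\bigwedge_{c \in T} c \,\middle|\, \bigwedge_{c \in S} c\right] = \prod_{i=1}^{m} \mathbb{P}\left[c_i \,\middle|\, \bigwedge_{c \in S} c \wedge \bigwedge_{j<i} c_j\right] \ge \prod_{c \in T}(1-x(c)),
\]
from which $\mathbb{P}\left[\bigwedge_{c \in \+{C}} c\right] \ge \mathbb{P}\left[\bigwedge_{c \in S} c\right] \prod_{c \in T}(1-x(c))$. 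Combining this with the numerator bound and multiplying by $\mathbb{P}[A]$ gives exactly the claimed inequality.

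The only non-routine ingredient is the inductive LLL estimate in its general form, with conditioning allowed on an arbitrary family of good events rather than on the unconditional measure. This is, however, precisely the induction that drives the standard proof of Theorem~\ref{locallemma} in~\cite{LocalLemma}, so I anticipate no genuine obstacle beyond carefully invoking the existing statement; the main conceptual point is the clean split of $\+{C}$ into $S$ and $T$ so that the factor $\mathbb{P}\left[\bigwedge_{c \in S} c\right]$ appears on both sides and cancels, which is what produces the local (rather than global) product in the final bound.
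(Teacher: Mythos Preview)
The paper does not prove Theorem~\ref{HSS} at all: it is stated as a preliminary result and attributed to~\cite{haeupler2011new} with no accompanying argument. Your proposal is correct and is essentially the standard proof from that reference --- split $\+{C}$ into constraints disjoint from $\var{A}$ and the rest, use independence for the numerator, and invoke the inductive LLL bound $\mathbb{P}[\neg c\mid \bigwedge_{c'\in\+{G}}c']\le x(c)$ to telescope the denominator.
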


By setting $x(c)=\mathrm{e}p$ for every $c\in \+{C}$ in Theorem \ref{HSS},
it is straightforward to prove the following
``local uniformity" property,
where the lower bound is calculated by $\mu_v(x)=1-\sum_{y\in Q_v\setminus\{x\}}\mu_v(y)$ for each $x\in Q_v$.

% The following ``local uniformity" property is a straightforward corollary to \Cref{HSS} by setting $x(c)=\mathrm{e}p$ for every $c\in \+{C}$ (and the lower bound is calculated by $\mu_v(x)=1-\sum_{y\in Q_v\setminus\{x\}}\mu_v(y)$). 

\begin{corollary}[local uniformity]\label{generaluniformity}
Given a CSP formula $\Phi=(V,\+{Q},\+{C})$, if $\mathrm{e}p\Delta<1$, then for any variable $v\in V$ and any value $x\in Q_v$, it holds that
\[
\frac{1}{q_v}-\left((1-\mathrm{e}p)^{-\Delta}-1\right)\leq \mu_v(x) \leq \frac{1}{q_v}+\left((1-\mathrm{e}p)^{-\Delta}-1\right).
\]
\end{corollary}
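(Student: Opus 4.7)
The plan is to apply Theorem \ref{HSS} with the uniform assignment $x(c) = \mathrm{e}p$ for every constraint $c \in \+C$, and then extract both the upper and lower bounds from the resulting conclusion. The excerpt essentially signposts this strategy already, so the work is organizational rather than creative.

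First I would check that the hypothesis \eqref{llleq} of Theorem \ref{HSS} holds under the assumption $\mathrm{e}p\Delta < 1$. With $x(c) \equiv \mathrm{e}p$, the condition becomes $p \leq \mathrm{e}p\,(1-\mathrm{e}p)^{\Delta}$, where the exponent $\Delta$ arises because, by the definition of the constraint degree, any constraint shares variables with at most $\Delta$ constraints including itself. This reduces to $(1-\mathrm{e}p)^{\Delta}\geq \mathrm{e}^{-1}$, which follows from $\mathrm{e}p\Delta \leq 1$ via the standard estimate $(1-1/N)^{N}\to \mathrm{e}^{-1}$.

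Next I would apply Theorem \ref{HSS} to the singleton event $A = \{v = x\}$. Since $\var{A} = \{v\}$, the product on the right-hand side of Theorem \ref{HSS} ranges only over constraints containing $v$, of which there are at most $\Delta$. Combining this with $\mathbb{P}[A] = 1/q_v$ immediately yields the sharper upper bound $\mu_v(x) \leq (1-\mathrm{e}p)^{-\Delta}/q_v$, which is in turn dominated by $1/q_v + ((1-\mathrm{e}p)^{-\Delta} - 1)$ since $q_v \geq 1$.

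Finally, for the lower bound I would follow the hint in the excerpt and use the probability identity $\mu_v(x) = 1 - \sum_{y \in Q_v\setminus\{x\}} \mu_v(y)$, then apply the previously established upper bound to each of the $q_v - 1$ terms in the sum. A brief rearrangement recovers the stated lower bound, with the factor $(q_v-1)/q_v \leq 1$ absorbed into the looser form $1/q_v - ((1-\mathrm{e}p)^{-\Delta}-1)$. There is no real obstacle in this proof; the only step requiring any care is confirming that $\mathrm{e}p\Delta < 1$ suffices to verify \eqref{llleq} with the uniform choice $x(c) = \mathrm{e}p$, after which everything is direct substitution.
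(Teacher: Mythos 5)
Your overall strategy is exactly the one the paper sketches (set $x(c)=\mathrm{e}p$, apply Theorem~\ref{HSS} to the event $\{v=x\}$, then get the lower bound from $\mu_v(x)=1-\sum_{y\neq x}\mu_v(y)$), and your treatment of the upper bound via $\mathbb{P}[A]=1/q_v$ with at most $\Delta$ constraints touching $v$, as well as the rearrangement for the lower bound, are both correct.

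There is, however, a genuine error in the step where you verify the hypothesis~\eqref{llleq}. You reduce it to $(1-\mathrm{e}p)^{\Delta}\geq \mathrm{e}^{-1}$ and claim this ``follows from $\mathrm{e}p\Delta\leq 1$ via the standard estimate $(1-1/N)^{N}\to\mathrm{e}^{-1}$.'' That limit is approached \emph{from below}: $(1-1/N)^{N}<\mathrm{e}^{-1}$ for every finite $N$, so when $\mathrm{e}p$ is near $1/\Delta$ the inequality fails. Concretely, take $\Delta=2$ and $\mathrm{e}p=0.49$, so $\mathrm{e}p\Delta=0.98<1$, yet $(1-\mathrm{e}p)^{\Delta}=(0.51)^{2}\approx 0.26<\mathrm{e}^{-1}\approx 0.368$. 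The fix is to note that in the asymmetric local lemma the product runs over the \emph{dependency neighbourhood} $\Gamma(c)=\{c'\neq c : \vbl(c)\cap\vbl(c')\neq\emptyset\}$, which has at most $\Delta-1$ elements (the paper's displayed formula is slightly loose on this point; see the distinction drawn between the constraint degree $\Delta$ and the dependency degree $D=\Delta-1$). With exponent $\Delta-1$ you instead need $(1-\mathrm{e}p)^{\Delta-1}\geq\mathrm{e}^{-1}$, and the correct elementary fact is that $(1-1/N)^{N-1}$ \emph{decreases} to $\mathrm{e}^{-1}$ from above, so $(1-\mathrm{e}p)^{\Delta-1}\geq(1-1/\Delta)^{\Delta-1}\geq \mathrm{e}^{-1}$ whenever $\mathrm{e}p\leq 1/\Delta$. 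Once this is repaired, the rest of your argument goes through unchanged: Theorem~\ref{HSS} has at most $\Delta$ factors (constraints containing $v$), giving $\mu_v(x)\leq (1-\mathrm{e}p)^{-\Delta}/q_v$, and the lower bound follows by the identity you cite together with $(1-\mathrm{e}p)^{-\Delta}\geq 1$.
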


\section{The counting algorithm}\label{sec:counting}
We now present our algorithm for deterministically approximately counting CSP solutions.
\subsection{The main counting algorithm}\label{sec:main-counting-alg}
The main counting algorithm takes as input a CSP formula $\Phi=(V,\+{Q},\+{C})$
with domain size $q=q_\Phi$, 
width $k=k_\Phi$, 
constraint degree  $\Delta=\Delta_\Phi$, 
and violation probability $p=p_{\Phi}$, where the meaning of these parameters are as defined in \Cref{sec:intro}. 

We assume that the $n=|V|$ variables are enumerated as $V=\{v_1,v_2,\ldots,v_n\}$ in an arbitrary order.
For the CSP formula  $\Phi=(V,\+{Q},\+{C})$ presented to the algorithm, we assume that given any constraint $c\in \+{C}$ (or any variable $v\in V$), 
the $\vbl(c)$ (or $\{c\in\+{C}\mid v\in\vbl(c)\}$) can be retrieved in $\poly(k,\Delta)$ time,
and furthermore, given any assignment $\sigma\in\+Q_{\var{c}}$, it can be determined in $\poly(q,k)$ time 
if $c$ is already satisfied by $\sigma$. 
It is also safe to assume $\Delta\geq 2$ as otherwise the problem would be trivial. 
%We also assume that given any $c\in\+{C}$ (or $v\in V$), the $\vbl(c)$ (or $\{c\in\+{C}\mid v\in\vbl(c)\}$) can be retrieved in $\poly(k,\Delta)$ time.%as the neighborhood in a bipartite graph.
%
%\TODO{input model, oracles}

%The main sampling algorithm utilizes two subroutines: 
%The crucial one is the \newsample{} subroutine,
%which can sample the value of a variable $v\in V$ according to its correct marginal distribution, as long as $v$ is not involved in any easy-to-violate constraint (with violation probability beyond a threshold $\pprime$);
%and the second subroutine \rejsamp{} is just the standard rejection sampling. 

The main counting algorithm incorporates the idea of ``guiding assignment" proposed in~\cite{Vishesh21towards}.  We will construct a sequence of partial assignments $P_0,P_1,\dots,P_s\in \bigotimes_{v\in V}\left(Q_v\cup \{\hollowstar\}\right)$, where $P_0$ is the empty assignment, and for each $i\in [s]$, $P_i$ extends $P_{i-1}$ by assigning value to some unassigned variable $v_i^*$.  For any partial assignment $X$ we use $\+S_X$ to denote the set of satisfying assignments that agree with $X$ on the assigned variables. Then we will use the following telescopic product to estimate $Z_{\Phi}$:
\begin{align}\label{eq-telescope}
Z_{\Phi}=\frac{\abs{\+{S}_{P_{0}}}}{\abs{\+{S}_{P_{1}}}}\cdot\frac{\abs{\+{S}_{P_{1}}}}{\abs{\+{S}_{P_{2}}}}\cdots\frac{\abs{\+{S}_{P_{s-1}}}}{\abs{\+{S}_{P_{s}}}}\cdot \abs{\+{S}_{P_{s}}}=\abs{\+{S}_{P_{s}}}\cdot\prod\limits_{i\in [s]}\frac{\abs{\+{S}_{P_{i-1}}}}{\abs{\+{S}_{P_{i}}}}=\abs{\+{S}_{P_{s}}}\cdot\prod\limits_{i\in [s]}\left(\mu^{P_{i-1}}_{v^*_i}\right)^{-1}.
\end{align}

We will then calculate the number $\abs{\+{S}_{P_{s}}}$ and approximate the marginal probability $\mu^{P_{i-1}}_{v^*_i}$ conditional on $P_{i-1}$ for each $i$ respectively. We will calculate the former using a subroutine that exhaustively enumerates all possible satisfying assignment and approximate the latter using a ``marginal approximator" subroutine.

Intuitively, we need to carefully construct such ``guiding assignment" to  meet the following two requirements:
\begin{itemize}
    \item For each $i\in [s]$, the marginal probability $\mu^{P_{i-1}}_{v^*_i}$ is efficiently approximable with enough accuracy.
    %efficiently and accurately approximated.
    \item The number of satisfying assignments $\abs{\+{S}_{P_{s}}}$ is efficiently enumerable.
\end{itemize}

The precise construction of such guiding assignment is a bit technical and involved. We then present the main framework of the algorithm, leaving some details to be specified later. One of the key steps is to ``freeze" the constraints with high violation probability to ensure no constraint becomes too easy to violate, so that a ``local uniformity" property is maintained throughout. The same idea has been used in~\cite{Vishesh21towards,he2022sampling} and dated back to~\cite{beck1991algorithmic}.

A key threshold $\pprime$ for the violation probability is chosen, for now to satisfy:
\begin{align}\label{eq:parameter-p-prime}
%\pprime=\left(8\mathrm{e}p\Delta^3\right)/0.99.
p<\pprime<(\mathrm{e}q\Delta)^{-1}
%\pprime\in\left(p,\frac{1}{2\mathrm{e}q^2\Delta}\right)
%\quad\text{ and }\quad
%\zeta\in\left(0,\frac{1}{q}-\eta\right),
%\quad\text{ where }\eta\triangleq(1-\mathrm{e}\pprime q)^{-\Delta)}-1.
\end{align}
We will fix the specific choice of $\pprime$ later. 
%Such $\pprime\in(0,1)$ satisfies  $\pprime>p$ assuming the LLL condition \eqref{eq:main-thm-LLL-condition}.

%
\noindent

Given $\sigma \in \bigotimes_{v\in V}\left(Q_v\cup \{\hollowstar\}\right)$
and $c\in \+C$, we say $c$ is $\sigma$-frozen if 
$\mathbb{P}[\neg c \mid \sigma] > \pprime$.
Let $\digamma$ be some potential function defined over partial assignments which will be specified later.
The main counting algorithm then follows the procedure below, where the guiding assignment $X$ is constructed on the fly:
\par\addvspace{.5\baselineskip}
%\noindent
\framebox{
  \noindent
  \begin{tabularx}{15cm}{@{\hspace{\parindent}} l X c}
    \multicolumn{2}{@{\hspace{\parindent}}l}{\underline{Main counting algorithm}} \\% Title
  1. & Initialize $X$ as the empty assignment and $Z$ as $1$.\\
  2. & For $i=1,\dots,n$, if $v_i$ is not involved in any $X$-frozen constraint, do the \hypertarget{line2}{followings}:\\
      & \quad (a) estimate the marginal distribution $\mu^X_{v_i}$ using \Cref{Alg:calc} and let $\hat{\mu}_{v_i}$ be the \hypertarget{line2a}{estimator};\\
      & \quad (b) $X(v_i)\leftarrow  \mathop{\arg\min}\limits_{a\in Q_v}\digamma(X_{v_i\gets a})$ and \hypertarget{line2b}{$Z\leftarrow Z/ \hat{\mu}_{v_i}(X(v_i))$}.\\
  3. & Use exhaustive enumeration for each connected component in $H_{\Phi^X}$ to compute $\abs{\+{S}_X}$ , the number  of ways to extend $X$ to a full satisfying assignment, and return 
  \hypertarget{line3}{$Z\cdot \abs{\+{S}_X}$.}
  \end{tabularx}
 }
\par\addvspace{.5\baselineskip}

\begin{remark}[Upper bound function $\digamma(\cdot)$]\label{remark:potential-function}
The upper bound function $\digamma(\cdot)$ plays a key role in the main counting algorithm.
It is chosen to be in conformity with our analysis as in \Cref{def:potential-refined}, and 
thus not explicitly defined here. 
In \Cref{def:potential-refined},
$\digamma(\cdot)$ is well designed such that 
both the upper bounds on the error and the time cost of the main counting algorithm can be derived from it.
Concretely, we have that
\sloppy
\begin{itemize}
    \item for each partial assignment $\sigma$, $\digamma(\sigma)$ is always (for any $v\in V$) an upper bound for the total variation distance between the output of \Cref{Alg:calc} and the marginal distribution $\mu^{\sigma}_v$;
     \item if $\digamma(X)$ is small, then we can obtain a good upper bound on the running time of the exhaustive enumeration part for calculating $\abs{\+{S}_X}$.
\end{itemize}
\end{remark}

% The upper bound function $\digamma(\cdot)$ plays a key role in our main counting algorithm.

% \begin{remark}[Upper bound function $\digamma(\cdot)$]\label{remark:potential-function}
% Note that we haven't explicitly define the function $\digamma(\cdot)$ in our main counting algorithm. The function $\digamma(\cdot)$ is some kind of ``upper bound function" that act as an upper bound for many quantities that is crucial to our analysis. Specifically, later in our analysis, we will show that with properly chosen $\digamma(\cdot)$, the following two properties hold:
% \sloppy
% \begin{itemize}
%     \item For any partial assignment $\sigma$, $\digamma(\sigma)$ act as a upper bound for the total variation distance between the distribution returned by our subroutine for approximating the marginal distribution and the marginal distribution $\mu^{\sigma}_v$ for any $v\in V$.
%      \item If $\digamma(X)$ is small, then we can obtain a good upper bound on the running time of the exhaustive enumeration part for calculating $\abs{\+{S}_X}$.
% \end{itemize}

% The specific choice of $\digamma(\cdot)$ will be clarified later in \Cref{def:potential-refined}, to be in conformity with the analysis. 
% \end{remark}

\subsection{A marginal approximator}
The main tool of the main counting algorithm is a subroutine which returns a probability vector approximating the (conditional) marginal distribution $\mu^{\sigma}_v$ of a variable $v$.
Before presenting our subroutine, we need to formally define the notion of partial assignments, which is the same as the one defined in~\cite{he2022sampling}.

\begin{definition}[partial assignment]\label{def:partial-assignment}
Given a CSP formula $\Phi=(V,\+{Q},\+{C})$, 
let $\star$ and $\hollowstar$ be two special symbols not in any $Q_v$.
Define: 
\[
{\+Q}^\ast\triangleq\bigotimes_{v\in V}\left(Q_v\cup\{\star,\hollowstar\}\right).
\]
Each $\sigma\in {\+Q}^\ast$ is called a \emph{partial assignment}.
\end{definition}
% In a partial assignment $\sigma\in {\+Q}^\ast$, each variable $v\in V$ is classified as follows:
% \begin{itemize}
% \item $\sigma(v)\in Q_v$ means that  $v$ is \emph{accessed}  by the algorithm and \emph{assigned} with the value $\sigma(v)\in Q_v$;
% \item $\sigma(v)=\star$ means that  $v$ is just \emph{accessed} by the algorithm but  \emph{unassigned} yet with a value in $Q_v$;
% \item $\sigma(v)=\hollowstar$  means that  $v$ is \emph{unaccessed} by the algorithm and hence \emph{unassigned} with any value.
% \end{itemize}

Given a partial assignment $\sigma\in {\+Q}^\ast$, each variable $v\in V$ has three possibilities:
\begin{itemize}
\item $\sigma(v)\in Q_v$. That is, $v$ is \emph{accessed} by the algorithm and \emph{assigned} with the value $\sigma(v)\in Q_v$;
\item $\sigma(v)=\star$. That is, $v$ is just \emph{accessed} by the algorithm but  \emph{unassigned} yet with a value in $Q_v$;
\item $\sigma(v)=\hollowstar$. That is, $v$ is \emph{unaccessed} by the algorithm and hence \emph{unassigned} with any value.
\end{itemize}

%
% Recall $\Lambda(\sigma)\triangleq \{v\in V\mid  \sigma(v) \in Q_v\}$.
% We further use $\Lambda^{+}(\sigma)\subseteq V$ to denote the sets of accessed variables in a partial assignment $\sigma\in {\+Q}^\ast$, that is:
% %Formally, for any $\sigma\in {\+Q}^\ast$:
% \begin{align}
% \Lambda^{+}(\sigma)
% \triangleq \{v\in V\mid  \sigma(v)\neq \hollowstar\}.\label{eq:def-Lambda}
% \end{align}
% Given any partial assignment $\sigma\in {\+Q}^\ast$ and variable $v\in V$, we denote by $\Mod{\sigma}{v}{x}$ the partial assignment obtained from modifying $\sigma$ by replacing $\sigma(v)$ with $x\in Q_v\cup\{\star,\hollowstar\}$.
%

Recall the notation $\Lambda(\sigma)\triangleq \{v\in V\mid  \sigma(v) \in Q_v\}$.
Given a partial assignment $\sigma\in {\+Q}^\ast$,
we further define $\Lambda^{+}(\sigma)
\triangleq \{v\in V\mid  \sigma(v)\neq \hollowstar\}$
to be the sets of accessed variables $\sigma$.
For any variable $v\in V$, let $\Mod{\sigma}{v}{x}$ be the partial assignment modified from $\sigma$ by replacing $\sigma(v)$ with $x\in Q_v\cup\{\star,\hollowstar\}$.

Given any two partial assignments $\sigma,\tau\in {\+Q}^\ast$,
if 
$\Lambda(\sigma)\subseteq\Lambda(\tau)$, $\Lambda^+(\sigma)\subseteq\Lambda^+(\tau)$, 
and $\sigma,\tau$ agree with each other over all  variables in $\Lambda(\sigma)$,
$\tau$ is said to \emph{extend} $\sigma$.
A partial assignment $\sigma$ is said to satisfy a constraint $c\in\+{C}$,
if $c$ is satisfied by all full assignments extending $\sigma$.
And $\sigma$ is said to be \emph{feasible}, if there is a satisfying assignment extending $\sigma$.

For each variable $v\in V$, we always assume an arbitrary order over all values in $Q_v$ in the paper.
Let $q_v\triangleq\abs{Q_v}$.
The following parameters are used in our subroutine:
\begin{align}
\theta_v \triangleq
\frac{1}{q_v}-\eta
\quad\text{and}\quad
\theta\triangleq\frac{1}{q}-\eta
\quad\text{ where}\quad 
\eta=\left(1-\mathrm{e}\pprime q\right)^{-\Delta}-1
%\quad\text{and}\quad
\label{eq:parameter-theta}
\end{align}
Assuming the LLL condition in \eqref{eq:main-thm-LLL-condition}, we always have $\eta<\frac{1}{q_v}$, and hence $\theta_v,\theta>0$.

Next, we define some distributions used in the algorithm.
For any feasible $\sigma\in {\+Q}^\ast$ and any $S\subseteq V$, we denote by $\mu_S^{\sigma}$ the marginal distribution induced by $\mu$ on $S$ conditional on $\sigma$. 
Formally, for each $ \tau\in \+{Q}_{S}$, 
$\mu_S^{\sigma}(\tau)=\Pr[X\sim\mu]{X(S)=\tau\mid \forall v\in\Lambda(\sigma), X(v)=\sigma(v)}$.
We write
 $\mu_v^{\sigma}=\mu_{\{v\}}^{\sigma}$ for $v\in V$.
Similarly, for any $\sigma\in {\+Q}^\ast$ and any event $A\subseteq \+{Q}$,
denote that  
%\[
$\mathbb{P}[A\mid \sigma]= \mathbb{P}_{X\in\+{Q}}[X\in A\mid  \forall v\in\Lambda(\sigma), X(v)=\sigma(v)]$,
recalling that $\mathbb{P}$ is the law for the uniform product distribution over $\+{Q}$. 

For any $\sigma\in \qs$ and $v\in V$, define:
\begin{align}\label{eq:definition-recalc}
\forall x\in Q_v,\qquad
\+{D}^{\sigma}_{v}(x)\triangleq\frac{\mu_v^{\sigma}(x)-\theta_v}{1-q_v\cdot \theta_v}.
\end{align}
In our subroutine for approximately calculating $\mu^{\sigma}_v$, it is guaranteed that $\theta_v$ always lower bounds the marginal probability (\Cref{prop:theta-recalc-lower-bound}). 
Therefore, $\+{D}^{\sigma}_{v}$ is a well-defined probability distribution over $Q_v$.

\sloppy
By \eqref{eq:definition-recalc} we have that $\mu^{\sigma}_v=\theta_v+(1-q_v\theta_v)\+{D}^{\sigma}_{v} $. The $\calc{}$ then simply uses this equation to approximate the distribution $\mu^{\sigma}_v$, assuming another subroutine $\recalc{}(\Phi,\Mod{\sigma}{v}{\star},v)$ for approximately calculating $\+{D}^{\sigma}_{v}$. 
This is formally described in \Cref{Alg:calc}.

\begin{algorithm}  
  \caption{$\calc{}(\Phi,\sigma,v)$} \label{Alg:calc}
  \KwIn{a CSP formula $\Phi=(V,\+{C})$, a partial assignment $\sigma\in \qs$ and a variable $v$}
  \KwOut{a distribution approximating $\mu^\sigma_v(\cdot)$}
   $\hat{\+D}\gets \recalc{}(\Phi,\sigma_{v\gets \star},v)$\; \label{Line-calc-recalc}
        $\hat{\mu}(i) \gets \theta_v + (1-q_v\theta_v)\cdot \hat{\+D}(i)$ for each $1\leq i\leq q_v$\; \label{Line-calc-zone}
    \Return $\hat{\mu}$\;\label{Line-calc-R}
\end{algorithm} 

\subsection{A recursive approximator}
The goal of the $\recalc{}$ subroutine is to approximate the distribution $\+{D}^{\sigma}_{v}$ which is computed from the marginal distribution $\mu_v^{\sigma}$ as defined in \eqref{eq:definition-recalc}. This subroutine is a derandomization of the recursive marginal sampler in~\cite{he2022sampling}.

Note that we can compute the exact distribution $\+{D}^{\sigma}_{v}$ by exhaustively enumerating all assignments and checking if the assignment satisfies the formula. Still, such exhaustive enumeration can be inefficient as we must enumerate the assignment over too many variables.

Nevertheless, such exhaustive enumeration subroutine for computing $\+{D}^{\sigma}_{v}$ may serve as the basis of the recursion.
If sufficiently many variables are assigned during the recursion, 
the remaining CSP formula will be ``factorized'' with respect to $v$.
In most cases, the connected component containing $v$ in $H_{\Phi^\sigma}$ is small, 
in which case the exhaustive enumeration subroutine for computing $\+{D}_v^{\sigma}$ becomes efficient. 
Therefore, our approximator will try to assign all possible values to some variable that can help ``factorize" the formula and approximate the distribution $\+{D}^{\sigma}_{v}$ recursively. However, this may still be inefficient as the number of recursive calls may grow at an exponential rate in the recursion depth. 
To resolve this issue, we will {truncate} the recursion when some suitable condition is reached. Later we will show that with some properly formulated condition for truncation, our approximator is both efficient and accurate enough.

%
%This is achieved by a recursive program, whose definition uses the following notion of variables ``influenced'' by the variables whose marginal values are being sampled by the algorithm, i.e.~those variables labeled by~$\star$.

%Recall the set $\vfix{\sigma}$ of $\sigma$-fixed variables defined in \Cref{definition:frozen-fixed}.

Before presenting the subroutine, we formally define notions of frozen constraints and fixed variables. 
Note that this definition also apply in Line \hyperlink{line2}{2} of the main counting algorithm.
\begin{definition}[frozen and fixed]\label{definition:frozen-fixed}
Let $\sigma\in \+{Q}^*$ be a partial assignment.
\begin{itemize}
\item  
A constraint $c\in \+{C}$ is called \emph{$\sigma$-frozen} if  $\mathbb{P}[\neg c \mid \sigma] > \pprime$.
Let $\cfrozen{\sigma}\triangleq \left\{c\in\+{C}\mid \mathbb{P}[\neg c \mid \sigma] > \pprime\right\}$ be  the set of all $\sigma$-frozen constraints.
 \item 
A variable $v\in V$ is called \emph{$\sigma$-fixed} if $v$ is accessed in $\sigma$ or  is involved in some $\sigma$-frozen constraint. 
Let $\vfix{\sigma}\triangleq  \Lambda^+(\sigma) \cup \bigcup_{c\in \cfrozen{\sigma}}\vbl(c)$ 
be the set of all $\sigma$-fixed variables.
\end{itemize}
\end{definition}

Given a partial assignment $\sigma$, the following definition specifies the next variable to assign according to $\sigma$,
which has already appeared in~\cite{he2022sampling}.

\begin{definition}[$\star$-influenced variables]\label{definition:boundary-variables}
Given a partial assignment $\sigma\in \+{Q}^*$, let $H^\sigma=H_{\Phi^{\sigma}}=(V^{\sigma},\+{C}^{\sigma})$ be the hypergraph of the simplified formula $\Phi^{\sigma}$ and $\hfix{\sigma}$ be the sub-hypergraph of $H^\sigma$ induced by $V^{\sigma}\cap\vfix{\sigma}$.
%
%Let $\hfix$ be the subgraph of $H_{\Phi^\sigma}$ induced by $V^{\sigma}\cap\vfix{\sigma}$.
\begin{itemize}

\item
Let $\vcon{\sigma}\subseteq V^{\sigma}\cap\vfix{\sigma}$ be the set of vertices belong to the connected components in $\hfix{\sigma}$ that contain any $v$ with $\sigma(v)=\star$.
\item
Let $\vinf{\sigma}\triangleq\left\{u\in V^{\sigma}\setminus \vcon{\sigma}\mid \exists c\in \+{C}^{\sigma}, v\in \vcon{\sigma}:u,v\in\vbl(c)\right\}$ be the vertex boundary of $\vcon{\sigma}$ in~$H^\sigma$.

\item 
Let $\nextvar{\sigma}$ be the next variable to assign under $\sigma$ where
\begin{align}
\nextvar{\sigma}
\triangleq
\begin{cases}
v_i\in \vinf{\sigma}\text{ with smallest $i$} & \text{if }\vinf{\sigma}\neq\emptyset,\\
\perp & \text{otherwise}.
\end{cases}
\label{eq:definition-var}
\end{align}
\end{itemize}
\end{definition}

Intuitively, given a partial assignment $\sigma \in \+{Q}^*$, a variable $u$ is a good candidate for assignment if it has enough ``freedom'' under $\sigma$ ( $u$ is not $\sigma$-fixed) and can ``influence'' the variables that we are trying to approximate the marginal in the recursion (which are marked by $\star$)
through a chain of constraints in the simplified formula $\Phi^{\sigma}$.
The first such variable is returned by $\nextvar{\sigma}$.

The \recalc{} subroutine is given in \Cref{Alg:recalc}.

\begin{remark}[Truncation condition $f(\cdot)$]\label{remark:truncation-function}
Note that we haven't explicitly define the function $f(\cdot)$ in \Cref{Line-recalc-truncate} of \Cref{Alg:recalc}. This is for the same reason we didn't explicitly define $\digamma(\cdot)$ in the main counting algorithm. The function $f: \qs\rightarrow \{\True,\False\}$ is some kind of condition for ``truncation" that decides when we should stop the recursion.  An implementation of $f(\cdot)$ will be specified later in \Cref{def:truncate-refined}, to be in conformity with the analysis. 
\end{remark}

\begin{algorithm}  
  \caption{$\recalc{}(\Phi,\sigma,v)$} \label{Alg:recalc}
  \KwIn{a CSP formula $\Phi=(V,\+{C})$, a feasible partial assignment $\sigma\in \qs$ and a variable $v$} 
  \KwOut{a distribution over $Q_v$ that approximates the distribution  $\+{D}=\+{D}^{\sigma}_{v}$ defined in \eqref{eq:definition-recalc}}
  \If(\hspace{48pt}// \texttt{\small the condition for truncation is satisfied }\label{Line-recalc-truncate}) {$f(\sigma)=\True$ }
  { 
    \Return $\left(\frac{1}{q_v},\frac{1}{q_v},\cdots,\frac{1}{q_v}\right)$\;\label{Line-recalc-uniform}
  }
  \Else(\hspace{152pt}\label{Line-recalc-else} \texttt{\small })
  {
    $u\gets \nextvar{\sigma}$\;\label{Line-recalc-nextvar}
    \If{$u\neq \perp$\label{Line-recalc-loop}}
    {
        %\hspace{50pt}// \texttt{\small The Factorization has not succeed.}\\
        $\hat{\+{D}} \gets \left(0,0,\cdots,0\right)$\;\label{Line-recalc-mu}
        $\hat{\+D}^{\sigma}_{u}\gets \recalc{}(\Phi,\sigma_{u\gets \star},u)$\;\label{Line-recalc-recur1}
        $\hat{\mu}^{\sigma}_{u}(i) \gets \theta_u + (1-q_u\theta_u)\hat{\+D}^{\sigma}_{u}(i) \text{ for each } 1\leq i\leq q_u$\;   \label{Line-recalc-zone}
        \For(\hspace{60pt} // \texttt{\small approximate $\+D^{\sigma}_v(\cdot)$ by reduction}){$1\leq i\leq q_u$ }
        {
            $\hat{\+D}^{\sigma_{u\gets i}}_{v} \gets \recalc{}(\Phi,\sigma_{u\gets i},v)$\;\label{Line-recalc-recur2}
            \For(){$1\leq j\leq q_v$ }
            {
                $\hat{\+{D}}(j) \gets \hat{\+{D}}(j) + \hat{\mu}^{\sigma}_{u}(i)\cdot \hat{\+D}^{\sigma_{u\gets i}}_{v}(j)$\;\label{Line-recalc-calcmu}
            }
        }
        \Return $\hat{\+{D}}$\;\label{Line-recalc-return}
    }
    \Else(\hspace{136pt}// \texttt{\small the Factorization succeeds.})
    {
        Calculate $\mu_v^{\sigma}$ by counting the number of satisfying assignments exhaustively for the connected component in $H_{\Phi^{\sigma}}$ containing $v$\;
        % exhaustive enumeration for each connected component in $H_{\Phi^\tau}$ to count the number of ways $\abs{\+{S}_{\tau}}$ to extend $\tau = {\sigma}_{v\gets i}$ to a full satisfying assignment for each $1\leq i\leq q_v$\;
        Calculate $\+{D}^{\sigma}_{v}$ with $\mu_v^{\sigma}$ according to \eqref{eq:definition-recalc}\;\label{Line-recalc-enu}
        \Return $\+{D}^{\sigma}_{v}$\;
    }
  }
 
\end{algorithm} 

\subsection{The choice of the truncation condition and the upper bound function}
It remains to explicitly specify the upper bound function $\digamma(\cdot)$  and the truncation condition $f(\cdot)$, stated respectively in \Cref{remark:potential-function} and \Cref{remark:truncation-function}, 
to complete the definition of our algorithm.
For this purpose, we bring forward some definitions used in the analysis. In particular, we will introduce the notion of generalized $\{2,3\}$-tree, which is crucial to our choice of the upper bound function and is also a main technical contribution. 
%Note that the definition is meant to conform with our analysis and may look confusing at first, but it will become clearer in later sections.

\subsubsection{The choice of the truncation condition}

To specify our choice of the truncation condition, we need to classify those ``bad constraints" with respect to a partial assignment $\sigma$. 

\begin{definition}%[$\sigma$-bad constraints]
\label{def:cbad}
Let $\sigma\in \+{Q}^*$ be a partial assignment.  
\begin{itemize}
\item 
Define $\ccon{\sigma}$ to be the set of constraints $c\in\+{C}$ such that $\vbl(c)$ intersects $\vcon{\sigma}$, where $\vcon{\sigma}$ is as defined in \Cref{definition:boundary-variables}.
\item
Define $\csfrozen{\sigma} \triangleq \cfrozen{\sigma}\cap \ccon{\sigma}$.
\item
Define  $\vstar{\sigma}\defeq\{v\in V\mid \sigma(v)=\star\}$ to be the set of variables set to $\star$ in $\sigma$.
\end{itemize}
\end{definition}

We are now ready to specify our choice of the truncation condition $f(\cdot)$.

\begin{definition}[Choice of the truncation condition $f(\cdot)$ ]\label{def:truncate-refined}
The truncation condition $f:\qs\rightarrow\{\True,\False\}$ is chosen as 
$$f(\sigma)\triangleq\one{\abs{\vst{\sigma}}+\Delta\cdot \abs{\csfrozen{\sigma}}\geq L\Delta}$$
for some integer parameter $L> 1$ to be specified later.
\end{definition}

\subsubsection{The choice of the upper bound function}

To specify our choice of the upper bound function, we will introduce the notion of ``generalized $\{2,3\}$-tree", which is a  combinatorial structure refined from the $\{2,3\}$-trees used in the analysis of~\cite{Vishesh21towards} and~\cite{he2022sampling}. 

Given a hypergraph $H=(V,\+E)$, let $\Lin{H}$ be the line graph of $H$ whose vertex set is the hyperedges in $\+E$ and two hyperedges in $\+E$ are adjacent in $\Lin{H}$ if and only if they share some vertex in $H$. 
Let $\text{dist}_{\Lin{H}}(\cdot,\cdot)$ be the shortest path distance in $\Lin{H}$. 

\begin{definition}{(generalized $\{2,3\}$-tree)}\label{WTdef}
Given a hypergraph $H=(V,\+E)$, A \emph{generalized $\{2,3\}$-tree} $T = U\cup E$, where $U\subseteq V$ and $E\subseteq  \+E$, is a subset of vertices and edges of $H$ such that the followings hold:
\begin{enumerate}
    \item For all distinct $u, v\in E$, $\text{dist}_{\Lin{H}}(u,v)\geq 2$.  \label{WT-1}
    \item 
    It holds for the directed graph $G(T,\mathcal{A})$ that
    there is a vertex $r\in T$ (called a root) which can reach all other vertices through directed paths,
    where the $G(T,\mathcal{A})$ is constructed on the vertex set $T$ as that, 
    for any $u, v\in T$ there is an arc $(u,v)\in\mathcal{A}$ if and only if at least one of the following conditions is satisfied: \label{WT-2}  
%    the root vertex can reach all other vertices through directed paths, by adding an arc from $u$ to $v$ for every pair of $u, v\in T$ satisfying any of the following conditions:  \label{WT-2}
    \begin{itemize}
        \item $u,v\in E$ and $\text{dist}_{\Lin{H}}(u,v)=2\text{ or }3$;
        \item $u\in U, v\in E$ and  there exists $e\in \+E$ such that $u\in e\land \text{dist}_{\Lin{H}}(v,e)=1$;
        \item $u\in E,v\in U$ and there exists $e\in \+{E}$ such that $v\in e\land \text{dist}_{\Lin{H}}(u,e)=1\text{ or }2$;
        \item $u,v\in U$ and  there exists $e\in \+{E}$ such that $u,v\in e$.
    \end{itemize}
\end{enumerate}
Furthermore, any rooted directed spanning tree of the directed graph $G(T,\mathcal{A})$ constructed as above is called an \emph{auxiliary tree} of the generalized $\{2,3\}$-tree $T$.
\end{definition}

The generalized $\{2,3\}$-tree in \Cref{WTdef} is inspired by the the notion of $\{2,3\}$-tree defined for the line graph $\Lin{H}$~\cite{alon1991parallel}.
We generalize this notion to the original hypergraph $H$ to simultaneously depict the distances between vertices and hyperedges in $H$.
One can verify that every $\{2,3\}$-tree in the line graph $\Lin{H}$ is some generalized $\{2,3\}$-tree in the hypergraph $H$. 
Moreover, a generalized $\{2,3\}$-tree $T = U\cup E$ further restricts that each vertex in $U$ is close to its nearest neighbour in $T$.

Specifically, when the underlying hypergraph in \Cref{WTdef} is the hypergraph representation $H_{\Phi}=(V,\+C)$ of some CSP $\Phi$, a generalized $\{2,3\}$-tree $T\subseteq V\cup \+C$ in $H_{\Phi}$ becomes a subset of variables and constraints.

Given a subset $T\subset V\cup \+C$ of variables and constraints, we use $T=U \circ E$ to denote $T=U\cup E$  where $U\subseteq V$ and $E\subseteq \+C$.
We are now ready to specify our choice of the upper bound function $\digamma(\cdot)$.

\begin{definition}[Choice of the upper bound function $\digamma(\cdot)$ ]\label{def:potential-refined} The upper bound function $\digamma:\qs\rightarrow \mathbb{R}$ is fixed as follows.

For any subset of vertices and constraints $T = U\circ E$ and any partial assignment $\sigma\in \qs$,  define
\begin{align}\label{eq-def-fsimgat}
F(\sigma,T)\defeq \left(1-q\theta\right)^{\abs{U}}
\prod\limits_{c\in E}\left(\pprime^{-1}\mathbb{P}[\neg c\mid \sigma](1+\eta)^k\right).
\end{align}

For any integer $t>0$,  define \begin{align}\label{eq-definition-FT-1}
\+{T}^{t} \defeq \left\{T\mid T= U\circ E \text{ is a generalized $\{2,3\}$-tree in $H_{\Phi}$ satisfying}  \abs{U}+\Delta\cdot \abs{E}=t \right\} 
\end{align}
Moreover, for any integer $t>0$ and $v\in V$,  define \begin{align}\label{eq-definition-FT-2}
\+{T}^{t}_{v} \defeq \left\{T\in \+{T}^{t}\mid \text{ there exists an auxiliary tree of $T$ rooted at $v$}  \right\} 
\end{align}
Finally, for any partial assignment $\sigma\in \qs$, we define

\begin{align}\label{eq-definition-F-2}
\digamma(\sigma)\defeq \sum\limits_{i=L}^{L\Delta}\sum\limits_{v\in V} \sum\limits_{T\in \+{T}^{i}_v}F(\sigma,T\setminus \set{v} ),
\end{align}
where $L$ is the same unspecified parameter as in the definition of the truncation condition.
\end{definition}

\section{Analysis of the counting algorithm}\label{sec:analysis}

In this section, we present the analysis of our deterministic approximate counting algorithm. We will prove the following theorem.

\begin{theorem}\label{thm:main-counting-refined}
With the $f(\cdot)$ and $\digamma(\cdot)$ as specified respectively in \Cref{def:truncate-refined} and  \Cref{def:potential-refined}, 
for any CSP formula $\Phi=(V,\+{C})$ satisfying \eqref{eq:main-thm-LLL-condition} and $0<\varepsilon<1$, the main counting algorithm (given in \Cref{sec:main-counting-alg}) returns a $\widehat{Z}$ satisfying
$(1-\varepsilon)Z_{\Phi} \leq \widehat{Z} \leq (1+\varepsilon)Z_{\Phi}$, within time $O\left(\left(\frac{n}{\varepsilon}\right)^{\poly(\log{q},\Delta,k)}\right)$.
\end{theorem}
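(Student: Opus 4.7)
My plan is to show the theorem by tracking three quantities in parallel: the multiplicative error of the telescoping estimator \eqref{eq-telescope}, the running time of each call to \calc, and the cost of the final exhaustive enumeration in Step~3 of the main counting algorithm. The upper-bound function $\digamma$ is engineered so that a single quantity simultaneously dominates the approximation error of \calc at each round and controls the size of the leftover frozen components, which lets me close the loop without separate potentials.

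First, I would establish local uniformity along the recursion. By induction on the recursion tree of \recalc, every call $\recalc(\Phi,\sigma,v)$ is invoked on a feasible $\sigma$ where $v$ is not $\sigma$-fixed, so \Cref{generaluniformity} applied to $\Phi^{\sigma}$ yields $\mu_v^{\sigma}(x)\ge\theta_v$ for all $x\in Q_v$. This legitimises the decomposition $\mu_v^\sigma=\theta_v+(1-q_v\theta_v)\+{D}_v^\sigma$ and ensures $\+{D}_v^\sigma$ in \eqref{eq:definition-recalc} is a probability distribution. Next I would prove the central error bound
\[
\tv\bigl(\calc(\Phi,\sigma,v),\,\mu_v^{\sigma}\bigr)\ \le\ \digamma(\sigma).
\]
The recursion is exact along every branch that terminates at Line~\ref{Line-recalc-enu}; error is injected only when $f(\sigma')=\True$, in which case the returned distribution is replaced by the uniform one, costing at most one $(1-q\theta)$ factor per $\star$-variable accumulated so far. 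By unrolling the recursion, any such truncating branch is certified by a set $T=U\circ E$ in $H_\Phi$ with $\abs{U}+\Delta\abs{E}\ge L$, consisting exactly of the $\star$-marked variables in $\vst{\sigma'}$ and the frozen constraints in $\csfrozen{\sigma'}$. The neighbourhood-of-$v$ mechanism between consecutive recursive calls (driven by \nextvar{\cdot} through $\vinf{\sigma}$) forces the certifying set to be connected as a generalized $\{2,3\}$-tree of \Cref{WTdef} with an auxiliary tree rooted at $v$, matching the definition of $\+{T}^i_v$. The weights in $F(\sigma,T)$ of \eqref{eq-def-fsimgat} account for the $(1-q\theta)$-damping at $\star$-variables and the LLL-style correction factor $\pprime^{-1}\mathbb{P}[\neg c\mid\sigma](1+\eta)^k$ at frozen constraints. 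Summing over $i=L,\dots,L\Delta$ and over all roots yields exactly $\digamma(\sigma)$.

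Second, I would verify that under \eqref{eq:main-thm-LLL-condition} the quantity $\digamma(\sigma)$ decays geometrically in $L$. Using that the number of generalized $\{2,3\}$-trees in $H_\Phi$ of size parameter $i$ rooted at a given vertex is bounded by $(C\Delta^{\alpha})^i$ (by a standard encoding of the auxiliary tree combined with the distance-$\le 3$ constraint on $E$), and that each weight contributes $O(q\eta)\lesssim \mathrm{e}q\pprime\Delta$ per vertex of $U$ and $O(kp/\pprime)$ per edge of $E$, the condition $q^2 k p\,\Delta^5\le C_0$ yields a per-step contraction factor strictly less than $1$. This is exactly where the refined generalized $\{2,3\}$-tree, which separately counts variables with the lighter weight $q\eta$ and constraints with the heavier weight $kp/\pprime$, saves a factor of $\Delta^2$ over the uniform $\{2,3\}$-tree analysis that gave $\Delta^7$ in~\cite{Vishesh21towards}. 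Hence $\digamma(\sigma)\le \beta^L$ for some $\beta<1$; choosing $L=\Theta\!\left(\log(nq/\varepsilon)\right)$ makes $\digamma(\sigma)\le \varepsilon/(4qn)$ uniformly for all $\sigma$ reached by the algorithm.

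Third, I would lift the per-call error into the telescoping product and control the runtime. Since $\mu_{v_i}^{P_{i-1}}(x)\ge\theta_{v_i}=\Omega(1/q)$, each ratio $\hat\mu_{v_i}(X(v_i))/\mu_{v_i}^{P_{i-1}}(X(v_i))$ lies in $1\pm O(q\digamma(P_{i-1}))=1\pm O(\varepsilon/n)$, and the product of at most $n$ such ratios is $1\pm\varepsilon$ by the standard logarithmic bound, yielding correctness via \eqref{eq-telescope} once $\abs{\+S_X}$ is computed exactly. The main obstacle I anticipate is bounding the cost of this last exhaustive enumeration: I would show that once the outer loop terminates, every connected component of $H_{\Phi^X}$ is supported on $\vbl(\cfrozen{X})$ and has size $O(L\Delta k)$, because a larger component would contain a generalized $\{2,3\}$-tree with $\abs{U}+\Delta\abs{E}\ge L$ rooted at any of its vertices and thereby force $\digamma(X)\ge 1$, contradicting the decay established above together with the argmin step in \hyperlink{line2b}{2(b)}. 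Therefore each component is enumerable in time $q^{O(L\Delta k)}=(n/\varepsilon)^{\poly(\log q,\Delta,k)}$. Each call to \calc runs the truncated recursion of size $q^{O(L\Delta)}$, and the outer loop has $n$ iterations, so combining these bounds gives the stated $O\!\left((n/\varepsilon)^{\poly(\log q,\Delta,k)}\right)$ running time.
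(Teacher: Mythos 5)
Your roadmap is the same as the paper's: local uniformity via Corollary~\ref{generaluniformity} to justify the decomposition $\mu^\sigma_v=\theta_v+(1-q_v\theta_v)\+D^\sigma_v$, then a generalized $\{2,3\}$-tree witness to bound the truncation error, then a Galton--Watson counting argument to show $\digamma$ decays geometrically, then telescoping plus the component-size bound. That part is sound. But there are two places where you assert the crux as if it were a bookkeeping observation, and those are precisely where the paper does the real work.

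First, your claim that the truncated branches are ``certified'' by generalized $\{2,3\}$-trees and that ``Summing over $i=L,\dots,L\Delta$ and over all roots yields exactly $\digamma(\sigma)$'' is not a one-line consequence of unrolling. The sum of RCT leaf masses $\sum_{X\in\+L_b}\rho_\sigma(X)$ is indexed by \emph{leaves} (root-to-leaf paths), not by generalized $\{2,3\}$-trees; many leaves map to the same tree, and a single leaf does not come with a canonical tree. Showing that this sum is dominated by $\sum_T F(\sigma,T\setminus\{v\})$ requires a genuine inductive argument in which, for each fixed tree $T$, one bounds the total mass of all leaves witnessing $T$. The paper does this by defining a random path $\pth(\sigma)$ with an importance weight $H(\cdot)$ so that $\lambda(\+T_\sigma)=\E{H(\pth(\sigma))}$ (Lemma~\ref{pathprop}), and then proving $\Pr{\+E^\sigma_T}\cdot\E{H\mid\+E^\sigma_T}\le g(\sigma,T)$ by induction on the recursion (Lemma~\ref{WTprob}); the latter is the hardest step, where the $(1+\eta)^{Z(\sigma,c)}$ factor must absorb a local-uniformity overshoot each time a variable in $\vbl(c)$ is assigned. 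Nothing in your sketch addresses this bookkeeping: you treat the $F(\sigma,T)$ weight as a definition-level identity rather than as the conclusion of a multiplicative induction, and that is a genuine gap.

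Second, your appeal to ``the argmin step in \hyperlink{line2b}{2(b)}'' to keep $\digamma$ small ``uniformly for all $\sigma$ reached'' silently requires that $\digamma$ be non-increasing along the run of the algorithm, which in turn requires the exact averaging identity $\sum_{x\in Q_u}\digamma(\sigma_{u\gets x})=q_u\cdot\digamma(\sigma)$ (Lemma~\ref{potentialfact2}). This identity is not automatic: one must check that for every generalized $\{2,3\}$-tree $T$, at most one constraint of $T$ contains $u$, and that averaging $\mathbb{P}[\neg c\mid\sigma_{u\gets x}]$ over $x$ returns $\mathbb{P}[\neg c\mid\sigma]$. Your sketch only bounds $\digamma(\emptyset)$, which by itself gives nothing about $\digamma(X^i)$ for $i>0$. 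A related but smaller slip: the correct choice is $L=\Theta\!\left(\Delta\log(qn\Delta/\varepsilon)\right)$ rather than $\Theta(\log(nq/\varepsilon))$, because the geometric decay is $2^{-\lfloor L/(2\Delta)\rfloor}$ (the $\Delta$ enters through the size parameter $\abs{U}+\Delta\abs{E}$ and the tree-counting base $\Delta^3$); this does not change the $(n/\varepsilon)^{\poly(\log q,\Delta,k)}$ conclusion but would matter for the constants in $C_0$.
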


\subsection{Invariants and local uniformity }

In this subsection, we present some basic facts that guarantee our algorithm is well-defined.  The following two invariants are respectively satisfied by the \calc{} and \recalc{} subroutine called within the counting algorithm (formally proved in \Cref{lemma:invariant-counting}).

\begin{condition}[invariant for \calc{}]\label{inputcondition-calc}
The followings hold for the input tuple $(\Phi, \sigma, v)$:
\begin{itemize}
\item $\Phi=(V,\+{Q},\+{C})$ is a CSP formula, $\sigma\in\+{Q}^*$ is a feasible partial assignment, and  $v\in V$ is a variable;
\item $v$ is not $\sigma$-fixed and $\sigma(v)=\hollowstar$, and for all $u\in V$, $\sigma(u)\in Q_u\cup\{\hollowstar\}$;
\item $\mathbb{P}[\neg c\mid \sigma]\leq \pprime q$ for all $c\in \mathcal{C}$.
\end{itemize}
\end{condition}

\begin{condition}[invariant for \recalc{}]\label{inputcondition-recalc}
The followings hold for the input tuple $(\Phi, \sigma, v)$:
\begin{itemize}
\item $\Phi=(V,\+{Q},\+{C})$ is a CSP formula, $\sigma\in\+{Q}^*$ is a feasible partial assignment, and  $v\in V$ is a variable;
\item $\sigma(v)=\star$;
\item $\mathbb{P}[\neg c\mid \sigma]\leq \pprime q$ for all $c\in \mathcal{C}$.
\end{itemize}
\end{condition}

\begin{lemma}\label{lemma:invariant-counting}
When the input CSP formula $\Phi=(V,\+{Q},\+{C})$ satisfies \eqref{eq:main-thm-LLL-condition},
the invariants are satisfied during the execution of the algorithm: 
\begin{enumerate}
\item\label{item:invariant-calc} 
whenever $\calc{}(\Phi, \sigma, v)$ is called, \Cref{inputcondition-calc} is satisfied by its input $(\Phi, \sigma, v)$;
\item \label{item:invariant-recalc}
whenever \recalc{}$(\Phi, \sigma, v)$ is called, \Cref{inputcondition-recalc}  is satisfied by its input $(\Phi, \sigma, v)$.
\end{enumerate}
\end{lemma}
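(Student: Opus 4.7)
The plan is to proceed by induction on the sequence of calls made during the execution, maintaining a strengthened invariant: at every entry to \calc{} or \recalc{} the partial assignment $\sigma$ is feasible, the bound $\mathbb{P}[\neg c\mid \sigma]\leq\pprime q$ holds for every $c\in\+{C}$, and the target variable has the appropriate status ($\hollowstar$ for \calc{} or $\star$ for \recalc{}). The only mechanism that could potentially break the probability bound is assigning a concrete value to a variable sitting inside an already $\sigma$-frozen constraint, so the proof reduces to verifying that the algorithm always avoids such assignments, together with a separate feasibility argument at every branching step.

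For the main loop I argue by induction on iterations. Initially $X$ is empty and $\mathbb{P}[\neg c]\leq p<\pprime\leq\pprime q$. At iteration $i$, either $v_i\in\vfix{X}$ and we skip, or the explicit line~2 test certifies that every $c\ni v_i$ has $\mathbb{P}[\neg c\mid X]\leq\pprime$; in the latter case, assigning $X(v_i)\gets a$ amplifies the violation probability of such $c$ by a factor of at most $q_{v_i}\leq q$, yielding $\mathbb{P}[\neg c\mid X_{v_i\gets a}]\leq\pprime q$, while constraints not containing $v_i$ are unaffected. The hypothesis \eqref{eq:main-thm-LLL-condition} forces $\mathrm{e}\pprime q\Delta<1$, so \Cref{locallemma} applied to $\Phi$ with the current conditioning yields feasibility of $X$. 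All three bullets of \Cref{inputcondition-calc} are then immediate at every invocation of \calc{}$(\Phi,X,v_i)$: the main loop never writes $\star$, $X(v_i)=\hollowstar$ by processing order, and $v_i\notin\vfix{X}$ by the line~2 check.

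For the recursive part, the transition from \calc{} to \recalc{} on line~\ref{Line-calc-recalc} only relabels $\sigma(v)$ from $\hollowstar$ to $\star$, which leaves $\Lambda(\sigma)$, every conditional probability, and feasibility untouched while installing $\sigma(v)=\star$---exactly \Cref{inputcondition-recalc}. Within \recalc{}$(\Phi,\sigma,v)$, the recursive call on line~\ref{Line-recalc-recur1} with input $\sigma_{u\gets\star}$ is analogous. The substantive case is line~\ref{Line-recalc-recur2}: for $u=\nextvar{\sigma}$, a direct unpacking of \Cref{definition:boundary-variables} gives $u\notin\vfix{\sigma}$, since if $u$ were fixed, then the witness constraint $c$ in the definition of $\vinf{\sigma}$ would place $u$ and some vertex of $\vcon{\sigma}$ in the same component of $\hfix{\sigma}$, contradicting $u\notin\vcon{\sigma}$. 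Hence every $c\ni u$ is non-$\sigma$-frozen and satisfies $\mathbb{P}[\neg c\mid\sigma]\leq\pprime$; after setting $\sigma_{u\gets i}$, such $c$ satisfy $\mathbb{P}[\neg c\mid\sigma_{u\gets i}]\leq q_u\pprime\leq\pprime q$, others are unchanged, and $\sigma(v)=\star$ is inherited trivially.

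The main obstacle is feasibility of $\sigma_{u\gets i}$ for \emph{every} $i\in Q_u$: without this, the recursive calls on line~\ref{Line-recalc-recur2} would operate on an empty support and the invariant would fail vacuously. This is the step where the LLL hypothesis does genuine work via \Cref{generaluniformity}. Having already established $\mathbb{P}[\neg c\mid\sigma]\leq\pprime q$ for every $c$, and $\mathrm{e}\pprime q\Delta<1$ by \eqref{eq:main-thm-LLL-condition}, I apply \Cref{generaluniformity} to the simplified formula $\Phi^\sigma$ (whose violation parameter is at most $\pprime q$) to obtain $\mu^\sigma_u(i)\geq\theta_u=\frac{1}{q_u}-\eta$ for each $i\in Q_u$; the remark following \eqref{eq:parameter-theta} confirms that \eqref{eq:main-thm-LLL-condition} is calibrated precisely so that $\eta<1/q_u$, whence $\theta_u>0$ and every branch $\sigma_{u\gets i}$ admits a satisfying extension. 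This closes the induction on the recursive call stack, yielding both parts of the lemma simultaneously.
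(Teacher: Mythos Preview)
Your proof is correct and follows essentially the same approach as the paper. The paper packages the two pieces as separate lemmas deferred to \cite{he2022sampling}—\Cref{lemma:invariant-p-prime-q-bound} for the main-loop invariant and \Cref{lemma:general-invariant} for the recursive step—while you carry out the same induction inline and supply the feasibility argument via local uniformity (which the paper records separately as \Cref{localuniformitycor}); the underlying logic, including the key observation that $u=\nextvar{\sigma}$ is never $\sigma$-fixed, is identical.
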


We then prove \Cref{lemma:invariant-counting}. Before that, we formally define the sequence of partial assignments that evolve in the main counting algorithm.

\begin{definition}[partial assignments in main counting algorithm]\label{def-pas-cmain}
Let $X^0,X^1,\dots,X^n\in\+{Q}^*$ denote the sequence of partial assignments, 
where $X^0=\hollowstar^V$ 
and for every $1\le i\le n$, 
$X^i$ is the partial assignment $X$ after the $i$-th iteration of the for-loop in Line \hyperlink{line2}{2} of the main counting algorithm.
\end{definition}

The following two lemmas are immediate by~\cite[Lemma 5.8]{he2022sampling} and~\cite[Lemma 5.9]{he2022sampling}, respectively. We then omit the proof.

\begin{lemma}\label{lemma:invariant-p-prime-q-bound}
For the $X^0,X^1,\dots,X^n$ in \Cref{def-pas-cmain}, it holds for all $0\le i\le n$ that $X^i$ is feasible and
\begin{align}
\forall c\in \+{C},\qquad \mathbb{P}[\neg c\mid X^i]\leq \pprime q. \label{eq:invariant-p-prime-q-bound}
\end{align}
\end{lemma}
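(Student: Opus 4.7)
The plan is to prove both conclusions simultaneously by induction on $i$, letting the violation-probability bound drive feasibility through the symmetric Lov\'{a}sz Local Lemma.

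For the base case $i=0$, $X^0=\hollowstar^V$ is the empty partial assignment, so $\mathbb{P}[\neg c\mid X^0]=\mathbb{P}[\neg c]\leq p < \pprime q$ holds by \eqref{eq:parameter-p-prime}. Feasibility of $X^0$ is just the existence of a satisfying assignment of $\Phi$, which follows from the symmetric form of \Cref{locallemma} since the hypothesis \eqref{eq:main-thm-LLL-condition} readily implies $\mathrm{e}p\Delta<1$.

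For the inductive step, assume $X^{i-1}$ is feasible and satisfies the bound. If the $i$-th iteration leaves $X$ unchanged (because $v_i$ lies in some $X^{i-1}$-frozen constraint), then $X^i=X^{i-1}$ and there is nothing to prove. Otherwise $X^i=X^{i-1}_{v_i\gets a}$ for some $a\in Q_{v_i}$, and every constraint $c$ with $v_i\in\vbl(c)$ is \emph{not} $X^{i-1}$-frozen, so by \Cref{definition:frozen-fixed} we have the sharper bound $\mathbb{P}[\neg c\mid X^{i-1}]\leq \pprime$. To transfer this bound from $X^{i-1}$ to $X^i$, the key observation is the one-step averaging identity for the uniform product measure,
\[
\mathbb{P}[\neg c\mid X^{i-1}]=\frac{1}{q_{v_i}}\sum_{b\in Q_{v_i}}\mathbb{P}\bigl[\neg c\mid X^{i-1}_{v_i\gets b}\bigr]\geq\frac{1}{q_{v_i}}\mathbb{P}[\neg c\mid X^i],
\]
which gives $\mathbb{P}[\neg c\mid X^i]\leq q_{v_i}\cdot\pprime\leq \pprime q$. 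For constraints $c$ with $v_i\notin\vbl(c)$, the conditional probability is unaffected and the bound is inherited directly from the induction hypothesis.

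Finally, for feasibility of $X^i$, the above argument shows that every constraint satisfies $\mathbb{P}[\neg c\mid X^i]\leq \pprime q$ regardless of which $a\in Q_{v_i}$ the $\arg\min$ rule selects. Since \eqref{eq:parameter-p-prime} gives $\mathrm{e}(\pprime q)\Delta<1$, applying \Cref{locallemma} to the simplified formula $\Phi^{X^i}$ produces a satisfying extension of $X^i$, so $X^i$ is feasible. I do not foresee a real obstacle: the only mild care needed is noticing that the $\arg\min$ over $\digamma$ plays no role in establishing this invariant (it only matters for the later accuracy/runtime analyses), which is consistent with the paper's remark that the statement is inherited from~\cite{he2022sampling}.
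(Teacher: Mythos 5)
Your proof is correct. The paper itself omits the argument (outsourcing it to Lemma 5.8 of \cite{he2022sampling}), and your reconstruction---using the non-frozen check in Line~2 to obtain the sharper bound $\mathbb{P}[\neg c\mid X^{i-1}]\le\pprime$ for every constraint $c$ touching $v_i$, losing a factor $q_{v_i}\le q$ via the one-step averaging identity $\mathbb{P}[\neg c\mid X^{i-1}]=\frac{1}{q_{v_i}}\sum_{b}\mathbb{P}[\neg c\mid X^{i-1}_{v_i\gets b}]$, and finally applying the symmetric Lov\'{a}sz Local Lemma to $\Phi^{X^i}$ for feasibility via $\mathrm{e}(\pprime q)\Delta<1$---is exactly the standard argument behind that cited lemma.
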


\begin{lemma}\label{lemma:general-invariant}
Assume \Cref{inputcondition-recalc} for $(\Phi,\sigma,v)$.
For any $u\in V$, if $u$ is not $\sigma$-fixed, then $(\Phi,\Mod{\sigma}{u}{a},v)$ satisfies \Cref{inputcondition-recalc} for any $a\in Q_u\cup \set{\star}$.
\end{lemma}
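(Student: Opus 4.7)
The plan is to verify, for $\sigma' \triangleq \Mod{\sigma}{u}{a}$, each of the three bullets of \Cref{inputcondition-recalc} for the input tuple $(\Phi, \sigma', v)$, using the corresponding properties of $\sigma$ together with the hypothesis that $u$ is not $\sigma$-fixed. The bullet $\sigma'(v) = \star$ is immediate: because $\sigma(v) = \star$ forces $v \in \Lambda^+(\sigma) \subseteq \vfix{\sigma}$, while $u \notin \vfix{\sigma}$ by hypothesis, we have $u \ne v$ and hence $\sigma'(v) = \sigma(v) = \star$.

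For the uniform bound $\mathbb{P}[\neg c \mid \sigma'] \le \pprime q$ for every $c \in \+{C}$, I would proceed by case analysis on $a$. If $a = \star$, or if $a \in Q_u$ with $u \notin \vbl(c)$, the conditioning event is unchanged and the required bound is directly inherited from \Cref{inputcondition-recalc} for $\sigma$. The interesting case is $a \in Q_u$ with $u \in \vbl(c)$. Since $u \notin \vfix{\sigma}$, the variable $u$ lies in no $\sigma$-frozen constraint, so $\mathbb{P}[\neg c \mid \sigma] \le \pprime$ for every such $c$. Moreover, $u \notin \Lambda^+(\sigma)$ means $\sigma(u) = \hollowstar$, so under $\mathbb{P}[\cdot \mid \sigma]$ the variable $u$ is uniform over $Q_u$, yielding the averaging identity
\[
\mathbb{P}[\neg c \mid \sigma] \;=\; \frac{1}{q_u}\sum_{b \in Q_u}\mathbb{P}[\neg c \mid \Mod{\sigma}{u}{b}].
\]
Therefore $\mathbb{P}[\neg c \mid \sigma'] \le q_u \cdot \mathbb{P}[\neg c \mid \sigma] \le q\pprime$, completing the bullet in every case.

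Finally, for the feasibility of $\sigma'$ I would invoke the asymmetric Lov\'asz Local Lemma (\Cref{locallemma}) on the simplified formula $\Phi^{\sigma'}$. The just-established bound says that each surviving constraint in $\+{C}^{\sigma'}$ has violation probability at most $q\pprime$ under the uniform product distribution on $V \setminus \Lambda(\sigma')$, while the constraint degree is still at most $\Delta$, so the LLL condition $\mathrm{e}\cdot q\pprime\cdot\Delta \le 1$ follows from $\pprime < (\mathrm{e}q\Delta)^{-1}$ in \eqref{eq:parameter-p-prime}. The LLL then produces a satisfying assignment of $\Phi^{\sigma'}$, which extends $\sigma'$ to a satisfying assignment of $\Phi$. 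The only non-mechanical step in the whole argument is the averaging identity above: it is precisely where the ``$u$ is not $\sigma$-fixed'' hypothesis pays off by converting the constraint-level bound $\pprime$ available for $\sigma$ into the variable-level bound $q\pprime$ needed to propagate the invariant through the recursion; the remaining work is a direct unpacking of definitions.
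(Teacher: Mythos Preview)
Your proof is correct. The paper does not give its own proof of this lemma; it simply cites \cite[Lemma~5.9]{he2022sampling} and omits the argument. Your direct verification is exactly the kind of argument that reference contains, and in fact your feasibility step is already packaged in the present paper as \Cref{localuniformitycor}: once you have established $\mathbb{P}[\neg c\mid\sigma']\le \pprime q$ for all $c$, that corollary immediately gives feasibility of $\sigma'$ (and more), so you could shorten the last paragraph to a one-line citation rather than re-invoking the LLL from scratch.
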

%\begin{proof}
%The feasibility of $\Mod{\sigma}{u}{a}$ follows from \Cref{localuniformitycor}. 
%If $a=\star$, \eqref{eq:invariant-p-prime-q-bound} holds trivially for $\Mod{\sigma}{u}{a}$, because as a not $\sigma$-fixed variable, $u$ must have $\sigma(u)=\hollowstar$, and hence changing $\sigma$ to $\Mod{\sigma}{u}{\star}$ doe not change the  probability of any event conditional on $\sigma$.
%And for the case that $a\in Q_v$: 
%if $c$ is $\sigma$-frozen then $v\not\in\vbl(c)$ since $u$ is not $\sigma$-fixed, and hence changing the value of $u$ in $\sigma$ does not affect $\mathbb{P}[\neg c\mid \sigma]$; and if otherwise $c$ is not $\sigma$-frozen, which means $\mathbb{P}[\neg c\mid \sigma]\le \pprime$, and hence as calculated in \eqref{eq:invariant-q-amplification}, we have
%\[
%\mathbb{P}[\neg c \mid \Mod{\sigma}{v}{a}] \le \abs{Q_{v_{i^*}}}\mathbb{P}[\neg c \mid {\sigma}]\le \pprime q.
%\] 
%Therefore, the condition \eqref{eq:invariant-p-prime-q-bound} remains to hold for $\Mod{\sigma}{u}{a}$ for any not $\sigma$-fixed $u$ and any $a\in Q_{u}\cup\{\star\}$.
%\end{proof}

The invariant of \Cref{inputcondition-calc} for \calc{} stated in
\Cref{lemma:invariant-counting}-(\ref{item:invariant-calc})
follows directly from \Cref{lemma:invariant-p-prime-q-bound}.  The invariant of \Cref{inputcondition-recalc} for \recalc{} stated in
\Cref{lemma:invariant-counting}-(\ref{item:invariant-recalc})
follows from \Cref{lemma:general-invariant}, because during the execution,  
the algorithm will only change an input partial assignment $\sigma$ to $\Mod{\sigma}{u}{a}$ for $u=\nextvar{\sigma}$ and for $a\in Q_{u}\cup\{\star\}$, and that by the definition of $\nextvar{\cdot}$ we have $u$ is not $\sigma$-fixed.
Therefore, \Cref{lemma:invariant-counting} is proved.

The next proposition, shows that $\theta_v$ always lower bounds the marginal probability $\mu^{\sigma}_v(\cdot)$ for $(\Phi,\sigma,v)$ satisfying \Cref{inputcondition-recalc}.  Combining with \Cref{lemma:invariant-counting}-(\ref{item:invariant-recalc}), we have shown the well-definedness of the distribution $D^{\sigma}_v$ defined in \eqref{eq:definition-recalc} and \Cref{Alg:recalc}.

\begin{proposition}\label{prop:theta-recalc-lower-bound}
Assuming \Cref{inputcondition-recalc} for the input $(\Phi, \sigma ,v)$, 
it holds that $\min\limits_{x\in Q_v}\mu^{\sigma}_v(x)\ge  \theta_v$ and for $u=\nextvar{\sigma}$, if $u\neq\perp$ then it also holds that $\min\limits_{x\in Q_u}\mu^{\sigma}_u(x)\ge  \theta_u$.
\end{proposition}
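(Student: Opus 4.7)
The plan is to reduce both claims to a single application of the local uniformity corollary (\Cref{generaluniformity}) on the simplified formula $\Phi^\sigma = (V^\sigma, \+{Q}^\sigma, \+{C}^\sigma)$. The key observation is that for any $w \in V^\sigma$ the conditional marginal $\mu^\sigma_w$ equals the unconditional marginal $\mu_{\Phi^\sigma,w}$ of the uniform distribution over satisfying assignments of $\Phi^\sigma$, because a full assignment of $\Phi$ extending $\sigma$ is satisfying if and only if its restriction to $V^\sigma$ is satisfying for $\Phi^\sigma$. So it suffices to verify the LLL hypothesis $\mathrm{e}\, p'\,\Delta' < 1$ for $\Phi^\sigma$ with suitable parameters $p'$ and $\Delta'$.

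First I would check that $\Phi^\sigma$ is well-defined and feasible: this is immediate from \Cref{inputcondition-recalc}, which assumes $\sigma$ is feasible. Next I would observe that the constraint degree of $\Phi^\sigma$ is at most $\Delta$, since simplification can only remove constraints and shrink their variable sets, never introducing new shared variables among the surviving constraints. Finally, the violation probability of any $c\in\+{C}^\sigma$ computed under the uniform product distribution on $\+{Q}^\sigma$ equals $\mathbb{P}[\neg c\mid \sigma]$, which \Cref{inputcondition-recalc} bounds by $\pprime q$. Combining this with the choice of $\pprime$ in \eqref{eq:parameter-p-prime}, namely $\pprime < (\mathrm{e}q\Delta)^{-1}$, gives the LLL hypothesis $\mathrm{e}(\pprime q)\Delta < 1$ on $\Phi^\sigma$.

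Applying \Cref{generaluniformity} to $\Phi^\sigma$ with violation probability $p' = \pprime q$ and constraint degree $\Delta' \leq \Delta$ then yields, for every $w\in V^\sigma$ and every $x\in Q_w$,
\[
\mu^\sigma_w(x) \;\geq\; \frac{1}{q_w} - \left((1 - \mathrm{e}\pprime q)^{-\Delta} - 1\right) \;=\; \frac{1}{q_w} - \eta \;=\; \theta_w,
\]
using the definition of $\eta$ and $\theta_w$ in \eqref{eq:parameter-theta}. It remains to verify that $v$ and $u$ both belong to $V^\sigma$. For $v$, this follows from \Cref{inputcondition-recalc} which gives $\sigma(v)=\star\notin Q_v$, so $v\notin \Lambda(\sigma)$ and hence $v\in V^\sigma$. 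For $u=\nextvar{\sigma}$, whenever $u\neq\perp$ we have $u\in\vinf{\sigma}\subseteq V^\sigma$ directly from \Cref{definition:boundary-variables}. Specializing the displayed inequality to $w=v$ and $w=u$ yields the two claimed bounds.

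There is no real obstacle here: the argument is essentially a bookkeeping exercise confirming that parameters behave well under simplification and that local uniformity, which is derived from \Cref{HSS}, transfers from $\Phi$ to $\Phi^\sigma$. The only mildly subtle point is the identification of $\mu^\sigma_w$ with the Gibbs marginal on $\Phi^\sigma$, which is what allows us to apply the unconditional statement of \Cref{generaluniformity} to obtain a conditional bound.
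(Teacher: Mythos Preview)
Your proposal is correct and follows essentially the same approach as the paper. The paper simply cites \Cref{localuniformitycor} (which it imports from \cite{he2022sampling}) and observes that it directly implies the proposition; what you have written is precisely a self-contained proof of \Cref{localuniformitycor} via \Cref{generaluniformity} applied to the simplified formula $\Phi^\sigma$, so your argument fills in the details the paper leaves to the reference.
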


Recall $\pprime$ defined in \eqref{eq:parameter-p-prime} and $\theta_v,  \eta$ defined in \eqref{eq:parameter-theta}. The following corollary implied by the ``local uniformity" property is immediate by~\cite[Proposition 3.9]{he2022sampling} and directly proves \Cref{prop:theta-recalc-lower-bound}.

\begin{corollary}\label{localuniformitycor}
For any CSP formula $\Phi=(V,\+{Q},\+{C})$ and any partial assignment $\sigma\in\+{Q}^*$,
if
\[
\forall c\in \mathcal{C},\quad \mathbb{P}[\neg c\mid \sigma]\leq \pprime q,
\]  
then $\sigma$ is feasible, and for any $v\in V\setminus\Lambda(\sigma)$, and any $x\in Q_v$
\[
  \mu^{\sigma}_v(x) \geq \theta_v.
\]
\end{corollary}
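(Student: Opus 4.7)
The plan is to reduce the statement to a direct application of \Cref{generaluniformity} (together with the existence guarantee from \Cref{locallemma}), applied to the \emph{simplified} formula $\Phi^\sigma$ defined in \Cref{sec:prelim}. The $\star$-marks in $\sigma$ are a red herring here: the quantity $\mathbb{P}[\neg c\mid \sigma]$ conditions only on $\Lambda(\sigma)$, where $\sigma(v)\in Q_v$, and $\mu^\sigma_v$ is likewise defined as the marginal under conditioning on $\sigma|_{\Lambda(\sigma)}$. So I can replace $\sigma$ by its restriction to $\Lambda(\sigma)$ without loss, and observe that $\mu^\sigma$ is precisely the uniform distribution over satisfying assignments of $\Phi^\sigma$.

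Next I would verify that $\Phi^\sigma$ satisfies the hypothesis of the classical LLL. For every surviving constraint $c\in \+C^\sigma$, its violation probability under the uniform product measure over the remaining variables equals $\mathbb{P}[\neg c\mid \sigma]$, which is at most $\pprime q$ by assumption; and the constraint degree of $\Phi^\sigma$ is bounded by the original $\Delta$, since $\Phi^\sigma$ is obtained from $\Phi$ by deleting and shrinking constraints. From \eqref{eq:parameter-p-prime} we have $\pprime < (\mathrm{e} q\Delta)^{-1}$, so $\mathrm{e}\cdot(\pprime q)\cdot \Delta < 1$, which is exactly the asymmetric LLL condition $\mathrm{e}\, p_{\Phi^\sigma}\,\Delta_{\Phi^\sigma} < 1$ with uniform weights $x(c) = \mathrm{e}\pprime q$.

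Applying \Cref{locallemma} to $\Phi^\sigma$ with this choice of $x$ yields a strictly positive lower bound on the probability that all constraints of $\Phi^\sigma$ are satisfied, certifying the existence of a satisfying assignment of $\Phi^\sigma$; lifting this back through $\Phi^\sigma$'s construction produces a satisfying assignment of $\Phi$ extending $\sigma$, establishing feasibility. Then applying \Cref{generaluniformity} to $\Phi^\sigma$ gives, for every $v\in V\setminus \Lambda(\sigma)$ and $x\in Q_v$,
\[
\mu^\sigma_v(x)\;\geq\; \frac{1}{q_v}-\bigl((1-\mathrm{e}\pprime q)^{-\Delta}-1\bigr)\;=\;\frac{1}{q_v}-\eta\;=\;\theta_v,
\]
which is the desired bound.

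There is no genuinely hard step in this argument: the only thing that needs a careful (but routine) check is the identification of the conditional law $\mu^\sigma$ with the uniform measure $\mu_{\Phi^\sigma}$ over satisfying assignments of the simplified formula, so that parameters $p_{\Phi^\sigma},\Delta_{\Phi^\sigma}$ of $\Phi^\sigma$ correctly inherit the bounds $\pprime q$ and $\Delta$. Once this reduction is in place, the conclusion is a black-box invocation of the two previously stated results, which is also why the excerpt can cite it directly from \cite{he2022sampling}.
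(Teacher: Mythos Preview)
Your proposal is correct and matches the paper's approach: the paper simply cites this corollary as immediate from \cite[Proposition 3.9]{he2022sampling}, and the underlying argument there is exactly the reduction you describe---pass to the simplified formula $\Phi^\sigma$, observe that $p_{\Phi^\sigma}\le \pprime q$ and $\Delta_{\Phi^\sigma}\le \Delta$, and invoke the LLL and the local-uniformity bound (\Cref{generaluniformity}) on $\Phi^\sigma$.
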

%\begin{proof}
%Recall that $\Phi^\sigma=(V^\sigma,\+{Q}^\sigma,\+{C}^\sigma)$ is the simplification of $\Phi$ over $\sigma$, where $V^\sigma=V\setminus\Lambda(\sigma)$. %Let $\pi$ be the uniform product distribution over $\+{Q}_{V^\sigma}$. 
%Let $\mu^{\sigma}$ denote the uniform  distribution over all satisfying assignments of $\Phi^\sigma$. 
%Note that by definition of simplification, $\mu^{\sigma}_v(\cdot)$ is the marginal distribution induced by $\mu^{\sigma}$ on $v$. Then by \Cref{lemma:invariant-marginoverflow} 
%We have  
%$$\forall c\in \+{C},\quad \mathbb{P}_{\Phi^\sigma}[{\neg c}]=\mathbb{P}_{\Phi}[\neg c| \sigma]\leq \pprime q,$$
%which means the simplified instance $\Phi^\sigma$ has violation probability $p_{\Phi^{\sigma}}\le \pprime q$.
%
%By our choice of $\pprime$  in \eqref{eq:parameter-p-prime}, we still have $\mathrm{e}p_{\Phi^{\sigma}}\Delta_{\Phi^{\sigma}}<1$ where $\Delta_{\Phi^{\sigma}}\le \Delta$ is the constraint degree of simplified instance $\Phi^\sigma$.
%
%Then by \Cref{locallemma}, $\Phi^\sigma$ is satisfiable, i.e.~$\sigma$ is feasible.

%Note that the marginal distribution at $v$ induced by the $\mu_{\Phi^\sigma}$ over satisfying assignments of $\Phi^\sigma$ is precisely $\mu^{\sigma}_v$.
%By \Cref{generaluniformity}, 
%for any $v\in V^\sigma=V\setminus\Lambda(\sigma)$,
%and any $x\in Q_v$,
%\[
%\mu^{\sigma}_v(x)
%\geq 
%\frac{1}{q_v}-\left((1-\mathrm{e}\pprime q)^{-\Delta}-1\right)
%=
%\frac{1}{q_v}-\eta
%=
%\theta_v
%\]
%\end{proof}

\subsection{The recursive cost tree} A key combinatorial structure used in our proof for Theorem \ref{thm:main-counting-refined} is the Recursive Cost Tree (RCT). For each $v\in V$, we further define $\qus{v}\defeq Q_v\cup \set{\star}$ as the extended domain for accessment. Note that a difference between the RCT here and that in~\cite{he2022sampling} is that the RCT here stops growing once the truncation condition is satisfied.

\begin{definition}[Recursive Cost Tree]\label{RCTdef}
For any partial assignment {$\sigma\in \qs$}, 
let $\+{T}_{\sigma}=(T_{\sigma},\rho_{\sigma})$, where $T_{\sigma}$ is a rooted tree with nodes $V(T_\sigma)\subseteq \qs$ and $\rho_\sigma :V(T_\sigma)\rightarrow [0,1]$ is a labeling of nodes in $T_\sigma$, be constructed as:
%the rooted tree $T_{R}$ with nodes $V(T_R)\subseteq \qs$ and labeling $\rho_R :V(T_R)\rightarrow [0,1]$ as:
%a \emph{recursive cost tree (RCT) rooted at $R$}, denoted by $\+{T}_{R}=(T_{R},\rho_{R})$, 
%is a rooted tree $T_{R}$ with nodes $V(T_R)\subseteq \qs$ and labeling $\rho_R :V(T_R)\rightarrow [0,1]$ constructed as:
\begin{enumerate}
    \item The root of $T_{\sigma}$ is $\sigma$, with $\rho_{\sigma}(\sigma)=1$ and depth of $\sigma$ being 0; \label{RCT-1}
    \item for $i=0,1,\ldots$: for all nodes $X\in V(T_\sigma)$ of depth $i$ in the current $T_{\sigma}$,\label{RCT-2}
    \begin{enumerate}
        \item if $\nextvar{X}=\perp$ or $f(X)=\True$, then leave $X$  as a leaf node in $T_\sigma$; \label{RCT-2-a}
        \item otherwise, supposed $u=\nextvar{X}$, append $\{X_{u\gets x}\mid x\in \qus{u}\}$ as the $q_u+1$ children to the node $X$  in $T_\sigma$, and label them as: \label{RCT-2-b}
        \begin{align*}
        \forall x\in \qus{u},\quad
        \rho_\sigma(X_{u\gets x})=
        \begin{cases}
        (1-{q_u\theta_u}) \rho_\sigma(X) &  \text{if }x=\star,\\
        \mu^\sigma_u(x)\cdot \rho_\sigma(X) & \text{if }x\in Q_u.
        \end{cases}
        \end{align*}
%        \begin{enumerate}
%            \item $\rho(\sigma_{u\gets \star})=(1-{q_u\theta_u})\cdot \rho(\sigma) $ \label{RCT-2-b-i}
%            \item $\rho(\sigma_{u\gets a})=\mu^\sigma_u(a)\cdot \rho(\sigma) $ for each $a\in Q_u$\; \label{RCT-2-b-ii}
%        \end{enumerate}
    \end{enumerate}
\end{enumerate}
The resulting $\+{T}_{\sigma}=(T_{\sigma},\rho_{\sigma})$ is called the \emph{ recursive cost tree (RCT) rooted at $\sigma$}.
\end{definition}

%Note that the only difference between \Cref{RCTdef} and \Cref{boundedRCTdef}, notations aside,  is the $(2a)$ term, where in \Cref{boundedRCTdef} the tree is truncated up to depth $L$.

For any RCT $\+{T}_{\sigma}$, let $\+{L}(\+{T}_{\sigma})$ be the set of leaf nodes in $\+{T}_{\sigma}$. Let $\+{L}_g(\+{T}_{\sigma})\defeq\{X\in \+{L}(\+{T}_{\sigma}): f(X)=\False \}$ and $\+{L}_b(\+{T}_{\sigma})\defeq\{X\in \+{L}(\+{T}_{\sigma}): f(X)=\True\}$ be the sets of leaf nodes in $\+{T}_{\sigma}$ that don't and do satisfy the truncation condition, respectively. We also define the following function $\lambda(\cdot)$ on $\+{T}_{\sigma}$:
\begin{align}\label{eq-lambdadef}
\lambda(\+{T}_{\sigma}) \defeq \sum_{X\in  \+{L}_b(\+{T}_{\sigma})}\rho_{\sigma}(X) .
\end{align}

 %Then the following relation holds by a simple tree recursion.

%\begin{lemma}\label{fact-bounded-RCT}
%Given $\sigma\in V(T^L)$ where $\nextvar{\sigma}=u\neq \perp$ , 
%Then we have
%$$\frac{\rho^L(T^L_{\sigma})\cap \+{L}^L_b}{\rho^L(\sigma)} = (1-q\theta)\frac{\rho^L(T^L_{Y_{0}})\cap \+{L}^L_b}{\rho^L(\sigma_0)}
%             +\sum_{i\in[q]}\mu^\sigma_u(i)\frac{\rho^L(T^L_{Y_{i}})\cap \+{L}^L_b}{\rho^L(Y_i)},$$
%and 
%$$\varsigma(T^L_{\sigma}) = 1+ \sum\limits_{0\leq i\leq q} \varsigma(T^L_{Y_i})$$
%\end{lemma}

Recall the definition of total variation distance. Let $\mu$ and $\nu$ be two probability distributions
over the same sample space $\Omega_{S}$. The total variation distance between $u$ and $v$ is defined by
$$\tv(\mu,\nu)\triangleq \frac{1}{2}\sum\limits_{x\in \Omega_{S}}\abs{ \mu(x)-\nu(x)}.$$

The total variation distance between the distribution returned by the subroutine \calc{} and the true marginal distribution is upper bounded through $\lambda(\cdot)$.

 \begin{lemma}\label{boundedRCTtvdcor}

For any $(\Phi,\sigma,v)$ satisfying \Cref{inputcondition-calc},  it holds that
$$\tv(\hat{\mu},\mu^{\sigma}_v)\leq \lambda(\+{T}_{\sigma_{v\gets\star}}),$$
where $\hat{\mu}$ is the distribution returned by $\calc{}(\Phi,\sigma,v)$.
\end{lemma}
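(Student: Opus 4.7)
The plan is to reduce the statement about \calc{} to a stronger inductive statement about \recalc{}, and then establish the latter by induction on the finite tree $\+T_\tau$.

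\emph{Reduction.} \Cref{Line-calc-zone} of \Cref{Alg:calc} defines $\hat{\mu}(i) = \theta_v + (1 - q_v\theta_v)\hat{\+D}(i)$, while by \eqref{eq:definition-recalc} the true marginal satisfies $\mu^{\sigma}_v(i) = \theta_v + (1 - q_v\theta_v)\+D^{\sigma}_v(i)$. Subtracting and summing absolute values yields $\tv(\hat{\mu}, \mu^{\sigma}_v) = (1-q_v\theta_v)\tv(\hat{\+D}, \+D^{\sigma}_v) \leq \tv(\hat{\+D}, \+D^{\sigma}_v)$. Since conditioning on $\sigma$ coincides with conditioning on $\sigma_{v\gets\star}$ (the $\star$ label assigns no value at $v$), $\+D^{\sigma}_v = \+D^{\sigma_{v\gets\star}}_v$. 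Hence it suffices to prove $\tv(\hat{\+D}, \+D^{\sigma_{v\gets\star}}_v) \leq \lambda(\+T_{\sigma_{v\gets\star}})$, where $\hat{\+D}=\recalc(\Phi,\sigma_{v\gets\star},v)$.

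\emph{Strengthened claim.} I will show that whenever $(\Phi,\tau,w)$ satisfies \Cref{inputcondition-recalc} and $\hat{\+D}^{\tau}_w$ denotes the output of $\recalc(\Phi,\tau,w)$, one has $\tv(\hat{\+D}^{\tau}_w, \+D^{\tau}_w) \leq \lambda(\+T_\tau)$. Note the right-hand side does not depend on the variable of interest $w$; this is essential, because inside \recalc{} the recursion alternates between approximating $u=\nextvar{\tau}$ (when it sets $u\gets\star$) and approximating $w$ (when it sets $u\gets i$ for $i\in Q_u$). Invariant preservation at every recursive call is provided by \Cref{lemma:general-invariant}, so the inductive hypothesis always applies. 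The induction is on the height of $\+T_\tau$. In the base case $f(\tau)=\True$ the RCT is a single bad leaf, so $\lambda(\+T_\tau)=1$ and the trivial bound $\tv\leq 1$ suffices; in the base case $\nextvar{\tau}=\perp$, the algorithm computes $\+D^{\tau}_w$ exactly via enumeration, matching $\lambda(\+T_\tau)=0$.

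\emph{Inductive step.} Otherwise set $u=\nextvar{\tau}$. The law of total probability gives the identity $\+D^{\tau}_w(j) = \sum_{i\in Q_u} \mu^\tau_u(i)\+D^{\tau_{u\gets i}}_w(j)$, which mirrors the algorithm's formula $\hat{\+D}^\tau_w(j) = \sum_{i\in Q_u}\hat{\mu}^\tau_u(i)\hat{\+D}^{\tau_{u\gets i}}_w(j)$ assembled in \Cref{Line-recalc-calcmu}. Decompose each summand as
\[
\hat{\mu}^\tau_u(i)\hat{\+D}^{\tau_{u\gets i}}_w(j) - \mu^\tau_u(i)\+D^{\tau_{u\gets i}}_w(j) = \mu^\tau_u(i)\bigl(\hat{\+D}^{\tau_{u\gets i}}_w(j) - \+D^{\tau_{u\gets i}}_w(j)\bigr) + \bigl(\hat{\mu}^\tau_u(i) - \mu^\tau_u(i)\bigr)\hat{\+D}^{\tau_{u\gets i}}_w(j).
\]
The triangle inequality, summation over $j$ and then $i$, and $\sum_j \hat{\+D}^{\tau_{u\gets i}}_w(j) = 1$ give
\[
\tv(\hat{\+D}^\tau_w, \+D^\tau_w) \leq \sum_{i\in Q_u}\mu^\tau_u(i)\tv(\hat{\+D}^{\tau_{u\gets i}}_w, \+D^{\tau_{u\gets i}}_w) + \tv(\hat{\mu}^\tau_u, \mu^\tau_u).
\]
The last term equals $(1-q_u\theta_u)\tv(\hat{\+D}^\tau_u, \+D^\tau_u)$ by the same affine identity used in the reduction, with $\hat{\+D}^\tau_u$ being the output of $\recalc(\Phi,\tau_{u\gets\star},u)$. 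Applying the inductive hypothesis to each of the subtrees $\+T_{\tau_{u\gets i}}$ for $i\in Q_u$ and $\+T_{\tau_{u\gets\star}}$ bounds the right-hand side by $\sum_{i\in Q_u}\mu^\tau_u(i)\lambda(\+T_{\tau_{u\gets i}}) + (1-q_u\theta_u)\lambda(\+T_{\tau_{u\gets\star}})$.

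\emph{Closing identity.} The final ingredient is that $\lambda$ decomposes along the root expansion with matching coefficients. Because $\rho_\tau$ is defined multiplicatively along paths, for any descendant $Z$ of a child $\tau_{u\gets x}$ one has $\rho_\tau(Z)=\rho_\tau(\tau_{u\gets x})\cdot\rho_{\tau_{u\gets x}}(Z)$, and the bad leaves in the subtree rooted at $\tau_{u\gets x}$ are exactly the bad leaves of $\+T_{\tau_{u\gets x}}$. Summing \eqref{eq-lambdadef} over children therefore yields
\[
\lambda(\+T_\tau) = \sum_{x\in\qus{u}}\rho_\tau(\tau_{u\gets x})\lambda(\+T_{\tau_{u\gets x}}) = \sum_{i\in Q_u}\mu^\tau_u(i)\lambda(\+T_{\tau_{u\gets i}}) + (1-q_u\theta_u)\lambda(\+T_{\tau_{u\gets\star}}),
\]
which matches the bound from the inductive step exactly and closes the induction. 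There is no deep obstacle; the substantive content is the error decomposition together with the observation that $\lambda(\+T_\tau)$ has precisely the additive structure needed to absorb both error contributions at once, and that the same bound $\lambda(\+T_\tau)$ applies regardless of which variable is currently being approximated.
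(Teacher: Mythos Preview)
Your proposal is correct and follows essentially the same approach as the paper: the paper proves the identical strengthened claim for \recalc{} (stated there as \Cref{boundedRCTtvd}) by the same structural induction on the RCT, using the same error decomposition and the same recursive identity for $\lambda$ (stated there as \Cref{fact-RCT}), and then derives \Cref{boundedRCTtvdcor} via the affine reduction exactly as you do.
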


We then prove \Cref{boundedRCTtvdcor}. 
The following recursive relation for RCT is immediate by definition.

\begin{proposition}\label{fact-RCT}
Let $\sigma\in \qs$, $u=\nextvar{\sigma}$.
If $u\neq\perp$ and $f(\sigma)=\False$, then
$$\lambda(\+{T}_{\sigma}) = (1-q_u\theta_u)\lambda\left(\+{T}_{\Mod{\sigma}{u}{\star}}\right)
             +\sum_{x\in Q_u}\left(\mu^\sigma_u(x)\cdot \lambda\left(\+{T}_{\Mod{\sigma}{u}{x}}\right)\right)$$
\end{proposition}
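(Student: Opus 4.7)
The plan is to unfold one level of the RCT at the root and then identify the subtrees below that level with the RCTs rooted at each child, so that summing over the bad leaves of $\+T_\sigma$ factors cleanly through the root's children labels. The calculation is almost bookkeeping once the right invariant for node labels is stated.

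First I would use the hypothesis. Since $u=\nextvar{\sigma}\neq\perp$ and $f(\sigma)=\False$, Case 2(a) of \Cref{RCTdef} does not apply at the root, so Case 2(b) does: the root $\sigma$ has exactly $q_u+1$ children, namely $\Mod{\sigma}{u}{x}$ for $x\in \qus{u}$, with labels $\rho_\sigma(\Mod{\sigma}{u}{\star})=1-q_u\theta_u$ and $\rho_\sigma(\Mod{\sigma}{u}{x})=\mu^\sigma_u(x)$ for $x\in Q_u$ (using that $\rho_\sigma(\sigma)=1$). In particular $\sigma\notin \+L_b(\+T_\sigma)$, so every bad leaf of $\+T_\sigma$ lies in the subtree $T_x$ of some unique child $\Mod{\sigma}{u}{x}$.

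The key step is to compare, for each $x\in \qus{u}$, the subtree $T_x\subseteq T_\sigma$ with the independently-constructed RCT $\+T_{\Mod{\sigma}{u}{x}}=(T_{\Mod{\sigma}{u}{x}},\rho_{\Mod{\sigma}{u}{x}})$. Because the construction rule in \Cref{RCTdef}(2) at a node $Y$ depends only on $Y$ itself (through $\nextvar{Y}$, $f(Y)$ and the marginals $\mu^Y_{\cdot}$) and not on the root of the tree containing it, a straightforward induction on depth shows that $T_x$ and $T_{\Mod{\sigma}{u}{x}}$ coincide as rooted trees, and that for every node $Y\in T_x$,
\[
\rho_\sigma(Y)\;=\;\rho_\sigma(\Mod{\sigma}{u}{x})\cdot \rho_{\Mod{\sigma}{u}{x}}(Y).
\]
In particular the set of leaves agrees, and whether a leaf satisfies $f(\cdot)=\True$ depends only on the partial assignment at that leaf, so $\+L_b(T_x)=\+L_b(\+T_{\Mod{\sigma}{u}{x}})$.

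Finally I would put things together. Partitioning $\+L_b(\+T_\sigma)$ according to the child whose subtree contains the leaf and using the multiplicative relation above,
\[
\lambda(\+T_\sigma)\;=\;\sum_{x\in \qus{u}}\sum_{Y\in \+L_b(T_x)}\rho_\sigma(Y)\;=\;\sum_{x\in \qus{u}}\rho_\sigma(\Mod{\sigma}{u}{x})\cdot\lambda(\+T_{\Mod{\sigma}{u}{x}}),
\]
and substituting the two cases of the labels yields exactly the claimed identity. There is no real obstacle here; the one place to be careful is the inductive verification that $T_x$ is literally $\+T_{\Mod{\sigma}{u}{x}}$ with all labels scaled by $\rho_\sigma(\Mod{\sigma}{u}{x})$, which relies on the fact that in Case 2(b) the labeling of children of any node $X$ uses marginals and parameters attached to $X$ only (so the recursion is ``memoryless'' with respect to the root of the RCT).
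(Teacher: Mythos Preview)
Your proposal is correct and is exactly the argument the paper has in mind when it declares this proposition ``immediate by definition'': the paper likewise identifies, in the proofs of \Cref{boundedRCTtvd} and \Cref{pathprop}, the subtree of $T_\sigma$ below each child $\Mod{\sigma}{u}{x}$ with the independently built RCT $\+T_{\Mod{\sigma}{u}{x}}$, and then the multiplicative label relation you wrote down (together with $\rho_\sigma(\sigma)=1$) gives the claimed recursion for the sum over bad leaves. Your caveat that the rule in Case~2(b) depends only on the current node $X$ (so labels in the subtree are just the labels of the child-RCT scaled by $\rho_\sigma(\Mod{\sigma}{u}{x})$) is precisely the point that makes the identification go through.
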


We have the following lemma which bounds the total variation distance between the distribution returned by the subroutine \recalc{} and the ``overflow" marginal distribution $\+{D}$.

\begin{lemma}\label{boundedRCTtvd}
For any $(\Phi,\sigma,v)$ satisfying \Cref{inputcondition-recalc}, it holds that
$$\tv(\hat{\+D},\+D)\leq \lambda(\+{T}_{\sigma}),$$
where $\hat{\+D}$ is the distribution returned by $\recalc{}(\Phi,\sigma,v)$ and $\+{D}\triangleq\+{D}^{\sigma}_v=\frac{\mu_v^{\sigma}-\theta_v}{1-q_v \theta_v}$.
\end{lemma}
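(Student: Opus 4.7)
The plan is to induct on the structure of the recursive cost tree $\+{T}_{\sigma}$, which mirrors the recursion tree of $\recalc{}(\Phi,\sigma,v)$. At each call, by \Cref{lemma:general-invariant} the invariant \Cref{inputcondition-recalc} is preserved for every recursive call, so \Cref{prop:theta-recalc-lower-bound} applies and all distributions $\+{D}^\tau_w$ and approximators $\hat{\+D}^\tau_w$ appearing in the recursion are well defined.

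First I would handle the two base cases. If $f(\sigma)=\True$, then $\recalc{}$ returns the uniform distribution on $Q_v$, while the root $\sigma$ is itself a leaf in $\+{T}_\sigma$ with $\rho_\sigma(\sigma)=1$, so $\lambda(\+{T}_\sigma)=1\ge \tv(\hat{\+D},\+D)$ trivially. If instead $\nextvar{\sigma}=\perp$ and $f(\sigma)=\False$, the algorithm enumerates the connected component of $v$ in $H_{\Phi^\sigma}$ and returns the exact $\+{D}^\sigma_v$, while $\sigma$ is a good leaf with $\lambda(\+{T}_\sigma)=0$; both sides of the desired inequality are $0$.

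For the inductive step, let $u=\nextvar{\sigma}\neq\perp$ with $f(\sigma)=\False$. The key identity is the chain rule
\[
\mu^\sigma_v(j)=\sum_{i\in Q_u}\mu^\sigma_u(i)\,\mu^{\sigma_{u\gets i}}_v(j),
\]
which, combined with the affine relation \eqref{eq:definition-recalc} between $\+D^\tau_w$ and $\mu^\tau_w$, immediately yields
\[
\+{D}^\sigma_v(j)=\sum_{i\in Q_u}\mu^\sigma_u(i)\,\+{D}^{\sigma_{u\gets i}}_v(j).
\]
The algorithm returns $\hat{\+D}(j)=\sum_{i\in Q_u}\hat{\mu}^\sigma_u(i)\,\hat{\+D}^{\sigma_{u\gets i}}_v(j)$, where $\hat{\mu}^\sigma_u(i)=\theta_u+(1-q_u\theta_u)\hat{\+D}^\sigma_u(i)$ and $\mu^\sigma_u(i)=\theta_u+(1-q_u\theta_u)\+{D}^\sigma_u(i)$ are both probability distributions over $Q_u$ (the latter by \Cref{prop:theta-recalc-lower-bound}, the former because both $\hat{\+D}^\sigma_u$ is returned by a recursive call and thus a distribution). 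I would then use the standard mixture-triangle inequality: for any probability vectors $a,b$ over $Q_u$ and any distributions $C_i,D_i$ indexed by $i\in Q_u$,
\[
\tv\!\left(\textstyle\sum_i a_iC_i,\ \sum_i b_iD_i\right)\ \le\ \sum_i b_i\,\tv(C_i,D_i)+\tv(a,b),
\]
obtained by writing $a_iC_i-b_iD_i=b_i(C_i-D_i)+(a_i-b_i)C_i$ and applying the triangle inequality. Setting $a_i=\hat{\mu}^\sigma_u(i)$, $b_i=\mu^\sigma_u(i)$, $C_i=\hat{\+D}^{\sigma_{u\gets i}}_v$, $D_i=\+{D}^{\sigma_{u\gets i}}_v$ gives
\[
\tv(\hat{\+D},\+D)\ \le\ \sum_{i\in Q_u}\mu^\sigma_u(i)\,\tv\!\left(\hat{\+D}^{\sigma_{u\gets i}}_v,\+{D}^{\sigma_{u\gets i}}_v\right)+\tv(\hat{\mu}^\sigma_u,\mu^\sigma_u).
\]
The second term simplifies via the affine relation to $(1-q_u\theta_u)\tv(\hat{\+D}^\sigma_u,\+{D}^\sigma_u)$. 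Now I would apply the induction hypothesis to each of the $q_u+1$ recursive calls, which by construction correspond exactly to the children of $\sigma$ in $\+{T}_\sigma$, to obtain
\[
\tv(\hat{\+D},\+D)\ \le\ \sum_{i\in Q_u}\mu^\sigma_u(i)\,\lambda\!\left(\+{T}_{\sigma_{u\gets i}}\right)+(1-q_u\theta_u)\,\lambda\!\left(\+{T}_{\sigma_{u\gets \star}}\right),
\]
which by \Cref{fact-RCT} equals $\lambda(\+{T}_\sigma)$, completing the induction.

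The main obstacle is the asymmetry caused by $\hat{\mu}^\sigma_u$ being used as the mixing weights in the algorithm while $\mu^\sigma_u$ being the natural weights in the law of total probability; the identity $a_iC_i-b_iD_i=b_i(C_i-D_i)+(a_i-b_i)C_i$ is the right way to rearrange so that the inductive error on the $D$-branch is weighted by the true marginal $\mu^\sigma_u$, matching the weights appearing in \Cref{fact-RCT}. Everything else is a routine verification of the base cases, the well-definedness enforced by \Cref{lemma:general-invariant} and \Cref{prop:theta-recalc-lower-bound}, and the affine transfer of total variation distance between $\+D^\tau_w$ and $\mu^\tau_w$.
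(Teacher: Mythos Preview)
Your proposal is correct and follows essentially the same approach as the paper's proof: a structural induction on the RCT, with the same two base cases, and in the inductive step the same decomposition $a_iC_i-b_iD_i=b_i(C_i-D_i)+(a_i-b_i)C_i$ (the paper carries this out entrywise in \eqref{boundedRCTtvd-eq1}--\eqref{boundedRCTtvd-eq3} rather than stating it as a mixture--triangle inequality), followed by the induction hypothesis and \Cref{fact-RCT}.
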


\begin{proof}
We prove the lemma by an induction on the structure of the RCT. The base case is when $T_\sigma$ is a single root.
Thus we have $\sigma\in \+{L}(\+T_\sigma)$.
By \Cref{RCT-2-a} of \Cref{RCTdef},
we also have $f(\sigma)=\True$ or $\nextvar{\sigma}= \perp$.
For these two possibilities,
we always have $\tv(\hat{\+{D}},\+D)\leq \lambda(\+{T}_{\sigma})$.
\begin{enumerate}
    \item If $f(\sigma)=\True$, we have $\sigma\in \+{L}_b(\+T_\sigma)$ by $\sigma\in \+{L}(\+T_\sigma)$.
    Thus, $\lambda(\+{T}_{\sigma}) = \rho_{\sigma}(\sigma) = 1\geq \tv(\hat{\+{D}},\+D)$.
    \item Otherwise, $f(\sigma)=\False$ and $\nextvar{\sigma}= \perp$.
    Thus, the condition in \Cref{Line-recalc-loop} of \recalc{}($\Phi,\sigma,v$) is not satisfied and Lines \ref{Line-recalc-nextvar}-\ref{Line-recalc-calcmu} are skipped.
    According to \Cref{Line-recalc-enu}, we have
    $\hat{\+{D}}$ is exactly $\+D$.
    Therefore $\tv(\hat{\+{D}},\+D)=0\leq \lambda(\+{T}_{\sigma})$, because $\lambda(\+{T}_{\sigma})$ is nonnegative.
\end{enumerate}

For the induction step, we assume that $T_{\sigma}$ is a tree of depth $>0$, which implies $f(\sigma)=\False$ and $\nextvar{\sigma}=u\neq \perp$ for some $u\in V$ by \Cref{RCT-2-a} of \Cref{RCTdef}. 
Let $\hat{\+{D}}_{u}^{\sigma_{u\leftarrow \star}}$ be the probability vector returned by $\recalc(\Phi,\sigma_{u\gets \star},u)$ and $\hat{\+{D}}_{v}^{\sigma_{u\leftarrow x}}$ be the probability vector returned by $\recalc(\Phi,\sigma_{u\gets x},v)$ for each $x\in Q_u$.
Given a probability vector $\mathbf{p}$,
we use $\mathbf{p}(j)$ to denote the $j$-th item of $\mathbf{p}$.
By \Cref{Line-recalc-recur1}-\Cref{Line-recalc-calcmu} of \Cref{Alg:recalc}, we have
for each $j\in Q_v$,
\begin{equation}
\begin{aligned}\label{boundedRCTtvd-eq1}
    \hat{\+{D}}(j)&=\sum\limits_{i\in Q_u}\left(\theta_u+(1-q_u\theta_u)\hat{\+{D}}_{u}^{\sigma_{u\leftarrow \star}}(i)\right)\cdot \hat{\+{D}}_{v}^{\sigma_{u\leftarrow i}}(j)\\
    &=\sum\limits_{i\in Q_u}\left(\theta_u+(1-q_u\theta_u)\+D_{u}^{\sigma_{u\leftarrow \star}}(i)+(1-q_u\theta_u)\left(\hat{\+{D}}_{u}^{\sigma_{u\leftarrow \star}}(i)-\+D_{u}^{\sigma_{u\leftarrow \star}}(i)\right)\right)\cdot \hat{\+{D}}_{v}^{\sigma_{u\leftarrow i}}(j)\\
    &=\sum\limits_{i\in Q_u}\left(\mu_{u}^{\sigma}(i)+(1-q_u\theta_u)\left(\hat{\+{D}}_{u}^{\sigma_{u\leftarrow \star}}(i)-\+D_{u}^{\sigma_{u\leftarrow \star}}(i)\right)\right)\cdot \hat{\+{D}}_{v}^{\sigma_{u\leftarrow i}}(j),
\end{aligned}
\end{equation}
where the last equality is by the definition of $\+D(\cdot)$.
Moreover, by the chain rule, we also have for each $j\in Q_v$,
\begin{equation}
\begin{aligned}\label{boundedRCTtvd-eq2}
\+D(j)&=\sum\limits_{i\in Q_u}\+{\mu}^{\sigma}_u(i)\cdot \+{D}^{\sigma_{u\leftarrow i}}_v(j).
\end{aligned}
\end{equation}
Combining \eqref{boundedRCTtvd-eq1} with \eqref{boundedRCTtvd-eq2}, we have
\begin{equation*}
\begin{aligned}
    \hat{\+{D}}(j)-\+D(j)=\sum\limits_{i\in Q_u}\left(\mu^{\sigma}_u(i)\cdot (\hat{\+{D}}_{v}^{\sigma_{u\leftarrow i}}(j)-\+D_{v}^{\sigma_{u\leftarrow i}}(j))+(1-q_u\theta_u)\left(\hat{\+{D}}_{u}^{\sigma_{u\leftarrow \star}}(i)-\+D_{u}^{\sigma_{u\leftarrow \star}}(i)\right)\cdot \hat{\+{D}}_{v}^{\sigma_{u\leftarrow i}}(j)\right).
\end{aligned}
\end{equation*}
Combining with the triangle inequality for absolute values, we have 
\begin{equation*}
\begin{aligned}
    \abs{\hat{\+{D}}(j)-\+D(j)}&\leq\sum\limits_{i\in Q_u}\left(\mu^{\sigma}_u(i)\cdot \abs{\hat{\+{D}}_{v}^{\sigma_{u\leftarrow i}}(j)-\+D_{v}^{\sigma_{u\leftarrow i}}(j)}+(1-q_u\theta_u)\cdot  \abs{\hat{\+{D}}_{u}^{\sigma_{u\leftarrow \star}}(i)-\+D_{u}^{\sigma_{u\leftarrow \star}}(i)}\cdot\hat{\+{D}}_{v}^{\sigma_{u\leftarrow i}}(j)\right).
    %\\
    %&= \sum\limits_{i\in Q_u}\left(\mu^{\sigma}_u(i)\cdot \abs{\hat{\+{D}}_{v}^{\sigma_{u\leftarrow i}}(j)-\+D_{v}^{\sigma_{u\leftarrow i}}(j)}+2(1-q_u\theta_u)\cdot \hat{\+{D}}_{v}^{\sigma_{u\leftarrow i}}(j)\cdot
    %\tv(\hat{\+{D}}_{u}^{\sigma_{u\leftarrow \star}},\+D_{u}^{\sigma_{u\leftarrow \star}})\right).
\end{aligned}
\end{equation*}
Therefore, we have
\begin{equation} \label{boundedRCTtvd-eq3}
\begin{aligned}
\tv(\hat{\+{D}},\+D)
=
&\frac{1}{2}\sum\limits_{j\in Q_v}\abs{ \hat{\+{D}}(j)-\+D(j)}\\
\leq
&\frac{1}{2}
\sum\limits_{i\in Q_u}\left(\mu^{\sigma}_u(i)\cdot \sum\limits_{j\in Q_v}\abs{\hat{\+{D}}_{v}^{\sigma_{u\leftarrow i}}(j)-\+D_{v}^{\sigma_{u\leftarrow i}}(j)}\right.\\
&\qquad\qquad\left.+(1-q_u\theta_u)\cdot \abs{\hat{\+{D}}_{u}^{\sigma_{u\leftarrow \star}}(i)-\+D_{u}^{\sigma_{u\leftarrow \star}}(i)}\cdot\sum_{j\in Q_v} \hat{\+{D}}_{v}^{\sigma_{u\leftarrow i}}(j)\right)\\
\leq
&\sum\limits_{i\in Q_u}\left(u^{\sigma}_u(i)\cdot \tv(\hat{\+{D}}_{v}^{\sigma_{u\leftarrow i}},\+D_{v}^{\sigma_{u\leftarrow i}})+\frac{1}{2}\cdot(1-q_u\theta_u)\cdot \abs{\hat{\+{D}}_{u}^{\sigma_{u\leftarrow \star}}(i)-\+D_{u}^{\sigma_{u\leftarrow \star}}(i)}\right)\\
\leq
&(1-q_u\theta_u)\cdot \tv(\hat{\+{D}}_{u}^{\sigma_{u\leftarrow \star}},\+D_{u}^{\sigma_{u\leftarrow \star}})+ \sum\limits_{i\in Q_u}\left(\mu^{\sigma}_u(i)\cdot \tv(\hat{\+{D}}_{v}^{\sigma_{u\leftarrow i}},\+D_{v}^{\sigma_{u\leftarrow i}})\right).
\end{aligned}
\end{equation}

Note that by \Cref{RCT-2-b} in \Cref{RCTdef}, for each $x\in \qus{u}$, the subtree in $T_{\sigma}$ rooted by $\sigma_{u\gets x}$ is precisely the $T_{X}$ in the RCT  $\+{T}_X=(T_X,\rho_X)$ rooted at $X=\sigma_{u\gets x}$. Moreover, by \Cref{lemma:general-invariant}, \Cref{inputcondition-recalc} is still satisfied by $(\Phi,\sigma_{u\gets x},v)$.
Thus, by induction hypothesis, we have for each $x\in \qus{u}$,
\begin{align*}
\tv(\hat{\+{D}}_{v}^{\sigma_{u\leftarrow x}},\+D_{v}^{\sigma_{u\leftarrow x}})\leq \lambda(\+{T}_{\sigma_{u\gets x}}).
\end{align*}

Combining with \eqref{boundedRCTtvd-eq3}, it follows that

\begin{align*}
\tv(\hat{\+{D}},\+D)\leq &(1-q_u\theta_u)\cdot \tv(\hat{\+{D}}_{u}^{\sigma_{u\leftarrow \star}},\+D_{u}^{\sigma_{u\leftarrow \star}})+ \sum\limits_{i\in Q_u}\left(\mu^{\sigma}_u(i)\cdot \tv(\hat{\+{D}}_{v}^{\sigma_{u\leftarrow i}},\+D_{v}^{\sigma_{u\leftarrow i}})\right)\\
\leq &(1-q_u\theta_u)\lambda(\+{T}_{\sigma_{u\gets \star}})+\sum\limits_{x\in Q_u}\mu_{u}^\sigma(x)\lambda(\+{T}_{\sigma_{u\gets x}})\\
=&\lambda(\+{T}_{\sigma}).
\end{align*}
where the equality follows by \Cref{fact-RCT}.
\end{proof}

For any $(\Phi,\sigma,v)$ satisfying \Cref{inputcondition-calc}, 
one can verify that $(\Phi,\sigma_{v\leftarrow \star},v)$ satisfies \Cref{inputcondition-recalc}.
Let $\hat{\+D}$ be the distribution returned by $\recalc{}(\Phi,\sigma,v)$.
By \Cref{boundedRCTtvd} we have $\tv(\hat{\+D},\+D^{\sigma}_v)\leq \lambda(\+{T}_{\sigma_{v\gets\star}})$.
Thus, by Lines \ref{Line-calc-recalc}-\ref{Line-calc-R} of $\calc{}(\Phi,\sigma,v)$, we have
$$\tv(\hat{\mu},\mu^{\sigma}_{v})= (1-q_v\theta_v)\tv(\hat{\+D},\+D^{\sigma}_v)\leq (1-q_v\theta_v)\lambda(\+{T}_{\sigma_{v\gets\star}})
\leq (1-q\theta)\lambda(\+{T}_{\sigma_{v\gets\star}})
\leq \lambda(\+{T}_{\sigma_{v\gets\star}}),$$
where the equality is by 
the definition of $\+D^{\sigma}_v$. This proves \Cref{boundedRCTtvdcor}.

\subsection{A random path simulating RCT}

The recursive cost tree in \Cref{RCTdef} inspires the following random process of partial assignments.
Given a partial assignment $\sigma$ and a variable $v\in V\setminus \Lambda(\sigma)$, define 
\begin{align*}
\induceddist{\sigma}{v}(\star)
&=\frac{1-q_v\cdot \theta_v}{2-q_v\cdot \theta_v}, \\
\forall x\in Q_v,\qquad
\induceddist{\sigma}{v}(x)
&=\frac{\mu_{v}^{\sigma}(x)}{2-q_v\cdot \theta_v}.
\end{align*}
It is obvious to see that $\induceddist{\sigma}{v}(\cdot)$ is a well-defined probability distribution over $\qus{v}$.

\begin{definition}[the $\pth(\sigma)$ process]\label{pathdef}
For any {$\sigma\in \qs$}, 
$\pth(\sigma)=(\sigma_0,\sigma_1,\ldots,\sigma_\ell)$ is a random sequence of partial assignments generated from the initial $\sigma_0 = \sigma$ as that for $i=0,1,\ldots$:
\begin{enumerate}
    \item 
    if $\nextvar{\sigma_i}=\perp$ or $f(\sigma_i)=\True$, the sequence stops at $\sigma_i$;\label{rp-1}
    \item
    otherwise $u=\nextvar{\sigma_i}\in V$, 
    the partial assignment $\sigma_{i+1}\in\qs$ is generated from $\sigma_{i}$ by randomly giving $\sigma(u)$ a value $x\in \qus{u}$, \label{rp-2}
    such that
    \begin{align*}
    \forall x\in \qus{u},\qquad
    \Pr{\sigma_{i+1}=\Mod{(\sigma_i)}{u}{x}}=\induceddist{\sigma}{u}(x).
    \end{align*}
\end{enumerate}
\end{definition}

The length $\ell(\sigma)$ of $\pth(\sigma) = \left(\sigma_0,\sigma_1,\cdots,\sigma_{\ell(\sigma)}\right)$ is a random variable whose distribution is determined by $\sigma$.
We simply write $\ell=\ell(\sigma)$ and $\pth(\sigma) = \left(\sigma_0,\sigma_1,\cdots,\sigma_{\ell}\right)$ if $\sigma$ is clear from the context. It is straightforward to verify that $\pth(\sigma)$ satisfies the Markov property. 

The significance of the random process $\pth(\sigma)$ is that it is related to the total variation distance between the distribution returned by $\calc{}(\Phi,\sigma,v)$ and $\mu^{\sigma}_{v}$ through the following function $H(\cdot)$.
For any two partial assignments $\tau_1,\tau_2$, define
$$\chi(\tau_1,\tau_2) \triangleq \prod\limits_{v\in \Lambda^{+}(\tau_1)\setminus \Lambda^{+}(\tau_2)}\left(2-q_v\theta_v\right).$$
Given any sequence  $P=(\sigma_0,\sigma_1,\dots,\sigma_{\ell})\in(\qs)^{\ell+1}$ with $\ell\ge 0$, define
\begin{align}\label{eq-pathweightdef}
H(P)\defeq \one{f(\sigma_{\ell})=\True}\cdot \chi(\sigma_{\ell},\sigma_0).
\end{align}

Recall the $\lambda(\+{T}_{\sigma})$ defined in \eqref{eq-lambdadef}. We have the following lemma.

\begin{lemma}\label{pathprop}
For any partial assignment $\sigma\in \qs$, the following holds
for $P=\pth(\sigma)$:
\begin{align*}
   \E{H(P)} = \lambda(\+T_{\sigma})
   \end{align*}
\end{lemma}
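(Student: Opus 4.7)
My plan is to prove the identity by induction on the depth of $\+T_\sigma$, mirroring the style of the proof of \Cref{boundedRCTtvd} and leveraging the recursive relation from \Cref{fact-RCT}. The key structural observation is that the stopping criteria for the path process (Item \ref{rp-1} of \Cref{pathdef}) coincide exactly with the leaf criteria for the RCT (\Cref{RCT-2-a} of \Cref{RCTdef}), so every realization of $\pth(\sigma)$ terminates at a leaf of $\+T_\sigma$, and $H(P)$ is nonzero precisely on those realizations whose endpoint lies in $\+L_b(\+T_\sigma)$.

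For the base case, suppose $T_\sigma$ is a single root. If $f(\sigma) = \True$, then $\sigma \in \+L_b(\+T_\sigma)$, so $\lambda(\+T_\sigma) = \rho_\sigma(\sigma) = 1$; the process immediately terminates with $\ell = 0$, so $H(P) = \one{f(\sigma)=\True}\cdot\chi(\sigma,\sigma) = 1$. Otherwise $\nextvar{\sigma} = \perp$ and $f(\sigma) = \False$, in which case both $\lambda(\+T_\sigma) = 0$ and $H(P) = 0$.

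For the inductive step, assume $f(\sigma) = \False$ and $u = \nextvar{\sigma} \in V$. By the Markov property of $\pth(\sigma)$, conditioning on $\sigma_1 = \Mod{\sigma}{u}{x}$ for each $x \in \qus{u}$, the remainder of the path is distributed as $\pth(\Mod{\sigma}{u}{x})$. Since $u = \nextvar{\sigma} \in \vinf{\sigma} \subseteq V^{\sigma}\setminus\vcon{\sigma}$, we have $u\notin\Lambda^+(\sigma)$, and therefore $\Lambda^+(\Mod{\sigma}{u}{x}) = \Lambda^+(\sigma)\cup\{u\}$ for both $x=\star$ and $x\in Q_u$. This gives the key identity
\begin{align*}
\chi(\sigma_\ell,\sigma) = (2-q_u\theta_u)\cdot\chi(\sigma_\ell,\Mod{\sigma}{u}{x}),
\end{align*}
so $H(P) = (2-q_u\theta_u)\cdot H(\pth(\Mod{\sigma}{u}{x}))$ after conditioning on the first step. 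The factor $(2-q_u\theta_u)$ then cancels exactly the normalization in $\induceddist{\sigma}{u}(\cdot)$, yielding
\begin{align*}
\E{H(P)} = (1-q_u\theta_u)\cdot\E{H(\pth(\Mod{\sigma}{u}{\star}))} + \sum_{x\in Q_u}\mu^\sigma_u(x)\cdot\E{H(\pth(\Mod{\sigma}{u}{x}))}.
\end{align*}
Applying the inductive hypothesis to each of the $q_u + 1$ sub-expectations and then \Cref{fact-RCT} concludes $\E{H(P)} = \lambda(\+T_\sigma)$.

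The argument is essentially bookkeeping; the one conceptual point is the precise matching between the $\chi$-weight on a path and the ratio between the RCT's node label $\rho_\sigma$ (which weights each step by $\mu^\sigma_u(x)$ or $(1-q_u\theta_u)$) and the path process's per-step probability (which divides those same weights by $2-q_u\theta_u$). I do not anticipate a serious obstacle; an equivalent direct approach would be to set up a bijection between realized paths and leaves of $\+T_\sigma$, show $\rho_\sigma(X) = \Pr{\pth(\sigma) \text{ ends at } X}\cdot\chi(X,\sigma)$ by telescoping the step weights, and sum over $X\in\+L_b(\+T_\sigma)$.
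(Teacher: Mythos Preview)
Your proposal is correct and follows essentially the same approach as the paper: structural induction on the RCT, the same base-case split, and in the inductive step the same use of the Markov property, the weight-factoring identity $H(P)=(2-q_u\theta_u)H((\tau_1,\dots,\tau_\ell))$, the law of total expectation, the induction hypothesis, and \Cref{fact-RCT}. The one place your exposition is slightly compressed is the claim $u\notin\Lambda^+(\sigma)$: from $u\in\vinf{\sigma}\subseteq V^\sigma\setminus\vcon{\sigma}$ you get $u\notin\Lambda(\sigma)$ directly, and to rule out $\sigma(u)=\star$ you need the observation that any $\star$-variable lies in $\vcon{\sigma}$; this is implicit in your argument and is routine, so it is not a gap.
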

\begin{proof}

We show the lemma by an induction on the structure of the RCT. The base case is when $T_\sigma$ is a single root. 
Thus we have $\sigma\in \+{L}(\+T_\sigma)$.
By \Cref{RCT-2-a} of \Cref{RCTdef}, we have $\nextvar{\sigma}= \perp$ or $f(\sigma)=\True$.
Also, by \Cref{pathdef} we have $\pth(\sigma)=(\sigma)$ and $\ell(\sigma)=0$. 
Then we always have $\lambda(\+{T}_{\sigma}) = H(\pth(\sigma))$ no matter whether $f(\sigma)=\True$:
\begin{enumerate}
    \item If $f(\sigma)=\True$, then we have $\sigma\in \+{L}_b(\+T_\sigma)$ by $\sigma\in \+{L}(\+T_\sigma)$.
    Thus, $\lambda(\+{T}_{\sigma}) = \rho_{\sigma}(\sigma) = 1$.
    Meanwhile, by $\pth(\sigma)=(\sigma)$,
    we have $\sigma_\ell = \sigma_0 = \sigma$.
    Thus,
    $f(\sigma_\ell)=f(\sigma)=\True$ and $\Lambda^{+}(\sigma_{\ell})\setminus \Lambda^{+}(\sigma_0) =\emptyset$.
    We have $H(\pth(\sigma))=1$.
    In summary, we have
    $\lambda(\+{T}_{\sigma}) = H(\pth(\sigma))$.
    \item Otherwise, $f(\sigma)=\False$.
    We have 
    $\sigma\not\in \+{L}_b(\+T_\sigma)$
    and $\+{L}_b(\+T_\sigma) = \emptyset$.
    Thus, we have $\lambda(\+T_{\sigma})=0$.
    Also, by $\pth(\sigma)=(\sigma)$,
    we have $\sigma_\ell =  \sigma$.
    Combining with $f(\sigma)=\False$,
    we have 
    $f(\sigma_{\ell})=f(\sigma) =\False$.
    Thus $H(\pth(\sigma))=0$.
    In summary,  $\lambda(\+T_{\sigma})= H(\pth(\sigma))$.
\end{enumerate}

For the induction step, we assume that $T_{\sigma}$ is a tree of depth $>0$.
Thus by \Cref{RCT-2-a} of \Cref{RCTdef},
we have $f(\sigma)=\False$ and  $\nextvar{\sigma}=u\neq \perp$ for some $u\in V$.
According to \Cref{rp-2} of \Cref{pathdef}, we have $\ell(\sigma)\geq 1$ and
\begin{align}
    \forall x\in \qus{u},\quad \Pr{\sigma_{1} = \sigma_{u\gets x}}&=\induceddist{\sigma}{u}(x).\label{eq-x1-yi}
\end{align}
Moreover, by the Markov property of $\pth(\sigma)$, given $\sigma_1 = \sigma_{u\gets x}$ for each $x\in \qus{u}$, the subsequence  $(\sigma_1,\sigma_2,\cdots,\sigma_{\ell})$  is identically  distributed as $\pth(\sigma_{u\gets x})$. In addition, it can be verified that for any  sequence of partial assignments $P=(\tau_0,\tau_1,\dots,\tau_{\ell})$ with $\ell\geq 1$ satisfying $\Pr{\pth(\sigma)= P}> 0$, 
\begin{align}
H(P)=(2-q_u\theta_u)H((\tau_1,\dots,\tau_{\ell})).\label{eq:weightfact}
\end{align}
There are two possibilities:
\begin{enumerate}
    \item If $f(\tau_{\ell})=\False$, then we have $H(P)=H((\tau_1,\dots,\tau_{\ell}))=0 = (2-q_u\theta_u)H((\tau_1,\dots,\tau_{\ell}))$.
    \item Otherwise,
    $f(\tau_{\ell})=\True$.
    By $\Pr{\pth(\sigma)= P}> 0$, we have $\tau_0= \sigma$, $\Lambda^+(\tau_1) = \Lambda^+(\tau_0)\cup\{\nextvar{\tau_0}\}$, and $\Lambda^+(\tau_0)\subsetneq\Lambda^+(\tau_1)\subseteq \Lambda^+(\tau_{\ell})$.
    Combining with 
    $\nextvar{\sigma}=u$,
    we have $\nextvar{\tau_0}=\nextvar{\sigma} =u$
    , $\Lambda^+(\tau_1) = \Lambda^+(\tau_0)\cup\{u\}$, and $u\not\in \Lambda^+(\tau_0)$.
    Combining with $\Lambda^+(\tau_1)\subseteq \Lambda^+(\tau_{\ell})$,
    we have 
    $\Lambda^{+}(\tau_{\ell})\setminus \Lambda^{+}(\tau_0) = \{u\} \biguplus \left(\Lambda^{+}(\tau_{\ell})\setminus \Lambda^{+}(\tau_1)\right)$.
    Therefore, we have 
    \begin{align*}
    H(P)&= \one{f(\tau_{\ell})=\True}\cdot 
    \chi(\tau_{\ell},\tau_{0})
    =\chi(\tau_{\ell},\tau_{0})\\
    &=(2-q_u\theta_u)\chi(\tau_{\ell},\tau_{1})=(2-q_u\theta_u)H((\tau_1,\dots,\tau_{\ell})).
    \end{align*}
\end{enumerate}

By the law of total expectation, we have
\begin{equation}
\begin{aligned}\label{eq-e-sum-j}
\E{H(\pth(\sigma))}
=&\sum\limits_{x\in \qus{u}}\left(\Pr{\sigma_1=\sigma_{u\gets x}}\cdot \E{H(\pth(\sigma))\mid \sigma_1=\sigma_{u\gets x}}\right)\\
=&\sum\limits_{x\in \qus{u}}\left(\induceddist{\sigma}{u}(x)\cdot (2-q_u\theta_u)\E{H(\pth(\sigma_{u\gets x}))}\right)\\
=&(1-q_u\theta_u)\E{H(\pth(\sigma_{u\gets \star}))}+\sum\limits_{x\in Q_u}\left(\mu_{u}^{\sigma}(x)\E{H(\pth(\sigma_{u\gets x}))}\right),
\end{aligned}
\end{equation}
where the second equality is by \eqref{eq-x1-yi} and \eqref{eq:weightfact}.
Note that by \Cref{RCT-2-b} in \Cref{RCTdef}, for each $x\in \qus{u}$, the subtree in $T_{\sigma}$ rooted by $\sigma_{u\gets x}$ is precisely the $T_{X}$ in the RCT  $\+{T}_X=(T_X,\rho_X)$ rooted at $X=\sigma_{u\gets x}$.  Thus, by induction hypothesis, we have
\begin{equation*}
    \forall x\in \qus{u},\E{H(\pth(\sigma_{u\gets x}))} = \lambda(\+T_{\sigma_{u\gets x}})
\end{equation*}
Combining with \eqref{eq-e-sum-j}, we have
\begin{equation*}
\begin{aligned}
\E{H(\pth(\sigma))}=(1-q_u\theta_u)\lambda(\+T_{\sigma_{u\gets \star}})+\sum\limits_{x\in Q_u}\mu_{u}^{\sigma}(x)\lambda(\+T_{\sigma_{u\gets x}})
=\lambda(\+T_{\sigma}),
\end{aligned}
\end{equation*}
where the last equality is by \Cref{fact-RCT}.
\end{proof}

\subsection{Correctness of the counting algorithm}

In this subsection we bound the total variation distance between the distribution returned by \Cref{Alg:calc} and the true marginal distribution by the upper bound function $\digamma(\cdot)$. The whole subsection will devote to proving the following proposition.

\begin{proposition}\label{tvdbound2}

For any input $(\Phi,\sigma,v)$ satisfying \Cref{inputcondition-calc},
it holds that
\[
\tv(\xi,\mu^{\sigma}_{v})\leq \digamma(\sigma), 
\]
where $\xi$ is the distribution returned by $\calc{}(\Phi,\sigma,v)$.
\end{proposition}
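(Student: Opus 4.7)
The plan is to reduce the proposition to a combinatorial estimate on the random process $\pth(\cdot)$. By \Cref{boundedRCTtvdcor}, $\tv(\xi,\mu^{\sigma}_v) \leq \lambda(\+{T}_{\sigma_{v\gets\star}})$, and by \Cref{pathprop} the latter equals $\E{H(P)}$ for $P = \pth(\sigma_{v\gets\star})$. Hence it suffices to prove $\E{H(\pth(\sigma_{v\gets\star}))} \leq \digamma(\sigma)$, and the remainder of the argument controls this expectation by the family of generalized $\{2,3\}$-trees indexed in the definition of $\digamma$.

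First I would expand the expectation as a sum over terminating trajectories $P = (\sigma_0,\sigma_1,\ldots,\sigma_\ell)$ with $f(\sigma_\ell) = \True$. By the Markov property of $\pth(\cdot)$ (\Cref{pathdef}),
\[
\Pr{\pth(\sigma_{v\gets\star}) = P} = \prod_{i=0}^{\ell-1} \induceddist{\sigma_i}{u_i}(\sigma_{i+1}(u_i)),
\]
where $u_i = \nextvar{\sigma_i}$. Multiplying by $\chi(\sigma_\ell,\sigma_0) = \prod_{i=0}^{\ell-1}(2-q_{u_i}\theta_{u_i})$ exactly cancels the denominators in $\induceddist{}{}$, leaving each trajectory's contribution equal to $\prod_i w_i$, where $w_i = 1 - q_{u_i}\theta_{u_i}$ at a $\star$-step and $w_i = \mu^{\sigma_i}_{u_i}(\sigma_{i+1}(u_i))$ at a concrete assignment step. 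Therefore
\[
\E{H(P)} = \sum_{P\,:\,f(\sigma_\ell)=\True}\; \prod_{i=0}^{\ell-1} w_i.
\]

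The core combinatorial step is to reorganize this sum by a witness generalized $\{2,3\}$-tree. For each terminating trajectory, the truncation condition forces $|\vst{\sigma_\ell}| + \Delta\cdot|\csfrozen{\sigma_\ell}| \geq L\Delta$; because $v\in\vst{\sigma_0}$ and the $\star$-region grows through $\nextvar{\cdot}$ (\Cref{definition:boundary-variables}), the set $\vst{\sigma_\ell}\cup\csfrozen{\sigma_\ell}$ carries a natural rooted connectivity at $v$ in $H_\Phi$. Applying a greedy procedure that picks maximal $2$-distance-separated elements in the line graph and stops once the weight $|U|+\Delta|E|$ first crosses the threshold $L$, I would extract a generalized $\{2,3\}$-tree $T = U\circ E \in \+{T}^i_v$ with $L\leq i \leq L\Delta$ rooted at $v$, with $U\subseteq\vst{\sigma_\ell}\cup\{v\}$ and $E\subseteq\csfrozen{\sigma_\ell}$. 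The last step is to upper bound, for each fixed witness $T = U\circ E$, the sum of trajectory contributions mapping to it. Each $u\in U\setminus\{v\}$ must have been set to $\star$, contributing at most $(1-q\theta)$; each $c\in E$ must have become $\sigma_\ell$-frozen with $\mathbb{P}[\neg c\mid\sigma_\ell]>\pprime$, and the sum of trajectory factors on variables in $\vbl(c)\setminus\Lambda^+(\sigma)$ that drive this event can be bounded via \Cref{generaluniformity} by $\pprime^{-1}\mathbb{P}[\neg c\mid\sigma](1+\eta)^k$, the factor $(1+\eta)^k$ accounting for the slack between $\mu^{\sigma}$ and the uniform product distribution. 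The concrete-assignment steps on variables outside $U\cup\vbl(E)$ telescope, since $\sum_{x}\mu^{\sigma_i}_{u_i}(x)=1$. The product of these pieces is exactly $F(\sigma, T\setminus\{v\})$, and summing over $T\in\+{T}^i_v$ for $L\leq i\leq L\Delta$ yields $\digamma(\sigma)$.

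The main obstacle I anticipate is the witness-extraction step: showing that one can always carve out of $\vst{\sigma_\ell}\cup\csfrozen{\sigma_\ell}$ a generalized $\{2,3\}$-tree satisfying both the distance-$2$ separation of \Cref{WT-1} and the rooted reachability of \Cref{WT-2}, with $|U|+\Delta|E|$ landing in the prescribed range. The new vertex-versus-edge weighting in \Cref{WTdef} is what ultimately unlocks the $p\Delta^5$ regime: it allows the $\star$-variables and the frozen constraints to be weighed asymmetrically, matching the asymmetric factors $(1-q\theta)$ and $\pprime^{-1}\mathbb{P}[\neg c\mid\sigma](1+\eta)^k$ in $F(\sigma,T)$. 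A secondary technical point is to control the multiplicity of trajectories mapping to a given witness and the number of rooted auxiliary trees on $T$, so that the union bound over witnesses is not wasteful; this is a standard tree-counting argument once the sparsity of the witness is in place.
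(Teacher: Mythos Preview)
Your outline matches the paper's strategy closely: reduce via \Cref{boundedRCTtvdcor} and \Cref{pathprop} to $\E{H(\pth(\tau))}$ with $\tau=\sigma_{v\gets\star}$, cover the event $\{f(\tau_\ell)=\True\}$ by a union of witness events indexed by generalized $\{2,3\}$-trees (the paper's \Cref{WTsize}), and bound each witness contribution by $F(\sigma,T\setminus\{v\})$ (the paper's \Cref{WTprobcor}, proved inductively as \Cref{WTprob}).

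There is, however, a real gap in your per-witness bound. You extract a witness with $U\subseteq\vst{\sigma_\ell}$ and then claim that steps at variables $u_i\notin U\cup\vbl(E)$ ``telescope'' via $\sum_{x\in Q_{u_i}}\mu^{\sigma_i}_{u_i}(x)=1$. But with mere containment the $\star$-branch at such a $u_i$ is \emph{also} compatible with the witness (it only enlarges $\vst{\cdot}$), so the per-step total among trajectories consistent with $T$ is $1+(1-q_{u_i}\theta_{u_i})>1$ and the product does not stay bounded. The paper resolves this by defining the witness event $\+{E}^{\sigma}_T$ with the \emph{equality} $U=\vst{\sigma_\ell}$ (keeping only containment on the $E$ side). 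Then for $u_i\notin U$ the $\star$-branch is excluded because it would force $u_i\in\vst{\sigma_\ell}\setminus U$, contradicting equality; dually, for $u_i\in U$ all non-$\star$ branches are excluded. This clean dichotomy is exactly what drives the structural induction in \Cref{WTprob}. Correspondingly, the witness-extraction lemma (\Cref{WTsize}) takes $U=\vst{\sigma_\ell}$ in full and only prunes $E$ to land $|U|+\Delta|E|$ in $[L,L\Delta]$; your greedy scheme, which truncates both $U$ and $E$, would not deliver the equality you need downstream.
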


\subsubsection{Generalized \texorpdfstring{$\{2,3\}$}{2,3}-tree witness for truncation}

Given $\sigma\in \qs$ and generalized $\{2,3\}$-tree $T=U\circ E$ in $H_{\Phi}$, let $\pth(\sigma) =
(\sigma_0,\sigma_1,\ldots,\sigma_{\ell})$. 
Define the event $\+{E}^{\sigma}_T$ as
\begin{equation}\label{eq-WTevent}
     \+{E}^{\sigma}_T: U=\vst{\sigma_{\ell}}\land E\subseteq \csfrozen{\sigma_\ell}.
\end{equation}
Let $\sigma\in\+{Q}^*$ be a partial assignment such that only one variable $v\in V$ has $\sigma(v) = \star$. The following lemma shows that if the path $\pth(\sigma) =
(\sigma_0,\sigma_1,\ldots,\sigma_{\ell})$ generated from such $\sigma$ gets truncated at \Cref{rp-1} of \Cref{pathdef} for satisfying $f(\sigma_{\ell})=\True$, then
there must be a large generalized $\{2,3\}$-tree in $H_{\Phi}$.  Its proof is deferred to \Cref{sec:proof of WTsize}.

\begin{lemma}\label{WTsize}
Let $\sigma\in\qs$ 
be a partial assignment with exactly one variable $v\in V$ having $\sigma(v) = \star$ and $\pth(\sigma) =
(\sigma_0,\sigma_1,\ldots,\sigma_{\ell})$. Suppose $f(\sigma_{\ell})=\True$, then there exists a generalized $\{2,3\}$-tree $T=U\circ E$ in $H_{\Phi}$ with some auxiliary tree rooted at $v$ satisfying $$ L\leq \abs{U} + \Delta\cdot \abs{E}\leq L\Delta$$ such that $\+{E}^{\sigma}_T$ happens.
\end{lemma}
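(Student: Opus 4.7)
The plan is to construct $T = U \circ E$ from the path $\pth(\sigma) = (\sigma_0,\ldots,\sigma_\ell)$ by forcing $U \defeq \vst{\sigma_\ell}$, as required by the event $\+{E}^{\sigma}_{T}$, and selecting $E$ as a line-graph-thinned subset of $\csfrozen{\sigma_\ell}$. The underlying picture is that the starred variables already form a naturally connected structure rooted at $v$: each $u_i = \nextvar{\sigma_{i-1}}$ lies in $\vinf{\sigma_{i-1}}$ by \Cref{definition:boundary-variables}, so it shares a constraint with some $w \in \vcon{\sigma_{i-1}}$, and $w$ lies in a connected component of the subhypergraph $\hfix{\sigma_{i-1}}$—whose hyperedges are $\sigma_{i-1}$-frozen constraints—that contains a previously starred variable. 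Iterating this back to step $0$, every element of $\vst{\sigma_\ell}$ is linked to $v$ by a short alternating chain whose constraint part is drawn from $\csfrozen{\cdot}$ plus the ``jumping'' constraints used by $\nextvar$.

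To pick $E$, I would perform a greedy BFS thinning of this connecting structure starting at $v$: each time a frozen constraint in $\csfrozen{\sigma_\ell}$ is first reached and is at $\Lin{H_\Phi}$-distance $\geq 2$ from every previously chosen constraint, add it to $E$. The auxiliary tree rooted at $v$ is then read off the BFS parent pointers, interpreted through the four arc types of $G(T, \mathcal{A})$ in \Cref{WTdef}; the permitted gaps of length $1$, $2$, or $3$ in the line graph are precisely what this thinning produces between consecutive elements of $T$. Standard accounting for this style of thinning gives $|E| \geq |\csfrozen{\sigma_\ell}|/c$ for some $c = O(1)$ depending only on $k$ and $\Delta$, so combining with $|U| = |\vst{\sigma_\ell}|$ and the hypothesis $|\vst{\sigma_\ell}| + \Delta |\csfrozen{\sigma_\ell}| \geq L\Delta$ delivers the lower bound $|U| + \Delta |E| \geq L$. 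The upper bound $|U| + \Delta|E| \leq L\Delta$ follows because the path did not stop at $\sigma_{\ell-1}$, so $|\vst{\sigma_{\ell-1}}| + \Delta |\csfrozen{\sigma_{\ell-1}}| < L\Delta$, and one step of the process modifies at most a single variable, changing this quantity by at most an additive $O(\Delta)$; an explicit pruning of $E$ then restores the bound exactly.

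The hard part will be reconciling three competing requirements on $T$: the pairwise line-graph distance $\geq 2$ condition on $E$, the reachability of every element of $T$ from $v$ in $G(T, \mathcal{A})$, and the tight size window $[L, L\Delta]$. The distance thinning can in principle remove so many constraints that the lower bound breaks, and this is precisely where the generalized $\{2,3\}$-tree—which admits vertices in $U$ alongside edges in $E$—is essential, since each surviving starred variable in $U$ contributes $1$ to $|U| + \Delta|E|$ and can pay for an $E$-deficit. I expect the proof to hinge on an amortized charging scheme in which every step of $\pth(\sigma)$ is charged either to the new starred variable it creates or to $\Delta$ units of frozen weight eventually absorbed by a single selected edge of $E$, combined with the LLL-based bounds on local density to control the constants hidden in the $O(\Delta)$ thinning loss.
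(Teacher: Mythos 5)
Your high-level strategy — set $U=\vst{\sigma_\ell}$ as forced by $\+{E}^{\sigma}_T$, greedily pick a line-graph-thinned $E\subseteq\csfrozen{\sigma_\ell}$, and build the auxiliary tree by BFS — matches the paper's construction in \Cref{bigWT}. But the upper-bound step has a genuine gap. You argue that since $f(\sigma_{\ell-1})=\False$ gives $\abs{\vst{\sigma_{\ell-1}}}+\Delta\abs{\csfrozen{\sigma_{\ell-1}}}<L\Delta$ and ``one step of the process modifies at most a single variable, changing this quantity by at most an additive $O(\Delta)$.'' This is false. Recall $\csfrozen{\sigma}=\cfrozen{\sigma}\cap\ccon{\sigma}$: assigning $u=\nextvar{\sigma_{\ell-1}}$ adds at most $\Delta$ constraints to $\cfrozen{\cdot}$, but it can enlarge $\vcon{\cdot}$ (and hence $\ccon{\cdot}$) dramatically by bridging the $\star$-containing component of $\hfix{\sigma_{\ell-1}}$ to a previously disconnected fixed component, suddenly admitting a large batch of already-frozen constraints into $\csfrozen{\sigma_\ell}$. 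So $\abs{\csfrozen{\sigma_\ell}}-\abs{\csfrozen{\sigma_{\ell-1}}}$ can be $\Omega(n)$, not $O(1)$.

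The paper sidesteps this entirely: it never bounds $\abs{\csfrozen{\sigma_\ell}}$. Instead it applies the greedy construction (\Cref{bigWT}) at $\sigma_{\ell-1}$ to get $T'=U'\circ E'$ with $\abs{U'}+\Delta\abs{E'}\le\abs{\vst{\sigma_{\ell-1}}}+\Delta\abs{\csfrozen{\sigma_{\ell-1}}}<L\Delta$ (using $E'\subseteq\csfrozen{\sigma_{\ell-1}}$, not the lower bound $\Delta\abs{E'}\ge\abs{\csfrozen{\sigma_{\ell-1}}}$). It then adds $u$ (at most $+1$) if $\sigma_\ell(u)=\star$, and extends $E'$ one constraint at a time inside $\csfrozen{\sigma_\ell}$ (each $+\Delta$), stopping as soon as the quantity reaches $L$; since $L+\Delta\le L\Delta$ and the starting value was $\le L\Delta$, the stopping value lands in $[L,L\Delta]$. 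Monotonicity ($\csfrozen{\sigma_{\ell-1}}\subseteq\csfrozen{\sigma_\ell}$ and $\vst{\sigma_{\ell-1}}\subseteq\vst{\sigma_\ell}$, \Cref{cor-mono}) ensures $T'$ is still a valid starting point. This incremental-extension strategy also eliminates the need for your ``explicit pruning,'' which is fragile: removing constraints from a finished $T$ can disconnect the auxiliary tree unless you carefully prune only leaves, and even then you must re-verify the lower bound survives.

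Two smaller points. First, your thinning constant ``$c=O(1)$ depending only on $k$ and $\Delta$'' must be exactly $\Delta$ (as in \Cref{bigWT}: each selected constraint removes at most $\Delta$ others from the valid set) for the lower bound $\abs{U}+\Delta\abs{E}\ge\abs{\vst{\sigma_\ell}}+\abs{\csfrozen{\sigma_\ell}}\ge\frac{1}{\Delta}\left(\abs{\vst{\sigma_\ell}}+\Delta\abs{\csfrozen{\sigma_\ell}}\right)\ge L$ to go through; any $c>\Delta$ breaks it. Second, ``LLL-based bounds on local density'' play no role here — the argument is purely combinatorial, resting on the connectivity of $\gvc\left(\csfrozen{\sigma_\ell}\cup\vst{\sigma_\ell}\right)$ established in \Cref{lem-connect2}.
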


\subsubsection{Probability bounds for generalized \texorpdfstring{$\{2,3\}$}{2,3}-tree witness }

Recall the function $H(\cdot)$ defined in \eqref{eq-pathweightdef} and the event $\+{E}^{\sigma}_T$ defined in \eqref{eq-WTevent}.  A crucial lemma we will show is given as follows, which gives a probability bound for certain generalized $\{2,3\}$-tree witness in $H_{\Phi}$.

\begin{lemma}\label{WTprobcor}
Let $\sigma\in\qs$ %satisfy that $\abs{\{v\in V\mid \sigma(v)=\star\}}=1$,
be a partial assignment with exactly one variable $v\in V$ having $\sigma(v) = \star$ satisfying $\mathbb{P}[\neg c\mid \sigma]\leq \pprime q$ for all $c\in \mathcal{C}$. Let $T$ be any generalized $\{2,3\}$-tree in $H_{\Phi}$, then we have
 $$\Pr{\+{E}^{\sigma}_T}\cdot \E{H(\pth(\sigma))\mid \+{E}^{\sigma}_T}\leq F(\sigma,T\setminus \set{v}).$$
\end{lemma}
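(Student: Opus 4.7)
The plan is to unfold
\begin{equation*}
\Pr{\+{E}^\sigma_T}\cdot\E{H(\pth(\sigma))\mid\+{E}^\sigma_T} \;=\; \E{H(P)\,\mathbf{1}_{\+{E}^\sigma_T}}
\end{equation*}
as a weighted sum over random paths $P=(\sigma_0,\ldots,\sigma_\ell)$. The critical computation is the cancellation identity
\begin{equation*}
\Pr{P}\cdot\chi(\sigma_\ell,\sigma_0) \;=\; \prod_{i:\,\text{star step}}(1-q_{u_i}\theta_{u_i})\;\prod_{i:\,\text{value step}}\mu_{u_i}^{\sigma_i}(x_i),
\end{equation*}
which holds because the $\chi$-factor in $H$ exactly cancels the common denominators $(2-q_{u_i}\theta_{u_i})$ appearing in the path transition probabilities $\induceddist{\sigma_i}{u_i}$. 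Because $\nextvar{\cdot}$ is deterministic, each reachable terminal $\sigma_\ell$ determines a unique preimage path, so the sum can be re-indexed over $\sigma_\ell$ satisfying $\+{E}^\sigma_T$ together with $f(\sigma_\ell)=\True$.

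I then split the product into its star and value parts. For the star part: since $\sigma(v)=\star$ and the process never overwrites a $\star$, the equality $\vst{\sigma_\ell}=U$ forces $v\in U$ and the star-steps consume exactly the variables in $U\setminus\{v\}$, so the star contribution is
\begin{equation*}
\prod_{u\in U\setminus\{v\}}(1-q_u\theta_u)\;\le\;(1-q\theta)^{|U|-1},
\end{equation*}
using $1-q_u\theta_u=q_u\eta\le q\eta=1-q\theta$, which is precisely the first factor of $F(\sigma,T\setminus\{v\})$. For the value part I first sharpen local uniformity to $\mu_u^{\sigma_i}(x)\le (1+\eta)/q_u$ by applying the upper direction of \Cref{HSS} to the simplified formula $\Phi^{\sigma_i}$ with effective violation parameter $\pprime q$ (legitimate by \Cref{inputcondition-recalc} and \Cref{lemma:general-invariant}).

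The combinatorial heart of the argument uses the first clause of \Cref{WTdef}: for distinct $c_1,c_2\in E$ one has $\mathrm{dist}_{\Lin{H_\Phi}}(c_1,c_2)\ge 2$, so $\var(c_1)\cap\var(c_2)=\emptyset$. This disjointness makes the freezing events $\{c\text{ is frozen at }\sigma_\ell\}_{c\in E}$ depend on disjoint coordinates of $\sigma_\ell$, so Markov's inequality applied independently in each coordinate block yields, for every $c\in E$,
\begin{equation*}
\sum_{y\in\+{Q}_{\var(c)\setminus\Lambda(\sigma)}:\;\mathbb{P}[\neg c\mid\sigma,y]>\pprime}\;\prod_{u\in\var(c)\setminus\Lambda(\sigma)}\frac{1}{q_u}\;\le\;\pprime^{-1}\mathbb{P}[\neg c\mid\sigma],
\end{equation*}
and combining this with the factor $(1+\eta)^{|\var(c)\setminus\Lambda(\sigma)|}\le(1+\eta)^k$ from the per-variable local-uniformity bound delivers precisely the per-constraint term $\pprime^{-1}\mathbb{P}[\neg c\mid\sigma](1+\eta)^k$ appearing in $F(\sigma,T\setminus\{v\})$.

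The main obstacle is the bookkeeping of value-steps whose variable $u$ lies outside $U\cup\var(E)$; these arise when $\nextvar{\cdot}$ explores boundary vertices of the $\star$-component that do not witness the freezing of any $c\in E$, and one cannot simply re-sum them because their values are produced dynamically along the path. My plan is to close the argument by induction on the tree structure of $\pth(\sigma)$, maintaining the invariant
\begin{equation*}
(1-q\theta)^{|U\setminus\Lambda^+(\sigma_i)|}\cdot\prod_{c\in E}\pprime^{-1}\mathbb{P}[\neg c\mid\sigma](1+\eta)^k
\end{equation*}
through three cases: a star-step at $u\in U$ consumes one $(1-q\theta)$-factor; a value-step at $u\in\var(c)$ for some $c\in E$ is absorbed into the per-constraint Markov bound above; and a value-step at $u\notin U\cup\var(E)$ uses $\sum_{x\in Q_u}\mu_u^{\sigma_i}(x)=1$ so that the ``irrelevant'' values telescope without inflating the invariant. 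Verifying the last case, where the Markov property of $\pth(\cdot)$ interacts with the independence provided by the generalized $\{2,3\}$-tree disjointness to prevent blow-up, is the delicate technical step.
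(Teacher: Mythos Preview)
Your three-case inductive plan is the right skeleton and essentially matches the paper's proof, but the invariant you write down has a genuine gap, and the real difficulty is in case~2, not case~3. Your per-constraint factor $\pprime^{-1}\mathbb{P}[\neg c\mid\sigma](1+\eta)^k$ is tied to the \emph{initial} $\sigma$; at a leaf $\sigma_\ell$ you need the invariant to dominate the indicator of $\+{E}^\sigma_T$, i.e.\ to be $\ge 1$, yet $c\in E$ is only guaranteed frozen at $\sigma_\ell$, not at $\sigma$, so $\mathbb{P}[\neg c\mid\sigma]$ can be far below $\pprime$ and the base case fails. Your displayed block-Markov inequality is correct in isolation but cannot be ``absorbed'' step by step into this static invariant, and in any case not every variable of $\var{c}\setminus\Lambda(\sigma)$ need be assigned along the path. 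Case~3, by contrast, is immediate once the invariant is right: $\sum_x\mu_u^{\sigma_i}(x)=1$ and nothing in the invariant depends on $\sigma_i(u)$.

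The paper's fix (\Cref{WTprob}) is to make the constraint factor dynamic: carry
\[
g(\sigma_i,T)=\prod_{u\in U\setminus\vst{\sigma_i}}(1-q_u\theta_u)\prod_{c\in E}\pprime^{-1}\mathbb{P}[\neg c\mid\sigma_i](1+\eta)^{Z(\sigma_i,c)},\qquad Z(\sigma_i,c)=\abs{\var{c}\setminus\Lambda(\sigma_i)}.
\]
Now the base case holds because $c\in\csfrozen{\sigma_\ell}$ forces $\mathbb{P}[\neg c\mid\sigma_\ell]>\pprime$, making each factor $\ge 1$. Cases~1 and~3 are exactly as you describe. For case~2 (value step at $u$ lying in the unique $c_0\in E$ with $u\in\var{c_0}$, uniqueness from the disjointness you already noted), one uses the per-step estimate
\[
\sum_{x\in Q_u}\mu_u^{\sigma_i}(x)\,\mathbb{P}[\neg c_0\mid(\sigma_i)_{u\gets x}]\;\le\;\frac{1+\eta}{q_u}\sum_{x\in Q_u}\mathbb{P}[\neg c_0\mid(\sigma_i)_{u\gets x}]\;=\;(1+\eta)\,\mathbb{P}[\neg c_0\mid\sigma_i],
\]
which together with $Z$ dropping by one keeps $\mathbb{P}[\neg c_0\mid\cdot](1+\eta)^{Z(\cdot,c_0)}$ non-increasing along the path. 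Finally $g(\sigma,T)\le F(\sigma,T\setminus\{v\})$ since $Z(\sigma,c)\le k$, $1-q_u\theta_u\le 1-q\theta$, and $U\setminus\vst{\sigma}=U\setminus\{v\}$.
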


For any constraint $c\in \+{C}$ and partial assignment  $\sigma\in \qs$, we define
$$Z(\sigma,c)\defeq\abs{\var{c}\setminus \Lambda(\sigma) }. $$
as the number of unassigned variables in $\var{c}$.

For any generalized $\{2,3\}$-tree $T$ in $H_{\Phi}$ and partial assignment $\sigma\in \qs$,
we further define
\begin{equation}\label{eq-definition-g}
    g(\sigma,T)\triangleq \prod\limits_{v\in U\setminus \vst{\sigma}}\left(1-q_v\theta_v\right)\prod\limits_{c\in E}\left(\pprime^{-1}\mathbb{P}[\neg c\mid \sigma](1+\eta)^{Z(\sigma,c)}\right).
\end{equation}

To prove \Cref{WTprobcor}, it is sufficient to show the following.

\begin{lemma}\label{WTprob}
Let $\sigma\in\qs$ 
be a partial assignment satisfying $\mathbb{P}[\neg c\mid \sigma]\leq \pprime q$ for all $c\in \mathcal{C}$,
and let $\pth(\sigma) =
(\sigma_0,\sigma_1,\ldots,\sigma_{\ell})$. 
Then for any generalized $\{2,3\}$-tree $T=U\circ E$ in $H_{\Phi}$, 
\begin{align}\label{eq-lemma-WTprob}
\Pr{ \+{E}^{\sigma}_T}\cdot \E{H(\pth(\sigma))\mid \+{E}^{\sigma}_T}\leq g(\sigma,T).\end{align}
\end{lemma}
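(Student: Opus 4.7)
My plan is to prove Lemma \ref{WTprob} by induction on the quantity $|V| - |\Lambda^+(\sigma)|$, which strictly decreases whenever the random path $\pth(\sigma)$ advances by one step (since each step adds $u = \nextvar{\sigma}$ to $\Lambda^+$). For the base case, when $\pth(\sigma)$ terminates immediately (either $\nextvar{\sigma} = \perp$ or $f(\sigma) = \True$), we have $\sigma_\ell = \sigma$ and $H(\pth(\sigma)) = \one{f(\sigma) = \True}$. If $\+{E}^\sigma_T$ holds, then $U = \vst{\sigma}$ makes the first product in $g(\sigma, T)$ empty, and $E \subseteq \csfrozen{\sigma}$ makes each factor $\pprime^{-1}\mathbb{P}[\neg c\mid \sigma](1+\eta)^{Z(\sigma,c)} \ge 1$, so $g(\sigma, T) \ge 1 \ge H(\pth(\sigma))\cdot\one{\+{E}^\sigma_T}$; if $\+{E}^\sigma_T$ fails, the left-hand side vanishes.

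For the inductive step, let $u = \nextvar{\sigma}\ne\perp$ and $f(\sigma)=\False$. Using the Markov property of $\pth$ together with the identity $H(P) = (2-q_u\theta_u)H((\sigma_1, \ldots, \sigma_\ell))$ (established in the proof of Lemma \ref{pathprop}), I would unfold the first step as
\begin{equation*}
\E{H(\pth(\sigma))\one{\+{E}^\sigma_T}} = (1-q_u\theta_u)\,\E{H(\pth(\Mod{\sigma}{u}{\star}))\one{\+{E}^{\Mod{\sigma}{u}{\star}}_T}} + \sum_{x\in Q_u}\mu_u^\sigma(x)\,\E{H(\pth(\Mod{\sigma}{u}{x}))\one{\+{E}^{\Mod{\sigma}{u}{x}}_T}}.
\end{equation*}
I would then split on the position of $u$ relative to $T$. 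In Case A ($u \in U$), any $x \in Q_u$ assigns $u$ permanently (since accessed values are never overwritten along $\pth$), forcing $u \notin \vst{\sigma_\ell}$ and hence $U \ne \vst{\sigma_\ell}$, so those terms vanish; the induction hypothesis on the surviving $x = \star$ term combined with $g(\Mod{\sigma}{u}{\star}, T) = g(\sigma, T)/(1-q_u\theta_u)$ finishes this case. In Case B ($u \notin U$), setting $u$ to $\star$ keeps $u \in \vst{\sigma_\ell}$ (once $\star$, always $\star$, because \nextvar{} never picks a $\star$-variable), contradicting $U = \vst{\sigma_\ell}$, so the $\star$-term vanishes; if in addition $u$ lies in no constraint of $E$, then $g$ is left invariant and the convex combination $\sum_{x\in Q_u} \mu_u^\sigma(x) = 1$ closes the inequality.

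The main obstacle lies in the remaining subcase where $u \in \var{c^*}$ for some $c^* \in E$. Here the crucial structural fact I would invoke is that the generalized $\{2,3\}$-tree condition $\text{dist}_{\Lin{H_{\Phi}}}(c_1, c_2) \ge 2$ for distinct $c_1, c_2 \in E$ forces such a $c^*$ to be unique, since any two constraints sharing $u$ would be at line-graph distance at most $1$. With $c^*$ unique, the ratio $g(\Mod{\sigma}{u}{x}, T)/g(\sigma, T)$ equals $\mathbb{P}[\neg c^* \mid \Mod{\sigma}{u}{x}]/\bigl(\mathbb{P}[\neg c^* \mid \sigma]\,(1+\eta)\bigr)$, since only $Z(\cdot, c^*)$ drops (by one) and every other factor in $g$ is preserved. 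I would then combine the local uniformity bound $\mu_u^\sigma(x) \le (1+\eta)/q_u$ (an upper-bound counterpart of \Cref{generaluniformity}, valid because the invariant $\mathbb{P}[\neg c \mid \sigma] \le \pprime q$ is preserved along $\pth$ by \Cref{lemma:general-invariant}) with the factorization identity $\mathbb{P}[\neg c^* \mid \sigma] = \frac{1}{q_u}\sum_{x\in Q_u}\mathbb{P}[\neg c^* \mid \Mod{\sigma}{u}{x}]$ (from $u \in \var{c^*}$ being uniform under the product law $\mathbb{P}$) to obtain $\sum_{x\in Q_u}\mu_u^\sigma(x)\,\mathbb{P}[\neg c^* \mid \Mod{\sigma}{u}{x}] \le (1+\eta)\,\mathbb{P}[\neg c^* \mid \sigma]$. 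The $(1+\eta)$ factor then cancels exactly against the $(1+\eta)^{-1}$ released by the drop in $Z(\sigma, c^*)$, yielding $\sum_{x\in Q_u} \mu_u^\sigma(x)\, g(\Mod{\sigma}{u}{x}, T) \le g(\sigma, T)$ and closing the induction.
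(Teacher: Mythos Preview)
Your proposal is correct and follows essentially the same route as the paper's proof: a structural induction on $\pth(\sigma)$ (which your induction on $|V|-|\Lambda^+(\sigma)|$ is equivalent to), the same case split on $u\in U$ versus $u\notin U$ with the corresponding branches vanishing, and in the surviving $u\notin U$ subcase the same use of the $\{2,3\}$-tree disjointness to isolate a unique $c^*\in E$ containing $u$, followed by the local-uniformity upper bound $\mu_u^\sigma(x)\le (1+\eta)/q_u$ and the averaging identity $\sum_{x}\mathbb{P}[\neg c^*\mid\Mod{\sigma}{u}{x}]=q_u\,\mathbb{P}[\neg c^*\mid\sigma]$ to absorb the $(1+\eta)$ into the dropped $Z$-exponent.
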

\begin{proof}

We show the lemma by a structural induction on $\pth(\sigma)$. The base case is when $\pth(\sigma)=(\sigma)$. Then we have $\ell(\sigma)=0$ and $\sigma=\sigma_{\ell}$. 
In this case,
$\+{E}^{\sigma}_T$ is the deterministic event $U=\vst{\sigma}\land E\subseteq \csfrozen{\sigma}$.
If $U\neq\vst{\sigma}$ or  $E\not\subseteq \csfrozen{\sigma}$, we have 
$\Pr{\+{E}^{\sigma}_T} =0$ and the lemma is immediate.
Otherwise, we have $U=\vst{\sigma}\land E\subseteq \csfrozen{\sigma}$.
By $\sigma=\sigma_{\ell}$ and \eqref{eq-pathweightdef},
we have $H(\pth(\sigma))\leq 1$.
Thus, we have 
$$\Pr{ \+{E}^{\sigma}_T}\cdot \E{H(\pth(\sigma))\mid \+{E}^{\sigma}_T}\leq H(\pth(\sigma)) \leq 1.$$
Meanwhile, by $E\subseteq \csfrozen{\sigma}$ and \Cref{def:cbad}, we have $E\subseteq \csfrozen{\sigma}\subseteq \cfrozen{\sigma}$. 
Thus, for each $c\in E$,
we have $c$ is $\sigma$-frozen.
Combining with \Cref{definition:frozen-fixed},
we have $\mathbb{P}[\neg c \mid \sigma] > \pprime$.
Combining with $U=\vst{\sigma}$,
we have 
\begin{align*}
    g(\sigma,T)=\prod\limits_{c\in E}\left(\pprime^{-1}\mathbb{P}[\neg c\mid \sigma](1+\eta)^{Z(\sigma,c)}\right)\geq \prod\limits_{c\in E} \left(\pprime^{-1}\pprime(1+\eta)^{Z(\sigma,c)}\right)\geq 1\geq \Pr{ \+{E}^{\sigma}_T}\cdot \E{H(\pth(\sigma))}.
\end{align*}
The base case is proved.

For the induction steps, we assume that $\ell(\sigma)\geq 1$, which by \Cref{rp-1} of \Cref{pathdef}, says that $\nextvar{\sigma}=u\neq \perp$ for some $u\in V$ and $f(\sigma)=\False$. According to \Cref{rp-2} of \Cref{pathdef},
we have 
\begin{align*}
    \forall x\in \qus{u},\quad \Pr{\sigma_{1} = \sigma_{u\gets x}}&=\induceddist{\sigma}{u}(x).
\end{align*}
Thus, by the law of total probability, we have
\begin{equation}
\begin{aligned}\label{WTprob-eq4}
&\Pr{ \+{E}^{\sigma}_T}\cdot \E{H(\pth(\sigma))\mid \+{E}^{\sigma}_T}\\
= &\sum\limits_{x\in \qus{u}}\left(\Pr{\sigma_1=\sigma_{u\gets x}}\Pr{\+{E}^{\sigma}_T\mid \sigma_1=\sigma_{u\gets x}} \E{H(\pth(\sigma))\mid \sigma_1=\sigma_{u\gets x}\land  \+{E}^{\sigma}_T}\right)\\
=&\sum\limits_{x\in \qus{u}}\left(\induceddist{\sigma}{u}(x)\cdot\Pr{\+{E}^{\sigma}_T\mid \sigma_1=\sigma_{u\gets x}}\E{H(\pth(\sigma))\mid \sigma_1=\sigma_{u\gets x}\land  \+{E}^{\sigma}_T}\right)\\
\end{aligned}
\end{equation}
Moreover, by \eqref{eq-pathweightdef} we have
\begin{equation}\label{eq-ehpath-sigma}
\begin{aligned}
   &\E{H(\pth(\sigma))\mid \sigma_1=\sigma_{u\gets x}\land  \+{E}^{\sigma}_T}\\
   =&\E{\one{f(\sigma_{\ell})=\True}\cdot \chi(\sigma_{\ell},\sigma_{0}) \mid \sigma_1=\sigma_{u\gets x}\land  \+{E}^{\sigma}_T}\\
   =&\left(2-q_u\theta_u\right)\E{\one{f(\sigma_{\ell})=\True} \cdot \chi(\sigma_{\ell},\sigma_{1})\mid \sigma_1=\sigma_{u\gets x}\land  \+{E}^{\sigma}_T}
\end{aligned}
\end{equation}
In addition, by the Markov property, given $\sigma_1 = \tau \triangleq \sigma_{u\gets x}$ for each $x\in \qus{u}$, 
the subsequence  $(\sigma_1,\sigma_2,\cdots,\sigma_{\ell})$  is identically  distributed as $\pth(\tau)$.
Thus, we have $\sigma_{\ell}$ is identically  distributed as $\tau_{\ell(\tau)}$.
combining with \eqref{eq-pathweightdef} and \eqref{eq-WTevent}
, we have
\begin{equation*}
    \begin{aligned}
    &\E{\one{f(\sigma_{\ell})=\True} \cdot \chi(\sigma_{\ell},\sigma_{1})\mid \sigma_1=\tau\land  \+{E}^{\sigma}_T}\\
   =&\E{\one{f(\tau_{\ell(\tau)})=\True}\cdot \chi(\tau_{\ell(\tau)},\sigma_{1}) \mid \sigma_1=\tau\land  \+{E}^{\tau}_T}\\
   =&\E{\one{f(\tau_{\ell(\tau)})=\True} \cdot \chi(\tau_{\ell(\tau)},\tau)\mid  \+{E}^{\tau}_T}\\
    =&\E{H(\pth(\tau))\mid  \+{E}^{\tau}_T}.
    \end{aligned}
\end{equation*}
Combining with \eqref{eq-ehpath-sigma}, we have 
\begin{equation}\label{eq-ehpath-sigma-sigmaux}
\begin{aligned}
   \E{H(\pth(\sigma))\mid \sigma_1=\sigma_{u\gets x}\land  \+{E}^{\sigma}_T}
   =&\left(2-q_u\theta_u\right)\E{H(\pth(\sigma_1))\mid \sigma_1=\sigma_{u\gets x}\land  \+{E}^{\sigma}_T}\\
    =&\left(2-q_u\theta_u\right)\E{H(\pth(\sigma_{u\gets x}))\mid  \+{E}^{\sigma_{u\leftarrow x}}_T}.
\end{aligned}
\end{equation}
Recall that given $\sigma_1 = \sigma_{u\gets x}$ for each $x\in \qus{u}$, 
$\sigma_{\ell}$ is identically  distributed as $\tau_{\ell(\tau)}$ where $\tau=\sigma_{u\leftarrow x}$.
Combining with \eqref{eq-WTevent},
we have 
$\Pr{ \+{E}^{\sigma}_T\mid \sigma_1=\sigma_{u\gets x}} = \Pr{ \+{E}^{\sigma_{u\leftarrow x}}_T}$.
Combining with \eqref{WTprob-eq4}
and \eqref{eq-ehpath-sigma-sigmaux}, we have
\begin{equation}
\begin{aligned}\label{eq-esigmatehpath-twopart}
\Pr{ \+{E}^{\sigma}_T}\cdot \E{H(\pth(\sigma))\mid \+{E}^{\sigma}_T}
=\sum\limits_{x\in \qus{u}}\left(\left(2-q_u\theta_u\right)\induceddist{\sigma}{u}(x)\cdot\Pr{\+{E}^{\sigma_{u\leftarrow x}}_T}\cdot \E{H(\pth(\sigma_{u\gets x})\mid \+{E}^{\sigma_{u\gets x}}_T}\right).
\end{aligned}
\end{equation}

We then show the induction step for two cases respectively, namely the case when $u\in U$ and the case when $u\notin U$.
At first we assume $u\in U$.
Given $x\in Q_u$ and $\tau = \sigma_{u\leftarrow x}$, by $\tau(u)=x$,
we also have $\tau_{\ell(\tau)}(u)=x\neq \star$.
Thus $u\not\in \vst{\tau_{\ell(\tau)}}$.
Combining with $u\in U$,
we have $U \neq \vst{\tau_{\ell(\tau)}}$.
Combining with \eqref{eq-WTevent},
we have $\+{E}^{\tau}_T$ does not happen.
In summary, for each $x\in Q_u$, $\+{E}^{\sigma_{u\leftarrow x}}_T$ does not happen 
and $\Pr{\+{E}^{\sigma_{u\leftarrow x}}_T} = 0$.
Combining with \eqref{eq-esigmatehpath-twopart}, we have 
\begin{equation}\label{eq-pesigmattimesehpath-uinvvart}
\begin{aligned}
    \Pr{ \+{E}^{\sigma}_T}\cdot \E{H(\pth(\sigma))\mid \+{E}^{\sigma}_T}=(1-q_u\theta_u)\cdot \Pr{ \+{E}^{\sigma_{u\leftarrow \star}}_T}\cdot \E{H(\pth(\sigma_{u\gets \star}))\mid  \+{E}^{\sigma_{u\gets \star}}_T}.
\end{aligned}
\end{equation}
In addition, by $\sigma\in\+{Q}^*$ is a partial assignment satisfying $\mathbb{P}[\neg c\mid \sigma]\leq \pprime q$ for all $c\in \mathcal{C}$,
one can also verify
$\mathbb{P}[\neg c\mid \sigma_{u\leftarrow x}]\leq \pprime q$ for all $c\in \mathcal{C}$ and $x\in \qus{u}$
by a similar argument as  \Cref{lemma:general-invariant}.
Thus by the induction hypothesis, for each $x\in \qus{u}$ we have
\begin{equation}\label{eq-pesigmauxttimesehpath-induct}
\begin{aligned}
\Pr{ \+{E}^{\sigma_{u\gets x}}_T}\cdot \E{H(\pth(\sigma_{u\gets x}))\mid \+{E}^{\sigma_{u\gets x}}_T}
\leq g(\sigma_{u\gets x},T).
\end{aligned}
\end{equation}
Combining with \eqref{eq-pesigmattimesehpath-uinvvart},
we have
\begin{equation}\label{eq-presimgatehpaht-uinvvart}
\begin{aligned}
&\Pr{ \+{E}^{\sigma}_T}\cdot \E{H(\pth(\sigma))\mid \+{E}^{\sigma}_T}
\\ \leq&(1-q_u\theta_u)\cdot g(\sigma_{u\gets \star},T)\\
  = &(1-q_u\theta_u) \prod\limits_{v\in U\setminus \vst{\sigma_{u\gets \star}}}\left(1-q_v\theta_v\right)\prod\limits_{c\in E}\left(\pprime^{-1}\mathbb{P}[\neg c\mid \sigma_{u\gets \star}](1+\eta)^{Z(\sigma_{u\gets \star},c)}\right)\\
    =& (1-q_u\theta_u) \prod\limits_{v\in U\setminus \vst{\sigma_{u\gets \star}}}\left(1-q_v\theta_v\right)\prod\limits_{c\in E}\left(\pprime^{-1}\mathbb{P}[\neg c\mid \sigma](1+\eta)^{Z(\sigma,c)}\right),
\end{aligned}
\end{equation}
where the last equality is by $\mathbb{P}[\neg c\mid \sigma_{u\gets \star}] = \mathbb{P}[\neg c\mid \sigma]$ and
$Z(\sigma_{u\gets \star},c) = Z(\sigma,c)$ for each $\sigma$ and $c$.
In addition, by $u=\nextvar{\sigma}$, we have $\sigma(u) = \hollowstar\neq \star$.
Thus, $u\not\in \vst{\sigma}$.
Meanwhile, by $\sigma_{u\leftarrow \star}(u) = \star$, we have $u\in \vst{u\leftarrow \star}$.
Thus, $\vst{\sigma_{u\leftarrow \star}} = \vst{\sigma}\biguplus \{u\}$.
Combining with $u\in U$,
we have $U\setminus \vst{\sigma} = \left(U\setminus \vst{\sigma_{u\gets \star}}\right)\biguplus \{u\}$.
Therefore, 
$$ (1-q_u\theta_u) \prod\limits_{v\in U\setminus \vst{\sigma_{u\gets \star}}}\left(1-q_v\theta_v\right)= \prod\limits_{v\in U\setminus \vst{\sigma}}\left(1-q_v\theta_v\right).$$
Combining with \eqref{eq-presimgatehpaht-uinvvart},
\eqref{eq-lemma-WTprob} is immediate.
This finishes the induction step for the case when $u\in U$.

In the following, we assume $u\not\in U$. Given $\tau = \sigma_{u\leftarrow \star}$,
by $\tau(u)=\star$, we also have $\tau_{\ell(\tau)}(u)=\star$.
Thus $u \in \vst{\tau_{\ell(\tau)}}$.
Combining with $u\not\in U$,
we have $U \neq \vst{\tau_{\ell(\tau)}}$.
Combining with \eqref{eq-WTevent},
we have $\+{E}^{\tau}_T = \+{E}^{\sigma_{u\leftarrow \star}}_T$ does not happen
and $\Pr{\+{E}^{\sigma_{u\leftarrow \star}}_T} = 0$.
Combining with \eqref{eq-esigmatehpath-twopart}, we have 
\begin{equation*}
\begin{aligned}
    \Pr{ \+{E}^{\sigma}_T}\cdot \E{H(\pth(\sigma))\mid \+{E}^{\sigma}_T}=\sum\limits_{x\in Q_u}\left(\mu_{u}^{\sigma}(x)\cdot\Pr{\+{E}^{\sigma_{u\leftarrow x}}_T}\cdot \E{H(\pth(\sigma_{u\gets x})\mid \+{E}^{\sigma_{u\gets x}}_T}\right).
\end{aligned}
\end{equation*}

Combining with \eqref{eq-pesigmauxttimesehpath-induct}, we have
\begin{equation}\label{eq-presimgatehpaht-unotinvvart}
\begin{aligned}
&\Pr{ \+{E}^{\sigma}_T}\cdot \E{H(\pth(\sigma))\mid \+{E}^{\sigma}_T}
 \leq \sum\limits_{x\in Q_u}\left(\mu_{u}^{\sigma}(x)\cdot g(\sigma_{u\gets x},T)\right)\\
  = &\sum\limits_{x\in Q_u}\left(\mu^{\sigma}_{u}(x)\prod\limits_{v\in U\setminus \vst{\sigma_{u\gets x}}}\left(1-q_v\theta_v\right)\prod\limits_{c\in E}\left(\pprime^{-1}\mathbb{P}[\neg c\mid \sigma_{u\gets x}](1+\eta)^{Z(\sigma_{u\gets x},c)}\right)\right).
\end{aligned}
\end{equation}
In addition, by $T$ is a generalized $\{2,3\}$-tree and \Cref{WT-1} of \Cref{WTdef}, we have $\var{c}\cap \var{c'}\neq \emptyset$ for any different $c,c'\in E$.
Thus, there exists at most one unique constraint $c_0\in E$ such that $u\in \var{c_0}$.
Let $S = E\setminus \{c_0\}$ if $u\in \var{E}$ and $S = E$ otherwise.
Thus for each $c\in S$, we have $u\not\in \var{c}$.
Then
$\mathbb{P}[\neg c\mid \sigma_{u\gets x}]=\mathbb{P}[\neg c\mid \sigma]$ and $Z(\sigma_{u\gets x},c) = Z(\sigma,c)$ for each $x\in Q_u$.
Therefore,
\begin{equation}\label{eq-sum-musigmauxtimespivcons}
\begin{aligned}
&\sum\limits_{x\in Q_u}\left(\mu^{\sigma}_{u}(x)\prod\limits_{c\in E}\left(\mathbb{P}[\neg c\mid \sigma_{u\gets x}](1+\eta)^{Z(\sigma_{u\gets x},c)}\right)\right)\\
= & \prod\limits_{c\in S}\left(\mathbb{P}[\neg c\mid \sigma_{u\gets x}](1+\eta)^{Z(\sigma_{u\gets x},c)}\right)\sum\limits_{x\in Q_u}\left(\mu^{\sigma}_{u}(x)\prod\limits_{c\in E\setminus S}\left(\mathbb{P}[\neg c\mid \sigma_{u\gets x}](1+\eta)^{Z(\sigma_{u\gets x},c)}\right)\right)\\
= &\prod\limits_{c\in S}\left(\mathbb{P}[\neg c\mid \sigma](1+\eta)^{Z(\sigma,c)}\right)\sum\limits_{x\in Q_u}\left(\mu^{\sigma}_{u}(x)\prod\limits_{c\in E\setminus S}\left(\mathbb{P}[\neg c\mid \sigma_{u\gets x}](1+\eta)^{Z(\sigma_{u\gets x},c)}\right)\right).
\end{aligned}    
\end{equation}
In addition, by \Cref{generaluniformity} and the assumption that $\mathbb{P}[\neg c\mid \sigma]\leq \pprime q$ for all $c\in \mathcal{C}$, we have for each $x\in Q_u$,
$\mu^{\sigma}_u(x)\leq q_u^{-1}(1+\eta).$
Therefore,
\begin{align*}
\sum\limits_{x\in Q_u}\left(\mu^{\sigma}_{u}(x)\cdot \mathbb{P}[\neg c_0\mid \sigma_{u\gets x}]\right)\leq (1+\eta)\cdot q_u^{-1}\sum\limits_{x\in Q_u} \mathbb{P}[\neg c_0\mid \sigma_{u\gets x}]
= (1+\eta)\cdot \mathbb{P}[\neg c_0\mid \sigma].
\end{align*}
Thus, we have 
\begin{equation}\label{eq-sum-muponeplusetaz}
\begin{aligned}
    \sum_{x\in Q_u}\left(\mu^{\sigma}_u(x) \mathbb{P}[\neg c_0\mid \sigma_{u\gets x}](1+\eta)^{Z(\sigma_{u\gets x},c_0)}\right)=&\sum_{x\in Q_u}\left(\mu^{\sigma}_u(x) \mathbb{P}[\neg c_0\mid \sigma_{u\gets x}](1+\eta)^{Z(\sigma,c_0)-1}\right)\\
    \leq & \mathbb{P}[\neg c_0\mid \sigma](1+\eta)^{Z(\sigma,c_0)}.
\end{aligned} 
\end{equation}
Moreover, we always have 
\begin{align}\label{eq-sum-musigmauxtimespivconstsetminuss}
    \sum\limits_{x\in Q_u}\left(\mu^{\sigma}_{u}(x)\prod\limits_{c\in E\setminus S}\left(\mathbb{P}[\neg c\mid \sigma_{u\gets x}](1+\eta)^{Z(\sigma_{u\gets x},c)}\right)\right) \leq \prod\limits_{c\in E\setminus S}\left(\mathbb{P}[\neg c\mid \sigma](1+\eta)^{Z(\sigma,c)}\right),
\end{align}
where we assume that a product over an empty set is 1.
Because $E\setminus S$ is either $\{c_0\}$ or an empty set. 
If $E\setminus S = \{c_0\}$,
\eqref{eq-sum-musigmauxtimespivconstsetminuss} is immediate by \eqref{eq-sum-muponeplusetaz}.
Otherwise, $E\setminus S = \emptyset$.
Then both sides of \eqref{eq-sum-musigmauxtimespivconstsetminuss} are equal to 1.
Combining \eqref{eq-sum-musigmauxtimespivcons} with \eqref{eq-sum-musigmauxtimespivconstsetminuss},
we have 
\begin{equation}\label{eq-sum-musigmauxtimespivconst-simplify}
\begin{aligned}
\sum\limits_{x\in Q_u}\left(\mu^{\sigma}_{u}(x)\prod\limits_{c\in E}\left(\mathbb{P}[\neg c\mid \sigma_{u\gets x}](1+\eta)^{Z(\sigma_{u\gets x},c)}\right)\right)\leq \prod\limits_{c\in E}\left(\mathbb{P}[\neg c\mid \sigma](1+\eta)^{Z(\sigma,c)}\right).
\end{aligned}    
\end{equation}
Moreover, by $u=\nextvar{\sigma}$, we have $\sigma(u) = \hollowstar\neq \star$.
Thus, $u\not\in \vst{\sigma}$.
Meanwhile, by $\sigma_{u\leftarrow x}(u) = x\neq \star$, we also have $u\not\in \vst{u\leftarrow x}$ for each $x\in Q_u$.
Thus, $U\setminus \vst{\sigma} = U\setminus \vst{\sigma_{u\gets x}}$.
Combining with \eqref{eq-presimgatehpaht-unotinvvart} and \eqref{eq-sum-musigmauxtimespivconst-simplify},
\eqref{eq-lemma-WTprob} is immediate.
This finishes the induction step for the  case when $u\notin U$.
The lemma is proved.
\end{proof}

Combining \Cref{WTprob} with the condition that $\sigma\in\qs$ is a partial assignment with exactly one variable $v\in V$ having $\sigma(v) = \star$ and comparing \eqref{eq-def-fsimgat} with \eqref{eq-definition-g}, \Cref{WTprobcor} is proved.

We are now ready to prove \Cref{tvdbound2}.

\begin{proof}[Proof of \Cref{tvdbound2}]
Let $\tau=\sigma_{v\gets \star}$.
If $H(\pth(\tau))>0$, we have $\one{f(\tau_{\ell})=\True}\cdot \chi(\tau_{\ell},\tau_0)>0$ by \eqref{eq-pathweightdef}.
Combining with $\chi(\tau_{\ell},\tau_0)\geq 0$,
we have $f(\tau_{\ell})$ is true if $H(\pth(\tau))>0$.
Therefore, by \Cref{def:truncate-refined} we have $\abs{\vst{\tau_{\ell}}}+\Delta\cdot \abs{\csfrozen{\tau_{\ell}}}\geq L\Delta$.
Combining with \Cref{WTsize} we have there always exists a generalized $\{2,3\}$-tree $T=U\circ E$ in $H_{\Phi}$ with some auxiliary tree rooted at $v$ such that $L\leq \Delta\cdot \abs{E}+\abs{U}\leq L\Delta$ and $\+{E}^{\tau}_{T}$ happens. Let $\+{U}$ denote the set $\{Y\in \+{T}^{t}_{v}: L\leq t\leq L\Delta\land v\in V\}$, then we have $T\in \+{U}$.
In summary, if $H(\pth(\tau))>0$, there exists some $T\in \+{U}$ such that $\+{E}^{\tau}_{T}$ happens.
Therefore by the law of total expectation and the nonnegativity of $H(\pth(\tau))$,
we have 
\begin{align}\label{eq-totallaw-ehpath}
    \E{H(\pth(\tau))} \leq \sum\limits_{T\in \+{U}}\Pr{\+{E}^{\tau}_T}\cdot\E{H(\pth(\tau))\mid \+{E}^{\tau}_T}.
\end{align}
Thus, we have 
\begin{align*}
    &\quad\tv(\xi,\mu^{\sigma}_{v})\\
(\text{by Lemmas \ref{boundedRCTtvdcor} and \ref{pathprop}}) \quad  &\leq \E{H(\pth(\tau))}\\
(\text{by \eqref{eq-totallaw-ehpath}}) \quad &\leq \sum\limits_{T\in \+{U}}\Pr{\+{E}^{\tau}_T}\cdot\E{H(\pth(\tau))\mid \+{E}^{\tau}_T}\\
  \text{(by  \Cref{WTprobcor})}\quad   &\leq \sum\limits_{T\in \+{U}}F(\tau,T\setminus \set{v})\\
  &\leq \sum\limits_{i=L}^{L\Delta}\sum\limits_{v\in V}\sum\limits_{T\in \+{T}^{i}_{v}}F(\tau,T\setminus \{v\} )\\
   \text{(by \eqref{eq-definition-F-2})}\quad &= \digamma(\tau)\\
   &=\digamma(\sigma).
\end{align*}
\end{proof}

\subsection{Efficiency of the counting algorithm}
We then show the efficiency of the algorithm, given that the upper bound function $\digamma(\cdot)$ is small. Recall the definition of $X^n$ in \Cref{def-pas-cmain}. We will show two crucial propositions, namely \Cref{lem-cvxl-in-cbad2-cor} and \Cref{lem-exeef2}. \Cref{lem-cvxl-in-cbad2-cor} bounds the running time of the marginal approximator subroutine called within the main counting algorithm, and \Cref{lem-exeef2}
 the running time on the exhaustive enumeration part in the main counting algorithm if $\digamma(X^n)$ is small.
\begin{proposition}\label{lem-cvxl-in-cbad2-cor}
For any $(\Phi,\sigma,v)$ satisfying \Cref{inputcondition-calc}, let $\tma{(\Phi,\sigma,v)}$ denote the running time of  $\calc{}(\Phi,\sigma,v)$. Then 
$\tma(\Phi,\sigma,v)\leq \poly(n,q^{k L\Delta^2 })$.
\end{proposition}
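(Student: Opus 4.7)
The plan is to bound the runtime of $\calc{}(\Phi,\sigma,v)$ via its recursion tree: the single call to $\recalc{}$ on Line~\ref{Line-calc-recalc} induces a tree of further recursive calls whose shape coincides with the RCT $\+T_{\sigma_{v\gets\star}}$ of \Cref{RCTdef}. The total cost equals (number of RCT nodes)$\times$(per-node work) plus the exhaustive-enumeration cost at factorization leaves. The branching factor at each internal node is $q_u+1\leq q+1$, and each node's local work is $\poly(n,k,\Delta,q^k)$: computing $\nextvar{\sigma}$ is a breadth-first search in $\hfix{\sigma}$, checking whether a constraint is frozen takes $O(q^k)$ by enumerating over $\vbl(c)$, and combining the children's output vectors via Lines~\ref{Line-recalc-zone}--\ref{Line-recalc-calcmu} costs $O(q^2)$.

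The crux is bounding the depth of the RCT. At any internal node $\sigma_i$ along a root-to-leaf path, the truncation condition $f$ has not yet triggered, so $\abs{\vst{\sigma_i}}+\Delta\cdot\abs{\csfrozen{\sigma_i}}<L\Delta$, giving $\abs{\vst{\sigma_i}}\leq L\Delta$ and $\abs{\csfrozen{\sigma_i}}\leq L$. Every non-$\star$ vertex in $\vcon{\sigma_i}$ must lie in $\vbl(c)$ for some $c\in\csfrozen{\sigma_i}$: it is an unassigned fixed variable (being in $V^{\sigma_i}\cap\vfix{\sigma_i}$), and since it is reached from a $\star$-variable through $\hfix{\sigma_i}$, its membership in $\vfix{\sigma_i}$ must come from a frozen constraint that touches $\vcon{\sigma_i}$, i.e., a constraint in $\csfrozen{\sigma_i}$. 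This yields $\abs{\vcon{\sigma_i}}\leq L\Delta+kL$ and $\abs{\vinf{\sigma_i}}\leq(k-1)\Delta\cdot\abs{\vcon{\sigma_i}}=O(kL\Delta^2)$. Each accessed variable was selected as $\nextvar{\sigma_j}\in\vinf{\sigma_j}$ at the step $j$ at which it was accessed, and an inductive argument shows that the set of variables ever accessed along the path lies within a $k\Delta$-neighborhood of $\vst{\sigma'}\cup\vbl(\csfrozen{\sigma'})$ at the leaf $\sigma'$, hence has size $O(kL\Delta^2)$. Thus the depth of the RCT is $O(kL\Delta^2)$.

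At a factorization leaf with $\nextvar{\sigma'}=\perp$, we have $\vinf{\sigma'}=\emptyset$, so the connected component of $v$ in $H_{\Phi^{\sigma'}}$ is contained in $\vcon{\sigma'}$ and has size $O(kL\Delta)$; exhaustive enumeration thus costs $q^{O(kL\Delta)}\cdot\poly(n,k)$. Combining the tree size $(q+1)^{O(kL\Delta^2)}=q^{O(kL\Delta^2)}$ with per-node work $\poly(n,q^k)$ and per-leaf enumeration cost $q^{O(kL\Delta)}$ yields the claimed $\poly(n,q^{kL\Delta^2})$ bound.

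The principal obstacle is the depth argument: when going down a value-branch the assigned variable $u$ leaves $V^{\cdot}$ and disappears from $\vcon{\cdot}\cup\vinf{\cdot}$ at subsequent descendants, making it impossible to simply argue that accessed variables are always contained in the current $\vcon{\cdot}\cup\vinf{\cdot}$. One must carefully track the monotone-growing set of already-accessed variables and exploit that, throughout the path, both $\abs{\vst{\cdot}}$ and $\abs{\csfrozen{\cdot}}$ remain bounded by $L\Delta$ and $L$ respectively, so that the accessed region cannot escape a bounded hypergraph neighborhood around the initial $\star$-variable $v$.
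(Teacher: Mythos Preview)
Your overall architecture matches the paper exactly: identify the recursion of \recalc{} with the RCT, bound per-node work by $\poly(n,q^k)$, bound the enumeration cost at factorization leaves, and bound the RCT depth via the truncation condition. Your enumeration bound at factorization leaves (\,$\nextvar{\sigma'}=\perp\Rightarrow V^{\sigma'}_v\subseteq\vcon{\sigma'}$, and $\abs{\vcon{\sigma'}}\le\abs{\vst{\sigma'}}+k\abs{\csfrozen{\sigma'}}$\,) is correct and in fact slightly tighter than what the paper records.

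The depth argument, which you correctly flag as the crux, has a real gap. Two concrete issues. First, bounding $\abs{\vinf{\sigma_i}}$ at a \emph{fixed} internal $i$ does not bound the number of accessed variables: the $j$-th accessed variable lies in $\vinf{\sigma_j}$, and these boundary sets vary with $j$, so you cannot sum a single snapshot. Second, your containment ``accessed variables $\subseteq$ $k\Delta$-neighborhood of $\vst{\sigma'}\cup\vbl(\csfrozen{\sigma'})$ at the leaf $\sigma'$'' is fine as a set statement (once monotonicity is available), but the size conclusion $O(kL\Delta^2)$ does \emph{not} follow at a truncated leaf, where you only have the lower bound $\abs{\vst{\sigma'}}+\Delta\abs{\csfrozen{\sigma'}}\ge L\Delta$ and no a~priori upper bound on either quantity.

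The paper closes this gap as follows (see \Cref{shortpath2} and \Cref{cor-mono}). It first establishes monotonicity along any root-to-leaf path: $\vst{\sigma_j}\subseteq\vst{\sigma_i}$ and $\csfrozen{\sigma_j}\subseteq\csfrozen{\sigma_i}$ for $j\le i$. Then, for each $j<i$, it shows the accessed variable $u_j=\nextvar{\sigma_j}$ shares a constraint with some $w_j$ that is either in $\vst{\sigma_j}\subseteq\vst{\sigma_i}$ or in $\vbl(\widehat c_j)$ for some $\widehat c_j\in\csfrozen{\sigma_j}\subseteq\csfrozen{\sigma_i}$. Counting the constraints that can contain such $u_j$ gives $i\le k\Delta\bigl(\abs{\vst{\sigma_i}}+\abs{\csfrozen{\sigma_i}}\bigr)\le k\Delta\bigl(\abs{\vst{\sigma_i}}+\Delta\abs{\csfrozen{\sigma_i}}\bigr)$ for \emph{every} $i$. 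Applying this at the penultimate index $i=\ell-1$ (which is internal, hence $\abs{\vst{\sigma_{\ell-1}}}+\Delta\abs{\csfrozen{\sigma_{\ell-1}}}<L\Delta$) yields $\ell-1<kL\Delta^2$, so $\ell\le kL\Delta^2$. This is exactly the ``track monotone-growing sets'' idea you gesture at, but executed at the penultimate internal node rather than the leaf, which sidesteps the missing upper bound at truncated leaves.
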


\begin{proposition}\label{lem-exeef2}
 If $\digamma(X^n)<1$, then
$ \texe{}(\Phi,X^n)=\poly(n,q^{kL\Delta^2 })$,
where $\texe{}(\Phi,X^n)$ denotes the running time of the exhaustive enumeration in the main counting algorithm. 
\end{proposition}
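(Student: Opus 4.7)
\emph{Plan.} The plan is to establish that under the hypothesis $\digamma(X^n) < 1$, every connected component of $H_{\Phi^{X^n}}$ contains at most $M = O(kL\Delta^2)$ variables. Since exhaustively enumerating over a single component with $s$ variables takes $q^s\cdot\poly(k,\Delta,s)$ time, and there are at most $n$ components, this immediately yields $\texe(\Phi,X^n)=\poly(n, q^{kL\Delta^2})$ as claimed.

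To bound the component size, I would argue by contrapositive. Fix any variable $v$ in a connected component $C$ of $H_{\Phi^{X^n}}$ with $|C| > M$. The goal is to exhibit a generalized $\{2,3\}$-tree $T = U \circ E$ in $H_\Phi$ with $v\in U$, $|U|+\Delta|E|\in[L,L\Delta]$, and $F(X^n, T\setminus\{v\}) \geq 1$. Since $\digamma(X^n)$ is a sum of nonnegative terms including $F(X^n, T\setminus\{v\})$ for any such $T\in\mathcal{T}^i_v$, this directly contradicts $\digamma(X^n) < 1$. I would take $U = \{v\}$ and choose $E$ to be a pairwise line-graph 2-far family of $X^n$-frozen constraints inside $C$. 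Two ingredients drive this extraction: first, the skipping rule of the main loop together with the invariant $\mathbb{P}[\neg c\mid X^n]\leq \pprime q$ from \Cref{lemma:invariant-p-prime-q-bound} allow one to identify $\Omega(|C|/k)$ constraints in $C$ that remain $X^n$-frozen; second, since $\Lin{H_\Phi}$ has maximum degree at most $\Delta$, each constraint has at most $\Delta^2+\Delta$ constraints within line-graph distance $\leq 2$, so a greedy extraction produces $\Omega(|C|/(k\Delta^2))$ pairwise 2-far $X^n$-frozen constraints. Choosing $M$ appropriately guarantees $|E|\approx L/\Delta$, so $|U|+\Delta|E|\in[L,L\Delta]$. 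A short path from $v$ through the connected $C$ then supplies the rooted auxiliary-tree structure required by item \ref{WT-2} of \Cref{WTdef}. For this $T$ one computes
\[ F(X^n, T\setminus\{v\}) = \prod_{c\in E}\pprime^{-1}\mathbb{P}[\neg c\mid X^n](1+\eta)^k; \]
since each $c\in E$ is $X^n$-frozen, $\pprime^{-1}\mathbb{P}[\neg c\mid X^n] > 1$, so every factor exceeds $(1+\eta)^k \geq 1$, whence $F(X^n, T\setminus\{v\}) > 1$, giving the contradiction.

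The main obstacle will be the density argument in the first ingredient: rigorously showing that a large connected component of $H_{\Phi^{X^n}}$ contains enough \emph{$X^n$-frozen} constraints, as opposed to constraints that were frozen only transiently during the main loop. Since a constraint's conditional violation probability can drift along the loop, a constraint that was frozen when it caused a variable to be skipped may fail to remain $X^n$-frozen. Circumventing this will require a bookkeeping argument that tracks which witnessing constraints survive to $X^n$, likely invoking the local uniformity bound of \Cref{localuniformitycor} to control how the densities of frozen constraints drift. The line-graph packing step is then a routine greedy argument once this density has been established.
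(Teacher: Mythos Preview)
Your approach is essentially the same as the paper's: both argue by contradiction that a large component forces a generalized $\{2,3\}$-tree $T=\{v\}\circ E$ with $E\subseteq\cfrozen{X^n}$ and $1+\Delta|E|\in[L,L\Delta]$, whence $F(X^n,T\setminus\{v\})\ge 1$ contradicts $\digamma(X^n)<1$. The paper packages the extraction step as \Cref{lem-bigWT-rs} (proved in \Cref{sec:proof of lem-bigWT-rs}), whose proof is exactly your greedy packing over the frozen constraints touching the component.

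The obstacle you flag---that a constraint witnessing the skip of some $v_i$ might only be \emph{transiently} frozen---is not a real obstacle. Freezing is monotone along the main loop: if $c$ is $X^{i-1}$-frozen then at every later step $j\ge i$ either $v_j\in\vbl(c)$ (so $v_j$ is skipped and $X^j=X^{j-1}$ on $\vbl(c)$) or $v_j\notin\vbl(c)$ (so $\mathbb{P}[\neg c\mid X^j]=\mathbb{P}[\neg c\mid X^{j-1}]$); either way $c$ stays frozen. Hence every unassigned variable in $X^n$ lies in some $X^n$-frozen constraint, which is precisely \Cref{lem-subset-rs} (cited from \cite{he2022sampling}). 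This gives you the $\Omega(|C|/k)$ frozen constraints directly, with no drift analysis or appeal to \Cref{localuniformitycor} needed. (Also, for the packing you only need pairwise line-graph distance $\ge 2$, so removing the $\le\Delta$ neighbors at distance $\le 1$ suffices; the $\Delta^2$ you wrote is an overcount but harmless.)
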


For each $\sigma\in\qs$ and $v\in V^{\sigma}$, 
$H_v^{\sigma}=(V_v^\sigma,\+{C}_v^{\sigma})$ denotes the connected component in $H^{\sigma}$ that contains the vertex/variable $v$, 
where $H^{\sigma}$ is the hypergraph representation for the CSP formula $\Phi^\sigma$ obtained from the simplification of $\Phi$ under $\sigma$. 

We further stipulate that 
$H_v^{\sigma}=(V_v^\sigma,\+{C}_v^{\sigma})=(\emptyset,\emptyset)$ is the empty hypergraph when $v\in \Lambda(\sigma)$.

Recall that $\+{T}_{\sigma}=(T_{\sigma},\rho_{\sigma})$ is the RCT rooted at $\sigma$. To show \Cref{lem-cvxl-in-cbad2-cor}, we need the following lemma, which bounds the efficiency of the subroutine \recalc{}.

\begin{lemma}\label{RCTtime}
Let  $(\Phi,\sigma,v)$ be the input to $\recalc{}$ (\Cref{Alg:recalc}) satisfying \Cref{inputcondition-recalc},
and let $\tra{}(\Phi,\sigma,v)$ denote the running time of $\recalc{}{(\Phi,\sigma,v)}$. 
It holds that 
\[
\tra{}(\Phi,\sigma,v)\le \abs{\+T_{\sigma}} \cdot \poly(n,\Delta,q^k) +  O\left(\sum_{\tau\in \+{L}_g(\+{T}_{\sigma})}\left( (k\abs{ \+{C}_v^\tau}+q\abs{ V_v^\tau})\cdot q^{\abs{ V_v^\tau}}\right)\right).
\]
\end{lemma}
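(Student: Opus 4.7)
The plan is to observe that the recursion tree generated by $\recalc{}(\Phi,\sigma,v)$ coincides exactly with the RCT $\+T_\sigma$ from Definition \ref{RCTdef}, and then account for the work done per node. A recursive invocation with partial assignment $\tau$ corresponds to the node $\tau$ of $T_\sigma$: if $f(\tau)=\True$ the call terminates at Line \ref{Line-recalc-uniform}; if $\nextvar{\tau}=\perp$ it terminates in the exhaustive enumeration branch; otherwise $u=\nextvar{\tau}$ spawns exactly the $q_u+1$ recursive calls $\{\recalc{}(\Phi,\tau_{u\gets x},\cdot) : x\in\qus{u}\}$, matching the $q_u+1$ children of $\tau$ in $T_\sigma$ prescribed by Item \ref{RCT-2-b} of Definition \ref{RCTdef}. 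Thus the total number of invocations during the execution equals $|\+T_\sigma|$.

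Next I would bound the local (non-recursive) work at a single invocation by $\poly(n,\Delta,q^k)$, apart from the enumeration at good leaves. The dominant subroutines are: (i) computing $\cfrozen{\tau}$, which for each of the at most $n\Delta$ constraints evaluates $\mathbb{P}[\neg c\mid\tau]$ in $O(q^k)$ time; (ii) building the induced sub-hypergraph $\hfix{\tau}$ and running BFS from the $\star$-variables to obtain $\vcon{\tau}$ and its boundary $\vinf{\tau}$ in $\poly(n,\Delta,k)$ time; (iii) selecting $\nextvar{\tau}$ as the smallest-indexed vertex of $\vinf{\tau}$ in $O(n)$ time; and (iv) combining the $q_u+1$ returned probability vectors at Lines \ref{Line-recalc-zone}--\ref{Line-recalc-calcmu} in $O(q^2)$ time. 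Evaluating the truncation predicate $f(\tau)$ also falls within this budget since $\vst{\tau}$ and $\csfrozen{\tau}$ are byproducts of the above. Bad leaves $\tau\in\+L_b$ only additionally perform the $O(q)$ work of writing the uniform vector, already absorbed.

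For good leaves $\tau\in\+L_g$, the algorithm additionally enumerates all at most $q^{|V_v^\tau|}$ assignments to $V_v^\tau$, and for each checks the $|\+C_v^\tau|$ constraints (each on at most $k$ variables) and records the value of $v$, contributing $O\bigl((k|\+C_v^\tau|+q|V_v^\tau|)\cdot q^{|V_v^\tau|}\bigr)$ work; summing over $\tau\in\+L_g$ gives the second term of the stated bound. Combining with the per-invocation bound $\poly(n,\Delta,q^k)$ summed over all $|\+T_\sigma|$ nodes yields the lemma. The one careful point in this plan is to justify that $\cfrozen{\tau}$ and $\nextvar{\tau}$ can truly be recomputed from $\tau$ within the claimed polynomial budget; this is immediate from the representational assumptions laid out at the beginning of Section \ref{sec:main-counting-alg}, namely that $\vbl(c)$, the list of constraints touching a given variable, and the satisfaction of $c$ under any partial assignment can be queried in $\poly(q,k,\Delta)$ time.
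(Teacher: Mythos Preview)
Your proposal is correct and matches the paper's approach: the paper proves the bound by structural induction on the RCT, while you phrase the same argument as a direct charging/accounting over the nodes of $\+T_\sigma$, but the content is identical (the paper even invokes \Cref{lemma:efficiency-var} for the $\poly(n,\Delta,q^k)$ per-node cost you spell out). One minor point shared by both your write-up and the paper: at a good leaf $\tau$ the target variable of the corresponding invocation may be some $w\neq v$ (namely the most recently $\star$-ed variable), so the enumeration is literally over $V_w^\tau$ rather than $V_v^\tau$; this sloppiness is harmless because the downstream use (via \Cref{cor-depth-sizectauv-rct}) only needs the uniform bound $|\+C_w^\tau|\le L\Delta^2$, which holds for every $\star$-variable $w$.
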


The following lemma will be used in the proof of \Cref{RCTtime}. Its proof is similar to that of \Cref{lemma:invariant-counting}-(\ref{item:invariant-recalc}) and omitted here.

\begin{lemma}\label{lem-rct-satisfy-condition-recalc}
Let  $(\Phi,\sigma,v)$ satisfy \Cref{inputcondition-recalc}.
Then for each node $\tau$ in $T_{\sigma}$, $(\Phi,\tau,v)$ also satisfies \Cref{inputcondition-recalc}.
\end{lemma}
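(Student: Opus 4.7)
The plan is to prove the statement by structural induction on the RCT $\+T_\sigma$, equivalently by induction on the depth of $\tau$ in $T_\sigma$. The base case $\tau = \sigma$ (the root, depth $0$) is immediate, since by hypothesis $(\Phi,\sigma,v)$ already satisfies \Cref{inputcondition-recalc}.

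For the inductive step, suppose $\tau$ is a non-root node of $T_\sigma$ and let $X$ be its parent in $T_\sigma$, which has strictly smaller depth. By the inductive hypothesis, $(\Phi,X,v)$ satisfies \Cref{inputcondition-recalc}. By construction of the RCT (step \Cref{RCT-2-b} of \Cref{RCTdef}), the fact that $\tau$ is a child of $X$ forces $X$ to be a non-leaf node, which in turn means that $X$ does \emph{not} trigger the stopping rule of step \Cref{RCT-2-a}; in particular $u := \nextvar{X} \neq \perp$, and $\tau = X_{u\gets x}$ for some $x \in \qus{u} = Q_u \cup \{\star\}$.

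The inductive step then reduces to a single invocation of \Cref{lemma:general-invariant}: that lemma asserts precisely that if $(\Phi,X,v)$ satisfies \Cref{inputcondition-recalc} and $u$ is not $X$-fixed, then $(\Phi, X_{u\gets a}, v)$ satisfies \Cref{inputcondition-recalc} for every $a \in Q_u \cup \{\star\}$. Applying this with $a = x$ yields the claim for $\tau$. The only point that needs verification is that $u = \nextvar{X}$ is not $X$-fixed; this is the same fact the excerpt already invokes (``by the definition of $\nextvar{\cdot}$ we have $u$ is not $\sigma$-fixed'') in its discussion following \Cref{lemma:general-invariant}, and it follows from the observation that $\nextvar{X} \in \vinf{X} \subseteq V^X \setminus \vcon{X}$: if $u$ were also in $\vfix{X}$, then together with its neighbour in $\vcon{X}$ (guaranteed by the definition of $\vinf{X}$) it would lie in the same connected component of $\hfix{X}$ as a $\star$-variable, forcing $u \in \vcon{X}$, a contradiction.

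There is essentially no technical obstacle here: the lemma is a routine propagation of \Cref{inputcondition-recalc} along parent-child edges of the RCT, and everything substantive has already been packaged into \Cref{lemma:general-invariant} and the definition of $\nextvar{\cdot}$. The only expository care needed is to cleanly spell out that along any root-to-node path in $T_\sigma$, each step is exactly of the form handled by \Cref{lemma:general-invariant}, so the invariant is preserved from the root to an arbitrary node $\tau$.
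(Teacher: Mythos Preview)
Your proposal is correct and follows essentially the same approach as the paper: the paper omits the proof but notes it is ``similar to that of \Cref{lemma:invariant-counting}-(\ref{item:invariant-recalc})'', which is exactly the argument you give --- propagate \Cref{inputcondition-recalc} along each parent--child edge of $T_\sigma$ via \Cref{lemma:general-invariant}, using that $u=\nextvar{X}$ is not $X$-fixed.
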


By \Cref{prop:theta-recalc-lower-bound} and \Cref{lem-rct-satisfy-condition-recalc} , we have the following lemma.

\begin{lemma}\label{lem-pathinrct-largerprob-inpath}
Let  $(\Phi,\sigma,v)$ satisfy \Cref{inputcondition-recalc}.
Recall that $\+{T}_{\sigma}=(T_{\sigma},\rho_{\sigma})$ is the RCT rooted at $\sigma$.
For any leaf node $\tau$ in $T_{\sigma}$,
let $P$ be the path from $\sigma$ to $\tau$ in $T_{\sigma}$.
Then $\Pr{\pth(\sigma)=P}>0$.

\end{lemma}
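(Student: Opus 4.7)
The plan is to proceed by induction on the length $\ell$ of the root-to-leaf path $P = (\sigma_0, \sigma_1, \ldots, \sigma_\ell)$ in $T_\sigma$, where $\sigma_0 = \sigma$ and $\sigma_\ell = \tau$. The essence of the argument is that the RCT and the random process $\pth$ branch according to the same rule $\nextvar{\cdot}$ and stop under the same condition (compare \Cref{RCT-2-a} of \Cref{RCTdef} with \Cref{rp-1} of \Cref{pathdef}), so every root-to-leaf path of the RCT corresponds to a possible outcome of $\pth$. It therefore suffices to verify that each transition along $P$ carries strictly positive probability.

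For the base case $\ell = 0$, the leaf $\tau$ coincides with the root, so $P = (\sigma)$. Being a leaf forces $\nextvar{\sigma} = \perp$ or $f(\sigma) = \True$, which by \Cref{rp-1} of \Cref{pathdef} causes $\pth(\sigma)$ to halt at $\sigma_0 = \sigma$; thus $\Pr{\pth(\sigma) = (\sigma)} = 1 > 0$.

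For the inductive step $\ell \geq 1$, I would let $u = \nextvar{\sigma_0}$, so that $u \neq \perp$ and $f(\sigma_0) = \False$ (since $\sigma_0$ is an internal node of $T_\sigma$), and write $\sigma_1 = (\sigma_0)_{u \gets x}$ for the unique $x \in \qus{u}$ dictated by $P$. The crux is to show that $\induceddist{\sigma_0}{u}(x) > 0$. When $x = \star$ this is immediate from $0 < \theta_u < 1/q_u$, which follows from \eqref{eq:main-thm-LLL-condition} together with the definition \eqref{eq:parameter-theta}. When $x \in Q_u$, I would invoke \Cref{prop:theta-recalc-lower-bound} applied to $(\Phi, \sigma_0, v)$—which satisfies \Cref{inputcondition-recalc} by hypothesis—to obtain $\mu^{\sigma_0}_u(x) \geq \theta_u > 0$, and hence $\induceddist{\sigma_0}{u}(x) = \mu^{\sigma_0}_u(x)/(2 - q_u\theta_u) > 0$.

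To close the induction I would use the Markov property of $\pth(\sigma_0)$ together with the fact, guaranteed by \Cref{RCT-2-b} of \Cref{RCTdef}, that the subtree of $T_{\sigma_0}$ rooted at $\sigma_1$ coincides with $T_{\sigma_1}$. Thus $(\sigma_1, \ldots, \sigma_\ell)$ is a root-to-leaf path in $T_{\sigma_1}$, and by \Cref{lem-rct-satisfy-condition-recalc} the triple $(\Phi, \sigma_1, v)$ still satisfies \Cref{inputcondition-recalc}, so the induction hypothesis yields $\Pr{\pth(\sigma_1) = (\sigma_1, \ldots, \sigma_\ell)} > 0$. Multiplying the two positive quantities gives $\Pr{\pth(\sigma_0) = P} > 0$ and finishes the proof. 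The only thing that needs any real care is the strict positivity of the per-step branching probabilities; this is handed to us directly by the local-uniformity bound \Cref{prop:theta-recalc-lower-bound}, so no genuine obstacle arises.
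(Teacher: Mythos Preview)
Your proposal is correct and follows essentially the same approach as the paper: both arguments verify that each step along the root-to-leaf path has strictly positive transition probability in $\pth$, invoking \Cref{lem-rct-satisfy-condition-recalc} to propagate \Cref{inputcondition-recalc} and \Cref{prop:theta-recalc-lower-bound} to lower-bound $\mu^{\tau_i}_{u_i}(x)$ by $\theta_{u_i}>0$. The only cosmetic difference is that the paper writes out the chain-rule product directly over all steps, whereas you phrase the same computation as an induction on the path length; the content is identical.
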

\begin{proof}
Let $(\tau_0=\sigma,\tau_1,\tau_2,\tau_3,\cdots,\tau_{r}=\tau)$ be the path from $\sigma$ to $\tau$ in $T_{\sigma}$.
Given $0\leq i< r$,
let $u_i = \nextvar{\tau_i}$.
Then we have $f(\tau_i)\neq \True$ and $u_i\neq \perp$.
Otherwise, by \Cref{RCT-2-a} of \Cref{RCTdef},
we have $\tau_i$ is a leaf of $T_\sigma$, which is contradictory with $i<r$.
By \Cref{RCT-2-b} of \Cref{RCTdef},
we also have $\tau_{i+1}\in \{(\tau_i)_{u_i\gets x}\mid x\in \qus{u_i}\}$.
In addition, by \Cref{lem-rct-satisfy-condition-recalc}, we have 
$(\Phi,\tau_i,v)$
    satisfies \Cref{inputcondition-recalc}.
Combining with \Cref{prop:theta-recalc-lower-bound}, we have  $\mu_{u_i}^{\tau_i}(x)\geq \theta_{u_i}$ for each $x\in \qus{u_i}$.
Thus, combining $f(\tau_i)\neq \True$, $u_i=\nextvar{\tau_i}\neq \perp$, $\tau_{i+1}\in \{(\tau_i)_{u_i\gets x}\mid x\in \qus{u_i}$ with \Cref{pathdef}, we have
\begin{align*}
\Pr{\sigma_{i+1}=\tau_{i+1} \mid \sigma_0=\tau_0,\cdots,\sigma_i=\tau_i} &\geq  (2-q_{u_i} \theta_{u_i})^{-1}\min\left\{1-q_{u_i} \theta_{u_i},
    \min_{x\in Q_{u_i}}\left\{\mu_{u_i}^{\tau_i}(x)\right\}\right\}\\
    &\geq  (2-q_{u_i} \theta_{u_i})^{-1}\min\left\{1-q_{u_i} \theta_{u_i},
    \theta_{u_i}\right\}>0,
\end{align*}
where the second inequality is by that $\mu_{u_i}^{\tau_i}(x)\geq \theta_{u_i}$ for each $x\in \qus{u_i}$.
Thus, the lemma is immediate by the chain rule.
\end{proof}

By~\cite[Proposition 6.28]{he2022sampling}, the following lemma is immediate.  

\begin{lemma}\label{lemma:efficiency-var}
For any $\sigma \in \+{Q}^\ast$, 
$\nextvar{\sigma}$ and $f(\sigma)$ can be computed in  $\poly(n,\Delta,q^k)$ cost.
\end{lemma}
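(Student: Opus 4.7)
The plan is to break the computation into elementary sub-tasks and bound each by $\poly(n,\Delta,q^k)$. First I would show that the set $\cfrozen{\sigma}$ of $\sigma$-frozen constraints can be computed in $\poly(n,\Delta,q^k)$ time: for each $c\in\+{C}$ we have $|\vbl(c)|\le k$, so $\mathbb{P}[\neg c\mid \sigma]$ equals the fraction of the (at most) $q^k$ completions of $\sigma$ on $\vbl(c)\setminus\Lambda(\sigma)$ that violate $c$. By the standing assumption in \Cref{sec:main-counting-alg} each such completion can be tested for satisfaction in $\poly(q,k)$ time, so one constraint is processed in $\poly(q^k,k)$ time. Since $|\+{C}|\le n\Delta$, computing $\cfrozen{\sigma}$, and therefore $\vfix{\sigma}=\Lambda^+(\sigma)\cup\bigcup_{c\in\cfrozen{\sigma}}\vbl(c)$, costs $\poly(n,\Delta,q^k)$.

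Next I would construct the simplified hypergraph $H^{\sigma}=H_{\Phi^\sigma}=(V^\sigma,\+{C}^\sigma)$ by iterating through the constraints: delete each constraint that is satisfied by $\sigma$, and restrict each surviving constraint to its unassigned variables. Satisfaction of a constraint by $\sigma$ can be tested in $\poly(q,k)$ time per constraint (over the at most $q^k$ completions), so $H^\sigma$ is built in $\poly(n,\Delta,q^k)$ time. From $H^\sigma$ I would extract $\hfix{\sigma}$, the sub-hypergraph induced by $V^\sigma\cap\vfix{\sigma}$, again in $\poly(n,\Delta,k)$. A standard BFS/DFS on $\hfix{\sigma}$ starting from the set $\vst{\sigma}=\{v\mid\sigma(v)=\star\}$ (computable by a single scan of $\sigma$ in $O(n)$ time) identifies the connected components containing starred vertices and hence yields $\vcon{\sigma}$ in $\poly(n,\Delta,k)$ time. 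The vertex boundary $\vinf{\sigma}$ is then obtained by scanning each constraint $c\in\+{C}^\sigma$ incident to a vertex in $\vcon{\sigma}$ and collecting those of its variables that lie in $V^\sigma\setminus\vcon{\sigma}$; this is again $\poly(n,\Delta,k)$. Finally, $\nextvar{\sigma}$ is the smallest-indexed element of $\vinf{\sigma}$ (or $\perp$ if this set is empty), computable in $O(n)$ time.

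For $f(\sigma)$ I would reuse the above intermediate data: $|\vst{\sigma}|$ is already known from the single scan of $\sigma$, and $\csfrozen{\sigma}=\cfrozen{\sigma}\cap\ccon{\sigma}$ where $\ccon{\sigma}$ consists of those constraints in $\+{C}$ whose variable set intersects $\vcon{\sigma}$. Since $\cfrozen{\sigma}$ and $\vcon{\sigma}$ are in hand, a single pass over $\cfrozen{\sigma}$ (each of whose constraints has at most $k$ variables) computes $\csfrozen{\sigma}$ in $\poly(n,\Delta,k)$ time, and comparing $|\vst{\sigma}|+\Delta\cdot|\csfrozen{\sigma}|$ with $L\Delta$ is $O(1)$. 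Summing all contributions yields the claimed $\poly(n,\Delta,q^k)$ bound.

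The only step that is not purely combinatorial on a graph of size $\poly(n,\Delta)$ is the computation of violation probabilities $\mathbb{P}[\neg c\mid\sigma]$, which is where the $q^k$ factor enters; all remaining steps are straightforward hypergraph traversals. I therefore expect the main obstacle, if any, to be purely notational: carefully invoking the correct definitions of $\vfix{\sigma},\vcon{\sigma},\vinf{\sigma},\csfrozen{\sigma}$ from \Cref{definition:frozen-fixed}, \Cref{definition:boundary-variables}, and \Cref{def:cbad} so that the chain of computations evidently matches the formal definitions of $\nextvar{\sigma}$ and $f(\sigma)$, rather than any algorithmic subtlety.
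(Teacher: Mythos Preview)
Your proposal is correct. The paper does not give a self-contained proof of this lemma at all: it simply states that the lemma is immediate from \cite[Proposition~6.28]{he2022sampling}. Your argument is precisely the kind of direct verification that such a proposition would contain, so in content the two approaches coincide; you have just unpacked the citation into the natural sequence of steps (compute $\cfrozen{\sigma}$ by brute-force enumeration over $\le q^k$ completions per constraint, build $H^\sigma$ and $\hfix{\sigma}$, run a graph search from $\vst{\sigma}$ to obtain $\vcon{\sigma}$ and then $\vinf{\sigma}$, and finally read off $\nextvar{\sigma}$ and the quantities defining $f(\sigma)$). One small remark: your bound $|\+{C}|\le n\Delta$ is justified because any variable lies in at most $\Delta$ constraints (two constraints sharing a variable are neighbours in the sense defining $\Delta$), so $\sum_{c}|\vbl(c)|\le n\Delta$; you may want to state this explicitly if you write the proof out in full.
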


Now we can prove \Cref{RCTtime}.

\begin{proof}[Proof of \Cref{RCTtime}]
We prove this lemma by an induction on the structure of RCT. The base case is when $T_\sigma$ is just a single root,
in which case $\nextvar{\sigma}= \perp$ or $f(\sigma)=\True$.
If $f(\sigma)=\True$, 
the condition in \Cref{Line-recalc-truncate} of
$\recalc{}(\Phi,\sigma,v)$ is not satisfied and Lines \ref{Line-recalc-else}-\ref{Line-recalc-return} are omitted.
Thus, we have
$\tra{}(\Phi,\sigma,v)\le  \poly(n,\Delta,q^k)$.
Otherwise, $f(\sigma)=\False$ and
$\nextvar{\sigma}= \perp$.
By $T_\sigma$ is just a single root and $f(\sigma)=\False$, we have $\sigma\in \+{L}_g(\+{T}_{\sigma})$.
By $\nextvar{\sigma}= \perp$, we have
the condition in \Cref{Line-recalc-loop} of
$\recalc{}(\Phi,\sigma,v)$ is not satisfied and Lines \ref{Line-recalc-else}-\ref{Line-recalc-calcmu} are omitted.
Combining with \Cref{lemma:efficiency-var}, we have \begin{align*}
\tra{}(\Phi,\sigma,v)& \leq\poly(n,\Delta,q^k) + O\left( (k\abs{ \+{C}_v^\tau}+q\abs{ V_v^\tau})\cdot q^{\abs{ V_v^\tau}}\right)\\
&\leq \poly(n,\Delta,q^k) + O\left(\sum_{\tau\in \+{L}_g(\+{T}_{\sigma})}\left( (k\abs{ \+{C}_v^\tau}+q\abs{ V_v^\tau})\cdot q^{\abs{ V_v^\tau}}\right)\right),
\end{align*}
where the first inequality is by  $(\Phi,\sigma,v)$ satisfies \Cref{inputcondition-recalc} and the standard guarantee on the running time of exhaustive enumeration, and the last inequality is by $\sigma\in \+{L}_g(\+{T}_{\sigma})$.
The base case is proved.

For the induction step, we assume that $T_\sigma$ is a tree of depth $>0$. 
Thus by \Cref{RCTdef}, 
$f(\sigma)=\False$ and $\nextvar{\sigma}=u\neq \perp$ for some $u\in V$. 
Thus the condition in \Cref{Line-recalc-truncate} of
$\recalc{}(\Phi,\sigma,v)$ is not satisfied and the condition in \Cref{Line-recalc-loop} is satisfied.
According to Lines \ref{Line-recalc-mu}-\ref{Line-recalc-enu}, one can verify that 
\begin{equation}
\begin{aligned}\label{eq-tra-upperbound} 
&\tra{}(\Phi,\sigma,v) \leq \poly(n,\Delta,q^k) + \tra{}(\Phi,\sigma_{u\leftarrow \star},u) +  \sum_{x\in Q_{u}}
\tra{}(\Phi,\sigma_{u\leftarrow x},v).
\end{aligned}
\end{equation}
Let $S\triangleq \{\sigma_{u\leftarrow x}:x\in Q_u \cup \{\star\}\}$.
By the induction hypothesis,
we have 
\begin{equation}\label{eq-tra-sigma-sum-simgauleftarrowx} 
\begin{aligned}
\tra{}(\Phi,\sigma_{u\leftarrow \star},u) +  \sum_{x\in Q_{u}}
\tra{}(\Phi,\sigma_{u\leftarrow x},v)
\le \sum_{\tau \in S}
\left(\abs{\+T_{\tau}} \cdot \poly(n,\Delta,q^k) +  O\left(\sum_{X\in \+{L}_g(\+{T}_{\tau})}\left( (k\abs{ \+{C}_v^X}+q\abs{ V_v^X})\cdot q^{\abs{ V_v^X}}\right)\right)\right).
\end{aligned}
\end{equation}
Meanwhile, by \Cref{RCTdef}, one can verify that $T_{\sigma}$ is a tree consisting of a root $\sigma$ and  $q_{v}+1$ subtrees $T_{\tau}$ where $\tau\in S$. 
Thus, we have 
$1+\sum_{\tau\in S}\abs{\+T_{\tau}} = \abs{\+T_{\sigma}}$
and 
$$ \bigcup_{\tau\in S} \+{L}_g(\+{T}_{\tau}) = \+{L}_g(\+{T}_{\sigma}).$$
Moreover, it is easy to verify that 
$\+{L}_g(\+{T}_{\tau})\cap \+{L}_g(\+{T}_{\tau'}) = \emptyset$ for different 
$\tau,\tau'\in S$.
Combing with (\ref{eq-tra-sigma-sum-simgauleftarrowx}), we have
\begin{equation*}
\begin{aligned}
\tra{}(\Phi,\sigma_{u\leftarrow \star},u) +  \sum_{x\in Q_{u}}
\tra{}(\Phi,\sigma_{u\leftarrow x},v)
\le 
\left(\abs{\+T_{\sigma}}-1\right) \cdot \poly(n,\Delta,q^k) +  O\left(\sum_{\tau\in \+{L}_g(\+{T}_{\sigma})}\left( (k\abs{ \+{C}_v^\tau}+q\abs{ V_v^\tau})\cdot q^{\abs{ V_v^\tau}}\right)\right).
\end{aligned}
\end{equation*}
Combining with (\ref{eq-tra-upperbound}), we have
\begin{equation*}
\begin{aligned}
&\tra{}(\Phi,\sigma,v) \leq &\abs{\+T_{\sigma}} \cdot \poly(n,\Delta,q^k) +  O\left(\sum_{\tau\in \+{L}_g(\+{T}_{\sigma})}\left( (k\abs{ \+{C}_v^\tau}+q\abs{ V_v^\tau})\cdot q^{\abs{ V_v^\tau}}\right)\right),
\end{aligned}
\end{equation*}
which finishes the proof of the induction step. Then the lemma is immediate.
\end{proof}

Let $\sigma\in\+{Q}^*$ be a partial assignment such that only one variable $v\in V$ has $\sigma(v) = \star$.
The following lemma shows that the path  $\pth(\sigma) =
(\sigma_0,\sigma_1,\ldots,\sigma_{\ell})$ generated from such $\sigma$ cannot be too long. 
Its proof is deferred to \Cref{sec:proof of shortpath2}.

\begin{lemma}\label{shortpath2}
Let $\sigma\in\+{Q}^*$ %satisfy that $\abs{\{v\in V\mid \sigma(v)=\star\}}=1$,
be a partial assignment with exactly one variable $v\in V$ having $\sigma(v) = \star$, 
and let $\pth(\sigma) =
(\sigma_0,\sigma_1,\ldots,\sigma_{\ell})$.
Then it always holds that
\[
\ell\le kL\Delta^2. 
\]
\end{lemma}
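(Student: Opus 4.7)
The plan is to bound $\ell$ by showing that each step strictly enlarges $\Lambda^+$ and that the accessed variables remain confined to a small local region around $v$. First, at every step $i<\ell$ the picked variable $u_i = \nextvar{\sigma_i}$ lies in $\vinf{\sigma_i} \subseteq V^{\sigma_i} \setminus \vcon{\sigma_i}$, and so must be \emph{unaccessed}: if we had $\sigma_i(u_i) = \star$, then by \Cref{definition:boundary-variables} we would have $u_i \in \vst{\sigma_i} \subseteq \vcon{\sigma_i}$, contradicting $u_i \in \vinf{\sigma_i}$. Hence $\Lambda^+(\sigma_{i+1}) = \Lambda^+(\sigma_i) \cup \{u_i\}$ is a strict enlargement by one, and $\ell = |\Lambda^+(\sigma_\ell)| - |\Lambda^+(\sigma_0)| = |\Lambda^+(\sigma_\ell)| - 1$.

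Next, I partition the steps into \emph{$\star$-steps} (those with $\sigma_{i+1}(u_i) = \star$) and \emph{assignment steps} (those with $\sigma_{i+1}(u_i) \in Q_{u_i}$), with counts $s$ and $t = \ell - s$. Only $\star$-steps grow $\vst$, so $|\vst{\sigma_\ell}| = 1 + s$. Because the path did not terminate before step $\ell$, we have $f(\sigma_{\ell-1}) = \False$, i.e., $|\vst{\sigma_{\ell-1}}| + \Delta \cdot |\csfrozen{\sigma_{\ell-1}}| < L\Delta$; since one additional step can increase $|\vst|$ by at most one, $|\vst{\sigma_\ell}| \le L\Delta$, hence $s \le L\Delta - 1$.

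To bound $t$, I use that each picked $u_i \in \vinf{\sigma_i}$ shares a hyperedge $c_i \in \+C^{\sigma_i}$ with some witness $w_i \in \vcon{\sigma_i}$, together with the structural inclusion $\vcon{\sigma_i} \subseteq \vst{\sigma_i} \cup \vbl(\csfrozen{\sigma_i})$: every vertex of $\vcon$ is either a star-vertex or lies in the $\vbl$ of a frozen constraint hitting the star-component (the remaining possibility $\Lambda$ being excluded since $\vcon \subseteq V^\sigma$). The pre-truncation bounds $|\vst{\sigma_i}| \le L\Delta$ and $|\csfrozen{\sigma_i}| < L$ yield $|\vcon{\sigma_i}| \le L\Delta + kL$, and each witness vertex has at most $(k-1)\Delta$ hypergraph-neighbors. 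A charging argument assigning each assignment step to its $(w_i,c_i)$-pair then bounds $t$ by an $O(kL\Delta^2)$ quantity, and combining with $s \le L\Delta$ yields the claimed $\ell \le kL\Delta^2$.

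The main obstacle is the non-monotonicity of $\csfrozen$: an assignment step can create new frozen constraints (by pinning a variable that raises a violation probability above $\alpha$) and simultaneously destroy others (by satisfying them), so one must charge not to $\csfrozen{\sigma_\ell}$ but to the cumulative influence zone $\vst{\sigma_\ell} \cup \vbl\bigl(\bigcup_{i \le \ell} \csfrozen{\sigma_i}\bigr)$. The resolution is that each constraint newly entering $\csfrozen$ at some step $i$ must contain a variable just pinned (hence already in $\Lambda^+(\sigma_i)$), anchoring it to the already-accessed region, so the cumulative frozen set inflates the count only within the $kL\Delta^2$ budget.
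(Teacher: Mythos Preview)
Your charging strategy is the right one, but the argument has a real gap precisely where you flag the ``main obstacle,'' and your proposed fix is both unnecessary and incorrect.

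The obstacle is illusory: along $\pth(\sigma)$ the set $\csfrozen{\sigma_i}$ is monotone nondecreasing in $i$. This is \Cref{cor-mono} (imported from \cite{he2022sampling}). The key reason is that $u_i=\nextvar{\sigma_i}$ is never $\sigma_i$-fixed, so in particular $u_i\notin\vbl\bigl(\cfrozen{\sigma_i}\bigr)$; pinning $u_i$ therefore cannot unfreeze any currently frozen constraint, and one checks separately that $\ccon$ is monotone as well. Hence $\bigcup_{i\le\ell}\csfrozen{\sigma_i}=\csfrozen{\sigma_\ell}$ and no cumulative accounting is needed. Your patch is also wrong as stated: a constraint can newly enter $\csfrozen$ by newly entering $\ccon$ while having been in $\cfrozen$ all along, and such a constraint need not contain the just-pinned variable.

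With monotonicity in hand, the paper's argument is both shorter and sharper than yours. There is no need to split into $\star$-steps and assignment steps. One shows directly that for each $j<\ell$ the variable $u_j$ lies in some constraint $c_j$ which either contains a vertex of $\vst{\sigma_\ell}$ or intersects some $c'\in\csfrozen{\sigma_\ell}$ (the witness lives in $\vcon{\sigma_j}\subseteq\vst{\sigma_j}\cup\vbl(\csfrozen{\sigma_j})$, and monotonicity pushes the indices to $\ell$). The number of such constraints is at most $\Delta\bigl(|\vst{\sigma_\ell}|+|\csfrozen{\sigma_\ell}|\bigr)$, each containing at most $k$ variables, so
\[
\ell \;\le\; k\Delta\bigl(|\vst{\sigma_\ell}|+|\csfrozen{\sigma_\ell}|\bigr)\;\le\; k\Delta\bigl(|\vst{\sigma_\ell}|+\Delta|\csfrozen{\sigma_\ell}|\bigr),
\]
and the truncation bound at step $\ell-1$ (or $\ell$) gives exactly $kL\Delta^2$. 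Your vertex-level count $(L\Delta+kL)\cdot k\Delta$ would only give $kL\Delta^2+k^2L\Delta$, so even with the gap repaired your route loses a constant and you should count constraints rather than witness vertices.
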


The following lemma further relates the size of $\+C^{\sigma_{\ell}}_v$ with the sizes of $\vst{\sigma_{\ell}}$ and $\csfrozen{\sigma_{\ell}}$.
It is formally proved in \Cref{sec:proof of lem-cvxl-in-cbad2}. 

\begin{lemma}\label{lem-cvxl-in-cbad2}
Let $\sigma\in\qs$ %satisfy that $\abs{\{v\in V\mid \sigma(v)=\star\}}=1$,
be a partial assignment with exactly one variable $v\in V$ having $\sigma(v) = \star$. Let $ \pth(\sigma)=(\sigma_0,\sigma_1,\cdots,\sigma_{\ell}) $. If $\nextvar{\sigma_{\ell}}=\perp$, we have $\abs{\+C^{\sigma_{\ell}}_v} \leq  \Delta \cdot \abs{\csfrozen{\sigma_{\ell}}}+\Delta\cdot \abs{\vst{\sigma_{\ell}}}\leq   L\Delta^2$.
\end{lemma}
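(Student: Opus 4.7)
The plan is to establish the two inequalities separately: the first by a combinatorial charging argument exploiting $\nextvar{\sigma_\ell}=\perp$, and the second by invoking the truncation condition $f(\cdot)$ together with the path-termination structure.

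For the first inequality, I would observe that $\nextvar{\sigma_\ell}=\perp$ yields $\vinf{\sigma_\ell}=\emptyset$ by \Cref{definition:boundary-variables}, so $\vcon{\sigma_\ell}$ has no boundary in $H^{\sigma_\ell}$ and is therefore a union of whole connected components of $H^{\sigma_\ell}$. Since $\pth$ only updates variables $u=\nextvar{\cdot}$ whose current value is $\hollowstar$, the initial $\sigma(v)=\star$ is preserved throughout, so $v\in\vst{\sigma_\ell}\subseteq\vcon{\sigma_\ell}$, and the connected component of $v$ in $H^{\sigma_\ell}$ lies entirely inside $\vcon{\sigma_\ell}\subseteq V^{\sigma_\ell}\cap\vfix{\sigma_\ell}$. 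For each $c\in\+{C}^{\sigma_\ell}_v$, I would pick an unassigned $u\in\vbl(c)$, which by the previous remark lies in $\vfix{\sigma_\ell}$: either $u\in\Lambda^+(\sigma_\ell)$ (forcing $\sigma_\ell(u)=\star$, so $u\in\vst{\sigma_\ell}$), or $u\in\vbl(c')$ for some $c'\in\cfrozen{\sigma_\ell}$, in which case $c'\in\ccon{\sigma_\ell}$ (since $u\in\vcon{\sigma_\ell}$) and hence $c'\in\csfrozen{\sigma_\ell}$. Charge $c$ to $u$ in the first case and to $c'$ in the second. By the constraint-degree bound $\Delta$, each $u\in\vst{\sigma_\ell}$ is contained in at most $\Delta$ constraints and each $c'\in\csfrozen{\sigma_\ell}$ shares a variable with at most $\Delta$ constraints; summing the charges gives $\abs{\+{C}^{\sigma_\ell}_v}\leq\Delta\cdot\abs{\vst{\sigma_\ell}}+\Delta\cdot\abs{\csfrozen{\sigma_\ell}}$.

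For the second inequality, it suffices, using $\Delta\geq 1$, to show $\abs{\vst{\sigma_\ell}}+\Delta\cdot\abs{\csfrozen{\sigma_\ell}}\leq L\Delta$, because then
\[
\Delta\cdot\abs{\vst{\sigma_\ell}}+\Delta\cdot\abs{\csfrozen{\sigma_\ell}}\leq \Delta\bigl(\abs{\vst{\sigma_\ell}}+\Delta\cdot\abs{\csfrozen{\sigma_\ell}}\bigr)\leq L\Delta^2.
\]
When $\ell=0$, $\abs{\vst{\sigma_0}}=1$ and $\csfrozen{\sigma_0}=\emptyset$ (no constraint is frozen at the initial $\sigma_0$ since $\mathbb{P}[\neg c\mid\sigma_0]\leq p<\pprime$), so the bound is trivial. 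When $\ell\geq 1$ the path did not stop at step $\ell-1$, so $f(\sigma_{\ell-1})=\False$ gives $\abs{\vst{\sigma_{\ell-1}}}+\Delta\cdot\abs{\csfrozen{\sigma_{\ell-1}}}<L\Delta$ via \Cref{def:truncate-refined}; if $f(\sigma_\ell)=\False$ the bound passes to $\sigma_\ell$ directly, and otherwise one needs to control the single-step increment from $\sigma_{\ell-1}$ to $\sigma_\ell$ induced by assigning $u=\nextvar{\sigma_{\ell-1}}$ a value in $\qus{u}$.

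The main obstacle is this last single-step control. While $\abs{\vst{\cdot}}$ grows by at most one (only when $u$ is assigned $\star$), $\abs{\csfrozen{\cdot}}$ can exhibit a non-trivial jump: expanding $\vcon{\cdot}$ via the newly-$\star$-ed $u\in\vinf{\sigma_{\ell-1}}$ may sweep in frozen constraints that were previously not starred-connected, and a value assignment may further alter $\cfrozen{\cdot}$ itself. Handling this requires a careful analysis of how $\vcon{\cdot}$ extends through $u$ within the restricted hypergraph $\hfix{\sigma_\ell}$, exploiting the $\Delta$-slack between the inductive bound $L\Delta$ on $\abs{\vst{\cdot}}+\Delta\cdot\abs{\csfrozen{\cdot}}$ and the target $L\Delta^2$ on $\Delta(\abs{\vst{\cdot}}+\abs{\csfrozen{\cdot}})$.
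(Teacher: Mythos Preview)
Your argument for the first inequality is essentially the paper's. The paper phrases it as $\abs{\ccon{\sigma_\ell}}\le\Delta\abs{\csfrozen{\sigma_\ell}}+\Delta\abs{\vst{\sigma_\ell}}$ (showing that every $c\in\ccon{\sigma_\ell}$ either contains a variable of $\vst{\sigma_\ell}$ or shares a variable with some $c'\in\csfrozen{\sigma_\ell}$) and then implicitly uses $\+C^{\sigma_\ell}_v\subseteq\ccon{\sigma_\ell}$, which is exactly your observation that $\nextvar{\sigma_\ell}=\perp$ forces the connected component of $v$ in $H^{\sigma_\ell}$ to lie inside $\vcon{\sigma_\ell}$.

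For the second inequality you are both working too hard and making an error. The error: your claim that $\csfrozen{\sigma_0}=\emptyset$ because $\mathbb{P}[\neg c\mid\sigma_0]\le p$ is wrong. Nothing in the lemma assumes $\sigma$ is unassigned outside $v$; in the actual application $\sigma=X^{i-1}_{v\gets\star}$ has many assigned variables, so $\mathbb{P}[\neg c\mid\sigma_0]$ can be as large as $\pprime q$ and $\csfrozen{\sigma_0}$ can be nonempty. More importantly, the paper's own proof of this lemma establishes \emph{only} the first inequality. The bound $\le L\Delta^2$ is only ever used in \Cref{cor-depth-sizectauv-rct}, and only for leaves $\tau\in\+L_g(\+T_\sigma)$, where by definition $f(\tau)=\False$; in that case $\abs{\vst{\tau}}+\Delta\abs{\csfrozen{\tau}}<L\Delta$ gives the second inequality immediately via your one-line multiplication by $\Delta$. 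So the ``main obstacle'' you describe --- controlling the one-step jump of $\abs{\csfrozen{\cdot}}$ when $f(\sigma_\ell)=\True$ --- is neither addressed in the paper nor needed for any application. The lemma statement is simply a bit loose on this point; you should treat the hypothesis $\nextvar{\sigma_\ell}=\perp$ as coming paired with $f(\sigma_\ell)=\False$ and not chase the other case.
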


Combining Lemmas \ref{lem-pathinrct-largerprob-inpath}, \ref{shortpath2} and \ref{lem-cvxl-in-cbad2} , we have the following corollary.
\begin{corollary}\label{cor-depth-sizectauv-rct}
Let  $(\Phi,\sigma,v)$ satisfy \Cref{inputcondition-recalc}
and $v$ is the 
only vertex in $V$ with $\sigma(v) = \star$.
Recall that $\+{T}_{\sigma}=(T_{\sigma},\rho_{\sigma})$ is the RCT rooted at $\sigma$.
For each leaf $\tau$ in $T_{\sigma}$, we have
the depth of $\tau$ in $T_{\sigma}$ is no more than 
$k L\Delta^2$.
Moreover, if $\tau\in \+{L}_g(\+{T}_{\sigma})$, then
$\abs{\+C^{\tau}_v} \leq  \Delta \cdot \abs{\csfrozen{\tau}}+\Delta\cdot \abs{\vst{\tau}}\leq  L\Delta^2 $.
\end{corollary}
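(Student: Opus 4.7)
The proposal is essentially a mechanical combination of the three cited lemmas, tied together by the observation that every root-to-leaf path in the RCT $T_\sigma$ corresponds to a positive-probability realization of the random process $\pth(\sigma)$. First I would set up the correspondence: given a leaf $\tau$ of $T_\sigma$, let $P = (\tau_0 = \sigma, \tau_1, \ldots, \tau_r = \tau)$ be the unique path from the root $\sigma$ to $\tau$ in $T_\sigma$, so that the depth of $\tau$ in $T_\sigma$ is exactly $r$. By Lemma \ref{lem-pathinrct-largerprob-inpath} applied to $(\Phi, \sigma, v)$, we have $\Pr{\pth(\sigma) = P} > 0$, which means that $P$ is a realizable trajectory of the random sequence $\pth(\sigma)$ of length $\ell(\sigma) = r$.

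For the depth bound, since the hypothesis guarantees that $\sigma$ has exactly one starred variable (namely $v$), Lemma \ref{shortpath2} applies and gives $\ell(\sigma) \le kL\Delta^2$ deterministically for every realization of $\pth(\sigma)$. Combined with $\ell(\sigma) = r$ on the particular realization $P$, this yields $r \le kL\Delta^2$, which is the desired depth bound.

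For the second assertion, I would observe that $\tau \in \+{L}_g(\+{T}_\sigma)$ means $f(\tau) = \False$. Since $\tau$ is a leaf of $T_\sigma$, step 2(a) of Definition \ref{RCTdef} forces either $\nextvar{\tau} = \perp$ or $f(\tau) = \True$; since the latter is ruled out, we conclude $\nextvar{\tau} = \perp$. Thus the realization $P$ terminates at $\sigma_\ell = \tau$ with $\nextvar{\sigma_\ell} = \perp$, placing us exactly in the hypothesis of Lemma \ref{lem-cvxl-in-cbad2}, which in turn yields $\abs{\+C^{\tau}_v} \leq \Delta \cdot \abs{\csfrozen{\tau}} + \Delta \cdot \abs{\vst{\tau}} \leq L\Delta^2$ as required.

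The only substantive thing to verify — and really the only potential obstacle — is that the termination conditions for the RCT (step 2(a) of Definition \ref{RCTdef}) and for the process $\pth$ (step 1 of Definition \ref{pathdef}) match verbatim, both being triggered by $\nextvar = \perp$ or $f = \True$. This alignment ensures that the event $\{\pth(\sigma) = P\}$ genuinely corresponds to $\pth$ terminating at $\tau$ after exactly $r$ steps, so that the deterministic bounds from Lemmas \ref{shortpath2} and \ref{lem-cvxl-in-cbad2} transfer directly to the structural bounds on $T_\sigma$ claimed in the corollary. Beyond this bookkeeping check, no further work is needed.
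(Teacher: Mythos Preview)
Your proposal is correct and matches the paper's approach exactly: the paper simply states that the corollary follows by combining Lemmas~\ref{lem-pathinrct-largerprob-inpath}, \ref{shortpath2} and \ref{lem-cvxl-in-cbad2}, and you have spelled out precisely how that combination works, including the key observation that the termination conditions of the RCT and of $\pth$ coincide so that any root-to-leaf path is a positive-probability realization of $\pth(\sigma)$.
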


Now we can prove \Cref{lem-cvxl-in-cbad2-cor}.
\begin{proof}[Proof of \Cref{lem-cvxl-in-cbad2-cor}]
Given $(\Phi,\sigma,v)$ satisfying \Cref{inputcondition-calc},
by \Cref{Alg:calc}
the nontrivial costs in $\calc{}(\Phi,\sigma,v)$ is to calculate $ \recalc{}(\Phi,\sigma_{v\gets \star},v)$.
Thus, to prove the proposition, it is sufficient to prove that
$\tra(\Phi,\sigma_{v\leftarrow \star},v)\leq \poly(n,q^{kL\Delta^2})$.
Let $\tau = \sigma_{v\leftarrow \star}$.
By $(\Phi,\sigma,v)$ satisfies \Cref{inputcondition-calc},
we have $(\Phi,\tau,v)$ satisfies \Cref{inputcondition-recalc}
and $v$ is the only vertex in $V$ with $\tau(v)=\star$.
Thus we have 
\begin{equation*}
\begin{aligned}
&\sum_{X\in \+{L}_g(\+{T}_{\tau})}\left( (k\abs{ \+{C}_v^X}+q\abs{ V_v^X})\cdot q^{\abs{ V_v^X}}\right)
\leq \sum_{X\in \+{L}_g(\+{T}_{\tau})}\left( (k\abs{ \+{C}_v^X}+qk\abs{ \+{C}_v^X}+q)\cdot q^{k\abs{ \+{C}_v^X}+1}\right)\\
\leq &\sum_{X\in \+{L}_g(\+{T}_{\tau})}\left( (k L\Delta^2+q
k L\Delta^2)\cdot q^{2k L\Delta^2}\right) \leq \abs{\+T_{\tau}}\cdot (k L\Delta^2+2q
k L\Delta^2)\cdot q^{2k L\Delta^2},
\end{aligned}
\end{equation*}
where the first inequality is by 
$\abs{ V_v^\tau} \leq k\abs{ \+{C}_v^\tau}+1$, the second one is by 
 \Cref{cor-depth-sizectauv-rct},
and the last one is by $\+{L}_g(\+{T}_{\tau})\leq \abs{\+T_{\tau}}$.
Combining with \Cref{RCTtime}, we have 
$$
\tra{}(\Phi,\tau,v)\le \abs{\+T_{\tau}} \cdot \poly(n,\Delta,q^k) +  \abs{\+T_{\tau}}\cdot O\left((k L\Delta^2+2q
k L\Delta^2)\cdot q^{2k L\Delta^2}\right) \leq \abs{\+T_{\tau}}\cdot \poly(n,q^{k L\Delta^2}).
$$
By \Cref{cor-depth-sizectauv-rct}, we have the depth of $T_{\tau}$ is at most $kL\Delta^2$. In addition, we have $$\abs{\+T_{\tau}}\leq \sum\limits_{i=0}^{kL\Delta^2}(1+q)^{i}\leq 2(1+q)^{kL\Delta^2}$$ by $T_{\tau}$ is a tree where each node has at most $q+1$ children,
Therefore, we have 
$$
\tra{}(\Phi,\tau,v)\leq \abs{\+T_{\tau}}\cdot \poly(n,q^{k L\Delta^2}) \leq \poly(n,q^{k L\Delta^2}),
$$
which finishes the proof.
\end{proof}

 Recall the definition of $X^n$ in \Cref{def-pas-cmain}.
 We need the following lemma, which is an analogy of \Cref{WTsize},
 whose proof is deferred to \Cref{sec:proof of lem-bigWT-rs}.

\begin{lemma}\label{lem-bigWT-rs}
For every $v\in V$, 
there exists a generalized $\{2,3\}$-tree $T = \{v\}\circ E$ in $H_{\Phi}$ where
$
E\subseteq \cfrozen{X^n}$
and 
$
\Delta^2\abs{E}\geq \abs{\+{C}^{X^n}_v}$.
In addition, if $\abs{\+{C}^{X^n}_v}\geq L\Delta^2$, then there exists a generalized $\{2,3\}$-tree $T=\{v\}\circ E$  in $H_{\Phi}$ with some auxiliary tree rooted at $v$ where
$ E\subseteq \cfrozen{X^n}$ and $L\leq 1+\Delta\cdot \abs{E}\leq L\Delta$.
\end{lemma}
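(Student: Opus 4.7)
The plan is to prove the lemma in three phases: a structural fact about $X^n$, a greedy construction of the generalized $\{2,3\}$-tree, and the resulting size bounds.

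\textbf{Structural fact.} First I would show by induction on the steps of the main counting algorithm that every unassigned variable in $V^{X^n}$ lies in some $X^n$-frozen constraint. The induction is driven by the following observation: whenever the algorithm assigns $v_i$, $v_i$ is by definition not in any $X^{i-1}$-frozen constraint, so $\mathbb{P}[\neg c \mid X]$ is unaffected by that assignment for any such $c$; hence every $X^{i-1}$-frozen constraint remains $X^i$-frozen. When $v_i$ is not assigned, it retains its frozen-constraint witness from step $i-1$. Setting $F_v \triangleq \cfrozen{X^n} \cap \+C^{X^n}_v$, this yields a distance-$1$ covering of the full component $\+C^{X^n}_v$ in $\Lin{H_\Phi}$: for every $c \in \+C^{X^n}_v$, any unassigned variable $u \in \vbl(c)$ lies in some frozen $f \ni u$, and $f$ shares $u$ with $c$, so $f \in F_v$ and $\text{dist}_{\Lin{H_\Phi}}(c,f) \leq 1$.

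\textbf{Greedy construction.} If $v \in \Lambda(X^n)$ then $\+C^{X^n}_v = \emptyset$ and $E = \emptyset$ trivially works, so assume $v$ is unassigned and fix a frozen $c_0 \in F_v$ containing $v$. I would grow $E \subseteq F_v$ by a BFS-style greedy over $F_v$: select an initial element that guarantees the $v$-to-$E$ arc required for the auxiliary tree (take $c_0$ itself if $v$ lies in a second constraint; otherwise take a frozen neighbor of $c_0$ reached via some unassigned variable of $c_0$, which must exist by Step~1 applied to $c_0$'s other variables), then iteratively add a frozen $c \in F_v$ at line-graph distance exactly $2$ or $3$ from some already-selected element while keeping pairwise distances $\geq 2$. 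The greedy can always be advanced as long as some $c' \in F_v$ has distance $\geq 2$ from $E$, because $F_v$ is connected via chains of line-graph distance $\leq 3$: any path $c = c_0', c_1', \ldots, c_m' = c'$ in $\Lin{H^{X^n}} $ inside $\+C^{X^n}_v$ can be lifted through the distance-$1$ covering to a sequence in $F_v$ with consecutive hops of length $\leq 1+1+1 = 3$, and following such a chain from a current $E$-element identifies the next admissible $c$ at distance $2$ or $3$.

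\textbf{Size bounds and main obstacle.} For the first claim I grow $E$ to maximality; then every frozen constraint in $F_v$ lies within distance $1$ of $E$, hence every $c \in \+C^{X^n}_v$ lies within distance $2$ of $E$, and the radius-$2$ ball count $1 + (\Delta-1) + (\Delta-1)^2 \leq \Delta^2$ gives $|\+C^{X^n}_v| \leq \Delta^2 |E|$. For the second claim, if $|\+C^{X^n}_v| \geq L\Delta^2$ the maximal $|E|$ is at least $L$, so I halt the greedy at $|E| = \lceil (L-1)/\Delta \rceil$, which satisfies $L \leq 1 + \Delta|E| \leq L + \Delta - 1 \leq L\Delta$ using $L \geq 2$. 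The main obstacle is verifying that each greedy step finds an admissible constraint at line-graph distance exactly $2$ or $3$ (not $\geq 4$) from the current $E$; without this, the $E$-to-$E$ arcs required by \Cref{WTdef} would be missing. This is precisely what the distance-$\leq 3$ connectivity of $F_v$ delivers. A secondary subtlety is setting up the initial arc from $v$ when $v$ belongs to only one constraint, handled by seeding $E$ with a frozen neighbor of $c_0$ rather than $c_0$ itself.
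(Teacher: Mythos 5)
Your high-level strategy mirrors the paper's: establish that every variable unassigned in $X^n$ lies in some $X^n$-frozen constraint (this is Lemma~\ref{lem-subset-rs}), then greedily assemble a generalized $\{2,3\}$-tree from the frozen constraints $F_v$ in $v$'s component, and bound $\abs{\+C^{X^n}_v}$ by a ball-counting argument. However, there is a genuine gap at precisely the step you flag as the main obstacle. Your lift of a path $c_0',\dots,c_m'$ in $\Lin{H^{X^n}}$ to a sequence in $F_v$ yields consecutive hops of line-graph distance at most $3$, and that is one too many. If some intermediate $g_{j-1}$ in your chain sits at distance $1$ from the current $E$ (this is legal: $g_{j-1}\in F_v$ need not be in $E$, and indeed cannot be added because of the pairwise-distance-$\ge 2$ requirement in \Cref{WT-1} of \Cref{WTdef}), then $g_j$ may sit at distance $1+3=4$ from $E$. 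At distance $4$, no element of $E$ is at distance $2$ or $3$ from $g_j$, so no auxiliary-tree arc is available and the greedy stalls. Consequently ``maximality'' of $E$ does not force every element of $F_v$ to lie within distance $1$ of $E$, and your bound $\abs{\+C^{X^n}_v}\le\Delta^2\abs{E}$ does not follow from the construction as stated.

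The fix is a more careful choice of witness: for each edge $\{c_i',c_{i+1}'\}$ of the path, let $w_i$ be the shared unassigned variable and let $f_i$ be a frozen constraint containing $w_i$. Then $f_i$ is at line-graph distance $\le 1$ from \emph{both} $c_i'$ and $c_{i+1}'$, so consecutive witnesses satisfy $\text{dist}_{\Lin{H_\Phi}}(f_i,f_{i+1})\le\text{dist}_{\Lin{H_\Phi}}(f_i,c_{i+1}')+\text{dist}_{\Lin{H_\Phi}}(c_{i+1}',f_{i+1})\le 2$, not merely $\le 3$. With $\le 2$-hop connectivity, the first chain element at distance $\ge 2$ from $E$ actually has distance at most $1+2=3$, so it is always addable and the greedy never stalls. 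This is exactly what the paper encodes via the graph $\gvc$ of \Cref{def:graphvc}, whose constraint-to-constraint edges are defined by line-graph distance $\le 2$: the paper proves that the frozen constraints of the component induce a connected subgraph of $\gvc$ (using the shared-variable witness as above) and then invokes the greedy construction of \Cref{bigWT}. The remaining parts of your argument --- the induction showing once-frozen constraints stay frozen, the seeding so that the root-to-$E$ arc exists, and halting at $\abs{E}=\lceil(L-1)/\Delta\rceil$ to obtain $L\le 1+\Delta\abs{E}\le L\Delta$ --- are sound.
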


We are now ready to prove \Cref{lem-exeef2}.

\begin{proof}[Proof of \Cref{lem-exeef2}]
For simplification, let $\tau = X^n$.
Let $\set{(V^{\tau}_i,\+{C}^{\tau}_i)\mid 1\leq i\leq K}$ denote all connected components in $H_{\Phi^{\tau}}$.
By the standard guarantee on the running time of exhaustive enumeration, it is sufficient to show that $\abs{V^{\tau}_i}\leq kL\Delta^2+1$ for each $i\leq K$.
Assume for contradiction that there exists some $i\in [K]$ such that $\abs{V^{\tau}_i}> kL\Delta^2+1$.
Then we have $\abs{\+{C}^{\tau}_v}> L\Delta^2$ by 
$\abs{ V_v^\tau} \leq k\abs{ \+{C}_v^\tau}+1$.
Combining with \Cref{lem-bigWT-rs}, we have there exists a generalized $\{2,3\}$-tree $T=\{v\}\circ E$  in $H_{\Phi}$ with some auxiliary tree rooted at $v$ where $E\subseteq \cfrozen{\tau}$ and $L\leq 1+\Delta\cdot \abs{E}\leq L\Delta$. Let $\+{U}$ denote the set $\{Y\in \+{T}^{t}_{v}: L\leq t\leq L\Delta\}$, then we have $T\in \+{U}$.
Therefore, we have
\begin{equation*}
    \begin{aligned}
    &\digamma(\tau)\\
(\text{by \eqref{eq-definition-F-2}})\quad=   &\sum\limits_{i=L}^{L\Delta}\sum\limits_{v\in V}\sum\limits_{S\in\+{T}^{i}_{v}}F(\tau,S\setminus \set{v})\\
(\text{by $F(\cdot,\cdot)\geq 0$, and $T\in \+{U}$})\quad\geq   &F(\tau,T\setminus \{v\})\\
(\text{by \eqref{eq-def-fsimgat}})\quad=   &\prod\limits_{c\in E}\pprime^{-1}\mathbb{P}[\neg c\mid \tau](1+\eta)^k\\
(\text{by $(1+\eta)^k>1$})\quad\geq   &\prod\limits_{c\in E}\left(\pprime^{-1}\mathbb{P}[\neg c\mid \tau]\right)\\
(\text{by $E\subseteq \cfrozen{\tau}$})\quad\geq  &1,
    \end{aligned}
\end{equation*}
which is contradictory with $\digamma(\tau)<1$.
Thus, we have $\abs{V^{\tau}_i}\leq kL\Delta^2+1$ for each $i\leq K$ and the lemma is immediate.
\end{proof}

\subsection{Analysis of the main counting algorithm}

We are now ready to present the analysis for the main counting algorithm. Recall the sequence of partial assignments $X^0,X^1,\dots,X^n$ that evolve in the main counting algorithm defined in \Cref{def-pas-cmain}.

A crucial lemma we will show is the following, which states $\digamma(X)$ is small throughout the main counting algorithm.

\begin{lemma}\label{inithcor2}
If $16\mathrm{e}p\Delta^3\leq \pprime$, $\eta\leq (2k)^{-1}$, $1-q\theta\leq (8\mathrm{e}k\Delta)^{-1}$ and $L\geq 9$, then it holds for all $0\le i\le n$ that $$\digamma(X^i)<8n\Delta\cdot 2^{-\lfloor\frac{L}{2\Delta}\rfloor}.$$
\end{lemma}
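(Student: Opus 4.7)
The plan is to upper bound $\digamma(X^i)$ by (i) bounding the per-element factors in the weight $F$, (ii) counting generalized $\{2,3\}$-trees rooted at a fixed variable $v$, and (iii) summing these contributions geometrically in the size parameter $t=\abs{U}+\Delta\abs{E}$.

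\textbf{Step 1 (weight per element).} The hypothesis $1-q\theta\leq(8\mathrm{e}k\Delta)^{-1}$ directly bounds the per-vertex factor appearing in $F(X^i,T\setminus\{v\})$. For each constraint $c\in E$, I combine the crude bound $\mathbb{P}[\neg c\mid X^i]\leq p\cdot q^k$ (conditioning on at most $k$ variables of $\vbl(c)$ multiplies the unconditional $\mathbb{P}[\neg c]\leq p$ by at most $q^{|\vbl(c)\cap\Lambda(X^i)|}\leq q^k$) with the estimate $(1+\eta)^k\leq\mathrm{e}^{k\eta}\leq\mathrm{e}^{1/2}<2$ from $\eta\leq(2k)^{-1}$, and with $16\mathrm{e}p\Delta^3\leq\pprime$, to obtain $\pprime^{-1}\mathbb{P}[\neg c\mid X^i](1+\eta)^k\leq q^k/(8\mathrm{e}\Delta^3)$.

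\textbf{Step 2 (counting trees).} Inspecting the arc relations of \Cref{WTdef}, the maximum degree of any node in the auxiliary graph $G(T,\mathcal{A})$ embedded in $H_\Phi$ is at most $M=O(k^3\Delta^3)$, the worst case being $E$-to-$E$ arcs corresponding to distance-$3$ neighbours in $\Lin{H_\Phi}$. A standard DFS/branching-process encoding of rooted subtrees of a graph with max degree $M$ then yields that the number of generalized $\{2,3\}$-trees rooted at $v$ with $\abs{U}+\abs{E}=s$ is at most $(\mathrm{e}M)^{s-1}$, up to a binomial factor for the $U$-versus-$E$ labelling of the tree nodes.

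\textbf{Step 3 (geometric sum).} For fixed $t$, I sum over admissible compositions $(\abs{U},\abs{E})$ with $\abs{U}+\Delta\abs{E}=t$, combining the bounds of Steps~1 and 2. I aim to show the per-$t$-unit contribution decays geometrically at rate at least $2^{-1/(2\Delta)}$, so that summing over $v\in V$ (factor $n$) and $t\in[L,L\Delta]$ yields, after a geometric-series estimate, $\digamma(X^i)\leq 8n\Delta\cdot 2^{-\lfloor L/(2\Delta)\rfloor}$. The hypothesis $L\geq 9$ controls the lower-order constants so that the geometric tail absorbs them cleanly.

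\textbf{Main obstacle.} The delicate step is Step~3. Naively, the per-step counting factor $O(k^3\Delta^3)$ times the per-constraint weight $q^k/(8\mathrm{e}\Delta^3)$ is $O(k^3q^k/\mathrm{e})$, which is not automatically less than $1$. The key reason for using the \emph{generalized} $\{2,3\}$-tree with size parameter $\abs{U}+\Delta\abs{E}$ (rather than $\abs{U}+\abs{E}$) is that each $E$-node occupies $\Delta$ units of $t$, so the effective per-$t$-unit contribution is the $\Delta$-th root $(k^3q^k/\mathrm{e})^{1/\Delta}$, not the raw one. Under the stronger hypothesis $q^2kp\Delta^5\leq 1/(256\mathrm{e}^3)$, which sharpens $\pprime^{-1}pq^k$ beyond the naive $1/(8\mathrm{e}\Delta^3)$, this $\Delta$-th root is bounded by $2^{-1/(2\Delta)}$, producing the required decay and ultimately the claimed bound $8n\Delta\cdot 2^{-\lfloor L/(2\Delta)\rfloor}$.
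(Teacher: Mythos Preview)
Your approach has a genuine gap: you try to bound $\digamma(X^i)$ directly for arbitrary $i$, and in Step~1 you invoke the crude estimate $\mathbb{P}[\neg c\mid X^i]\le p\cdot q^k$. This extra factor $q^k$ per constraint is fatal; it cannot be absorbed by the $\Delta$-th-root trick you sketch in Step~3, since nothing in the hypotheses $16\mathrm{e}p\Delta^3\le\pprime$, $\eta\le(2k)^{-1}$, $1-q\theta\le(8\mathrm{e}k\Delta)^{-1}$ controls $q^k$. (The stronger condition \eqref{eq:main-thm-LLL-condition} contains $q^2$, not $q^k$, and is in any case not among the hypotheses of this lemma.) Your Step~3 never actually carries out the claimed estimate, and it would not close.

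The paper avoids this entirely by splitting the argument into two pieces. First, it bounds $\digamma$ only at the \emph{empty} assignment $X^0=\hollowstar^V$, where $\mathbb{P}[\neg c\mid\hollowstar^V]\le p$ with no $q^k$ loss; the counting of generalized $\{2,3\}$-trees is then done via a two-type Galton--Watson process and Lagrange inversion, yielding $\digamma(\hollowstar^V)<8n\Delta\cdot 2^{-\lfloor L/(2\Delta)\rfloor}$ (this is \Cref{inith-2}). Second, it proves the exact averaging identity $\sum_{x\in Q_u}\digamma(\sigma_{u\gets x})=q_u\cdot\digamma(\sigma)$ (\Cref{potentialfact2}), which combined with the algorithm's choice $X(v_i)=\arg\min_a\digamma(X^{i-1}_{v_i\gets a})$ in Line~\hyperlink{line2b}{2(b)} gives $\digamma(X^i)\le q_{v_i}^{-1}\sum_a\digamma(X^{i-1}_{v_i\gets a})=\digamma(X^{i-1})$. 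The monotonicity $\digamma(X^i)\le\digamma(X^0)$ then finishes. You are missing both the base-case-plus-induction structure and the crucial observation that the greedy rule in the algorithm is precisely what makes $\digamma$ nonincreasing.
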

 
 We first show the following lemma, which states that $\digamma(\hollowstar^V)$ is small.
\begin{lemma}\label{inith-2}
Under the condition of \Cref{inithcor2}, $$\digamma(\hollowstar^V)<8n\Delta\cdot 2^{-\lfloor\frac{L}{2\Delta}\rfloor}.$$
\end{lemma}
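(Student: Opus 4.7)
The plan is to unfold $\digamma(\hollowstar^V)$ via \eqref{eq-definition-F-2}, derive a per-tree weight bound that decays in $|U|+\Delta|E|$, and then bound the weighted count of generalized $\{2,3\}$-trees rooted at $v$ using an auxiliary-tree counting argument. For $\sigma = \hollowstar^V$ we have $\mathbb{P}[\neg c \mid \sigma] = \mathbb{P}[\neg c] \leq p$; the assumption $\eta \leq (2k)^{-1}$ gives $(1+\eta)^k \leq e^{k\eta} \leq e^{1/2} \leq 2$, and then $16\mathrm{e}p\Delta^3 \leq \pprime$ yields $\pprime^{-1}\mathbb{P}[\neg c](1+\eta)^k \leq 1/(8\mathrm{e}\Delta^3)$. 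Combined with $1 - q\theta \leq 1/(8\mathrm{e}k\Delta)$, any $T = U \circ E \in \+{T}^i_v$ with $|U|=j\geq 1$, $|E|=m$, $j+\Delta m = i$ satisfies $F(\hollowstar^V, T \setminus \{v\}) \leq (8\mathrm{e}k\Delta)^{-(j-1)}(8\mathrm{e}\Delta^3)^{-m}$ by \eqref{eq-def-fsimgat}.

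Next I count the generalized $\{2,3\}$-trees by their auxiliary trees. Each $T \in \+{T}^i_v$ admits a rooted auxiliary tree at $v$ whose arcs lie in the auxiliary digraph on $V \cup \+C$ defined by the four cases of \Cref{WT-2}. A direct count using the variable-incidence bound $\Delta$ and the constraint-width bound $k$ shows that in this digraph the out-degree of a vertex-node is at most $k\Delta + \Delta^2$ (at most $k\Delta$ vertex-children in a commonly-incident constraint, and at most $\Delta^2$ constraint-children at $\Lin{H}$-distance $1$ from an incident constraint), and that of a constraint-node is at most $k\Delta^2 + 2\Delta^3$ (at most $k\Delta^2$ vertex-children contained in constraints at $\Lin{H}$-distance $\leq 2$, and at most $(\Delta-1)^2 + (\Delta-1)^3 \leq 2\Delta^3$ constraint-children at $\Lin{H}$-distance $2$ or $3$). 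Enumerating auxiliary trees of a given $(j, m)$-profile rooted at $v$ via standard rooted-tree-counting (in the style of Kraft's inequality or an Otter-like generating-function bound) and pairing each weight factor with its matching branching gives per-arc weighted factors $(8\mathrm{e})^{-1}$ for $U\to U$ arcs, $(8\mathrm{e}\Delta)^{-1}$ for $U\to E$ arcs, $\Delta/(8\mathrm{e})$ for $E\to U$ arcs, and $(4\mathrm{e})^{-1}$ for $E\to E$ arcs.

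The main obstacle is the divergent $E\to U$ weighted branching factor $\Delta/(8\mathrm{e})$, which grows with $\Delta$. To handle it I will charge each constraint-node with $\Delta$ units of effective size (matching the exponent $|U|+\Delta|E|$ in the definition of $\+{T}^i_v$) and bundle $\Delta$ adjacent vertex-weight factors with each $E\to U$ arc; after bundling, every unit of the effective size contributes a factor strictly below $1$ uniformly in $\Delta$, yielding
\[
\sum_{v \in V} \sum_{T \in \+{T}^i_v} F(\hollowstar^V, T \setminus \{v\}) \leq n \cdot 2^{-i/(2\Delta)}.
\]
Summing over $L \leq i \leq L\Delta$ via a geometric series, whose ratio $1 - 2^{-1/(2\Delta)} \geq (\ln 2)/(4\Delta)$ gives a total of at most $8\Delta \cdot 2^{-L/(2\Delta)}$ (and where $L \geq 9$ ensures the geometric tail is comfortably dominated by its leading term), we conclude $\digamma(\hollowstar^V) < 8n\Delta \cdot 2^{-\lfloor L/(2\Delta) \rfloor}$, as required. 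The bundling trick is precisely what motivates the $|U| + \Delta|E|$ weighting in the refined $\{2,3\}$-tree; without it one would lose a factor $\Delta^{|E|}$ in the count.
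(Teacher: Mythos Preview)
Your high-level plan matches the paper's: unfold $\digamma(\hollowstar^V)$, bound $F(\hollowstar^V,T\setminus\{v\})$ termwise, pass from generalized $\{2,3\}$-trees to their auxiliary trees, and control a weighted tree count rooted at $v$. You also correctly isolate the central difficulty, namely that the weighted $E\to U$ branching factor is $\Theta(\Delta)$ and does not decay.

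The gap is in how you resolve that difficulty. The sentence ``bundle $\Delta$ adjacent vertex-weight factors with each $E\to U$ arc; after bundling, every unit of the effective size contributes a factor strictly below $1$'' is not a proof. A single $E\to U$ arc points to one vertex, and nothing in the tree structure guarantees $\Delta$ vertex-factors are available to absorb the $\Delta$ blow-up from that arc; a constraint node could have many vertex children (up to $k\Delta^2$), each contributing its own $\Delta/(8\mathrm{e})$, while the constraint itself only supplies $\Delta$ units of effective size in total. So a per-arc or per-node charging scheme of the kind you sketch does not close. You would need to show, with an explicit injection or inequality, that the weighted sum over all auxiliary trees of effective size $i$ is at most $2^{-i/(2\Delta)}$, and your outline does not provide one.

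The paper handles this by a genuine additional idea: it sets up a two-type Galton--Watson generating system
\[
f_1 = x(1+p_1 f_1)^{k\Delta}(1+p_2 f_2)^{\Delta^2},\qquad
f_2 = x^{\Delta}(1+p_1 f_1)^{k\Delta^2}(1+p_2 f_2)^{\Delta^3},
\]
observes the algebraic identity $f_2=f_1^{\Delta}$ (this is exactly where the ``$\Delta$ units per constraint'' is cashed in, but now rigorously), and then applies Lagrange inversion to extract $[x^m]f_1$. The resulting coefficient is a single convolution sum, which the paper splits at $\lfloor m/(2\Delta)\rfloor$ and bounds each half separately using $\binom{\beta}{\gamma}\le(\mathrm{e}\beta/\gamma)^\gamma$. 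This is the step your proposal is missing; without something equivalent to the $f_2=f_1^{\Delta}$ reduction plus Lagrange inversion (or a carefully stated multi-type tree-counting lemma with the same content), the argument does not go through.
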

\begin{proof}
Recall by \eqref{eq-definition-FT-2} and \eqref{eq-definition-F-2} that 
\begin{equation}\label{fexp}
    \begin{aligned}
    \digamma(\hollowstar^V)=\sum\limits_{i=L}^{L\Delta}\sum\limits_{v\in V}\sum\limits_{T\in\+{T}^{i}_{v}}\left( \left(1-q\theta\right)^{\abs{U}-1}\cdot \left(p\pprime^{-1} (1+\eta)^k\right)^{\abs{E}}\right)
    \end{aligned}
\end{equation}

By \eqref{eq-definition-FT-2}, in order to bound the term in $\digamma(\hollowstar^V)$, instead of bounding \eqref{fexp} over all possible generalized $\{2,3\}$-tree $T$,  it is sufficient to bound the term in \eqref{fexp} over all possible auxiliary trees $T^*$. This is because distinct generalized $\{2,3\}$-trees must have distinct auxiliary trees by \Cref{WTdef}.

Fix any $v\in V$. We define the following multi-type Galton-Watson process that generates all possible auxiliary trees of some generalized $\{2,3\}$-tree rooted at $v$ as a subgraph with the corresponding probability. 

\begin{definition}[A multi-type Galton-Watson process generating auxiliary trees as subgraphs]\label{GWprocess}
For each $v\in V$, we define the following multi-type Galton-Watson process that generates a rooted directed tree $T^*$ with vertex set $V(T^*)\subseteq V\cup \+{C}$ that generates all possible auxiliary trees of generalized $\{2,3\}$-trees rooted at $v$ as a subgraph with the corresponding probability:

\begin{enumerate}
    \item The root of $T^*$ is $v$ and the depth of $v$ is $0$.
    \item For $i=0,1,\dots,$: for all nodes $v\in V(T^*)$ of depth $i$ in the current $T^*$
    \begin{enumerate}
        \item If $v\in V$:
        \begin{enumerate}
            \item For each $u\in V$ such that there exists $c\in \+{C}$ where $v,u\in \var{c}$,  add $u$ as a child of $v$ independently with probability $1-q\theta$. Note that there are at most $k\Delta$ such $u$.
            \item For each $c\in \+{C}$ such that there exists $c'\in \+{C}$ where $v\in \var{c'}\wedge \text{dist}_{G(\+{C})}(c,c')=1$, add $c$ as a child of $v$ independently with probability $p\pprime^{-1}(1+\eta)^k$. Note that there are at most  $\Delta^2$ such $c$.
        \end{enumerate}
        \item If $v\in \+{C}$:
        \begin{enumerate}
            \item For each $u\in V$ such that there exists $c\in \+{C}$ where $u\in \var{c}\wedge \text{dist}_{G(\+{C})}(v,c)=1\text{ or }2$,  add $u$ as a child of $v$ independently with probability $1-q\theta$. Note that there are at most  $k\Delta^2$ such $u$.
            \item For each $c\in \+{C}$ such that $\text{dist}_{G(\+{C})}(c,c')=2\text{ or }3$,  add $c$ as a child of $v$ independently with probability $p\pprime^{-1} (1+\eta)^k$. Note that there are at most  $\Delta^3$ such $c$.
        \end{enumerate}
    \end{enumerate}
\end{enumerate}
\end{definition}
Note that the process in \Cref{GWprocess} may generate trees that violate the rule of auxiliary trees, as vertices may be repeated and constraints in the tree may not be pairwise disjoint. Nevertheless, by comparing with \Cref{WTdef}, it can be verified that this process generates all possible auxiliary trees $T$  rooted at $v$ of some generalized $\{2,3\}$-tree as a subgraph with probability exactly $F(\hollowstar^V,V(T))$. By \eqref{eq-definition-F-2}, it then suffices to bound the sum of the probability such process generating a tree $T$ as a subgraph over all possible auxiliary trees $T$ rooted at $v$ satisfying  $L\leq \abs{V(T^*)\cap V}+\Delta\cdot \abs{V(T^*)\cap \+C}\leq L\Delta$. However, this is still not convenient enough for calculation, so we further define the following two-type Galton-Watson process.

\begin{definition}[A two-type Galton-Watson process]\label{GWprocess2}
We define the following multi-type Galton-Watson process that generates a tree $T$ with vertex set $V(T)$ consisting nodes of two types.

\begin{enumerate}
    \item The root of $T$, $r_T$, is of type $1$ or type $2$, and of depth $0$.
    \item For $i=0,1,\dots,$: for all nodes $v\in V(T)$ of depth $i$ in the current $T$
    \begin{enumerate}
        \item If $v$ is of type $1$:
        \begin{enumerate}
            \item Independently repeat $k\Delta$ times: generate a node of type $1$ as a child of $v$ with probability $1-q\theta$.
            \item Independently repeat $\Delta^2$ times: generate a node of type $2$ as a child of $v$ with probability $p\pprime^{-1}(1+\eta)^k$.
        \end{enumerate}
        \item If $v$ is of type $2$:
        \begin{enumerate}
            \item Independently repeat $k\Delta^2$ times: generate a node of type $1$ as a child of $v$ with probability $1-q\theta$.
            \item Independently repeat $\Delta^3$ times: generate a node of type $2$ as a child of $v$ with probability $p\pprime^{-1}(1+\eta)^k$.
        \end{enumerate}
    \end{enumerate}
\end{enumerate}
\end{definition}

It is easy to construct an injection between each tree generated as a subgraph by the process in \Cref{GWprocess} and each tree generated as a subgraph by  the process in \Cref{GWprocess2} with the same probability. Then it is sufficient to bound the sum of probability of  the process in \Cref{GWprocess2} generating a tree $T$ as a subgraph over all $T$ satisfying $ L\leq a+\Delta\cdot b\leq L\Delta$ with root $r_T$, where $a$ and $b$ represent the number of type $1$ nodes and type $2$ nodes in $V(T)$, respectively. 

If $r_T$ is of type 1, 
let $f_1(x)$ be the generating function for the random tree generated in  the process in \Cref{GWprocess2} where for each $m\geq 0$, the $m$-th coefficient $[x^m]f_1(x)$ represents the sum of probability that  the process in \Cref{GWprocess2} generates a tree $T$ as a subgraph over all directed tree $T$ rooted at $r_T$ satisfying $a+\Delta\cdot b=m$, where $a,b$ are the numbers of type 1 and type 2 nodes in $V(T)$, respectively. 
Otherwise, $r_T$ is of type 2, and we define $f_2(x)$ similarly.
Let $p_1=1-q\theta$ and $p_2= p\pprime^{-1}\cdot (1+\eta)^k$.
By \Cref{GWprocess2} 
we have 
\begin{align*}
    f_1 &= x(1+p_1f_1)^{k\Delta}\cdot (1+p_2f_2)^{\Delta^2},\\
    f_2 &= x^{\Delta}(1+p_1f_1)^{k\Delta^2}\cdot (1+p_2f_2)^{\Delta^3}.
\end{align*}
Thus, we have 
$f_2 = f_1^{\Delta}$ and then
\begin{align*}
    f_1 &= x(1+p_1f_1)^{k\Delta}\cdot (1+p_2f_1^{\Delta})^{\Delta^2}.
\end{align*}
Let $g$ be the functional inverse of $f_1$. Formally,
$g(f_1(x))=f_1(g(x))=x$.
By $f_1(0)=0$, we also have $g(0) = 0$ and
$$\lim_{y\to 0} \frac{f_1(y) - f_1(0)}{y - 0} = [x]f_1(x) \geq 1.$$
Thus, we have 
$$g'(0) = \lim_{u\to 0} \frac{g(u) - 0 }{u - 0} = \lim_{f_1(y)\to 0} \frac{y - 0 }{f_1(y) - 0}  \neq 0,$$
where the last equality is by $f_1(0) = 0$.

By $g(0)=0$ and $g'(0)\neq 0$, the condition of Lagrange inversion theorem is satisfied. 
By applying the theorem we have for $m \geq 1$,
\begin{equation}
\begin{aligned}\label{eq-efficiency-generationfuncition}
   \relax [x^{m}]f_1(x)&=\frac{1}{m}[u^{m-1}]\left(\frac{u}{g(u)}\right)^{m}\\
   &=\frac{1}{m}[u^{m-1}]\left((1+p_1u)^{k\Delta}\cdot (1+p_2u^{\Delta})^{\Delta^2}\right)^{m}\\
   &= \frac{1}{m}\sum\limits_{i=0}^{\lfloor\frac{m}{\Delta}\rfloor}\left([u^{(m-1-\Delta i)}]\left(1+p_1u\right)^{k\Delta m}\cdot [u^{(\Delta i)}]\left(1+p_2u^{\Delta}\right)^{\Delta^2 m} \right)\\
   &\leq \frac{1}{m}\left(\sum\limits_{i=0}^{\lfloor\frac{m}{2\Delta}\rfloor} [u^{(m-1-\Delta i)}]\left(1+p_1u\right)^{k\Delta m}+\sum\limits_{i=\lfloor\frac{m}{2\Delta}\rfloor+1}^{\lfloor\frac{m}{\Delta}\rfloor}[u^{(\Delta i)}]\left(1+p_2u^{\Delta}\right)^{\Delta^2 m}\right)
\end{aligned}
\end{equation}
In addition, let $t = m- 1 - \Delta i$.
For each $0\leq i\leq\lfloor\frac{m}{2\Delta}\rfloor$,
we have 
\begin{align*}
 &[u^{t}]\left(1+p_1u\right)^{mk\Delta } 
 =  p_1^{t}\cdot \binom{mk\Delta}{t}
\end{align*}
Moreover, for each $m\geq 8$ and $0\leq i\leq\lfloor\frac{m}{2\Delta}\rfloor$, we have $mk\Delta \leq 1.2(m-1)k\Delta \leq 4tk\Delta$.
Thus
$$\binom{mk\Delta}{t}\leq \binom{4tk\Delta}{t}\leq
\left(\frac{4\mathrm{e}tk\Delta}{t}\right)^{t}\leq
\left(4\mathrm{e}k\Delta\right)^{t},$$
where the second inequality is by that for each $0< \gamma\leq \beta$ where $\gamma,\beta$ are integers,
$$\binom{\beta}{\gamma}\leq \left(\frac{\mathrm{e}\beta}{\gamma}\right)^{\gamma}.$$ 
Thus, we have
\begin{equation}\label{eq-efficiency-firstpart}
\begin{aligned}
 &[u^{t}]\left(1+p_1u\right)^{mk\Delta } 
 =  p_1^{t}\cdot \binom{nk\Delta}{t} \leq p_1^{t}\left(4\mathrm{e}k\Delta\right)^{t} = \left(4\mathrm{e}p_1k\Delta\right)^{t}.
\end{aligned}
\end{equation}

Similarly,
for each $\lfloor\frac{m}{2\Delta}\rfloor<i \leq \lfloor\frac{m}{\Delta}\rfloor$,
we have 
\begin{align*}
[u^{\Delta i}]\left(1+p_2u^{\Delta}\right)^{m\Delta^2 }=  p_2^{i}\cdot \binom{m\Delta^2}{i}.
\end{align*}

Moreover, for each $m\geq 8$ and $\lfloor\frac{m}{2\Delta}\rfloor<i \leq \lfloor\frac{m}{\Delta}\rfloor$, we have $m\Delta^2 \leq 2i\Delta^3$.
Thus
$$\binom{m\Delta^2}{i}\leq \binom{2i\Delta^3}{i}\leq
\left(\frac{2\mathrm{e}i\Delta^3}{i}\right)^{i}=
\left(2\mathrm{e}\Delta^3\right)^{i},$$
where the second inequality is also by that for each $0< \gamma\leq \beta$ where $\gamma,\beta$ are integers,
$$\binom{\beta}{\gamma}\leq \left(\frac{\mathrm{e}\beta}{\gamma}\right)^{\gamma}.$$ 
Thus, we have
\begin{align*}
 &[u^{\Delta i}]\left(1+p_2u^{\Delta}\right)^{m\Delta^2 }=  p_2^{i}\cdot \binom{m\Delta^2}{i}
\leq p_2^{i}\left(2\mathrm{e}\Delta^3\right)^{i} = \left(2\mathrm{e}p_2\Delta^3\right)^{i}.
\end{align*}
Combining with \eqref{eq-efficiency-generationfuncition} and \eqref{eq-efficiency-firstpart}, we have 
\begin{align*}
    \relax [x^{m}]f_1(x)&\leq m^{-1}\left(\sum\limits_{i=0}^{\lfloor\frac{m}{2\Delta}\rfloor} \left(4\mathrm{e}p_1k\Delta\right)^{m-1-\Delta i}+\sum\limits_{i=\lfloor\frac{m}{2\Delta}\rfloor+1}^{\lfloor\frac{m}{\Delta}\rfloor}\left(2\mathrm{e}p_2\Delta^3\right)^{i}\right)
\end{align*}
Note that by $16\mathrm{e}p\Delta^3\leq \pprime$, $\eta\leq (2k)^{-1}$ and $1-q\theta\leq (8\mathrm{e}k\Delta)^{-1}$ we have $p_1\leq (8\mathrm{e}k\Delta)^{-1}$ and $p_2\leq (8\mathrm{e}\Delta^3)^{-1}$, hence we have
for each $m\geq 8$,
\begin{align*}
    \relax [x^{m}]f_1(x)&\leq m^{-1}\left(\sum\limits_{i=0}^{\lfloor\frac{m}{2\Delta}\rfloor} 2^{(\Delta i+1-m)}+\sum\limits_{i=\lfloor\frac{m}{2\Delta}\rfloor+1}^{\lfloor\frac{m}{\Delta}\rfloor}2^{-i}\right)\leq m^{-1}\left(2^{\left(2-\frac{m}{2}\right)}+2^{-\lfloor\frac{m}{2\Delta}\rfloor}\right)\leq  2^{-\lfloor\frac{m}{2\Delta}\rfloor}.
\end{align*}
Therefore by the analysis above we have
$$\digamma(\hollowstar^V)\leq n\cdot \sum\limits_{i=L}^{L\Delta}2^{-\lfloor\frac{i-1}{2\Delta}\rfloor}\leq 8n\Delta\cdot 2^{-\lfloor\frac{L}{2\Delta}\rfloor}.   $$

\end{proof}

The following lemma is a property for the upper bound function $\digamma(\cdot)$ defined in \Cref{def:potential-refined} .

\begin{lemma}\label{potentialfact2}
For any partial assignment $\sigma\in \qs$, any variable $u\in V$ with $\sigma(u)=\hollowstar$, we have
$$ \sum\limits_{x\in Q_u}\digamma(\sigma_{u\gets x})=q_u\cdot \digamma(\sigma).$$
\end{lemma}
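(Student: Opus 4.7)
The plan is to reduce the identity to a per-term statement in the triple sum defining $\digamma(\cdot)$. By \eqref{eq-definition-F-2}, it suffices to show that for every fixed index $i \in \{L, \ldots, L\Delta\}$, every root $v' \in V$, and every generalized $\{2,3\}$-tree $T \in \+{T}^i_{v'}$ with $T \setminus \{v'\} = U \circ E$, we have
\[
\sum_{x \in Q_u} F(\sigma_{u \gets x}, T \setminus \{v'\}) = q_u \cdot F(\sigma, T \setminus \{v'\}).
\]
Summing this identity over $i$, $v'$, and $T$ will immediately give the lemma. Unpacking $F$ as in \eqref{eq-def-fsimgat}, the factor $(1-q\theta)^{|U|}$ and the $(1+\eta)^k$ factors depend only on combinatorial data of $T$ and not on $\sigma$, so after cancellation it suffices to prove the purely probabilistic identity
\[
\sum_{x \in Q_u} \prod_{c \in E} \mathbb{P}[\neg c \mid \sigma_{u \gets x}] = q_u \prod_{c \in E} \mathbb{P}[\neg c \mid \sigma].
\]

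The key structural observation is that because $T$ is a generalized $\{2,3\}$-tree, Item \ref{WT-1} of \Cref{WTdef} forces any two distinct constraints $c, c' \in E$ to satisfy $\text{dist}_{\Lin{H_\Phi}}(c,c') \geq 2$, meaning they are non-adjacent in the line graph and therefore share no variables. Hence there is \emph{at most one} constraint $c_0 \in E$ whose variable set $\vbl(c_0)$ contains $u$.

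If no such $c_0$ exists, then $u \notin \vbl(c)$ for every $c \in E$, so $\mathbb{P}[\neg c \mid \sigma_{u \gets x}] = \mathbb{P}[\neg c \mid \sigma]$ for every $c \in E$ and every $x \in Q_u$; summing $q_u$ identical copies gives the identity trivially. Otherwise, for every $c \in E \setminus \{c_0\}$ the same equality $\mathbb{P}[\neg c \mid \sigma_{u \gets x}] = \mathbb{P}[\neg c \mid \sigma]$ holds and these factors pull out of the sum over $x$, while for $c_0$ the fact that $\mathbb{P}$ is the uniform product distribution gives by the law of total probability
\[
\sum_{x \in Q_u} \mathbb{P}[\neg c_0 \mid \sigma_{u \gets x}] = q_u \cdot \mathbb{P}[\neg c_0 \mid \sigma].
\]
Combining these two observations yields the per-term identity, and summing back over the indices in \eqref{eq-definition-F-2} establishes the lemma. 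The only nontrivial ingredient is the disjointness of variable sets among constraints in $E$, which is a direct consequence of the $\{2,3\}$-tree condition and hence poses no real obstacle.
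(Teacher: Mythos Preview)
Your proposal is correct and follows essentially the same approach as the paper's proof: both reduce to a per-term identity, use Item~\ref{WT-1} of \Cref{WTdef} to conclude that at most one constraint $c_0\in E$ contains $u$, and then invoke the averaging identity $\sum_{x\in Q_u}\mathbb{P}[\neg c_0\mid \sigma_{u\gets x}]=q_u\,\mathbb{P}[\neg c_0\mid \sigma]$ (valid since $\sigma(u)=\hollowstar$ and $\mathbb{P}$ is the uniform product law). Your write-up is in fact slightly more streamlined than the paper's, which introduces an auxiliary set $S$ and a multiplicativity claim for $F$, but the underlying argument is identical.
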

\begin{proof}
Given $T=U\circ E\in \+{T}^i_{v}$ for any $L\leq i\leq L\Delta$ and $v\in V$, by the definition of $\+{T}^i_{v}$, 
we have $\var{c}\cap \var{c'}\neq \emptyset$ for any different $c,c'\in E$.
Then there exists at most one unique constraint $c_0\in E$ such that $u\in \var{c_0}$.
Let $S = T\setminus \{c_0\}$ if $u\in \var{E}$ and $S = T$ otherwise.
Thus for each $c\in S\cap \+{C}$, we have $u\not\in \var{c}$.
Then
$\mathbb{P}[\neg c\mid \sigma_{u\gets x}]=\mathbb{P}[\neg c\mid \sigma]$ and 
$F(\sigma_{u\gets x},c)=F(\sigma,c)$ for each $x\in Q_u$.
Therefore, 
\begin{equation*}
\begin{aligned}
    F(\sigma_{u\gets x},S)=(1-q\theta)^{\abs{S\cap V}}\prod\limits_{c\in S\cap \+{C}}F(\sigma_{u\gets x},c)=(1-q\theta)^{\abs{S\cap V}}\prod\limits_{c\in S\cap \+{C}}F(\sigma_{u},c)=F(\sigma,S).
\end{aligned}
\end{equation*}
Note that by the definition of $F(\cdot,\cdot)$ it is easy to verify that for any partial assignment $\tau\in \qs$ and any two non-intersecting subsets $S_1, S_2\subseteq V\cup \+C$ we have
$$ F(\tau,S_1\cup S_2)=F(\tau,S_1)\cdot F(\tau,S_2).$$
Thus we have 
\begin{equation*}
\begin{aligned}
    \sum\limits_{x\in Q_u}F(\sigma_{u\gets x},T)&=\sum\limits_{x\in Q_u}\left(F(\sigma_{u\gets x},T\setminus S)\cdot F(\sigma_{u\gets x},S) \right)=\left(\sum\limits_{x\in Q_u}F(\sigma_{u\gets x},T\setminus S)\right)F(\sigma,S).
\end{aligned}
\end{equation*}
In addition, we claim that
\begin{equation}\label{eq-sum-xinqu-fsigmauxtsetminuss}
\sum\limits_{x\in Q_u}F(\sigma_{u\gets x},T\setminus S) =  q_u F(\sigma,T\setminus S).
\end{equation}
Thus, we have
\begin{equation*}
\begin{aligned}
    \sum\limits_{x\in Q_u}F(\sigma_{u\gets x},T)&=q_u F(\sigma,T\setminus S)F(\sigma,S)=q_u F(\sigma,T).
\end{aligned}
\end{equation*}
Therefore, combining with \eqref{eq-definition-F-2} we have
\begin{equation*}
\begin{aligned}
     \sum\limits_{x\in Q_u}\digamma(\sigma_{u\gets x})= &\sum\limits_{x\in Q_u}\sum\limits_{i=L}^{L\Delta}\sum\limits_{v\in V} \sum\limits_{T\in \+{T}^{i}_{v}}F(\sigma_{u\gets x},T\setminus \set{v} )= \sum\limits_{i=L}^{L\Delta}\sum\limits_{v\in V} \sum\limits_{T\in \+{T}^{i}_{v}}\sum\limits_{x\in Q_u}F(\sigma_{u\gets x},T\setminus \set{v})\\=&q_u\sum\limits_{i=L}^{L\Delta}\sum\limits_{v\in V} \sum\limits_{T\in \+{T}^{i}_{v}}F(\sigma,T\setminus \set{v})
=q_u\cdot \digamma(\sigma).
\end{aligned}
\end{equation*}

In the following, we prove \eqref{eq-sum-xinqu-fsigmauxtsetminuss}.
If $S = T\setminus \{c_0\}$, by $\sigma(u)=\hollowstar$ we have
$$\sum\limits_{x\in Q_u}\mathbb{P}[\neg c\mid \sigma_{u\gets x}] = q_u\mathbb{P}[\neg c\mid \sigma].$$
Thus
\begin{equation*}
    \begin{aligned}
    \sum\limits_{x\in Q_u}F(\sigma_{u\gets x},T\setminus S) &= \sum\limits_{x\in Q_u}F(\sigma_{u\gets x},c_0)=\sum\limits_{x\in Q_u}\left(\pprime^{-1}\mathbb{P}[\neg c_0\mid \sigma_{u\gets x}](1+\eta)^k\right)\\
    &=\pprime^{-1}(1+\eta)^k \sum\limits_{x\in Q_u}\mathbb{P}[\neg c_0\mid \sigma_{u\gets x}] =q_u\pprime^{-1}(1+\eta)^k\mathbb{P}[\neg c_0\mid \sigma]\\
    &=q_uF(\sigma,c_0)=q_uF(\sigma,T\setminus S).
    \end{aligned}
\end{equation*}
Otherwise, $T\setminus S$ is empty. We also have
\begin{equation*}
\sum\limits_{x\in Q_u}F(\sigma_{u\gets x},T\setminus S) = \sum\limits_{x\in Q_u}\prod\limits_{c\in T\setminus S}F(\sigma_{u\gets x},c)= q_u\prod\limits_{c\in T\setminus S}F(\sigma,c)=q_uF(\sigma,T\setminus S),
\end{equation*}
where we assume that a product over an empty set is 1.
Thus, we have \eqref{eq-sum-xinqu-fsigmauxtsetminuss} always holds and the lemma is proved. 
\end{proof}

 We are now ready to prove \Cref{inithcor2}.
\begin{proof}[Proof of \Cref{inithcor2}]
We prove the lemma by induction on $i$ where $i\in \{0,1,\cdots,n\}$. 
Recall the sequence of partial assignments $X^0,X^1,\dots,X^n$ that evolve in the main counting algorithm defined in \Cref{def-pas-cmain}.
For the base case where $i = 0$, we have $X^0=\hollowstar^V$. By \Cref{inith-2}, we have $$\digamma(\hollowstar^V)< 8n\Delta\cdot 2^{-\lfloor\frac{L}{2\Delta}\rfloor}.$$

For the induction step, it then suffices to show for each $i\in [n]$,
\begin{equation}
    \digamma(X^{i})\leq \digamma(X^{i-1}).
\end{equation}
Recall $v_i$ in Line \hyperlink{line2}{2} of the main counting algorithm.
Given $i\in [n]$, we have $X^{i-1}(v_i) = \hollowstar$.
If $v_i$ is involved in some $X^{i-1}$-frozen constraint, we have the condition in Line \hyperlink{line2}{2} of the main counting algorithm is not satisfied, then we have $X^i=X^{i-1}$ and  $\digamma(X^i)= \digamma(X^{i-1})$. Otherwise the condition is satisfied, and by Line \hyperlink{line2b}{2(b)} of the main counting algorithm we have
$$ \digamma(X^i)=\min\limits_{x\in Q_v}\digamma(X^{i-1}_{v_i\gets x})\leq q_{v_i}^{-1}\sum\limits_{x\in Q_v}\digamma(X^{i-1}_{v_i\gets x})=\digamma(X^{i-1}),$$
where the last equality is by \Cref{potentialfact2} and $X^{i-1}(v_i) = \hollowstar$. 
\end{proof}

We are now ready to prove Theorem \ref{thm:main-counting-refined}.

\begin{proof}[Proof of Theorem \ref{thm:main-counting-refined}]
Let 
 $$\pprime=\left(16\mathrm{e^2}q^2k\Delta^2\right)^{-1}$$ 
and
$$L=100\Delta\left\lceil\log{\left(\frac{qn\Delta}{\varepsilon}\right)}\right\rceil.$$
% \begin{equation}\label{eq-parameter-L-refined}
%     L=100\Delta\left\lceil\log{\left(\frac{n}{\varepsilon}\right)}\right\rceil.
% \end{equation} 
We choose the truncation condition $f(\cdot)$ in \Cref{def:truncate-refined}, the upper bound function $\digamma(\cdot)$ in \Cref{def:potential-refined}.

We first prove the bound on the running time of the main counting algorithm. 
The followings are the nontrivial costs in the main counting algorithm:
\begin{itemize}
    \item the cost of estimate the marginal distribution $\mu^X_{v_i}$ with $\calc$ in Line \hyperlink{line2a}{2(a)};
    \item the cost of calculating $\digamma(\cdot)$ in Line \hyperlink{line2b}{2(b)};
    \item the cost of the exhaustive enumeration in Line \hyperlink{line3}{3}.
\end{itemize}
Let $\tcmain{}$ be the running time of the main counting algorithm. 
Recall that $\texe$ denotes the cost of the exhaustive enumeration in the main counting algorithm, and $\tma$ denotes the cost of $\calc$.
Moreover, one can prove that the total cost of calculating $\digamma(\cdot)$ in the main counting algorithm is $n^{\poly(\log q,\Delta,k)}$.
Thus we have
\begin{equation}\label{eq-main-counting-1}
    \tcmain{}\leq \texe{}(\Phi,X^n)+\sum\limits_{i=1}^{n}\tma{}(\Phi,X^{i-1},v_i)+n^{\poly(\log q,\Delta,k)},
\end{equation} 
By \Cref{lemma:invariant-counting} and \Cref{lem-cvxl-in-cbad2-cor}, we also have for each $i\in [n]$,
$$\tma{}(\Phi,X^{i-1},v_i)=\poly(n, q^{k L\Delta^2}).$$
In addition, by \Cref{lem-exeef2} we have
\begin{align}\label{eq-upperbound-texe}
\texe{}(\Phi,X^n)=\poly(n,q^{k L\Delta^2})
\end{align}
Combining with \eqref{eq-main-counting-1} and the definition of $L$, we have 
$$\tcmain{}\leq \poly(n, q^{k L\Delta^2})+n^{\poly(\log q,\Delta,k)}=O\left(\left(\frac{n}{\varepsilon}\right)^{\poly(\log{q},\Delta,k)}\right).$$

In the following, we prove the bound on the cost of calculating $\digamma(\cdot)$.
For each $v\in V$ and $L\leq i\leq L\Delta$, 
by the definition of $\+{T}^{i}_v$ in \eqref{eq-definition-FT-1} and \eqref{eq-definition-FT-2}, it is immediate that $\+{T}^{i}_v$ can be embedded as a subtree in a graph $G$ with vertex set $V\cup \+C$ and degree bounded by $2\Delta^3$,
where the subtree is rooted at $v$ of size at most $i$. 
Note that for any 
degree-bounded graph $G$ with maximum degree $D$, the number of subtrees of $G$ with $t$ vertices and a specific vertex as root, is at most $(\mathrm{e}D)^{t-1}/2$~\cite[Lemma 2.1]{borgs2013left}. 
Then we have 
$$\abs{\+{T}^{i}_v} \leq  \frac{(2\mathrm{e}\Delta^3)^{i-1}}{2} \leq \frac{(2\mathrm{e}\Delta^3)^{L\Delta}}{2}  \leq  n^{\poly(\log{q},\Delta,k)}.$$
To construct the set $\+{T}^{i}_v$,
one can first construct each possible subtree $T$ of $G$ where $T$ is rooted at $v$ and of size no more than $i$,
and then check whether $T\in \+{T}^{i}_v$.
Therefore, the set $\+{T}^{i}_v$ can also be constructed in $n^{\poly(k,\Delta)}$ cost. 
Combining with \eqref{eq-def-fsimgat} and \eqref{eq-definition-F-2}, 
we have for any $\sigma\in \qs$,
$\digamma(\sigma)$ can be calculated within cost
$$L\Delta\cdot n\cdot n^{\poly(k,\Delta)} \leq n^{\poly(\log q,\Delta,k)}.$$
In addition, by Line \hyperlink{line2}{2} of the main counting algorithm,
we have $\digamma(\cdot)$ is calculated at most $qn$ times.
Thus, 
the total cost of calculating $\digamma(\cdot)$ in the main counting algorithm is $n^{\poly(\log q,\Delta,k)}$.

In the next, we prove \eqref{eq-upperbound-texe}.
Recall the definition of $\eta$ and $\theta$ in \eqref{eq:parameter-theta}.
By \eqref{eq:main-thm-LLL-condition} and the above definitions of $\pprime$ and $L$,
it then can be verified that 
$$p<\pprime<(\mathrm{e}q\Delta)^{-1}, \quad 16\mathrm{e}p\Delta^3\leq \pprime, \quad \eta\leq  (2k)^{-1}, \quad 1-q\theta\leq (8\mathrm{e}k\Delta)^{-1}, \quad L\geq 9.$$
Therefore, the conditions of \Cref{inithcor2} are satisfied.
Thus, for large enough $n$ we have 
$$\digamma(X^n)<8n\Delta\cdot 2^{-\lfloor\frac{L}{2\Delta}\rfloor}\leq 8n\Delta\left(\frac{q n\Delta}{\varepsilon}\right)^{-50}<1.$$
Combining with \Cref{lem-exeef2}, \eqref{eq-upperbound-texe} is immediate.
The upper bound of $\tcmain{}$ is proved.

\sloppy
At last, we prove the bound on the relative error.
Let $S = \{i\in [n]\mid v_i\notin \vfix{X^{i-1}}\}$.
For each $i\in S$, let $\xi_{i}$ be the distribution returned by $\calc{}(\Phi,X^{i-1},v_i)$.
Thus, we have
$$\tv(\xi_i,\mu^{X^{i-1}}_{v_i})<8n\Delta\cdot 2^{-\lfloor\frac{L}{2\Delta}\rfloor}<\frac{\varepsilon}{8nq},$$
where the first inequality is by $L\geq 9$ and \Cref{tvdbound2}, and the second inequality is by \Cref{inithcor2}.

Combining with \Cref{lemma:invariant-counting} and \Cref{prop:theta-recalc-lower-bound}, we have
\[
    \xi_i(X^i(v_i))\geq \mu^{X^{i-1}}_{v_i}(X^{i}(v_i))-\frac{\varepsilon}{8nq}\geq \mu^{X^{i-1}}_{v_i}(X^{i}(v_i))\left(1-\frac{\varepsilon}{4n}\right)
\]
Thus, by Line \hyperlink{line2b}{2(b)} of the main counting algorithm, 
we have
\begin{align*}
\widehat{Z} &=  \abs{\+{S}_{X^n}}\cdot \prod_{i\in S}(\xi_i(X^{i}(v_i)))^{-1} \leq  \abs{\+{S}_{X^n}}\cdot \prod_{i\in S}\left((1 + \varepsilon(2n)^{-1})(\mu^{X^{i-1}}_{v_i}(X^{i}(v_i)))^{-1}\right) 
\\&\leq (1 + \varepsilon(2n)^{-1})^n\cdot \abs{\+{S}_{X^n}}\cdot \prod_{i\in S}(\mu^{X^{i-1}}_{v_i}(X^{i}(v_i)))^{-1} \leq  (1 + \varepsilon)\cdot \abs{\+{S}_{X^n}}\cdot \prod_{i\in S}(\mu^{X^{i-1}}_{v_i}(X^{i}(v_i)))^{-1} = (1 + \varepsilon)Z_{\Phi},
\end{align*}
where the last equality is by \eqref{eq-telescope}.
Similarly, one can also prove $(1-\varepsilon)Z_{\Phi} \leq \widehat{Z}$.
%The theorem is proved.
\end{proof}

\section{On improving the JPV algorithm}\label{sec:JPV}
In this section, we explain how to use the generalized $\{2,3\}$-to improve the analysis of the algorithm in~\cite{Vishesh21towards} with an improved LLL condition $p\Delta^5\lesssim 1$.

As stated in the technique overview and \Cref{sec:counting}, the algorithm presented in~\cite{Vishesh21towards} uses the same framework for the main counting algorithm, only with the subroutine for estimating the marginal probabilities replaced with the procedure of setting up a linear program to mimicry the transition probabilities of an idealized coupling procedure. We then include the definition of the idealized coupling procedure and the subroutine for estimating the marginal probability in~\cite{Vishesh21towards} with conformed notation for a complete illustration:

\begin{definition}[Idealized coupling procedure in~\cite{Vishesh21towards}]\label{jpvcoupling}
Fix any tuple $(\Phi,\sigma,v)$ satisfying \Cref{inputcondition-calc} and any $a,b\in Q_v$, the idealized coupling procedure presented in~\cite{Vishesh21towards} is equivalent to the following:
\begin{enumerate}
    \item Initialize the partial assignments $X \leftarrow X_0=\sigma_{v\gets a}$, $Y \leftarrow Y_0=\sigma_{v\gets b}$, $Z\leftarrow \sigma_{v\gets \star}$. \label{jpvcoupling-1}
    \item Choose $u\leftarrow \nextvar{Z}$, if $u=\perp$, terminates.  \label{jpvcoupling-2}
    \item Sample a pair of values $(x, y )$ according to the maximal coupling of the marginal distribution
of $\mu^X_v$ and $\mu^Y_v$.  \label{jpvcoupling-3}
    \item Update $X$ by assigning $X\leftarrow   X_{v\gets x}$, and update $Y$ by assigning $Y\leftarrow   Y_{v\gets y}$. If $x=y$, update $Z\leftarrow Z_{v\gets x}$, otherwise update $Z\leftarrow Z_{v\gets \star}$.  \label{jpvcoupling-4}
    \item Return to \Cref{jpvcoupling-1}. \label{jpvcoupling-5}
\end{enumerate}
For any pair of partial assignments $(X,Y)\in \qs\times \qs$, we let $\mu_{\textsf{cp}}(X,Y)$ to denote the probability that the idealized coupling
procedure reaches $(X,Y)$.
\end{definition}

The idealized coupling procedure in \Cref{jpvcoupling} inspires the following definition of a (truncated) idealized deterministic rooted decision tree $\MSC{T}$.
\begin{definition}[(Truncated) idealized deterministic rooted decision tree $\MSC{T}$ in~\cite{Vishesh21towards}]\label{jpvtree}
Fix any tuple $(\Phi,\sigma,v)$ satisfying \Cref{inputcondition-calc} and any two distinct values $a,b\in Q_v$ as in \Cref{jpvcoupling}, the (truncated) idealized deterministic rooted decision tree $\MSC{T}$ presented in~\cite{Vishesh21towards} is equivalent to the following:
\begin{enumerate}
    \item The root of $\MSC{T}$ consists of the partial assignments $(X_0=\sigma_{v\gets a}, Y_0=\sigma_{v\gets b})$. Also for any $\sigma,\tau\in \qs$, we define a partial assignment $h(\sigma,\tau):\left(\qs\right)^2\rightarrow \qs$ that captures the ``discrepancy set" between $\sigma$ and $\tau$ as:
    $$ \forall v\in V,h(\sigma,\tau)(v)\defeq \begin{cases} \sigma(v) & \sigma(v)=\tau(v)\\ \star &\text{otherwise}  \end{cases} $$
    \label{jpvtree-1}
    \item For each node $(X, Y)\in \MSC{T} $, if $\nextvar{h(X,Y)}=\perp$ or $f(h(X,Y))=\True$, then $(X,Y)$ is a leaf node of $\MSC{T}$, where $f(\cdot)$ is some truncation condition.     \label{jpvtree-2}
    \item Otherwise, let $u=\nextvar{h(X,Y)}$. The children of $(X, Y)$ in $\MSC{T}$ consist of all
possible extensions of $(X, Y)$ obtained by assigning a pair of values in $Q_u\times Q_u$ to the variable $u$. \label{jpvtree-3}
Similar as in \Cref{RCTdef}, we also let $\+{L}(\MSC{T})$ be the set of leaf nodes in $\MSC{T}$. Let $\+{L}_g(\MSC{T})\defeq\{(X,Y)\in \+{L}(\MSC{T}): f(h(X,Y))=\False \}$ and $\+{L}_b(\MSC{T})\defeq\{X\in \+{L}(\MSC{T}):  f(h(X,Y))=\True\}$ be the sets of leaf nodes $(X,Y)\in \+L(\MSC{T})$ with $h(X,Y)$ don't and do satisfy the truncation condition, respectively.
\end{enumerate}
\end{definition}

One may find a striking resemblance between the RCT defined in \Cref{RCTdef} and the idealized deterministic rooted decision tree defined in \Cref{jpvtree}. A difference is that the RCT appeared implicitly in the analysis of the counting algorithm presented in \Cref{sec:counting}, and the decision tree here is defined explicitly and used directly in the algorithm in~\cite{Vishesh21towards} to appear later. We interpret such similarity as an intrinsic property of the problem instance, which makes possible the improvement of both algorithms using the same refined combinatorial structure of generalized $\{2,3\}$-tree.

We then show the crucial subroutine for estimating the marginal probability in~\cite{Vishesh21towards}.
\begin{definition}[Subroutine for estimating the marginal probability in~\cite{Vishesh21towards}]\label{jpvlp}
Fix any tuple $(\Phi,\sigma,v)$ satisfying \Cref{inputcondition-calc},
the subroutine for estimating the marginal probability presented in~\cite{Vishesh21towards} is equivalent to the following procedure:

For any $a,b\in Q_v$, let $\MSC{T}$ be the idealized deterministic rooted decision tree defined in \Cref{jpvtree}. Set up the following linear program with variables $r_{-},r_{+}$ and $\hat{p}^{X}_{X,Y},\hat{p}^{Y}_{X,Y}$ for each $(X,Y)\in \MSC{T}$:
\begin{enumerate}
    \item For all $(X, Y) \in \+{L}(\MSC{T})$, $0 \leq \hat{p}^{X}_{X,Y},\hat{p}^{Y}_{X,Y} \leq 1$. \label{jpvlp-1}
    \item For every $(X, Y) \in \+{L}_g(\MSC{T})$,
    $$ r_{-}\leq \frac {\hat{p}^{X}_{X,Y}\abs{\+{S}_X}}{\hat{p}^{Y}_{X,Y}\abs{\+{S}_Y}}\leq r_{+},  $$
    where $\frac{\abs{\+{S}_X}}{\abs{\+{S}_Y}}$ is computed through exhaustive enumeration over the connected component in $H_{\Phi^{h(X,Y)}}$ containing $v$.  \label{jpvlp-2}
    \item $\hat{p}^{X_0}_{X_0,Y_0}=\hat{p}^{Y_0}_{X_0,Y_0}=1$. Moreover, for every node $(X, Y) \in \MSC{T}\setminus  \+{L}(\MSC{T})$ and $u = \nextvar{h(X,Y)}$,
    $$\hat{p}^{X}_{X,Y} =\sum_{b\in Q_u}  \hat{p}^{X_{u\gets a}}_{X_{u\gets a},Y_{u\gets b}}\text{ for all }a\in Q_u$$
    $$\hat{p}^{Y}_{X,Y} =\sum_{b\in Q_u}  \hat{p}^{Y_{u\gets a}}_{X_{u\gets b},Y_{u\gets a}}\text{ for all }a\in Q_u$$ \label{jpvlp-3}
    \item For every node $(X, Y) \in \MSC{T}\setminus  \+{L}(\MSC{T})$, letting $u = \nextvar{h(X,Y)}$, for all $a\in Q_u$,
    $$\sum_{\substack{b\in Q_u\\ b\neq a}}  \hat{p}^{X_{u\gets a}}_{X_{u\gets a},Y_{u\gets b}}\leq 1-q\theta$$
    $$\sum_{\substack{b\in Q_u\\ b\neq a}}  \hat{p}^{Y_{u\gets a}}_{X_{u\gets b},Y_{u\gets a}}\leq 1-q\theta,$$
    where $\theta,\eta$ is defined as in \eqref{eq:parameter-theta}. \label{jpvlp-4}
\end{enumerate}
\end{definition}

It can be verified the following lemma holds by by taking $$\hat{p}^{X}_{X,Y}=\frac{\mu_{\textsf{cp}}(X,Y)}{\mu[X\mid X_0]}, \hat{p}^{Y}_{X,Y}=\frac{\mu_{\textsf{cp}}(X,Y)}{\mu[Y\mid Y_0]}$$ 
for each $(X,Y)\in \MSC{T}$ and verifying all items in \Cref{jpvlp}. Particularly, \Cref{jpvlp-4} of \Cref{jpvlp} can be shown using a similar argument as in \Cref{localuniformitycor}.
\begin{lemma}\label{lpcor}
The
LP defined in \Cref{jpvlp} is feasible for $r_{-}=r_{+}=\frac{\abs{\+{S}_{X_0}}}{\abs{\+{S}_{Y_0}}}$.
\end{lemma}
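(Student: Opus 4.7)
I propose to prove \Cref{lpcor} by exhibiting the explicit feasible assignment suggested by the hint: take
\[\hat{p}^{X}_{X,Y}=\frac{\mu_{\textsf{cp}}(X,Y)}{\mu[X\mid X_0]},\qquad \hat{p}^{Y}_{X,Y}=\frac{\mu_{\textsf{cp}}(X,Y)}{\mu[Y\mid Y_0]},\qquad r_{-}=r_{+}=\frac{|\+{S}_{X_0}|}{|\+{S}_{Y_0}|},\]
and then verify the four families of LP constraints in \Cref{jpvlp} in order. Throughout, I will lean on two structural facts about the idealized coupling procedure of \Cref{jpvcoupling}: (i) each step uses a maximal coupling of the current marginals $\mu^{X}_{u}$ and $\mu^{Y}_{u}$ at $u=\nextvar{h(X,Y)}$, so the one-step transition from $(X,Y)$ to $(X_{u\gets a},Y_{u\gets b})$ has probability $c^{X,Y}(a,b)$ satisfying the marginal identities $\sum_{b}c^{X,Y}(a,b)=\mu^{X}_{u}(a)$, $\sum_{a}c^{X,Y}(a,b)=\mu^{Y}_{u}(b)$, and the agreement identity $c^{X,Y}(a,a)=\min\{\mu^{X}_{u}(a),\mu^{Y}_{u}(a)\}$; and (ii) the chain rule $\mu[X_{u\gets a}\mid X_0]=\mu[X\mid X_0]\cdot\mu^{X}_{u}(a)$, with the analogous identity for the $Y$-side.

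The verifications of Items (1), (2), and (3) are essentially routine. For Item (1), an induction on the depth of $(X,Y)$ in $\MSC{T}$ using (i) gives $\sum_{Y'}\mu_{\textsf{cp}}(X,Y')=\mu[X\mid X_0]$ and symmetrically for $Y$, so both variables lie in $[0,1]$. For Item (2), substituting the identities $\mu[X\mid X_0]=|\+{S}_{X}|/|\+{S}_{X_0}|$ and $\mu[Y\mid Y_0]=|\+{S}_{Y}|/|\+{S}_{Y_0}|$ yields
\[\frac{\hat{p}^{X}_{X,Y}|\+{S}_{X}|}{\hat{p}^{Y}_{X,Y}|\+{S}_{Y}|}=\frac{\mu[Y\mid Y_0]\,|\+{S}_{X}|}{\mu[X\mid X_0]\,|\+{S}_{Y}|}=\frac{|\+{S}_{X_0}|}{|\+{S}_{Y_0}|},\]
which is exactly the common value $r_{-}=r_{+}$. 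For Item (3), the initial condition is immediate since $\mu_{\textsf{cp}}(X_0,Y_0)=\mu[X_0\mid X_0]=\mu[Y_0\mid Y_0]=1$, and the consistency equations follow at each internal node by plugging (i) and (ii) into the sum: for every $a\in Q_{u}$,
\[\sum_{b\in Q_{u}}\hat{p}^{X_{u\gets a}}_{X_{u\gets a},Y_{u\gets b}}=\frac{\mu_{\textsf{cp}}(X,Y)}{\mu[X\mid X_0]\,\mu^{X}_{u}(a)}\sum_{b\in Q_{u}}c^{X,Y}(a,b)=\hat{p}^{X}_{X,Y},\]
and symmetrically for the $Y$-side.

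I expect the main obstacle to be Item (4). Expanding as in Item (3) but excluding the diagonal gives
\[\sum_{b\neq a}\hat{p}^{X_{u\gets a}}_{X_{u\gets a},Y_{u\gets b}}=\hat{p}^{X}_{X,Y}\cdot\left(1-\frac{c^{X,Y}(a,a)}{\mu^{X}_{u}(a)}\right)\le 1-\frac{\min\{\mu^{X}_{u}(a),\mu^{Y}_{u}(a)\}}{\mu^{X}_{u}(a)},\]
and the task is to show that this is bounded by $1-q\theta$ with $\theta=\frac{1}{q}-\eta$ from \eqref{eq:parameter-theta}. The key is that the invariant $\mathbb{P}[\neg c\mid\sigma]\le \pprime q$ is preserved along the coupling (by the analogue of \Cref{lemma:general-invariant} applied to both sides of the coupling), so \Cref{localuniformitycor} applies at every node of $\MSC{T}$ and yields matching upper/lower bounds on $\mu^{X}_{u}(a)$ and $\mu^{Y}_{u}(a)$. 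The crux is then a careful numerical estimate: the ratio $\min\{\mu^{X}_{u}(a),\mu^{Y}_{u}(a)\}/\mu^{X}_{u}(a)$ must be pushed down to $q\theta$ using these two-sided uniformity bounds together with the LLL parameter regime guaranteed by \eqref{eq:main-thm-LLL-condition}. This bookkeeping, which is tight in the worst case $\mu^{X}_{u}(a)=\tfrac{1}{q_u}+\eta$ and $\mu^{Y}_{u}(a)=\tfrac{1}{q_u}-\eta$, is the technical heart of the verification; once it is settled, feasibility of the LP is established and the lemma follows.
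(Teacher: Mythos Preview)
Your approach is exactly the one the paper sketches, and your verifications of Items~(1)--(3) are correct. The issue is in Item~(4). You pin the worst case at $\mu^{X}_{u}(a)=\tfrac{1}{q_u}+\eta$ and $\mu^{Y}_{u}(a)=\tfrac{1}{q_u}-\eta$ and assert the bookkeeping is ``tight'' there. But plugging these in gives
\[
1-\frac{\min\{\mu^{X}_{u}(a),\mu^{Y}_{u}(a)\}}{\mu^{X}_{u}(a)}=\frac{2\eta}{\tfrac{1}{q_u}+\eta},
\]
and for this to be at most $1-q\theta=q\eta$ you would need $q\eta\ge 2-\tfrac{q}{q_u}$, which fails whenever $q_u=q$ (since $q\eta<1$ by $\theta>0$). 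So the additive two--sided bound from \Cref{generaluniformity} as stated is not sharp enough to close Item~(4); your claimed ``tight'' case is actually a counterexample to the inequality you need.

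The fix is to use the sharper \emph{multiplicative} upper bound that \Cref{HSS} gives directly (and from which \Cref{generaluniformity} is derived): applying \Cref{HSS} to the event $\{u=a\}$ under the invariant $\mathbb{P}[\neg c\mid\cdot]\le \pprime q$ yields $\mu^{X}_{u}(a)\le \tfrac{1}{q_u}(1+\eta)$, and hence $\mu^{Y}_{u}(a)\ge \tfrac{1}{q_u}\bigl(1-(q_u-1)\eta\bigr)$ by complementation. With these bounds,
\[
1-\frac{\mu^{Y}_{u}(a)}{\mu^{X}_{u}(a)}\;\le\;1-\frac{1-(q_u-1)\eta}{1+\eta}\;=\;\frac{q_u\eta}{1+\eta}\;\le\;q_u\eta\;\le\;q\eta\;=\;1-q\theta,
\]
and together with $\hat{p}^{X}_{X,Y}\le 1$ this delivers Item~(4). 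This is presumably what the paper means by ``a similar argument as in \Cref{generaluniformity}'': you need to revisit the HSS estimate itself rather than quote the looser additive corollary.
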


The following lemma holds by \Cref{shortpath2} and standard guarantees on the running time of linear programming.
\begin{lemma}\label{lpeff}
For every $r_{-}, r_{+}, \eta$ which can be represented in $\poly(n, q)$ bits, the feasibility of the
LP defined in \Cref{jpvlp} can be checked in time $\poly(n, q^{k\Delta L})$.
\end{lemma}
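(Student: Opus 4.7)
The plan is to establish the claim in three phases: bounding the size of the decision tree $\MSC{T}$, bounding the size of the resulting LP and the cost of computing its coefficients, and then invoking a standard polynomial-time LP feasibility solver. Throughout, the paper's earlier combinatorial bounds on paths in the simplification process carry over verbatim to the decision tree $\MSC{T}$, because $\MSC{T}$ is built by branching on $u = \nextvar{h(X,Y)}$, and any root-to-leaf path in $\MSC{T}$ projects to a run of the underlying simplification process on the partial assignment $h(X,Y)\in \qs$, which has exactly one $\star$ at $v$ at the root and only grows the $\star$-set along the path.

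First, I would bound $|\MSC{T}|$. By \Cref{jpvtree}, a node $(X,Y)$ is a leaf iff $\nextvar{h(X,Y)} = \perp$ or $f(h(X,Y)) = \True$, which is exactly the stopping rule of $\pth(h(X_0,Y_0))$ in \Cref{pathdef}. Hence the depth of $\MSC{T}$ equals the length of some realization of $\pth(h(X_0,Y_0))$, and \Cref{shortpath2} applied to the partial assignment $h(X_0,Y_0) = \sigma_{v\gets \star}$ (which has exactly one $\star$) gives depth $\le kL\Delta^2$. Since each internal node has at most $q^2$ children (one per pair $(a,b)\in Q_u\times Q_u$), we obtain
\[
|\MSC{T}| \;\le\; q^{2kL\Delta^2} \;=\; \poly\!\bigl(n,\, q^{k\Delta L}\bigr).
\]

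Second, I would bound the LP size and the complexity of assembling it. By \Cref{jpvlp}, the LP has $2|\MSC{T}| + 2$ variables and $O(q^2 |\MSC{T}|)$ constraints, which is $\poly(n, q^{k\Delta L})$. All coefficients in \Cref{jpvlp-1,jpvlp-3,jpvlp-4} are constants with a small number of bits. The only nontrivial coefficients are the ratios $|\+{S}_X|/|\+{S}_Y|$ for $(X,Y)\in \+{L}_g(\MSC{T})$. For any such good leaf, $f(h(X,Y))=\False$ and $\nextvar{h(X,Y))}=\perp$, so \Cref{lem-cvxl-in-cbad2} applied to $h(X,Y)$ yields $|\+{C}^{h(X,Y)}_v| \le L\Delta^2$, and consequently the connected component of $v$ in $H_{\Phi^{h(X,Y)}}$ has at most $kL\Delta^2 + 1$ variables. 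Exhaustive enumeration therefore produces $|\+{S}_X|$ and $|\+{S}_Y|$ in time $q^{kL\Delta^2 + 1}\cdot \poly(n,\Delta,k) = \poly(n, q^{k\Delta L})$, and by \Cref{lemma:efficiency-var} each $\nextvar{\cdot}$ and $f(\cdot)$ evaluation used to traverse $\MSC{T}$ also takes $\poly(n,\Delta,q^k)$ time.

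Third, I would apply a polynomial-time LP feasibility algorithm (e.g., the ellipsoid method or a polynomial-time interior-point method). Since $r_-, r_+, \eta$ have bit complexity $\poly(n,q)$ by hypothesis and all other coefficients have bit complexity $\poly(n,q^{k\Delta L})$, the total bit complexity of the LP is $\poly(n, q^{k\Delta L})$, and feasibility can be decided in time polynomial in this bit complexity, giving the stated $\poly(n, q^{k\Delta L})$ bound.

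The main obstacle I anticipate is the precise transfer of \Cref{shortpath2} and \Cref{lem-cvxl-in-cbad2}, which were stated for the random path $\pth(\sigma)$ on partial assignments in $\qs$, to the deterministic decision tree $\MSC{T}$ on pairs $(X,Y)$. This is purely bookkeeping: every node of $\MSC{T}$ carries a canonical partial assignment $h(X,Y)\in \qs$ with $\vst{h(X,Y)}$ monotonically growing along the path, and the branching variable is exactly $\nextvar{h(X,Y)}$, so the generalized $\{2,3\}$-tree witnesses built for $h(X,Y)$ at a leaf coincide with those for the corresponding path realization. Once this identification is spelled out, the quantitative bounds carry over unchanged and the three phases above yield the claim.
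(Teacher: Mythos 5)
Your argument is correct and takes the same route as the paper's (one-sentence) proof, which cites \Cref{shortpath2} and standard LP guarantees and leaves the details implicit; you have supplied them accurately. The key step you spell out — that a root-to-leaf path $(X_0,Y_0),\dots,(X_r,Y_r)$ in $\MSC{T}$ projects to a realization of $\pth(h(X_0,Y_0))$, because $h(X_{i+1},Y_{i+1})$ is obtained from $h(X_i,Y_i)$ by setting $u=\nextvar{h(X_i,Y_i)}$ to an element of $Q_u\cup\{\star\}$ and the stopping rule in \Cref{jpvtree-2} matches \Cref{rp-1} of \Cref{pathdef} — is exactly the bookkeeping needed to transfer \Cref{shortpath2} (depth $\le kL\Delta^2$) and \Cref{lem-cvxl-in-cbad2} (good-leaf component size $\le L\Delta^2$ constraints, hence $\le kL\Delta^2+1$ variables). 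From there, $|\MSC{T}|\le O\bigl(kL\Delta^2\cdot q^{2kL\Delta^2}\bigr)$, the exhaustive enumeration at each good leaf costs $q^{kL\Delta^2+1}\poly(n,\Delta,k)$, and a polynomial-time LP solver finishes the job; all of this matches what the paper must have in mind. One small remark: your honest bound is $q^{kL\Delta^2}$ rather than the lemma's stated $q^{k\Delta L}$; since $q^{kL\Delta^2}=(q^{k\Delta L})^{\Delta}$, this is only a "$\poly(n,q^{k\Delta L})$" under the paper's convention that exponents may depend on $\Delta,k$ (compare the $\poly(n,q^{kL\Delta^2})$ wording in \Cref{lem-cvxl-in-cbad2-cor}). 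This is a minor imprecision in the lemma's statement, not in your argument, and it does not affect the final bound in \Cref{thm:main-counting-refined}.
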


One crucial thing is that the feasibility of the above LP (for appropriately chosen $\alpha$ and truncation condition $f(\cdot)$) implies that $r_{-}$ (respectively $r_{+}$) is an approximate
lower (respectively upper) bound for $\abs{\+{S}_{X_0}} /\abs{\+{S}_{Y_0}}$. Given this, one will be able to use binary search to approximate  $\abs{\+{S}_{X_0}} /\abs{\+{S}_{Y_0}}$.

For improving the analysis in~\cite{Vishesh21towards}, we choose the truncation condition $f(\cdot)$ in \Cref{def:truncate-refined} for \Cref{jpvtree-2} of \Cref{jpvtree}. It suffices to show the following improved lemma.
\begin{lemma}[Improved version of \textbf{Lemma 5.1} in~\cite{Vishesh21towards}]\label{jpvimprovedlemma}
Recall that $X_0=\sigma_{v\gets a}, Y_0=\sigma_{v\gets b}$. If $16\mathrm{e}p\Delta^3\leq \pprime$, $\eta\leq (2k)^{-1}$, $1-q\theta\leq (8\mathrm{e}k\Delta)^{-1}$ and $L>1$,
$$\frac{1}{\abs{\+{S}_{X_0}}}\sum\limits_{\tau\in \+{S}_{X_0}}\sum\limits_{(X,Y)\in \+{L}_b(\MSC{T}):X\rightarrow \tau}\hat{p}^{X}_{X,Y}\leq \digamma(\sigma)$$
$$\frac{1}{\abs{\+{S}_{Y_0}}}\sum\limits_{\tau\in \+{S}_{Y_0}}\sum\limits_{(X,Y)\in \+{L}_b(\MSC{T}):Y\rightarrow \tau}\hat{p}^{Y}_{X,Y}\leq \digamma(\sigma),$$
where $x\rightarrow y$ means $x$ extends $y$, and $\digamma(\cdot)$ is the upper bound function in \Cref{def:potential-refined}.
\end{lemma}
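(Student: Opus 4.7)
The plan is to first reduce the left-hand side to the probability that the idealized coupling procedure from \Cref{jpvcoupling} terminates at a bad leaf, and then adapt the generalized $\{2,3\}$-tree analysis used to prove \Cref{tvdbound2}. I would substitute the feasible LP solution exhibited in the proof of \Cref{lpcor}, namely $\hat{p}^{X}_{X,Y} = \mu_{\textsf{cp}}(X,Y)/\mu[X\mid X_0]$. Swapping the order of summation and using $|\{\tau\in \+{S}_{X_0}:X\rightarrow \tau\}| = |\+{S}_X|$ together with $|\+{S}_X|/|\+{S}_{X_0}| = \mu[X\mid X_0]$ reduces the claim to
\[
\sum_{(X,Y)\in \+{L}_b(\MSC{T})} \mu_{\textsf{cp}}(X,Y)\leq \digamma(\sigma),
\]
i.e., the probability that the idealized coupling terminates at a bad leaf is at most $\digamma(\sigma)$.

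Next, observe that the coupling procedure naturally produces a random sequence of discrepancy partial assignments $Z_i \triangleq h(X_i,Y_i)\in \qs$ whose transitions use $u=\nextvar{Z_i}$ exactly as in the $\pth$-process of \Cref{pathdef}. Crucially, at each step the probability that $Z_{i+1}(u)=\star$ (coupling fails on $u$) is at most $1-\sum_{x\in Q_u}\min(\mu^{X_i}_u(x),\mu^{Y_i}_u(x)) \leq 1-q\theta$, by applying \Cref{generaluniformity} to both marginals. This is the exact analog of the $(1-q_u\theta_u)$ factor used in the RCT analysis, and it is also precisely the inequality enforced by the LP constraint \Cref{jpvlp-4}.

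I would then apply \Cref{WTsize} to $\sigma_{v\gets\star}$ (which has $v$ as its unique $\star$-variable): whenever the coupling terminates at a bad leaf there must exist a generalized $\{2,3\}$-tree $T=U\circ E$ in $H_{\Phi}$ with some auxiliary tree rooted at $v$ and $L\leq |U|+\Delta|E|\leq L\Delta$ such that the witness event $\+{E}^{\sigma_{v\gets\star}}_{T}$ holds. Union-bounding over $T\in \+{T}^{i}_v$ for $L\leq i\leq L\Delta$, the coupling-failure probability is at most the total witness weight, and each witness weight is bounded by $F(\sigma,T\setminus\{v\})$ through an induction strictly parallel to the proof of \Cref{WTprob}: for each $u\in U$ the factor $1-q\theta$ comes from the discrepancy bound just established, and for each $c\in E$ the factor $\pprime^{-1}\mathbb{P}[\neg c\mid \sigma](1+\eta)^{k}$ comes from the $\mu^\sigma_u$-weighted sum over successor values, using only the local-uniformity upper bound $\mu^\sigma_u(x)\leq q_u^{-1}(1+\eta)$ as in the calculation around \eqref{eq-sum-muponeplusetaz}. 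Summing over $T$ via \eqref{eq-definition-FT-2} and \eqref{eq-definition-F-2} yields $\digamma(\sigma)$, and the bound on the $\hat{p}^Y$-sum follows by a symmetric argument.

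The main obstacle is that the coupling transition at step $i$ depends on the full pair $(X_i,Y_i)$ rather than just on the discrepancy $Z_i$, so the Markov-style induction in \Cref{WTprob} requires careful adaptation. The saving grace is that the two quantitative inputs used there — the discrepancy bound $1-q\theta$ and the marginal upper bound $q^{-1}(1+\eta)$ — hold uniformly over all $(X_i,Y_i)$ consistent with a fixed $Z_i$, since both $\mu^{X_i}$ and $\mu^{Y_i}$ inherit the same local-uniformity conclusion from \Cref{generaluniformity}. Consequently the inductive inequality propagates and the argument carries over with only bookkeeping changes. An alternative, slightly more abstract route — which also yields the bound for any LP-feasible assignment of the $\hat{p}^X_{X,Y}$ — is to replace references to the coupling probabilities by the LP constraints \Cref{jpvlp-3} (flow conservation) and \Cref{jpvlp-4} (the $1-q\theta$ discrepancy bound) directly in the inductive step.
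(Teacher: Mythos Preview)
Your primary approach has a genuine gap: the lemma must hold for \emph{any} LP-feasible assignment of the $\hat{p}^{X}_{X,Y}$, since its role (see the paragraphs following \Cref{jpvimprovedlemma}) is to certify that whenever the LP is feasible for some $(r_-,r_+)$, those values sandwich $\abs{\+{S}_{X_0}}/\abs{\+{S}_{Y_0}}$; the binary search then ranges over all feasible solutions, not just the coupling one. By substituting the particular feasible point from \Cref{lpcor} at the outset, you prove only that \emph{that} solution satisfies the inequality, which does not suffice for the application.

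The paper handles this by constructing, for an \emph{arbitrary} feasible $\hat{p}$, a hybrid random root-to-leaf walk on $\MSC{T}$: at $(X,Y)$ with $u=\nextvar{h(X,Y)}$, sample $a\sim\mu^{X}_u$ to move $X$ to $X'=X_{u\gets a}$, then choose $b\in Q_u$ with probability $\hat{p}^{X'}_{X',Y_{u\gets b}}/\hat{p}^{X}_{X,Y}$ (well-defined by constraint \Cref{jpvlp-3}). Telescoping gives $\hat{\mu}[(X^*,Y^*)=(X_\ell,Y_\ell)]=\tfrac{|\+{S}_{X_\ell}|}{|\+{S}_{X_0}|}\,\hat{p}^{X_\ell}_{X_\ell,Y_\ell}$, so the left-hand side equals $\hat{\mu}[(X^*,Y^*)\in\+{L}_b(\MSC{T})]$. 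The induced discrepancy process $Z_i=h(X_i,Y_i)$ then satisfies exactly the two quantitative bounds you identified: $\Pr[Z_{i+1}(u)=\star]\le 1-q\theta$ comes directly from LP constraint \Cref{jpvlp-4}, and $\Pr[Z_{i+1}(u)=x]\le \mu^{X_i}_u(x)\le q_u^{-1}(1+\eta)$ comes from the $\mu^{X}$-side of the walk plus local uniformity. From here your plan (apply \Cref{WTsize}, rerun the induction of \Cref{WTprob}, and sum via \eqref{eq-definition-F-2}) is exactly what the paper does.

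Your closing ``alternative, slightly more abstract route'' is therefore not merely alternative but necessary, and the paper's hybrid walk is precisely the device that implements it: it converts the flow-conservation constraint \Cref{jpvlp-3} and the discrepancy constraint \Cref{jpvlp-4} into a genuine probability measure on paths, so that the $\{2,3\}$-tree machinery applies verbatim.
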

Given \Cref{jpvimprovedlemma}, note that by \Cref{jpvlp-3} of \Cref{jpvlp}, we have
\begin{equation}\label{jpv-sum1}
    \sum\limits_{(X,Y)\in    \+{L}(\MSC{T}): X\rightarrow \tau} \hat{p}^{X}_{X,Y}=1\text{ for all }\tau\in \+{S}_{X_0}
\end{equation} 
Therefore, we have 
\begin{align*}
\abs{\+{S}_{X_0}}=&\sum\limits_{\tau\in \+{S}_{X_0}}\sum\limits_{(X,Y)\in \+{L}(\MSC{T}):X\rightarrow \tau}\hat{p}^{X}_{X,Y}\\
=&\sum\limits_{\tau\in \+{S}_{X_0}}\sum\limits_{(X,Y)\in \+{L}_g(\MSC{T}):X\rightarrow \tau}\hat{p}^{X}_{X,Y}+\sum\limits_{\tau\in \+{S}_{X_0}}\sum\limits_{(X,Y)\in \+{L}_b(\MSC{T}):X\rightarrow \tau}\hat{p}^{X}_{X,Y}\\
=&\sum\limits_{(X,Y)\in \+{L}_g(\MSC{T})}\left(\hat{p}^{X}_{X,Y}\cdot \abs{\+S_{X}}\right)\pm\digamma(\sigma)\cdot \abs{\+{S}_{X_0}},
\end{align*}
where the first equality is by \eqref{jpv-sum1} and the last equality is by interchanging sums and \Cref{jpvimprovedlemma}.
A similar estimate also holds for $\abs{\+{S}_{Y_0} }$.
Thus, we have
\begin{align*}
 \frac{\abs{\+S_{X_0}}\cdot (1\pm \digamma(\sigma))}{\abs{\+S_{Y_0}}\cdot (1\pm \digamma(\sigma))}&=\frac{\sum\limits_{(X,Y)\in \+{L}_g(\MSC{T})}\left(\hat{p}^{X}_{X,Y}\cdot \abs{\+S_{X}}\right)}{\sum\limits_{(X,Y)\in \+{L}_g(\MSC{T})}\left(\hat{p}^{Y}_{X,Y}\cdot \abs{\+S_{Y}}\right)}\\
(\text{By \Cref{jpvlp-2} of \Cref{jpvlp}})\quad &\in \left[\frac{r_{-}\cdot \sum\limits_{(X,Y)\in \+{L}_g(\MSC{T})}\left(\hat{p}^{Y}_{X,Y}\cdot \abs{\+S_{Y}}\right)}{ \sum\limits_{(X,Y)\in \+{L}_g(\MSC{T})}\left(\hat{p}^{Y}_{X,Y}\cdot \abs{\+S_{Y}}\right)},\frac{r_{+}\cdot \sum\limits_{(X,Y)\in \+{L}_g(\MSC{T})}\left(\hat{p}^{Y}_{X,Y}\cdot \abs{\+S_{Y}}\right)}{ \sum\limits_{(X,Y)\in \+{L}_g(\MSC{T})}\left(\hat{p}^{Y}_{X,Y}\cdot \abs{\+S_{Y}}\right)}\right]\\
 &\in [r_{-},r_{+}].
\end{align*}

With this guarantee, one may approximate $\abs{\+{S}_{X_0}} /\abs{\+{S}_{Y_0}}$ and therefore the estimate marginal distribution $\mu^{\sigma}_v$ with total variation distance between the true marginal distribution bounded above by some linear function with $\digamma(\sigma)$. Note that \Cref{lem-cvxl-in-cbad2-cor} also holds by the same reasoning for the replaced marginal approximator. Moreover, as we only replaced the subroutine for approximating marginals, both \Cref{lem-exeef2} and \Cref{inithcor2} still hold. Then combining \Cref{lpeff} and going through the same proof as Theorem \ref{thm:main-counting-refined}, one can improve the analysis for the algorithm presented in~\cite{Vishesh21towards} to work in the regime of $p\Delta^5\lesssim 1$.

\begin{proof}[Proof of \Cref{jpvimprovedlemma}]
Consider the following process of generating a random root-to-leaf paths of $\MSC{T}$. At a non-leaf node
$(X, Y) \in  \MSC{T}\setminus \+{L}(\MSC{T})$ , sample a value $a$ for $u = \nextvar{h(X,Y)}$ according to $\mu^{X}_v$ and set $X'\leftarrow X_{u\gets a}$. Then, choose a random element $b\in Q_u$ and go to the node $(X', Y_{u\gets b}) \in \MSC{T}$, where
the probability of choosing each $b \in Q_u$ is 

\begin{equation}\label{eq:lptransprop}
    p(X,Y,X',Y_{u(b)})=\frac{\hat{p}^{X'}_{X',Y_{u\gets b}}}{\hat{p}^{X}_{X,Y}}, 
\end{equation}    
Note that by \Cref{jpvlp-3} of \Cref{jpvlp}, one can verify that $p(X, Y, X', Y_{u\gets (\cdot)})$ is a  probability distribution. Let $(X^*,Y^* )$ denote the
random leaf of $\MSC{T}$ returned by this process and let $\hat{\mu}$ denote the probability distribution on $\+{L}(\MSC{T})$ induced by this process.

Let $(X_{\ell} , Y_{\ell} )\in  \+{L}(\MSC{T})$ and denote the corresponding root-to-leaf path by $(X_0, Y_0), \dots , (X_{\ell} , Y_{\ell} )$. 
Then,
$$\hat{\mu}[(X^*,Y^*)=(X_{\ell} , Y_{\ell} )]=\prod\limits_{t=1}^{\ell}\mu(X_t\mid X_{t-1})\times \prod\limits_{t=1}^{\ell}p(X_{t-1},Y_{t-1},X_t,Y_t)=\frac{\abs{\+{S}_{X_{\ell}}}}{\abs{\+{S}_{X_{0}}}}\cdot \frac{\hat{p}^{X_{\ell}}_{X_{\ell},Y_{\ell}}}{\hat{p}^{X_{0}}_{X_{0},Y_{0}}}=\frac{\abs{\+{S}_{X_{\ell}}}}{\abs{\+{S}_{X_{0}}}}\cdot\hat{p}^{X_{\ell}}_{X_{\ell},Y_{\ell}}, $$
where the first equality is by chain rule, the second one by \eqref{eq:lptransprop} and the last one by \Cref{jpvlp-3} of \Cref{jpvlp}.
Therefore,
\begin{align*}\frac{1}{\abs{\+{S}_{X_0}}}\sum\limits_{\tau\in \+{S}_{X_0}}\sum\limits_{(X,Y)\in \+{L}_b(\MSC{T}):X\rightarrow \tau}\hat{p}^{X}_{X,Y}&=\sum\limits_{\tau\in \+{S}_{X_0}}\sum\limits_{(X,Y)\in \+{L}_b(\MSC{T}):X\rightarrow \tau}\frac{\hat{\mu}[(X^*,Y^*)=(X,Y)]}{\abs{\+{S}_{X}}} \\
&=\sum\limits_{(X,Y)\in \+{L}_b(\MSC{T})}\hat{\mu}[(X^*,Y^*)=(X,Y)]
=\hat{\mu}[(X^*,Y^*)\in \+{L}_b(\MSC{T})].
\end{align*}

For each $0\leq i\leq \ell$ we let $Z_{i}=h(X_i,Y_i)$ and let $Z^*=h(X^*,Y^*)$. Note that by \Cref{jpvtree-2} of \Cref{jpvtree}, the stopping rule of the above process only depends on $h(X,Y)$, and one can then verify that $\hat{\mu}[(X^*,Y^*)\in \+{L}_b(\MSC{T})]=\hat{\mu}[f(Z^*)=\True]$. 

It remains to bound $\hat{\mu}[f(Z^*)=\True]$. Note that this process of generating a root-to-leaf path also leads to a process that generates a sequence of constraints $Z_0,\dots,Z_{\ell}$ that satisfy the following two properties:
\begin{enumerate}
    \item 
    if $\nextvar{Z_i}=\perp$ or $f(Z_i)=\True$, the sequence stops at $Z_i$;\label{jpvrp-1}
    \item
    otherwise $u=\nextvar{Z_i}\in V$, 
    the partial assignment $Z_{i+1}\in\qs$ is generated from $Z_{i}$ by randomly giving $u$ a value $x\in \qus{u}$, \label{jpvrp-2}
    such that
    \begin{enumerate}
        \item $
        \Pr{Z_{i+1}=\Mod{(Z_i)}{u}{\star}}\leq 1-q\theta. \label{jpvrp-2a}
        $
        \item $
        \forall x\in Q_u,
      \Pr{Z_{i+1}=\Mod{(Z_i)}{u}{x}}\leq \frac{1}{q_u}(1+\eta)
        $ \label{jpvrp-2b}
    \end{enumerate}
\end{enumerate}
We then show these properties. \Cref{jpvrp-1} is from the stopping rule of the process of generating a root-to-leaf path. \Cref{jpvrp-2a} is from combining \eqref{eq:lptransprop} and \Cref{jpvlp-4} of \Cref{jpvlp}. \Cref{jpvrp-2b} is from combining   the rule of the process of generating a root-to-leaf path and that
$$  \forall x\in Q_u,
      \Pr{Z_{i+1}=\Mod{(Z_i)}{u}{x}} \leq \Pr{X_{i+1}=\Mod{(X_i)}{u}{x}} \leq \mu^{X_i}_{v}(x)\leq \frac{1}{q_u}(1+\eta). $$
Note that the process above that generates $Z_0,\dots,Z_{\ell}$ is pretty much similar to the process $\pth$ defined in \Cref{pathdef}. Moreover, it can be verified that the proofs in \Cref{WTprob}, \Cref{WTprobcor} and eventually in \Cref{tvdbound2} still can apply for this process, and going through these proofs for this process leads to the desired result. 
\end{proof}

\appendix

\bibliographystyle{alpha}
\bibliography{references} 

\clearpage

\section{Generalized \texorpdfstring{$\{2,3\}$}{2,3}-tree as witnesses for useful properties}\label{sec:WT-proofs}
In this section, we prove several technical lemmas (\Cref{WTsize}, \Cref{shortpath2}, \Cref{lem-cvxl-in-cbad2}   and \Cref{lem-bigWT-rs}). \Cref{shortpath2} states that for some partial assignment $\sigma\in \qs$, the length of  $\pth(\sigma)=(\sigma_0,\dots,\sigma_{\ell})$ is bounded.  \Cref{lem-cvxl-in-cbad2} relates the size of $\+C^{\sigma_{\ell}}_v$ with the sizes of $\vst{\sigma_{\ell}}$ and $\csfrozen{\sigma_{\ell}}$. \Cref{WTsize} and \Cref{lem-bigWT-rs} state that for certain properties such as $\vst{\sigma}$ or $\csfrozen{\sigma}$,
when the class of variables/constraints with that property becomes too large, 
a generalized $\{2,3\}$-tree with certain properties inevitably appears within the class.

To aid our proof, we introduce the definition of $\gvc$, a graph with a vertex set over all variables and constraints of the CSP formula.
\begin{definition}[Graph of variables and constraints]\label{def:graphvc}
Let $\Phi =(V,\+Q,\+C)$ be the CSP formula.
Define  $\gvc=(V\cup \+{C},E)$ as the graph where vertices are $V\cup \+{C}$ and there is an edge between two vertices $u,v$ if and only if one of the following holds:
\begin{enumerate}
    \item $u,v\in V$ and there exists some $c\in \+{C}$ such that $u,v\in \var{c}$.\label{graphvc-1}
    \item $u,v\in \+{C}$ and $\text{dist}_{\Lin{H_\Phi}}(u,v)=1\text{ or }2$.\label{graphvc-2}
    \item $u\in V,v\in \+{C}$ and  there exists some $c\in \+{C}$ such that $u\in \var{c}\land \text{dist}_{\Lin{H_\Phi}}(c,v)=1$.\label{graphvc-3}
\end{enumerate}
Furthermore, for any $S\subseteq V\cup \+C$, we let $\gvc(S)$ denote the subgraph of $\gvc$ induced by $S$.
\end{definition} 

We state the following lemma, which is immediate by ~\cite[Lemma 6.32]{he2022sampling}.

\begin{lemma}\label{cor-mono}
Let $\sigma\in \qs$ and $\pth(\sigma) = (\sigma_0,\sigma_1,\cdots,\sigma_{\ell})$.
For every $0\leq i\leq j \leq \ell$,
it holds that 
\[
\vst{\sigma_i}\subseteq \vst{\sigma_j},
\]
and
\[
\+{C}^{\sigma_i}_{\+{P}}\subseteq \+{C}^{\sigma_j}_{\+{P}},
\]
where $\+{P}$ can be any property $\+{P}\in\{\, \mathsf{frozen},\,\, \star\text{-}\mathsf{con},\,\, \star\text{-}\mathsf{frozen}\,\}$.
\end{lemma}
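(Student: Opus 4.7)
The plan is to reduce to the single-step case $j = i+1$; the general statement then follows by iterating. For that single step write $u = \nextvar{\sigma_i}$ (which exists whenever $i+1 \le \ell$) and $\sigma_{i+1} = (\sigma_i)_{u\gets x}$ with $x \in Q_u \cup \{\star\}$, according to \Cref{pathdef}.

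The pivotal observation, which drives every assertion of the lemma, is that the variable $u$ selected by $\nextvar{\cdot}$ is never $\sigma_i$-fixed. Indeed, \Cref{definition:boundary-variables} places $u$ in the vertex boundary $\vinf{\sigma_i}\subseteq V^{\sigma_i}\setminus \vcon{\sigma_i}$; if $u$ were also in $\vfix{\sigma_i}$, then $u$ would lie in $V^{\sigma_i}\cap\vfix{\sigma_i}$ and be joined to some element of $\vcon{\sigma_i}$ by a hyperedge of $\hfix{\sigma_i}$, forcing $u\in \vcon{\sigma_i}$ --- a contradiction. Consequently $u$ belongs to $\vbl(c)$ for no $c\in\cfrozen{\sigma_i}$, and to no hyperedge of $\hfix{\sigma_i}$.

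Given this, the three monotonicities are short. For $\vst{}$: the update only turns $u$ from $\hollowstar$ into $\star$ or into a value in $Q_u$, leaving every other coordinate untouched, so no previously $\star$-marked coordinate is overwritten and $\vst{\sigma_i}\subseteq \vst{\sigma_{i+1}}$. For $\cfrozen{}$: for any $c\in \cfrozen{\sigma_i}$, the pivotal observation gives $u\notin \vbl(c)$, so $\mathbb{P}[\neg c\mid \sigma_{i+1}] = \mathbb{P}[\neg c\mid \sigma_i] > \pprime$ and $c$ stays frozen. For $\ccon{}$ it suffices to show $\vcon{\sigma_i}\subseteq \vcon{\sigma_{i+1}}$: any $v\in \vcon{\sigma_i}$ is joined in $\hfix{\sigma_i}$ to some $\star$-variable $w$ by a path whose vertices and hyperedges avoid $u$ (again by the pivotal observation), and that same path certifies $v\in \vcon{\sigma_{i+1}}$ once we check that (i) every vertex in the path remains in $V^{\sigma_{i+1}}\cap\vfix{\sigma_{i+1}}$, which holds because $\vfix{\cdot}$ is monotone ($\Lambda^+$ and $\cfrozen{}$ both grow) and none of those vertices equals $u$, and (ii) every hyperedge of the path survives simplification, which holds because $u$ is not among their variables so assigning $u$ can neither satisfy them nor alter them. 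Then $\csfrozen{\sigma_i}=\cfrozen{\sigma_i}\cap\ccon{\sigma_i}$ is monotone as the intersection of two monotone families.

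The only delicate point is (ii) above --- establishing that the hyperedges of $\hfix{\sigma_i}$ along the path contain no $u$. This is where the induced-sub-hypergraph convention matters: under the standard reading that the hyperedges of $\hfix{\sigma}$ are precisely those hyperedges of $H^{\sigma}$ whose vertex set is entirely supported on $V^{\sigma}\cap\vfix{\sigma}$, the presence of $u$ in the variables of such a hyperedge would force $u\in \vfix{\sigma_i}$, which we have ruled out. This chain of implications is exactly the one carried out in \cite[Lemma~6.32]{he2022sampling}, and our argument is a direct re-statement in the notation of the present paper.
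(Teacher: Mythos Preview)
The paper gives no self-contained proof of this lemma; it simply declares the statement ``immediate by \cite[Lemma~6.32]{he2022sampling}'' and moves on. Your proposal invokes the very same reference at the end, but in addition unpacks the single-step argument --- the pivotal observation that $u=\nextvar{\sigma_i}$ is never $\sigma_i$-fixed (so it touches no frozen constraint and no hyperedge of $\hfix{\sigma_i}$), from which the monotonicity of $\vst{}$, $\cfrozen{}$, $\ccon{}$, and hence $\csfrozen{}$ follow. This is exactly the content of the cited lemma, so your approach and the paper's coincide; you have simply made explicit what the paper leaves to the reference.
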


\subsection{Proof of \Cref{WTsize}} \label{sec:proof of WTsize}
We first need the two following lemmas.
\begin{lemma}\label{lem-connect2}
Assume the condition of \Cref{bigWT}.
Then $\gvc\left(\csfrozen{\sigma_i}\cup \vst{\sigma_i}\right)$ is connected for each $0 \leq i \leq \ell$.

\end{lemma}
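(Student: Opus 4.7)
The proof will go by induction on $i$. For the base case $i=0$, the set $\vst{\sigma_0} = \{v\}$ is a singleton, so it suffices to show each $c \in \csfrozen{\sigma_0}$ lies in the same $\gvc$-component as $v$ inside $\csfrozen{\sigma_0} \cup \vst{\sigma_0}$. By definition of $\ccon{\sigma_0}$ there exists $w \in \vbl(c) \cap \vcon{\sigma_0}$, and by definition of $\vcon{\sigma_0}$ the vertex $w$ is connected to $v$ in $\hfix{\sigma_0}$ by a simple alternating hyperedge-vertex sequence $v = v_0, c_0, v_1, c_1, \dots, c_{r-1}, v_r = w$. Each intermediate $v_j$ with $1 \le j \le r-1$ is in $V^{\sigma_0} \cap \vfix{\sigma_0}$ but not $\star$-marked (the only $\star$-variable is $v$), hence $v_j \in \vbl(c^*_j)$ for some frozen constraint $c^*_j$; since $v_j \in \vcon{\sigma_0}$ we further have $c^*_j \in \csfrozen{\sigma_0}$. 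Consecutive $c^*_j, c^*_{j+1}$ share the bridging hyperedge $c_j$ (which contains both $v_j$ and $v_{j+1}$), placing them at $\Lin{H_\Phi}$-distance at most $2$ and therefore $\gvc$-adjacent by clause (2) of \Cref{def:graphvc}. The endpoints $v$ and $c$ are attached using clause (3): taking $c_0$ (resp.\ $c_{r-1}$) as the witness, $v$ is $\gvc$-adjacent to $c^*_1$ and $c$ is $\gvc$-adjacent to $c^*_{r-1}$.

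For the inductive step, assume the claim for $\sigma_i$ and write $u = \nextvar{\sigma_i}$, $\sigma_{i+1} = (\sigma_i)_{u \gets x}$ with $x \in Q_u \cup \{\star\}$. \Cref{cor-mono} gives $\csfrozen{\sigma_i} \cup \vst{\sigma_i} \subseteq \csfrozen{\sigma_{i+1}} \cup \vst{\sigma_{i+1}}$, so it suffices to attach each \emph{new} element to the old connected piece. Since $u \in \vinf{\sigma_i}$, there is a ``bridge" constraint $c^\dagger \in \+{C}^{\sigma_i}$ with $u, v' \in \vbl(c^\dagger)$ for some $v' \in \vcon{\sigma_i}$. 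If $x = \star$, the only new $\star$-variable is $u$; freezing probabilities do not depend on $\star$-marks, so each new member of $\csfrozen{\sigma_{i+1}}$ is a previously frozen constraint whose variables only now meet the enlarged $\vcon{\sigma_{i+1}}$ because $u$ joined $\vcon{\cdot}$. Such a newcomer is reachable from $u$ through $\hfix{\sigma_{i+1}}$, so the same path-to-walk device as in the base case places it $\gvc$-connected to $u$; in turn $u$ is $\gvc$-connected to the old piece via $c^\dagger$ and $v'$ (which is itself either in $\vst{\sigma_i}$ or involved in some element of $\csfrozen{\sigma_i}$, so clause (1) or (3) of \Cref{def:graphvc} applies). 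The case $x \in Q_u$ is simpler: no new $\star$-variable appears, and any new frozen constraint must contain $u$ (only constraints incident to $u$ change their non-satisfaction probability under the assignment); such a constraint is $\gvc$-adjacent to $u$, which again routes into the old piece via $c^\dagger$.

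The main technical obstacle throughout is the \emph{path-to-walk conversion}: a hyperedge path in $\hfix{\sigma}$ generally passes through intermediate variables that are \emph{not} in $\csfrozen{\sigma} \cup \vst{\sigma}$, so the path cannot be used directly as a walk in the target induced subgraph. The resolution, used in both the base case and the inductive step, is to substitute each intermediate fixed variable $v_j$ by a frozen constraint $c^*_j \ni v_j$ (which lies in $\csfrozen{\sigma}$ thanks to $v_j \in \vcon{\sigma}$), and to invoke the ``line-graph distance $\le 2$" clause (2) of \Cref{def:graphvc} to bridge consecutive $c^*_j, c^*_{j+1}$ via the shared hyperedge $c_j$. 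Endpoint attachments use clause (3) with the first or last path hyperedge as the certificate. Once this conversion is in place, the rest of the argument is bookkeeping on how $\csfrozen{\cdot}$ and $\vst{\cdot}$ grow across a single step of the $\pth$ process, with $c^\dagger$ of \Cref{definition:boundary-variables} serving as the canonical bridge from any newly added element to the previously connected piece.
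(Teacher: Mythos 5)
Your base case matches the paper's argument in substance: you convert the hyperedge path in $\hfix{\sigma_0}$ into a walk in $\gvc(\csfrozen{\sigma_0}\cup\vst{\sigma_0})$ by substituting each intermediate fixed variable $v_j$ with a frozen constraint $c^*_j\ni v_j$ and bridging consecutive $c^*_j,c^*_{j+1}$ via the shared hyperedge. One small slip: the final link from $c^*_{r-1}$ to $c$ is a constraint--constraint edge, so it falls under clause (2) of \Cref{def:graphvc} (the two constraints are at $\Lin$-distance at most $2$ through $c_{r-1}$), not clause (3). Also, in the inductive step the intermediate $v_j$ may well be $\star$-marked, in which case it should stay in the walk as a member of $\vst{\cdot}$ rather than be replaced; the paper's base case handles this branch (``set $\widehat{c}_j=w_j$'') while your write-up only mentions it for the $i=0$ base case where it is vacuous.

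Your inductive step deviates from the paper: you try to attach only the \emph{new} elements of $\csfrozen{\sigma_{i+1}}\cup\vst{\sigma_{i+1}}$ to the old connected piece, whereas the paper re-connects \emph{every} element of $\csfrozen{\sigma_i}\cup\vst{\sigma_i}$ to $\vst{\sigma_{i-1}}$ from scratch and then invokes the induction hypothesis. Your incremental strategy is in principle sound, and your treatment of the $x=\star$ sub-case is correct (no constraint changes frozen status, $u$ joins $\vcon{\cdot}$, and new members of $\csfrozen{\sigma_{i+1}}$ are reached from $u$ in $\hfix{\sigma_{i+1}}$ and then converted to a $\gvc$-walk).

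However, the $x\in Q_u$ sub-case has a genuine gap. You claim ``any new frozen constraint must contain $u$'' and route through $u$. Two problems. First, $u\notin\vst{\sigma_{i+1}}$ (since $\sigma_{i+1}(u)=x\in Q_u$) and $u\notin\csfrozen{\sigma_{i+1}}$, so $u$ is \emph{not} a vertex of the induced subgraph $\gvc(\csfrozen{\sigma_{i+1}}\cup\vst{\sigma_{i+1}})$; a walk cannot pass through it. (You can salvage this by using $u$ only as an edge \emph{witness}: a newly frozen $c\ni u$ and the bridge $c^\dagger\ni u$ share $u$, so $c$ is $\gvc$-adjacent to $v'$ by clause (3) or to $\widehat{c}\ni v'$ by clause (2). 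But that is not what is written.) Second and more seriously, $\csfrozen{\cdot}=\cfrozen{\cdot}\cap\ccon{\cdot}$ can gain members that are \emph{not} newly frozen and do \emph{not} contain $u$: assigning $u\gets x$ can freeze some $c\ni u$, which promotes the variables in $\var{c}\setminus\{u\}$ into $\vfix{\sigma_{i+1}}$; these newly-fixed variables may connect a previously isolated component of $\hfix{\cdot}$ to the $\star$-region, enlarging $\vcon{\sigma_{i+1}}$ and pulling \emph{previously frozen} constraints (disjoint from $u$) into $\ccon{\sigma_{i+1}}$ and hence into $\csfrozen{\sigma_{i+1}}$. Your proof does not connect these constraints to the old piece. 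The fix is the same device you already use elsewhere: for any new $c\in\csfrozen{\sigma_{i+1}}$, take $w\in\var{c}\cap\vcon{\sigma_{i+1}}$ and run the path-to-walk conversion in $\hfix{\sigma_{i+1}}$ from $w$ to a $\star$-variable (which necessarily lies in $\vst{\sigma_{i+1}}=\vst{\sigma_i}$, hence in the old piece). This is effectively what the paper does by re-deriving connectivity of every frozen constraint to $\vst{\sigma_{i-1}}$ in the inductive step.
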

\begin{proof}
We prove this lemma by induction on $i$. 
For simplicity, we say a variable or constraint $c$ is connected to a subset $S\subseteq V\cup \+{C}$ in  $\gvc$ if $c$ is connected to some $c'\in S$. The base case is when $i=0$.
By the condition of the lemma, $v$ is the only variable satisfying
$\sigma(v) = \star$.
Combining with $\sigma_0 = \sigma$, 
we have $v$ is the only variable satisfying $\sigma_0(v) = \star$. Therefore,
$\vst{\sigma_0} = \set{v}$.
In addition, 
we have the following claim: each $c\in \csfrozen{\sigma_0}$ is connected to $v$ in $\gvc(\csfrozen{\sigma_0}\cup \vst{\sigma_0})$.
Combining with the claim,
we have $\gvc(\csfrozen{\sigma_0}\cup \vst{\sigma_0})$ is connected.

Now we prove the claim, which completes the proof of the base case.
By $c\in \csfrozen{\sigma_0}$,
we have $c\in \ccon{\sigma_0}\cap \cfrozen{\sigma_0}$.
By $c\in \ccon{\sigma_0}$ and \Cref{definition:boundary-variables},
we have $\vcon{\sigma_0}\cap \var{c}\neq \emptyset$.
Combining with $v$ is the only variable satisfying $\sigma_0(v) = \star$ and the definition of $\vcon{\sigma_0}$,
we have there exists a connected path $c^{\sigma_{0}}_1,c^{\sigma_{0}}_2,\cdots,c^{\sigma_{0}}_{t}=c^{\sigma_{0}}\in \+C^{\sigma_{0}}$ such that $\sigma_{0}(v)=\star$, $v \in \var{c^{\sigma_0}_1}$
and $\var{c^{\sigma_{0}}_j}\subseteq V^{\sigma_0}\cap \vfix{\sigma_{0}}$ for each $j <t$.
If $c = c_1$,  then $v \in \var{c}$ and the claim is immediate by the definition of $\gvc$.
In the following, we assume $c \neq c_1$. Let $w_j\in \left(\var{c^{\sigma_0}_{j}}\cap\var{c^{\sigma_0}_{j+1}}\right)$ for each $j<t$.
    Then $w_j\not \in \Lambda(\sigma_{0})$.
By $w_j\in \var{c^{\sigma_0}_j}$ and
$\var{c^{\sigma_0}_j}\subseteq \vfix{\sigma_{0}}$,
we have $w_j\in \vfix{\sigma_{0}}$.
Combining with $w_j\not \in \Lambda(\sigma_{0})$,
we have either $\sigma_{0}(w_j) = \star$, where we set $\widehat{c}_j=w_j$; or 
$w_j \in \var{\widehat{c}_j}$ for some $\widehat{c}_j\in \cfrozen{\sigma_{0}}$. Note that $\widehat{c}_j$ can be either a variable or a constraint. In the former case, we have $\widehat{c}_j\in \vst{\sigma_0}$. In the latter case,
 By $w_j$ is connected to $v$ in $\hfix{\sigma_0}$ through the path $c^{\sigma_0}_1,c^{\sigma_0}_2,\cdots,c^{\sigma_0}_{j}$, 
we have 
$w_j\in \vcon{\sigma_0}$.
%because $w_j$ is connected to $w$ in $\hfix{\sigma_0}$ through the path $c^{\sigma_0}_1,c^{\sigma_0}_2,\cdots,c^{\sigma_0}_{j}\in \csfix{\sigma_0}$.
Thus, we have $\widehat{c}_j \in \ccon{\sigma_{0}}$ by \Cref{definition:boundary-variables}.
Combining with $\widehat{c}_j\in \cfrozen{\sigma_{0}}$,
we have $\widehat{c}_j\in \csfrozen{\sigma_{0}}$.
In summary, we always have $\widehat{c}_j\in \vst{\sigma_0}\cup \csfrozen{\sigma_{0}}$.
\sloppy Moreover, for each $j < t-1$, 
if $\widehat{c}_j\in \+C$, we have $w_j \in \var{c^{\sigma_0}_{j+1}}\cap \var{\widehat{c}^{\sigma_0}_j}$, otherwise we have $\widehat{c}_j=w_j$.
Thus by \Cref{def:graphvc}, it can be verified that $\widehat{c}_j$ and $\widehat{c}_{j+1}$ are adjacent in $\gvc$.
In addition, if $\widehat{c}_1\in \+C$, we have
$w_1 \in \var{c_1}\cap \var{\widehat{c}_1}$ , otherwise we have $\var{\widehat{c}_1}=w_1\in \var{c_1}$, hence $c_1$ and $\widehat{c}_{1}$ are adjacent in $\gvc$. Similarly, we have  $\widehat{c}_{t-1}$ and $c_{t}$ are adjacent in $\gvc$
Thus, we have 
$v,c_1,\widehat{c}_1,\widehat{c}_2,\cdots,\widehat{c}_{t-1},c_{t}=c$ is a connected path in $\gvc$.
Combining with $v \in \vst{\sigma_{0}}$ and
$\widehat{c}_j \in \csfrozen{\sigma_{0}}$ for each $j< t$, the claim is immediate.

For the induction step, we prove this lemma for each $i>0$.
We claim that each $v\in \vst{\sigma_i}$ is connected to $\vst{\sigma_{i-1}}$ in $\gvc(\csfrozen{\sigma_i}\cup \vst{\sigma_i})$. In addition,
we can prove each  $c\in \csfrozen{\sigma_{i}}$ is connected to  $\vst{\sigma_{i-1}}$ in $\gvc(\csfrozen{\sigma_i}\cup \vst{\sigma_i})$ by a similar argument to the base case.
Moreover, by the induction hypothesis we have $\gvc(\csfrozen{\sigma{i-1}}\cup \vst{\sigma_{i-1}})$ is connected.
Combining with $\csfrozen{\sigma_{i-1}} \subseteq \csfrozen{\sigma_{i}}$ and $\vst{\sigma_{i-1}} \subseteq \vst{\sigma_{i}}$ by \Cref{cor-mono},
 we have $\gvc(\csfrozen{\sigma_{i}}\cup \vst{\sigma_{i}})$ is connected.

Now we prove the claim that each $v\in \vst{\sigma_i}$ is connected to $\vst{\sigma_{i-1}}$ in $\gvc$, which completes the proof of the lemma.
If $v\in \vst{\sigma_{i-1}}$, the claim is immediate by $\vst{\sigma_{i-1}}\subseteq \vst{\sigma_{i}}$.
In the following, we assume $v\in \vst{\sigma_i}\setminus \vst{\sigma_{i-1}}$, where by \Cref{pathdef}  we have $v=\nextvar{\sigma_{i-1}}$. 
By the definition of $\nextvar{\cdot}$, we have $v\in \vinf{\sigma_{i-1}}$ and then $v \in \var{\widehat{c}}$ for some constraint $\widehat{c} \in \ccon{\sigma_{i-1}}$.
In addition, by $\widehat{c}\in \ccon{\sigma_{i-1}}$ one can verify that
there exists a variable $w\neq v$ and a connected path $c^{\sigma_{i-1}}_1,c^{\sigma_{i-1}}_2,\cdots,c^{\sigma_{i-1}}_{t} = \widehat{c}^{\sigma_{i-1}}\in\+C^{\sigma_{i-1}}$ such that $\sigma_{i-1}(w)=\star$, $w \in \var{c^{\sigma_{i-1}}_1}$ and $\var{c^{\sigma_{i-1}}_j}\subseteq V^{\sigma_{i-1}}\cap \vfix{\sigma_{i-1}}$ for each $j<t$.
Then there are two possibilities for $\widehat{c}$.
\begin{itemize}
\item If $\widehat{c} = c_1$, we have $v,w\in \var{c_1}$.
Therefore,
$v$ is connected to $w$ in $\gvc(\csfrozen{\sigma_{i-1}}\cup \vst{\sigma_{i-1}}\cup \{v\})$. Also by $\sigma_{i-1}(w)=\star$ we have $w\in \vst{\sigma_{i-1}}$. In addition, by \Cref{cor-mono} we have 
$$\csfrozen{\sigma_{i-1}}\cup \vst{\sigma_{i-1}}\cup \{v\}\subseteq \csfrozen{\sigma_{i-1}}\cup \vst{\sigma_{i}}\subseteq \csfrozen{\sigma_{i}}\cup \vst{\sigma_{i}}.$$
Thus the claim is immediate.
\item Otherwise, $\widehat{c} \neq c_1$.
Similarly to the base case, 
one can find a connected path $c_1,\widehat{c}_1,\widehat{c}_2,\cdots,\widehat{c}_{t-1},\\c_{t}=\widehat{c}$ in $\gvc$,
where $w \in \var{c_1}$,
$\widehat{c}_j \in \vst{\sigma_{i-1}}\cup \csfrozen{\sigma_{i-1}}$ for each $j< t$, 
and there exists $w_{t-1} \in \var{c_{t}}\cap \var{\widehat{c}_{t-1}}$.
Recall that $v\in \var{\widehat{c}}$ and $w\in \var{c_1}$
Thus, $w,c_1,\widehat{c}_1,\widehat{c}_2,\cdots,\widehat{c}_{t-1},v$ is also a connected path in $\gvc$.
Combining with $w\in \vst{\sigma_{i-1}}$,
$\widehat{c}_j \in \vst{\sigma_{i-1}}\cup \csfrozen{\sigma_{i-1}}$ for each $j< t$,
we have $v$ is connected to $\vst{\sigma_{i-1}}$ in $\gvc(\csfrozen{\sigma_i}\cup \vst{\sigma_i})$.
Thus the claim is immediate.
\end{itemize}
\end{proof}

\begin{lemma}
\label{bigWT}

Let $\sigma\in\qs$ 
be a partial assignment satisfying that exactly one variable $v\in V$ has $\sigma(v) = \star$.
Let $\pth(\sigma) =
(\sigma_0,\sigma_1,\ldots,\sigma_{\ell})$.
Then there always exists a generalized $\{2,3\}$-tree $T=U\circ E$ in $H_{\Phi}$ with some auxiliary tree rooted at $v$ such that 
\[
U=\vst{\sigma_{\ell}},
\quad
E\subseteq \csfrozen{\sigma_{\ell}}, 
\quad \text{ and }\quad
\Delta\cdot \abs{E}\geq \abs{\csfrozen{\sigma_{\ell}}}
\]
\end{lemma}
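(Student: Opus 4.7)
The plan is to construct $T = U \circ E$ with $U := \vst{\sigma_\ell}$ and $E$ a carefully chosen subset of $\csfrozen{\sigma_\ell}$, then verify that $T$ is a generalized $\{2,3\}$-tree admitting an auxiliary tree rooted at $v$.

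\textbf{Step 1 (connectivity).} First I invoke \Cref{lem-connect2} so that the induced subgraph $G^{\ast} := \gvc(\csfrozen{\sigma_\ell} \cup \vst{\sigma_\ell})$ is connected. Since $\sigma$ has $v$ as its unique $\star$-variable, $v \in \vst{\sigma_0} \subseteq \vst{\sigma_\ell}$ (using \Cref{cor-mono}), so $v$ is a vertex of $G^{\ast}$ and of $U$.

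\textbf{Step 2 (greedy selection of $E$).} Fix a BFS ordering of $V(G^{\ast})$ rooted at $v$. Scan the constraints in $\csfrozen{\sigma_\ell}$ in that order, and put a constraint $c$ into $E$ iff no previously selected constraint lies at $\Lin{H_\Phi}$-distance $\leq 1$ from $c$. This enforces (a) pairwise $\Lin{H_\Phi}$-distance $\geq 2$ inside $E$, and (b) every $c \in \csfrozen{\sigma_\ell}$ is at $\Lin{H_\Phi}$-distance $\leq 1$ from some $c' \in E$.

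\textbf{Step 3 (size bound).} Each $c' \in E$ has at most $\Delta$ constraints at $\Lin{H_\Phi}$-distance $\leq 1$ (including itself), by the definition of the constraint degree. Property (b) then gives $\Delta \cdot \abs{E} \geq \abs{\csfrozen{\sigma_\ell}}$, as required.

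\textbf{Step 4 (auxiliary tree).} Comparing the edge conditions of $\gvc$ in \Cref{def:graphvc} with the arc conditions of $G(T,\mathcal{A})$ in \Cref{WTdef}: for any two distinct $a, b \in T$ that are adjacent in $G^{\ast}$, the $\gvc$-edge directly witnesses an arc $(a,b) \in \mathcal{A}$, where in the subcase $a,b \in E$ the $\gvc$-edge forces $\text{dist}_{\Lin{H_\Phi}}(a,b) \in \{1,2\}$ and property (a) upgrades this to exactly $2$. To reach an arbitrary $w \in T$ from $v$ via arcs, take any path in $G^{\ast}$ from $v$ to $w$ and replace every intermediate $c \in \csfrozen{\sigma_\ell} \setminus E$ by an anchor $c' \in E$ at $\Lin{H_\Phi}$-distance $\leq 1$, obtained from (b). A triangle-inequality calculation in $\Lin{H_\Phi}$ then shows that consecutive vertices in the resulting $T$-sequence satisfy the appropriate arc conditions in $G(T,\mathcal{A})$, yielding the directed path from $v$.

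The principal obstacle will be the bookkeeping in Step 4: the arc conditions in \Cref{WTdef} forbid distance $0$ between a constraint on an arc and its ``witnessing'' hyperedge $e$, which requires separate attention to degenerate configurations where an anchor $c' \in E$ already contains the target variable in $U$, or where the path in $G^{\ast}$ collapses two neighbouring anchors to the same element of $E$. Handling these cases cleanly while simultaneously certifying that all distances in the new sequence stay within the required thresholds ($2$ or $3$ for constraint-to-constraint arcs, $1$ or $2$ for mixed arcs) is where most of the verification work lies.
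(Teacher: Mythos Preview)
Your high-level strategy matches the paper's: both use the connectivity of $\gvc(\csfrozen{\sigma_\ell}\cup\vst{\sigma_\ell})$ from \Cref{lem-connect2}, set $U=\vst{\sigma_\ell}$, and greedily pick a subset $E\subseteq\csfrozen{\sigma_\ell}$ that is pairwise at $\Lin{H_\Phi}$-distance $\ge 2$ and $\Delta$-covers $\csfrozen{\sigma_\ell}$. Steps~1--3 are fine and essentially identical to the paper.

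The gap is in Step~4, precisely where you flag the obstacle---but the triangle inequality does not close it. If two \emph{consecutive} vertices $p_{i-1},p_i$ on your $G^\ast$-path both lie in $\csfrozen{\sigma_\ell}\setminus E$ and are $\gvc$-adjacent with $\text{dist}_{\Lin{H_\Phi}}(p_{i-1},p_i)=2$, then after replacing each by its anchor you only get $\text{dist}_{\Lin{H_\Phi}}(c'_{i-1},c'_i)\le 1+2+1=4$, which exceeds the threshold $3$ for an $E$--$E$ arc in \Cref{WTdef}; the BFS ordering does not prevent this. Similarly, for a $U\to E$ hop the arc condition requires a witnessing hyperedge at distance \emph{exactly} $1$, whereas your replacement only yields distance in $\{0,1,2\}$. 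You acknowledge these as ``the principal obstacle'' but give no mechanism to handle them, and a single anchor-replacement pass along an arbitrary $G^\ast$-path does not suffice.

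The paper avoids this by \emph{interleaving} the two phases instead of separating them: it maintains a set $B$ of unprocessed elements, repeatedly attaches the $u\in B$ closest (in $\gvc$) to the current $T$ as a child of its nearest $w\in T$, and when $u$ is a constraint it deletes all of $\Gamma^+(u)$ from $B$. This guarantees that every parent--child pair $(w,u)$ in $T^*$ is either $\gvc$-adjacent, or separated by exactly one removed constraint $w'$ with $\vbl(w)\cap\vbl(w')\neq\emptyset$. In both cases the arc condition of \Cref{WTdef} is verified directly, with no chained replacements and hence no distance blow-up to $4$.
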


\begin{proof}
 We construct $T$ along with one of its auxiliary tree $T^*$ by greedily starting from a single root $v$. We maintain a set $B$ of "valid vertices", initially set as $B\gets \csfrozen{\sigma_{\ell}}\cup \vst{\sigma_{\ell}}\setminus \{v\}$. Each time we choose a vertex $u$ in $B$ that is nearest to $T$ in $\gvc$, i.e., $u=\mathop{\arg\min}\limits_{w\in B}\min\limits_{x\in T}\text{dist}_{\gvc}(w,x)$, then let $w$ be the vertex nearest to $u$ in $T$. We add $w$ along with the arc $(w,u)$ in $T^*$, then update $B$ as follows:
\begin{enumerate}[label=(\alph*)]
    \item If $u\in V$, then we update $B\gets B\setminus \{u\}$\;
    \item If $u\in \+{C}$, we update $B\gets B\setminus \Gamma^+(u)$, where $\Gamma^+(u)=\{c\in \+{C}\mid \var{u}\cap \var{c}\neq \emptyset\}$.\label{removeb}
\end{enumerate}
If $B=\emptyset$, the process stops.
We claim that when the process stops, we have $T=U\circ E$ is a generalized $\{2,3\}$-tree in $H_{\Phi}$ satisfying $
U=\vst{\sigma_{\ell}},
E\subseteq \cfrozen{\sigma_{\ell}}\text{ and }
\Delta\cdot \abs{E}\geq \cfrozen{\sigma_{\ell}}. $

We first show that $T$ is a generalized $\{2,3\}$-tree in $H_{\Phi}$. \Cref{WT-1} of \Cref{WTdef} is immediate by \Cref{removeb} of the process. It then suffices to show $T^*$ is a valid auxiliary tree. 
For \Cref{WT-2} of \Cref{WTdef}, note that by \Cref{lem-connect2} we have $\gvc(\csfrozen{\sigma_{\ell}}\cup \vst{\sigma_{\ell}})$ is connected. Also from the process, we know each $u\in \csfrozen{\sigma_{\ell}}\cup \vst{\sigma_{\ell}}$ is either added into $T$ or removed in \Cref{removeb}. If $u$ is removed in \Cref{removeb}, then $u\in \+{C}$ and there exists $c\in \+{C}$ such that $\var{c}\cap \var{u}\neq \emptyset$ and $c$ is added into $T$.  For each $u\neq v\in T$, let $w$ be the only father of $u$ in $T^*$, then we have the following cases:
\begin{itemize}
    \item $\text{dist}_{\gvc}(u,w)=1$:  Then the arc $(u,w)$ must satisfy  \Cref{WT-2} of \Cref{WTdef} by comparing \Cref{def:graphvc} with \Cref{WT-2} of \Cref{WTdef}.
    \item Otherwise it must follow that $w\in \+{C}$ and $\text{dist}_{\gvc}(u,w')=1$ for some constraint $w'\in \+{C}$ removed during \Cref{removeb} such that $\var{w}\cap \var{w'}\neq \emptyset$. This is by our choice of $u$ and $w$ from the process and the fact that $\gvc(\csfrozen{\sigma_{\ell}}\cup \vst{\sigma_{\ell}})$ is connected. Then the arc $(u,w)$ must also satisfy  \Cref{WT-2} of \Cref{WTdef} by comparing \Cref{def:graphvc} with \Cref{WT-2} of \Cref{WTdef}.
\end{itemize}

This shows that  $T^*$ is a valid auxiliary tree rooted at $v$; therefore, $T$ is a generalized $\{2,3\}$-tree in $H_{\Phi}$ satisfying the condition.

The claims $U=\vst{\sigma_{\ell}}$ and 
$E\subseteq \cfrozen{\sigma_{\ell}}$ are trivial from the process. Note that in each step of the process at most $\Delta$ vertices in $\csfrozen{\sigma_{\ell}}$ or one vertex in $\vst{\sigma_{\ell}}$ are removed. Therefore we have $\Delta\cdot \abs{E}\geq \abs{\csfrozen{\sigma_{\ell}}}$. This completes the proof.
\end{proof}

Now we are ready to prove \Cref{WTsize}.
\begin{proof}[Proof of \Cref{WTsize}]

Note that by \Cref{def:truncate-refined}, $f(\sigma_{\ell})=\True$ says $\abs{\vst{\sigma_{\ell}}}+\Delta\cdot \abs{\csfrozen{\sigma_{\ell}}}\geq L\Delta$. We then analyze two cases separately: $\ell=0$ and $\ell\geq 1$.

 For the case when $\ell=0$, by $\sigma\in\qs$ 
is a partial assignment with exactly one variable $v\in V$ having $\sigma(v) = \star$, it suffices to take $U=\{v\}$ and repeatedly add available constraints and corresponding edges in $\csfrozen{\sigma_\ell}$ into the auxiliary tree $T^*$ to construct $T=U\circ E$, as in the proof of \Cref{bigWT}, until 
$\Delta\cdot \abs{E}+\abs{U}\geq L$.  
Then we have $L\leq \Delta\cdot \abs{E}+\abs{U}\leq L\Delta$  by combining  \Cref{bigWT}, the assumption $\abs{\vst{\sigma_{\ell}}}+\Delta\cdot \abs{\csfrozen{\sigma_{\ell}}}\geq L\Delta$ and  $L+\Delta\leq L\Delta$ from the assumption that $L>1$ and $\Delta\geq 2$.

Otherwise we have $\ell\geq 1$. By \Cref{rp-1} of \Cref{pathdef} we have $\abs{\vstar{\sigma_{\ell-1}}}+\Delta\cdot \abs{\csfrozen{\sigma_{\ell-1}}}<L\Delta$ and hence by \Cref{bigWT} there exists a generalized $\{2,3\}$-tree $T'=U'\circ E'$ in $H_{\Phi}$ satisfying $\abs{U'}+\Delta\cdot \abs{E'}< L\Delta$ with some auxiliary tree rooted at $v$ such that $\+{E}^{\sigma_{\ell-1}}_{T'}$ happens. Let $u=\nextvar{\sigma_{\ell-1}}$. We then construct $T=U\circ E$ from $T'$ by adding $u$ and the corresponding edge into $T^*$ if $\sigma_{\ell}(u)=\star$, and then repeatedly adding available constraints in $\csfrozen{\sigma_\ell}$ and corresponding edges into $T^*$ until $\Delta\cdot \abs{E}+\abs{U}\geq L$. 
Then we have $L\leq \Delta\cdot \abs{E}+\abs{U}\leq L\Delta$  by combining  \Cref{bigWT}, the assumption $\abs{\vst{\sigma_{\ell}}}+\Delta\cdot \abs{\csfrozen{\sigma_{\ell}}}\geq L\Delta$ and  $L+\Delta\leq L\Delta$ from the assumption that $L>1$ and $\Delta\geq 2$.
\end{proof}

\subsection{Proof of \Cref{shortpath2}}\label{sec:proof of shortpath2}
\begin{proof}[Proof of \Cref{shortpath2}]
Fix any $0\leq i\leq \ell$.  We claim that 
for each $0\leq  j <i$,
\begin{enumerate}
    \item either there exist some $c_j,c_j'$ such that 
$\nextvar{\sigma_j}\in \var{c_j}$, $c_j'\subseteq \csfrozen{\sigma_{i}}$, and $\var{c_j}\cap\var{c_j'}\neq \emptyset$;\label{shortpath2-item1}
    \item  or there exist some $c_j,u_j$ such that
$\nextvar{\sigma_j},u_j\in \var{c_j}$ and $u_j\in \vst{\sigma_i}$. \label{shortpath2-item2}
\end{enumerate}
Therefore, for each $0\leq j<i$, $\nextvar{\sigma_{j}}$ is in a constraint $c$ where either
$\var{c}\cap \var{c'}\neq \emptyset \text{ for some }c'\in \csfrozen{\sigma_{i}}$, or $u\in \var{c}$ for some $u\in \vstar{\sigma_i}$.
Combining with $\abs{\var{c}}\leq k$, we have for each $0\leq i\leq \ell$,
\begin{align*}
i &\leq k\cdot \abs{ \{c\in \+{C}: \var{c}\cap \var{c'}\neq \emptyset \text{ for some }c'\in \csfrozen{\sigma_{i}}\text{ or } u\in \var{c} \text{ for some }u\in \vst{\sigma_{i}} \}}\\
&\leq k\Delta\cdot (\abs{\csfrozen{\sigma_{i}}}+\abs{ \vst{\sigma_{i}}})\leq k\Delta\cdot (\Delta\cdot \abs{\csfrozen{\sigma_{i}}}+\abs{ \vst{\sigma_{i}}}).
\end{align*}

The case when $\ell=0$ is trivial. We then assume $\ell\geq 1$. By \Cref{rp-1} of \Cref{pathdef}, we then have two cases.
\begin{itemize}
    \item If $\Delta\cdot \abs{\csfrozen{\sigma_{\ell}}}+\abs{ \vst{\sigma_{\ell}}}<L\Delta$, we directly obtain $\ell\leq k\Delta\cdot (\Delta\cdot \abs{\csfrozen{\sigma_{\ell}}}+\abs{ \vst{\sigma_{\ell}}})< kL\Delta^2$.
    \item If $\Delta\cdot \abs{\csfrozen{\sigma_{\ell}}}+\abs{ \vst{\sigma_{\ell}}}\geq L\Delta$, we have $\Delta\cdot \abs{\csfrozen{\sigma_{\ell-1}}}+\abs{ \vst{\sigma_{\ell-1}}}< L\Delta$ by \Cref{rp-1} of \Cref{pathdef}, hence $\ell-1<kL\Delta^2$ and therefore $\ell\leq kL\Delta^2$.
\end{itemize}

Now we prove the claim.
Note that by $\pth(\sigma)=(\sigma_0,\dots,\sigma_{\ell})$, $0\leq i\leq \ell$ and \Cref{pathdef}, we have $\nextvar{\sigma_{j}}\neq \perp$ for each $0\leq j<i$.
Assume that $\nextvar{\sigma_{j}} = u_j$.
By \Cref{definition:boundary-variables},
we have $u_j \in \vinf{\sigma_{j}}\neq \emptyset$.
Combining with the definition of $\vinf{\sigma_{j}}$,
we have there exists some $c_j\in \+{C}^{\sigma_j}$, $w_j\in \vcon{\sigma_j}$ such that $u_j,w_j\in\vbl(c_j)$.
By $w_j\in \vcon{\sigma_{j}}$,
we have $w_j\in  V^{\sigma_{j}}\cap \vfix{\sigma_{j}}$.
By $w_j\in V^{\sigma_{j}}$,
we have $w_j\not\in \Lambda(\sigma_{j})$.
Combining with 
$w_j\in \vfix{\sigma_{j}}$,
we have either $\sigma_j(w_j) = \star$ or
$w_j\in \widehat{c}_j$ for some $\widehat{c}_j\in \cfrozen{\sigma_{j}}$.
If $\sigma_j(w_j) = \star$,
we have $w_j\in \vst{\sigma_j}\subseteq \vst{\sigma_i}$ and $c_j,w_j$ satisfies \Cref{shortpath2-item2}.
Otherwise, $w_j\in \var{\widehat{c}_j}$ for some $\widehat{c}_j\in \cfrozen{\sigma_{j}}$.
In addition, by $w_j\in \vcon{\sigma_{j}}$ and $w_j\in \var{\widehat{c}_j}$, 
we have $\widehat{c}_j\in \ccon{\sigma_{j}}$.
Combining with $\widehat{c}_j\in \cfrozen{\sigma_{j}}$,
we have 
$\widehat{c}_j\in \csfrozen{\sigma_{j}}$. By $w_j\in\vbl(c_j)$ and $w_j\in \var{\widehat{c}_j}$, we have $\var{c_j}\cap\var{\widehat{c}_j}\neq \emptyset$ and $c_j,\widehat{c}_j$ satisfies \Cref{shortpath2-item1}.
This justifies the claim.

\end{proof}

\subsection{Proof of \Cref{lem-cvxl-in-cbad2}}\label{sec:proof of lem-cvxl-in-cbad2}
It is sufficient to show that $\abs{\ccon{\sigma_{\ell}}}\leq \Delta \cdot \abs{\csfrozen{\sigma_{\ell}}}+\Delta\cdot \abs{\vst{\sigma_{\ell}}} $. We show this by proving that
for each $c\in \ccon{\sigma_{\ell}
}$, either there exists some $u\in \var{c}$ such that $u\in \vst{\sigma_{\ell}}$, or there exists some $c'\in \csfrozen{\sigma_{\ell}}$ such that $\var{c}\cap \var{c'}\neq \emptyset$.

For each $c\in \ccon{\sigma_{\ell}
}$,
by \Cref{definition:boundary-variables},
we have there exists some $u\in \vcon{\sigma_{\ell}}\cap \var{c^{\sigma_{\ell}}}$.
By $u\in \vcon{\sigma_{\ell}}$,
we have $u\in  V^{\sigma_{\ell}}\cap \vfix{\sigma_{\ell}}$.
By $u\in V^{\sigma_{\ell}}$,
we have $u\not\in \Lambda(\sigma_{\ell})$.
Combining with 
$u\in \vfix{\sigma_{\ell}}$,
we have either $\sigma_{\ell}(u) = \star$ or
$u\in c'$ for some $c'\in \cfrozen{\sigma_{\ell}}$.
If $\sigma_{\ell}(u) = \star$,
we have $u\in \vstar{\sigma_{\ell}}$.
Otherwise, $u\in c'$ for some $c'\in \cfrozen{\sigma_{\ell}}$.
In addition, we also have $c'\in \ccon{\sigma_{\ell}}$ by $u\in \vcon{\sigma_{\ell}}$ and $u\in \var{c'}$.
Combining with $c'\in \cfrozen{\sigma_{\ell}}$,
we have $c'\in \csfrozen{\sigma_{\ell}}$.
This completes the proof.

\subsection{Proof of  \Cref{lem-bigWT-rs}}\label{sec:proof of lem-bigWT-rs}

Let $X =X^n$ where $X^0,X^1,\cdots,X^n$ is
the partial assignment sequence in \Cref{def-pas-cmain}.
The following lemma is immediate by ~\cite[Lemma C.2]{he2022sampling}.

\begin{lemma}\label{lem-subset-rs}
$ V\setminus \Lambda(X) \subseteq \var{\cfrozen{X}}$.
\end{lemma}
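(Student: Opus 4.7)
The plan is to trace any unassigned variable back to the iteration of the main counting algorithm at which it was refused an assignment, and then to argue that the constraint which forced the refusal remains frozen for the rest of the execution.

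Concretely, I would start by fixing $v \in V \setminus \Lambda(X^n)$ and letting $i \in [n]$ be the index with $v = v_i$. Since $X^n(v_i) = \hollowstar$, the guard in Line~\hyperlink{line2}{2} of the main counting algorithm must have failed in the $i$-th iteration: otherwise Line~\hyperlink{line2b}{2(b)} would have set $X(v_i)$ to some value in $Q_{v_i}$. Failure of the guard says exactly that $v_i$ is involved in some $X^{i-1}$-frozen constraint, i.e.\ there exists $c \in \cfrozen{X^{i-1}}$ with $v_i \in \var{c}$. It then suffices to prove $c \in \cfrozen{X^n}$, because then $v_i \in \var{\cfrozen{X^n}}$ as required.

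The key observation is a \emph{monotonicity} property of frozen constraints: once $c$ is $X^j$-frozen, none of the variables in $\var{c}$ can ever be assigned in a later iteration, because the guard in Line~\hyperlink{line2}{2} always prevents assigning any $v_{j'}$ that appears in a then-frozen constraint. I would formalize this by induction on $j \in \{i-1, i, \ldots, n\}$, proving that $c \in \cfrozen{X^j}$. The base case $j = i-1$ is given. For the inductive step from $j$ to $j+1$, there are two cases. If $v_{j+1}$ is not assigned in iteration $j+1$, then $X^{j+1} = X^j$ and the conclusion is immediate. If $v_{j+1}$ is assigned, then by the guard $v_{j+1}$ lies in no $X^j$-frozen constraint, so in particular $v_{j+1} \notin \var{c}$ since $c \in \cfrozen{X^j}$ by hypothesis; consequently $X^{j+1}$ and $X^j$ agree on $\var{c}$, and therefore
\[
\mathbb{P}[\neg c \mid X^{j+1}] = \mathbb{P}[\neg c \mid X^j] > \pprime,
\]
which gives $c \in \cfrozen{X^{j+1}}$. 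Specializing to $j = n$ completes the proof.

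I do not anticipate a real obstacle in this argument: it is essentially a direct unwinding of the definition of the guard in Line~\hyperlink{line2}{2} together with the fact that the violation probability of a constraint $c$ depends only on the values assigned to variables in $\var{c}$. The one point worth being careful about is making sure the induction is set up against the full sequence $X^{i-1}, X^i, \ldots, X^n$ rather than only against iterations in which an assignment actually happens, so that the trivial case (guard failing, $X^{j+1} = X^j$) is handled uniformly.
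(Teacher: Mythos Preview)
Your argument is correct. The paper does not spell out its own proof of this lemma but simply cites \cite[Lemma C.2]{he2022sampling}; your self-contained argument---trace the unassigned variable back to the iteration where the guard refused it, then propagate frozenness of the witnessing constraint forward by observing that no variable in a currently-frozen constraint can ever pass the guard---is exactly the natural proof and is what underlies the cited result.
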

%\begin{proof}
%Given $v_i \in V\setminus \Lambda(X)$ where $i\in [n]$, by Line \hyperlink{line2}{2} of our main counting algorithm,
%we have $v_i\in \vfix{X^{i-1}}$ and $v_i \notin \Lambda^+(X^{i-1})$, therefore we have $v_i \in \cfrozen{X_{i-1}}$.
%In addition, similar to \Cref{cor-mono}, 
%one can also prove that
%$\cfrozen{X^{j-1}}\subseteq \cfrozen{X^{j}}$ for each $j\in [n]$.
%Then we have $\cfrozen{X^{i-1}}\subseteq \cfrozen{X}$ by induction.
%Combining with $v_i \in \cfrozen{X_{i-1}}$,
%we have $v_i\in \var{\cfrozen{X}}$.
%Then the lemma is immediate.
%\end{proof}

Now we can prove \Cref{lem-bigWT-rs}.
\begin{proof}[Proof of \Cref{lem-bigWT-rs}]
Let $\{\Phi_i^X= (V_i^X,\+{C}_i^X)\}\mid 1\leq i\leq K\}$ be the decomposition of $\Phi^X$.
If $v\not \in V_i$ for each $i\in [k]$, we have 
$\+C^X_v = \emptyset$ and
the lemma is trivial.
In the following, we assume
\emph{w.l.o.g.} that 
$v\in V^X_i$
for some $i\in K$.
Then we have 
$\Phi^X_v = (V^X_i,\+C^X_i)$.
Let 
$$S \triangleq  \left\{c\in \cfrozen{X}\mid c^X \in \+C^X_{i}\right\}.$$

At first, we prove that there exists some $c_v\in S$ such that $v \in \var{c_v}$.
By $v\in V^X_i$,
we have $v \not\in \Lambda(X)$. 
Combining with \Cref{lem-subset-rs},
we have there exists some $c_v \in \cfrozen{X}$ such that $v \in \var{c_v}$.
In addition, by $v \in \var{c_v}$ and $c_v \in \cfrozen{X}$, we also have $c^X_v \in \+C^X_{i}$.
Combining with $c_v \in \cfrozen{X}$, we have $c_v \in S$.

Now we prove $\abs{\+C^X_{i}}\leq \Delta\abs{S}$.
For each $c^X\in \+C^X_i$,
we have there exists a connected path $c^X_1,c^X_2,\cdots,c^X_{t} = c^X\in \+C^X_i$ such that $v \in \var{c^X_1}$.
Let $v' \in \var{c^X}$.
We have $v' \not\in \Lambda(X)$.
Combining with \Cref{lem-subset-rs},
we have $v' \in \var{\widehat{c}}$ for some $\widehat{c}\in \cfrozen{X}$.
Then we have $\widehat{c}^{X} \in \+C^X_{i}$ because
there exists a connected path $c^X_1,c^X_2,\cdots,c^X_{t},\widehat{c}^{X}\in \+C^X_i$ where $v \in \var{c^X_1}$.
Combining $\widehat{c}\in \cfrozen{X}$ with $\widehat{c}^{X} \in \+C^X_{i}$,
we have $c\in S$.
In summary, for each each $c^X\in \+C^X_i$, there exists some $\widehat{c}\in S$ such that $\var{c^X}\cap \var{\widehat{c}^X}\neq \emptyset$.
Thus, we have  $\abs{\+C^X_{i}}\leq \Delta\abs{S}$.

In the next, we prove that $\gvc(S)$ is connected.
It is enough to prove that any two different constraints $c,\widehat{c}\in S$ are connected in $\gvc(S)$.
Given $c,\widehat{c}\in S$, 
we have $c^X,\widehat{c}^X$ are in $\+C^X_{i}$.
Therefore, 
we have there exists a connected path $c^X = c^X_1,c^X_2,\cdots,c^X_t = \widehat{c}^X\in \+C^X_{i}$.
If $t\leq 3$, obviously $c$ and $\widehat{c}$ are connected in $G^2(S)$.
In the following, we assume that $t>3$.
Let $w_j \in \left(\var{c^X_{j}}\cap \var{c^X_{j+1}}\right)$ for each $j< t$.
Then we have 
$w_j\not \in \Lambda(X)$.
Combining with \Cref{lem-subset-rs},
we have $w_j \in \var{\widehat{c}^X_j}$ for some $\widehat{c}_j\in \cfrozen{X}$.
Moreover, we also have $\widehat{c}^X\in \+C^X_{i}$,
because $\widehat{c}^X_j$ is connected to $c^X$ through $c^X_2,\cdots,c^X_j\in \+C^X_{i}$.
Thus, we have $\widehat{c}_j\in S$.
In addition, for each $\widehat{c}_{j},\widehat{c}_{j+1}$ where $j< t-1$,
we have $\widehat{c}_{j}$ and $\widehat{c}_{j+1}$  are connected in $G^2(\+C)$,
because $w_j\in \var{\widehat{c}_{j}}\cap \var{c_{j+1}}$
and $w_{j+1}\in \var{\widehat{c}_{j+1}}\cap \var{c_{j+1}}$.
Thus, the constraints $c = c_1,\widehat{c}_1,\widehat{c}_2,\cdots,\widehat{c}_{t-1},c_t = \widehat{c}$ forms a connected path in $\gvc$.
Combining with $\widehat{c}_j\in S$ for each $j\leq t-1$ and $c,\widehat{c}\in S$, we have the constraints $c,\widehat{c}$ are connected in $\gvc(S)$.

In summary, we have $c_v\in S\subseteq \cfrozen{X}$, $\Delta\abs{S}\geq \abs{\+C^X_{i}}$ and $\gvc(S)$ is connected. Combing with $v\in \var{c_v}$  we have $\gvc(S\cup \{v\})$ is also connected.
By going through the process in the proof of \Cref{bigWT}, 
we have there exists a subset of constraints and vertices $T\subseteq S\cup \{v\}$ such that $T=\set{v}\circ E$ is a generalized $\{2,3\}$-tree in $H_{\Phi}$ with some auxiliary tree rooted at $v$ and $$\abs{E}\geq \abs{S}/\Delta\geq \abs{\+C^X_{i}}/\Delta^2 = \abs{\+C^X_{v}}/\Delta^2.$$

In addition, if $\abs{\+{C}^{X}_v}\geq L\Delta^2$, then similar to the proof of \Cref{WTsize},we take $U=\{v\}$ and repeatedly add available constraints and corresponding edges in $S$ into the auxiliary tree $T^*$ to construct $T=U\circ E$ until $\Delta\cdot \abs{E}+\abs{U}\geq L$. Then we have $L\leq \Delta\cdot \abs{E}+\abs{U}\leq L\Delta$  by combining $S\geq \abs{\+C^{X}_{v}}/{\Delta}\geq L\Delta$ and  $L+\Delta\leq L\Delta$ from the assumption that $L>1$ and $\Delta\geq 2$, and the lemma is immediate.
\end{proof}

\end{document}